\definecolor{blueviolet}{rgb}{0.2, 0.2, 0.6}
\definecolor{webgreen}{rgb}{0,.5,0}
\definecolor{webbrown}{rgb}{.6,0,0}
\numberwithin{equation}{section}
\newtheorem{theorem}{Theorem}
\newtheorem{definition}{Definition}
\newtheorem{lemma}{Lemma}
\newtheorem{fact}{Fact}
\newtheorem{observation}{Observation}
\theoremstyle{definition}
\providecommand{\myvec}[1]{\ensuremath{\boldsymbol{#1}}}
\providecommand{\RR}{\ensuremath{\myvec{R}}}
\providecommand{\RR}{\ensuremath{\myvec{R}}}
\def\01{\{0,1\}}
\newcommand{\C}{\ensuremath{\mathcal{C}}}
\newcommand{\CA}{\mathcal{A}}
\newcommand{\CE}{\mathcal{E}}
\newcommand{\CF}{\mathcal{F}}
\newcommand{\CJ}{\mathcal{J}}
\newcommand{\CK}{\mathcal{K}}
\newcommand{\CN}{\mathcal{N}}
\newcommand{\BN}{\mathbb{N}}
\newcommand{\CO}{\mathcal{O}}
\newcommand{\CR}{\mathcal{R}}
\newcommand{\CS}{\mathcal{S}}
\newcommand{\CT}{\mathcal{T}}
\newcommand{\CU}{\mathcal{U}}
\newcommand{\vertiii}[1]{{\left\vert\kern-0.25ex\left\vert\kern-0.25ex\left\vert #1 \right\vert\kern-0.25ex\right\vert\kern-0.25ex\right\vert}}
\newcommand{\rom}[1]{\mathtt{\uppercase\expandafter{\romannumeral #1\relax}}}
\DeclareMathOperator*{\argmax}{arg\,max}
\DeclareMathOperator{\spec}{Spec}
\DeclareMathOperator{\Tr}{tr}
\definecolor{puzzlebackground}{RGB}{247, 250, 253}    
\definecolor{puzzleborder}{RGB}{70, 142, 175}   
\definecolor{openQbackground}{RGB}{255, 248, 244}  
\definecolor{openQborder}{RGB}{188, 140, 112} 
\newcounter{puzzlecount}
\newcounter{openQcount}
\newcommand{\ketbra}[2]{\lvert #1 \rangle \! \langle #2 \rvert}
\newcommand{\norm}[1]{\left\lVert#1\right\rVert}
\newcommand\pspace{\mathsf{PSPACE}}
\newcommand\negl[1]{\mathsf{negl}(#1)}
\newcommand\hilbert{\mathcal{H}}
\newcommand\pr{\operatorname{Pr}}
\newcommand\alg{\mathcal{A}}
\newcommand\poly[1]{\mathsf{poly}(#1)}
\newcommand\tm{\mathsf{M}}
\newcommand\utm{\mathsf{UM}}
\newcommand\D{\mathsf{D}}
\newcommand\acc{\textsf{Accept}}
\newcommand\rej{\textsf{Reject}}
\newcommand\qpe{\operatorname{QPE}}
\newcommand\qft{\operatorname{QFT}}
\newcommand\sfA{\mathsf{A}}
\newcommand\sfB{\mathsf{B}}
\newcommand\sfS{\mathsf{S}}
\begin{document}
\fontsize{10}{12}\selectfont

\title{Random unitaries that conserve energy}

\author{Liang Mao}
\affiliation{California Institute of Technology, Pasadena, California 91125, USA}
\affiliation{Institute for Advanced Study, Tsinghua University, Beijing, 100084, China}

\author{Laura Cui}
\affiliation{California Institute of Technology, Pasadena, California 91125, USA}

\author{Thomas Schuster}
\affiliation{California Institute of Technology, Pasadena, California 91125, USA}
\affiliation{Google Quantum AI, Venice, California 90291, USA}


\author{Hsin-Yuan Huang}
\affiliation{California Institute of Technology, Pasadena, California 91125, USA}
\affiliation{Google Quantum AI, Venice, California 90291, USA}

\date{\today}

\begin{abstract}
Random unitaries sampled from the Haar measure serve as fundamental models for generic quantum many-body dynamics. Under standard cryptographic assumptions, recent works have constructed polynomial-size quantum circuits that are computationally indistinguishable from Haar-random unitaries, establishing the concept of pseudorandom unitaries (PRUs). While PRUs have found broad implications in many-body physics, they fail to capture the energy conservation that governs physical systems. In this work, we investigate the computational complexity of generating PRUs that conserve energy under a fixed and known Hamiltonian $H$. 
We provide an efficient construction of energy-conserving PRUs when $H$ is local and commuting with random coefficients. 
Conversely, we prove that for certain translationally invariant one-dimensional $H$, there exists an efficient quantum algorithm that can distinguish truly random energy-conserving unitaries from any polynomial-size quantum circuit. This establishes that energy-conserving PRUs cannot exist for these Hamiltonians. Furthermore, we prove that determining whether energy-conserving PRUs exist for a given family of one-dimensional local Hamiltonians is an undecidable problem. Our results reveal an unexpected computational barrier that fundamentally separates the generation of generic random unitaries from those obeying the basic physical constraint of energy conservation.
\end{abstract}

\maketitle

\section{Introduction}

Haar-random unitaries, which are unitaries drawn uniformly from the unitary group, provide a powerful theoretical model for generic quantum dynamics in complex quantum many-body systems. These ensembles capture universal signatures of quantum chaos and thermalization~\cite{deutsch1991quantum,fisher2023random,srednicki1994chaos,rigol2008thermalization,nahum2017entgrowth,sekino2008fast,cotler2022fluctuations,hayden2007black}, and have found wide-ranging applications across quantum science, from quantum device benchmarking and tomography~\cite{emerson2005scalable,knill2008randomized,elben2023randomized,guta2020fast,huang2020predicting}, to quantum machine learning~\cite{mcclean2018barren,huang2023learning,larocca2025barren}, to black hole physics and holography~\cite{sekino2008fast,hayden2007black,brown2023quantum,nezami2023quantum}. Despite their central role, Haar-random unitaries are computationally intractable: specifying or implementing an $n$-qubit Haar-random unitary requires exponential resources~\cite{knill1995approximation}, rendering them physically unrealistic.

Pseudorandom unitaries (PRUs) address this issue by offering a practical alternative~\cite{ji2018pseudorandom,metger2024simple,chen2024efficient,ma2024construct}. PRUs are ensembles of random unitaries that can be efficiently generated, yet are indistinguishable from Haar-random unitaries $U$ in any polynomial-time quantum experiment with oracle access to $U$. 
Their existence hence provides crucial evidence for the use of Haar-random unitaries as models of chaotic polynomial-time quantum circuits in the real world.
Recent constructions achieve this indistinguishability even at logarithmic circuit depths~\cite{schuster2024random,schuster2025strong,cui2025unitary,foxman2025random}, leading to a broad range of physical implications: the hardness of recognizing quantum phases of matter~\cite{feng2025hardness,schuster2025hardness}, the surprising efficiency of scrambling information~\cite{gu2024simulating,schuster2025strong}, and the existence of large families of indistinguishable states with completely different entanglement structures~\cite{schuster2024random,ma2024construct,akers2024holographic}. Despite these successes, PRUs suffer from a critical limitation: they fail to respect energy conservation, a fundamental and universal constraint governing  physical systems.

A unitary evolution $U$ conserves energy under a given Hamiltonian $H$ if it satisfies the commutation relation, $[U, H] = 0$.
This energy constraint distinguishes physical dynamics from unconstrained Haar-random unitaries, and dramatically alters dynamical  phenomena.
For examples, energy conservation is responsible for quantum thermalization to finite-temperature Gibbs states instead of maximally-mixed states~\cite{deutsch1991quantum,srednicki1994chaos,rigol2008thermalization,polkovnikov2011colloquium,nandkishore2015many,gogolin2016equilibration},
the emergence of hydrodynamics at finite energy densities and at late times~\cite{forster2018hydrodynamic,kovtun2012lectures,sirker2009diffusion,von2018operator,ljubotina2019kardar,zu2021emergent,wei2022quantumgas},
non-ergodic behavior of some exotic models~\cite{pal2010many,choi2016exploring,turner2018weak}, and a high computational complexity for determining quantum equilibration~\cite{moore1990unpredictability,shiraishi2021undecidability,devulapalli2025complexity}.
Any realistic model of physical dynamics must therefore incorporate energy conservation. This motivates the following question:
\begin{center}
    \emph{Can we construct pseudorandom unitaries that respect energy conservation?}
\end{center}
Our work provides a comprehensive answer to this question by establishing deep connections between the existence of energy-conserving PRUs and computational complexity theory.
Our results reveal that the existence of energy-conserving PRUs depends critically on the specific structure of the Hamiltonian $H$, and that even determining their existence is, in general,  undecidable.

To establish our results, we first formulate this question in a precise manner.
We observe that for any local Hamiltonian $H \neq I$, any energy-conserving unitary can be efficiently distinguished from a Haar-random unitary (over the entire unitary group) by measuring the energy of $U\ket{\psi}$ for any non-zero energy state $\ket{\psi}$. 
This motivates a refined definition: \emph{energy-conserving PRUs} should be computationally indistinguishable not from Haar-random unitaries, but from \emph{energy-conserving Haar-random unitaries}, i.e. unitaries drawn according to the Haar measure of the group containing all unitaries that commute with the given Hamiltonian $H$, $\{U \, : \, [U, H] = 0\}$.

Our main results establish a striking dichotomy. 
For a simple class of random commuting Hamiltonians, we prove that energy-conserving PRUs exist by providing an efficient construction.
On the other hand, more surprisingly, we construct explicit families of one-dimensional, local, translationally invariant Hamiltonians for which energy-conserving PRUs provably cannot exist. Furthermore, we establish this non-existence  by proving an even stronger statement: there is an efficient quantum algorithm that can distinguish energy-conserving Haar-random unitaries for these Hamiltonians from any polynomial-size quantum circuit. Both of these results hold under standard cryptographic and complexity-theoretic conjectures, such as the existence of quantum-secure one-way functions. 
Given these contrasting results, a natural question is how one can determine whether energy-conserving PRUs exist for a given Hamiltonian family. 
Unfortunately, using standard complexity-theoretic tools, we prove that this problem is in general undecidable: no algorithm can solve it, even given exponential time and space resources. Our results reveal fundamental computational barriers emerging from physical energy constraints, highlighting the tension between common models of ergodicity and the physical requirement of energy conservation. The results are summarized in Fig~\ref{fig:main}.

\begin{figure}
    \centering
    \includegraphics[width=0.99\linewidth]{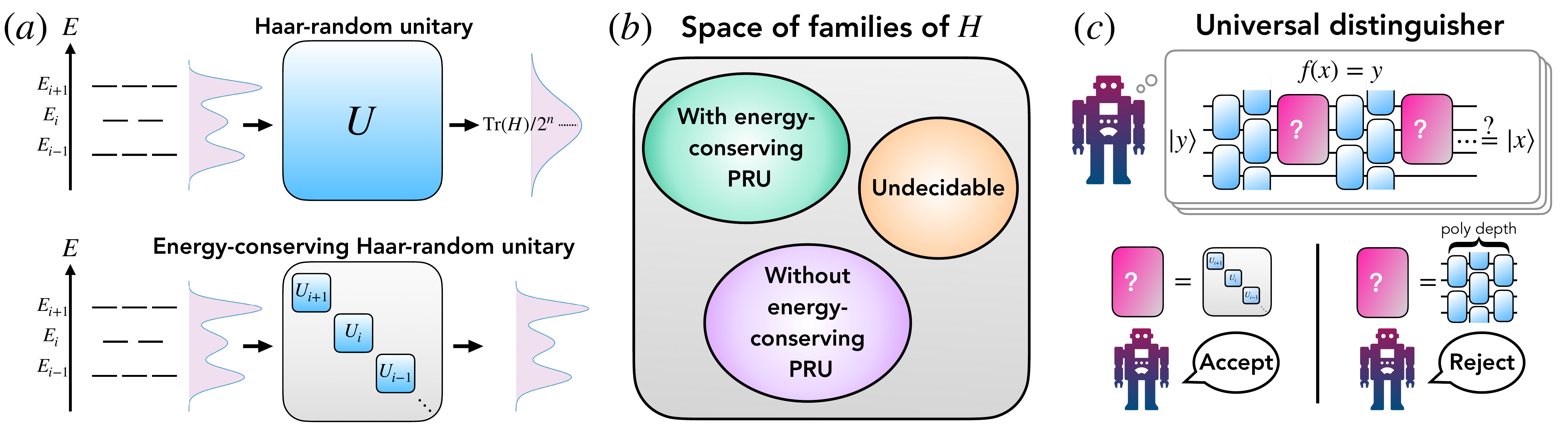}
    \caption{(a) Haar random unitary and energy-conserving Haar random unitary. Haar random unitaries are those that scramble the full Hilbert space. They transit each wavefunction to the infinite-temperature state. In contrast, energy-conserving Haar random unitaries are those that only scramble degenerated subspaces. They respect the energy-occupation of each wavefunction. (b) Summary of our results. We construct local Hamiltonian families with and without energy-conserving PRU. Further, we demonstrate there exists a certian set of Hamiltonian families such that determining if they have energy-conserving PRU is an undecidable problem. (c) We construct a universal distinguishing algorithm to prove Theorem~\ref{main:thm:hard-ham}. The algorithm accepts the energy-conserving Haar random unitaries of the hard Hamiltonian, whereas rejects any polynomial-size quantum circuit.}
    \label{fig:main}
\end{figure}

\section{Pseudorandom unitaries and energy conservation}

We recall the standard definition for pseudorandom unitaries~\cite{ji2018pseudorandom,ma2024construct}:

\begin{definition}[Pseudorandom unitaries; informal]
\label{def:pru}
A sequence $\{\mathcal{U}_n\}_{n\in\mathbb{N}}$ of ensembles of $n$-qubit random unitaries is a pseudorandom unitary (PRU) if:
\begin{enumerate}
    \item {Efficiency:} Every unitary in $\mathcal{U}_n$ can be implemented by a quantum circuit of size $\mathsf{poly}(n)$.
    \item {Pseudorandomness:} For any polynomial-time quantum algorithm $\CA$ that receives oracle access to a unitary $U$ sampled either from $\mathcal{U}_n$ or from the Haar measure, the distinguishing advantage
    \begin{equation}
    \left| \Pr_{U \sim \mathcal{U}_n}[\CA^U(1^n) = 1] - \Pr_{U \sim \operatorname{Haar}}[\CA^U(1^n) = 1] \right| \leq \mathsf{negl}(n),
    \end{equation}
    where $\mathsf{negl}(n)$ is a function smaller than any $\frac{1}{\mathsf{poly}(n)}$.
\end{enumerate}
\end{definition}

\noindent The existence of PRUs has been proven under the existence of quantum-secure one-way functions~\cite{ma2024construct}. Quantum-secure one-way functions are families of efficiently computable functions $f_n$ that map $x\in\{0,1\}^n$ to $y\in\{0,1\}^m$~\cite{peikert2016decade,bernstein2017post}. The key feature is that given $y = f_n(x)$, no efficient quantum algorithm can find a preimage $x$, hence their name \emph{one-way}. These functions can be constructed explicitly using Learning With Errors (LWE)~\cite{regev2009lattices,albrecht2015concrete}. For a comprehensive overview, we refer readers to Appendix~\ref{ap:preliminary}.

To incorporate energy conservation into random unitaries, we first observe that:

\begin{observation}[Energy-based distinguisher] \label{main:obs:energy-distinguisher}
Any random unitary ensemble $\CU_H$ that conserves energy for a local Hamiltonian $H$ can be efficiently distinguished from a Haar-random unitary.
\end{observation}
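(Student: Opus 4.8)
The plan is to build a single $\poly{n}$-query quantum algorithm $\CA$ that, given oracle access to $U$, estimates the energy $\bra{\psi}U^\dagger H U\ket{\psi}$ of the image under $U$ of a fixed, low-complexity input state $\ket{\psi}$, and accepts iff this energy stays bounded away from the infinite-temperature value $\bE := 2^{-n}\Tr H$. The mechanism is elementary: an energy-conserving $U$, i.e.\ one with $[U,H]=0$, commutes with the spectral projectors of $H$, hence maps every eigenspace of $H$ into itself and preserves the \emph{entire} energy distribution of $\ket{\psi}$; in particular $\bra{\psi}U^\dagger H U\ket{\psi}=\bra{\psi}H\ket{\psi}$ for \emph{every} $U$ in the support of $\CU_H$. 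A Haar-random $U$, by contrast, sends $\ket{\psi}$ to a Haar-random state, whose energy concentrates around $\bE$ with exponentially small fluctuations. So if $\ket{\psi}$ has energy detectably different from $\bE$, the two cases are distinguished.

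Concretely I would proceed in three steps. (i) \emph{Choose the witness state.} Since the rank-one projectors onto product states span the full operator space, and $H$ is not a multiple of the identity, there is a product state $\ket{\psi}=\bigotimes_{i=1}^n\ket{\psi_i}$ with $\delta:=\abs{\bra{\psi}H\ket{\psi}-\bE}>0$; being a product state, $\ket{\psi}$ is preparable in depth one. (ii) \emph{The distinguisher.} Using $\poly{n}$ queries, $\CA$ repeatedly prepares $\ket{\psi}$, applies $U$, and measures the local terms $h_j$ of $H$ (grouped into mutually commuting subsets); averaging yields an estimate $\hat E$ of $\bra{\psi}U^\dagger H U\ket{\psi}$ with $\abs{\hat E-\bra{\psi}U^\dagger H U\ket{\psi}}\le\delta/3$ except with probability $\negl{n}$, and $\CA$ accepts iff $\abs{\hat E-\bE}>\delta/2$. (iii) \emph{Analysis.} If $U\sim\CU_H$ then $\bra{\psi}U^\dagger H U\ket{\psi}=\bra{\psi}H\ket{\psi}$ is at distance $\delta$ from $\bE$, so $\CA$ accepts except with probability $\negl{n}$. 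If $U\sim\mathrm{Haar}$, a standard concentration estimate for $\langle\phi|H|\phi\rangle$ over a Haar-random $\ket{\phi}$ (Levy's lemma together with $\Tr(\tilde H^2)\le 2^n\norm{H}^2$ for the traceless part $\tilde H$ of $H$) places the true energy within $\delta/4$ of $\bE$ except with probability $\exp(-2^{\Omega(n)})$, whence $\abs{\hat E-\bE}\le\delta/4+\delta/3<\delta/2$ and $\CA$ rejects except with probability $\negl{n}$. The distinguishing advantage is thus $1-\negl{n}$.

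The one place that demands care --- the main, if mild, obstacle --- is making the construction quantitatively efficient: one needs both $\poly{n}$-size preparability of $\ket{\psi}$ (automatic, since it is a product state) and a polynomial energy gap $\delta\ge 1/\poly{n}$, so that $\poly{n}$ queries suffice and the Haar tail remains super-polynomially small. For the gap I would give a short variational argument valid for geometrically local $H$: pick a non-scalar local term $h_{j^\star}$, set the single-qubit states on its support to an eigenstate of its traceless part that extremizes $\langle h_{j^\star}\rangle$, and choose the remaining single-qubit states freely; since the interaction graph has bounded degree, only $O(1)$ other terms overlap $h_{j^\star}$, so the uncontrolled contribution is $O(1)$ and one obtains $\delta=\Omega(1)$ --- indeed $\delta=\Omega(n)$ for extensive Hamiltonians, by combining such contributions over a dilute family of far-apart terms. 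Everything else (energy conservation $\Rightarrow$ preservation of the energy distribution, efficient estimation of $\langle H\rangle$ for local $H$ from copies, and Haar-state concentration) is standard.
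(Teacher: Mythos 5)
Your overall strategy — prepare a product state $\ket{\psi}$ whose energy $\bra{\psi} H\ket{\psi}$ is bounded away from the infinite-temperature value $\bar E = 2^{-n}\Tr H$, apply the unknown $U$, estimate the output energy, and accept iff the estimate stays away from $\bar E$ — is the same as the paper's (Theorem~\ref{thm:distinguish-energy-conserving-form-RU}), and the Haar-concentration bound you invoke via Levy's lemma is a slightly stronger tail estimate than the paper's second-moment-plus-Markov computation, but either suffices.

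The gap is in step (i), the variational argument that some fixed product state achieves $\delta = \Omega(1)$. Two problems. First, the extremal eigenstate of a $k$-local term $h_{j^\star}$ with $k\geq 2$ is generically entangled, so you cannot literally ``set the single-qubit states on its support to an eigenstate''; showing that some product state over those qubits still achieves $\Omega(1)$ deviation rather than the full spectral radius already requires an argument (e.g.\ a Pauli-basis expansion). Second and more seriously, the $O(1)$ neighboring terms that share qubits with $h_{j^\star}$ contribute amounts you do not control once their supports are pinned, and ``$\Omega(1) - O(1) = \Omega(1)$'' is not a valid step: those neighbors can exactly cancel the gain from $h_{j^\star}$. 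Ruling out such cancellation is precisely the content of the product-state energy lower bound the paper cites (Lemma~\ref{lem:product-state-large-deviation}, from Brand\~ao--Harrow, Anshu--Gosset--Morenz, Huang et al.), which yields a random product state with deviation $\Omega\bigl(\frac{1}{\sqrt d}\sum_{P\neq I}|\alpha_P|\bigr)$ via a genuinely global argument; your ``dilute far-apart terms'' upgrade to $\delta=\Omega(n)$ inherits the same cancellation issue term by term. So the high-level structure of your proof is right, but the quantitative existence of the witness state needs the cited result, or a correct local-to-global argument, in place of the one you sketch.
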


\noindent To establish the above observation, consider the following simple distinguishing algorithm: prepare a product state $\ket{\psi}$ with an energy $\bra{\psi} H \ket{\psi}$ sufficiently different from $\mathrm{Tr}(H)/2^n$ (the infinite temperature value), apply the unknown unitary $U$, and measure the energy $\bra{\psi} U^\dagger H U \ket{\psi}$. For Haar-random unitaries, this expectation value concentrates around $\mathrm{Tr}(H)/2^n$, while for energy-conserving unitaries it equals $\bra{\psi} H \ket{\psi}$. Existing results~\cite{huang2022learning, anshu2021improved, brandao2013product} guarantee that product states with significant energy deviation from the infinite temperature value can be efficiently prepared for any local Hamiltonian, hence the distinguishing algorithm is efficient. See Theorem~\ref{thm:distinguish-energy-conserving-form-RU} in Appendix~\ref{ap:distinguish-energy-conserving-form-RU} for a detailed description and proof.

This observation motivates our proposed definition of energy-conserving PRUs:

\begin{definition}[Energy-Conserving PRUs; informal]
\label{def:ec_pru}
Let $n$ denote system size, let $ \{H_n\}_{n\in\mathbb{N}}$ be a sequence of $n$-qubit local Hamiltonians, and let $\{ \mathcal{C}^H_n \}_{n\in \mathbb{N}}$ denote the ensemble of energy-conserving Haar-random unitaries, i.e., Haar measure over the group $\{U_n : [U_n, H_n] = 0\}$ for each $n$. A sequence $\{\mathcal{U}_n\}_{n\in\mathbb{N}}$ of ensembles of $n$-qubit random unitaries is an energy-conserving PRU with respect to $\{H_n\}$ if:
\begin{enumerate}
    \item {Efficiency:} Every unitary in $\mathcal{U}_n$ can be implemented by a quantum circuit of size $\mathsf{poly}(n)$.
    \item {Pseudorandomness:} For any polynomial-time quantum algorithm $\CA$ that receives oracle access to a unitary $U$ sampled either from $\mathcal{U}_n$ or from $\mathcal{C}_n^H$, the distinguishing advantage
    \begin{equation}
    \left| \Pr_{U \sim \mathcal{U}_n}[\CA^U(1^n) = 1] - \Pr_{U \sim \mathcal{C}^H_n}[\CA^U(1^n) = 1] \right| \leq \mathsf{negl}(n),
    \end{equation}
    where $\mathsf{negl}(n)$ is a function smaller than any $\frac{1}{\poly{n}}$.
\end{enumerate}
\end{definition}

\noindent With this definition, we can formally ask the question of whether energy-conserving PRUs exist for a family of local Hamiltonians. Note that because pseudorandom objects are, by definition, asymptotic with respect to a parameter $n$ (we require that no polynomial-time algorithm exists, which is itself an asymptotic statement), we can only discuss the existence or absence of energy-conserving PRUs for a \emph{sequence} of local Hamiltonians $H_n$ for varying system sizes $n$.

\section{Main results}

We first provide a simple and efficient construction of energy-conserving PRUs, for random commuting Hamiltonians of the form $H=\sum_i \CJ_i h_i$, where $\CJ_i$ are Gaussian random coefficients, and $\{h_i\}$ form a complete set of commuting local observables. Here, a complete set means all common eigenstates of $\{h_i\}$ can be determined by specifying the local eigenstates of each $h_i$. This additional constraint is introduced to prevent insufficient covering, i.e., to ensure that no qubits exist that do not support any $h_i$.

\begin{theorem}[Constructing energy-conserving PRUs for random commuting Hamiltonians]
\label{main:thm:construct-pru-commuting}
Let $H=\sum_i\CJ_ih_i$ be an $n$-qubit commuting Hamiltonian with Gaussian random coefficients $\CJ_i$, where $\{h_i\}$ forms a complete set of commuting observables. There exists an efficient ensemble of unitaries that forms an energy-conserving PRU of $H$ with probability at least $1-\mathrm{negl}(n)$. 
\end{theorem}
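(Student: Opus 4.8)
The plan is to exploit the structure of commuting Hamiltonians: since $\{h_i\}$ forms a complete set of commuting local observables, the Hilbert space decomposes as $\mathcal{H} = \bigoplus_{\vect{\lambda}} \mathcal{H}_{\vect{\lambda}}$ where the label $\vect{\lambda}$ records, for each site (or each local patch supporting an $h_i$), which local eigenstate is occupied. Completeness guarantees that this joint eigenbasis is a product-like computational basis (after a fixed efficiently-implementable local change of basis), so an energy-conserving unitary for $H = \sum_i \mathcal{J}_i h_i$ is, generically in the random coefficients $\mathcal{J}_i$, block-diagonal exactly with respect to these labels: with probability $1 - \mathrm{negl}(n)$ over Gaussian $\mathcal{J}_i$, no two distinct eigenvalue-patterns $\vect{\lambda}, \vect{\lambda}'$ collide in total energy, so the commutant of $H$ is precisely $\bigoplus_{\vect{\lambda}} \mathrm{U}(\mathcal{H}_{\vect{\lambda}})$. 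The first step is therefore to (i) perform the efficient local basis change mapping the eigenbasis of $\{h_i\}$ to the computational basis, and (ii) establish the no-collision lemma via an anticoncentration bound for Gaussian linear combinations, noting there are at most $2^{O(n)}$ pairs $(\vect{\lambda},\vect{\lambda}')$ and each collision event $\sum_i \mathcal{J}_i(\lambda_i - \lambda_i') = 0$ has probability zero (or one can make it quantitatively $\mathrm{negl}(n)$-robust by perturbing).

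Having reduced to block-diagonal unitaries indexed by a classical string $\vect{\lambda}$, the task becomes: construct an efficient ensemble that, on each block $\mathcal{H}_{\vect{\lambda}}$, looks Haar-random, jointly across all (exponentially many) blocks, to any polynomial-query adversary. The natural approach is to build a \emph{permutation-controlled} PRU: choose a pseudorandom function (from the assumed quantum-secure one-way function) that assigns to each label $\vect{\lambda}$ a fresh key $k_{\vect{\lambda}}$, and apply inside block $\mathcal{H}_{\vect{\lambda}}$ a standard PRU $U_{k_{\vect{\lambda}}}$ acting on that subspace. Concretely one implements this as: conjugate by the local basis change $V$; then apply a unitary that, controlled on the register recording $\vect{\lambda}$, runs the PRU keyed by $\mathrm{PRF}_s(\vect{\lambda})$ on the remaining degrees of freedom; then conjugate back by $V^\dagger$. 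Each block $\mathcal{H}_{\vect{\lambda}}$ has dimension $2^{m_{\vect{\lambda}}}$ for some $m_{\vect{\lambda}} \le n$, so the PRU on it has size $\mathrm{poly}(n)$, and the controlled dispatch is efficient because $\vect{\lambda}$ is read off by a constant-depth local circuit. This establishes the efficiency clause.

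For pseudorandomness, I would argue in two hybrid steps. First replace $\mathrm{PRF}_s$ by a truly random function $\vect{\lambda} \mapsto k_{\vect{\lambda}}$: indistinguishable by security of the PRF against the polynomial-time adversary. Now the construction applies, in each block independently, a fresh independent copy of a PRU. Second, replace each block's PRU by a Haar-random unitary on that block: here one invokes the existing PRU security theorem, but one must be careful that the adversary queries the \emph{global} oracle, which only touches $\mathrm{poly}(n)$ of the exponentially many blocks in any nontrivial way within polynomially many queries — a hybrid over the (polynomially many) ``touched'' blocks, combined with the fact that a polynomial-query algorithm cannot distinguish which untouched blocks are PRU vs Haar, reduces global indistinguishability to per-block PRU security. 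After both hybrids the ensemble is exactly block-diagonal Haar-random with respect to the $\vect{\lambda}$-decomposition, which by the no-collision lemma is precisely the Haar measure on the commutant $\{U : [U,H]=0\}$, i.e.\ the energy-conserving Haar ensemble $\mathcal{C}^H_n$.

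\textbf{Main obstacle.} The delicate point is the global-to-local reduction in the second hybrid: naively one has exponentially many independent PRU instances glued together, and one must show that no polynomial-time adversary with oracle access to the glued unitary can exploit correlations or the sheer number of blocks. The clean way is a path-recording / lazy-sampling argument (in the spirit of the purified-oracle techniques used in recent PRU proofs): maintain the blocks' unitaries lazily, only instantiating a block when a query has nontrivial overlap with it, and show that the adversary's view is statistically close whether those instantiated blocks are drawn from the PRU ensemble or Haar — this is where the single-block PRU security theorem is actually used, applied a polynomial number of times. Getting the bookkeeping right so that ``nontrivial overlap'' is well-defined and the number of touched blocks stays polynomial with overwhelming probability is the technical heart of the argument; everything else (the basis change, the Gaussian anticoncentration, the PRF hybrid) is routine.
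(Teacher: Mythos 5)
Your plan correctly identifies the central structural insight — Gaussian coefficients make the spectrum non-degenerate with overwhelming probability, so the commutant collapses and the target ensemble is a collection of independent random phases on the eigenstates, which one then tries to emulate with a pseudorandom function. That much matches the paper. But there are two genuine gaps.

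First, and most importantly, you assume without justification that the joint eigenbasis of $\{h_i\}$ is reachable from the computational basis by ``a fixed efficiently-implementable local change of basis,'' and that ``$\vect{\lambda}$ is read off by a constant-depth local circuit.'' The definition allows the $h_i$ to have overlapping supports (they need only commute and be jointly complete), and for general commuting local observables the simultaneous eigenbasis is not obviously reachable by a shallow or even polynomial-size circuit, nor is the eigenvalue tuple locally readable. This is precisely what the paper's proof is built to circumvent: rather than diagonalizing $H$ explicitly, it uses quantum phase estimation with $U=e^{iH}$ to coherently write an (approximate) energy label into an ancilla register, and it implements the exponentially-long controlled evolutions $U^{2^j}$ required by QPE efficiently via fast-forwarding, $e^{iHt}=\prod_j e^{i\CJ_j h_j t}$. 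That commuting structure enables fast-forwarding, not an explicit eigenbasis change, and your proposal has no substitute for it. Relatedly, you would also need the anticoncentration argument to be quantitative — the paper proves an explicit inverse-exponential lower bound on the minimum spectral gap (and handles digitization of the $\CJ_i$), because this gap sets the number of ancilla qubits and the QPE precision; a ``probability-zero collision'' statement alone is not enough.

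Second, once you invoke completeness together with the no-collision lemma, every block $\calH_{\vect{\lambda}}$ is one-dimensional: the local eigenvalue tuple uniquely determines the eigenstate, and non-collision forbids two such tuples from sharing an energy. So $\bigoplus_{\vect{\lambda}}\mathrm{U}(\calH_{\vect{\lambda}})$ is exactly the group of diagonal phase unitaries, and there is no ``PRU on a block of dimension $2^{m_{\vect{\lambda}}}$'' to run. The entire second half of your argument — the PRF-dispatched per-block PRU, the lazy-sampling hybrid over ``touched'' blocks, the invocation of single-block PRU security — is built on a picture that the first half rules out. The real work (which the paper does) is showing that applying a pseudorandom phase $(-1)^{g(\cdot)}$ controlled on an imperfect QPE-produced energy label, then uncomputing, is $\negl{n}$-close in diamond distance to the true random-phase ensemble even against adaptive queries; this requires careful handling of the QPE tail (random offsets to avoid coarse-bin boundaries) and a diamond-norm bookkeeping lemma, none of which appear in your sketch.
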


\noindent The concrete constructions are described in Section~\ref{main:easy-ham}, and a detailed statement and proof of the theorem is given in Theorem~\ref{thm:construct-pru-commuting} in Appendix~\ref{ap:pru-commuting-ham}.

Having shown that energy-conserving PRUs exist for random commuting Hamiltonians, we now construct a different family of Hamiltonians for which energy-conserving PRUs provably do \emph{not} exist.
The Hamiltonians we consider are one-dimensional, local, and translationally invariant.
We establish the non-existence of their energy-conserving PRUs by providing an efficient quantum algorithm to distinguish an energy-conserving Haar-random unitary (under this family of Hamiltonians) from any polynomial-size quantum circuit.
This provision is even stronger than strictly proving non-existence.


\begin{theorem}[Hard Hamiltonians with no energy-conserving PRUs]    
\label{main:thm:hard-ham}
There exists a uniform family of one-dimensional, local, and translationally invariant Hamiltonians $\mathcal{H}$ whose matrix elements belong to $\{0,1,10,1/2,1/4\}$, such that there is a universal algorithm to distinguish the energy-conserving Haar-random unitaries of $\mathcal{H}$ from any polynomial-size quantum circuit.
\end{theorem}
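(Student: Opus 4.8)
The plan is to exploit the structural feature of energy-conserving Haar-random unitaries highlighted in Fig.~\ref{fig:main}(a): such a unitary $U$ acts as a \emph{genuinely Haar-random unitary within each degenerate eigenspace} of $\mathcal H$ and only as a phase between eigenspaces. I would engineer $\mathcal H$ so that it possesses an exponentially large degenerate eigenspace $V$, with $\dim V = 2^{\Theta(n)}$, whose natural (eigen)basis is easy to verify membership in and easy to prepare one vector at a time, yet is structured so that no polynomial-size circuit can simultaneously conserve the energy of $\mathcal H$ and scramble $V$. The universal distinguisher $\CA$ then runs in two stages. First it applies the energy-based test of Observation~\ref{main:obs:energy-distinguisher} to random low-energy product states and rejects any oracle whose output energy drifts toward $\Tr(\mathcal H)/2^n$, i.e., any circuit that fails to (approximately) conserve $\mathcal H$. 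Second, on oracles that survive, it prepares an efficiently-preparable state supported on $V$, queries the oracle once (and possibly $U^\dagger$ once), and performs a $\poly(n)$-time test certifying whether the resulting state has been spread across $V$ the way a Haar-random vector is. An energy-conserving Haar-random unitary passes both stages with overwhelming probability; every polynomial-size circuit fails at least one of them.

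For the construction I would take a Feynman--Kitaev clock Hamiltonian encoding a fixed reversible computation and make it nearest-neighbour, one-dimensional, and translationally invariant via the Gottesman--Irani technique, using the \emph{chain length} as the computation's input so that a single translationally invariant local term defines $\mathcal H$ for every system size $n$. The penalty terms forbidding malformed clock configurations and illegal transitions carry weight $10$, while the clock-propagation and clock-consistency terms contribute the fractional matrix values $1/2$ and $1/4$, so every nonzero entry of the local term lies in $\{0,1,10,1/2,1/4\}$ as required. The zero-energy space $V$ is exactly the span of the valid history states of the embedded computation over its $2^{\Theta(n)}$ inputs; I would choose the computation so that, although each history state is efficiently preparable (run the poly-time computation forward in superposition), an efficient circuit that remains inside $V$—hence approximately conserves energy—and scrambles the history-state basis can be turned into an efficient algorithm for a problem that is conjectured hard, e.g. inverting a quantum-secure one-way function; this is the step that imports the cryptographic assumption, and it is in the spirit of the reductions showing that PRUs require one-way functions, adapted to the energy-conserving setting. (For the companion undecidability result one instead embeds a universal Turing machine, so that the existence of a nontrivial efficient energy-conserving structure on $V$ tracks the halting behaviour of the machine.)

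The two-case analysis then runs as follows. If $U$ is energy-conserving Haar-random for $\mathcal H$, then $U$ preserves $V$ and is Haar-random on it, so the prepared $V$-state is mapped to a Haar-random vector of $V$, whose amplitudes concentrate toward uniformity over the history-state basis, and the spreading test accepts with probability $1-\negl{n}$. If instead $U=C$ for a polynomial-size circuit, then either $C$ moves low-energy product states toward infinite temperature—rejected by the energy test of Observation~\ref{main:obs:energy-distinguisher}—or $C$ is $\varepsilon$-energy-conserving, in which case a perturbation estimate bounds the weight that $C$ leaks out of $V$ by $O(\varepsilon)$, and the spreading test together with the hardness reduction forces the post-$C$ state on $V$ to be far from a scrambled vector except with negligible probability, so the spreading test rejects. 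A union bound over the two stages, followed by $\poly(n)$ independent repetitions for amplification, yields a single algorithm with acceptance probability $1-\negl{n}$ on the energy-conserving Haar ensemble and $\negl{n}$ on any polynomial-size circuit.

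The main obstacle is the Hamiltonian engineering together with robustness of the reduction. Concretely, I expect the hard part to be $(i)$ producing a one-dimensional, nearest-neighbour, translationally invariant Hamiltonian over the prescribed five-element matrix set whose zero-energy space is \emph{exactly} the valid-history subspace—with no spurious low-lying eigenvectors, and with a clean description of which configurations participate—so that both the Haar-random-within-$V$ statistics and the hardness reduction can be invoked without slack; and $(ii)$ upgrading the one-way-function reduction so that it applies to circuits that only \emph{approximately} conserve energy and that are permitted an arbitrary $\poly(n)$-time post-processing step before the spreading test, which requires combining a leakage bound for the projector onto $V$ with standard hardness amplification. The remaining ingredients—the energy test, preparation of a state supported on $V$, and concentration of Haar-random amplitudes—should be routine given the earlier results in the paper.
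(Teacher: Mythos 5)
Your proposal shares the paper's scaffolding (Feynman--Kitaev-type 1D translationally invariant Hamiltonian, quantum-secure one-way functions as the cryptographic assumption, verify a solution in the distinguisher) but misses the mechanism that actually makes the distinguisher work, and your picture of the Hamiltonian's degeneracy structure is wrong in a way that matters.

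The central idea in the paper is not to \emph{test} whether the oracle ``scrambles a degenerate subspace $V$ like Haar'' --- indeed, that test would be hopeless in general, since efficient pseudorandom scrambling of a subspace is exactly what PRUs achieve, so no purely statistical ``spreading test'' can separate Haar-on-$V$ from a polynomial circuit without a concrete use for the scrambling. The paper instead \emph{uses} the oracle to compute: the Hamiltonian encodes a reversible Turing machine solving a $\pspace$-complete problem (TQBF), so that each invariant subspace is the span of configuration states along one 1D path of the computation. Starting from the product state $\ket{\psi_{\C_1}}$ encoding the input $x$, applying the energy-conserving Haar unitary and measuring in the configuration basis collapses onto a uniformly-random vertex of the (exponentially long) path with roughly equal probability; the construction doubles the path (``duplicated TM'') so that the second half of every path carries the accept/reject label, giving $\Omega(1)$ probability of reading off the answer from a single query (Lemma~\ref{main:thm:ru-solve-pspace}). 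The universal distinguisher builds a TQBF solver from the oracle, uses it to invert a one-way function, and \emph{verifies the preimage} (Lemma~\ref{main:thm:tqbf-verification}). A polynomial circuit fails because it cannot invert the OWF --- whether or not it conserves energy --- so your first-stage energy test is not needed. Your proposal never specifies how the distinguisher uses the oracle to produce a checkable witness; without that, the reduction to OWFs has no teeth.

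The structural picture is also off. You want a single exponentially-degenerate zero-energy eigenspace $V$ spanned by all history states, but if the Haar block were $V$ itself, applying $U$ to your input's history state would scatter amplitude across the histories of \emph{all} inputs, and the measurement would not tell you anything about your particular instance. The paper's construction is carefully the opposite: different paths (different computations) form \emph{distinct} invariant subspaces, and the $\{1/2,1/4\}$ boundary-potential perturbations are there precisely to lift degeneracies between accept paths, reject paths, non-computational loops, and illegal configurations (Lemma~\ref{computation-Hamiltonian}, proved via the irreducibility argument of Lemma~\ref{lem:solution-eigenvalue}). Controlling which subspaces remain degenerate --- so that an accept path can only mix with other accept paths --- is the technical heart of the construction and has no analogue in your sketch. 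Finally, the coefficient set $\{0,1,10,1/2,1/4\}$ is not a free choice for clock-propagation weights; the $10$ is the penalty separating the single-head computational subspace from illegal configurations, and the $1/2,1/4$ are exactly the boundary potentials needed so the hopping-problem eigenvalues for accept and reject sinks land in provably disjoint sets.
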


\noindent We choose the stated coefficients to ensure that each Hamiltonian $H$ in the family $\mathcal{H}$ is directly representable on digital computers. Our construction builds upon the Feynman-Kitaev Hamiltonian for quantum simulation of Turing machines~\cite{feynman1986quantum,kitaev2002classical}. We make several significant improvements over this standard construction to enable the energy-conserving Haar-random unitary of $H$ to solve $\mathsf{PSPACE}$-complete problems, a set of problems believed to be hard for quantum computers. In particular, $\mathsf{PSPACE}$-complete problems can be used to invert quantum-secure one-way functions, which we utilize to construct the universal distinguishing algorithm in Theorem~\ref{main:thm:hard-ham}. Our constructions of the Hamiltonian and distinguishing algorithm are overviewed in Section~\ref{main:hard-ham}. A detailed statement and proof of the theorem are provided in Theorem~\ref{thm:universal-distinguisher} in Appendix~\ref{ap:verifier-random-unitary}.

Our results so far have shown that some simple Hamiltonians allow energy-conserving PRUs, while other Hamiltonians do not.
This naturally raises the question: How can one determine whether energy-conserving PRUs exist for a given Hamiltonian $H$?
Our final result proves that this problem is undecidable.
We prove this by combining the two theorems above and embedding the canonical undecidable problem, the halting problem, into the degeneracy of Hamiltonians.

\begin{theorem}[Undecidability of existence of energy-conserving PRUs]
\label{main:thm:undecidable-deciding-pru}
Determining whether a given uniform family of local Hamiltonians has energy-conserving PRUs is an undecidable problem.
\end{theorem}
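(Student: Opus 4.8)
The plan is to exhibit a computable many-one reduction from the halting problem to the problem of deciding whether a uniform family of local Hamiltonians admits energy-conserving PRUs. Fix a standard encoding of Turing machines. Given the code $\langle M\rangle$ of a machine $M$, we construct a uniform family $\{H_n^M\}_{n\in\bbN}$ of one-dimensional local Hamiltonians such that $\{H_n^M\}$ admits energy-conserving PRUs if and only if $M$ halts on the empty input. Since the latter is undecidable, so is the former, which is exactly the claim.

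The family is specified by the following rule, which is $\poly{n}$-time computable from $\langle M\rangle$ and $1^n$: simulate $M$ on the empty tape for $n$ steps; if $M$ has halted within $n$ steps, let $H_n^M$ be an $n$-qubit Hamiltonian known to admit energy-conserving PRUs---for concreteness the maximally degenerate Hamiltonian $H_n^M=0$, for which every unitary conserves energy, so that the energy-conserving Haar ensemble coincides with the ordinary Haar measure and energy-conserving PRUs exist whenever ordinary PRUs do (hence under quantum-secure one-way functions); alternatively one could take a commuting Hamiltonian as in Theorem~\ref{main:thm:construct-pru-commuting}. If instead $M$ has not halted within $n$ steps, let $H_n^M$ be the $n$-qubit member of the hard family $\mathcal{H}$ from Theorem~\ref{main:thm:hard-ham}. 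Thus the halting behavior of $M$ is embedded in the degeneracy structure of the sequence $\{H_n^M\}$, and the map $\langle M\rangle\mapsto\langle\{H_n^M\}\rangle$ is computable, preserving uniformity.

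It remains to check the equivalence. Suppose first that $M$ halts, say after exactly $T$ steps. Then $H_n^M=0$ for every $n\ge T$, so $\{H_n^M\}$ agrees with the trivial-Hamiltonian family outside a finite initial segment. Both requirements of Definition~\ref{def:ec_pru} are asymptotic: a negligible function is unaffected by finitely many of its values, and the finitely many $n<T$ can be handled by setting $\mathcal{U}_n=\{I\}$ there (size zero, within any polynomial bound). Taking any efficient PRU construction for $n\ge T$, where it is automatically an energy-conserving PRU because $[U,H_n^M]=0$ for all $U$, we obtain an energy-conserving PRU for $\{H_n^M\}$, so the family admits one. Conversely, suppose $M$ never halts. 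Then $H_n^M$ is the $n$-qubit hard Hamiltonian for every $n$, i.e.\ $\{H_n^M\}=\mathcal{H}$, and Theorem~\ref{main:thm:hard-ham} provides a polynomial-time quantum algorithm $\calD$ that, with probability $1-\negl{n}$, accepts given oracle access to an energy-conserving Haar-random unitary of $H_n^M$ and rejects given oracle access to any polynomial-size quantum circuit. Since any candidate energy-conserving PRU ensemble consists, by its efficiency clause, of polynomial-size circuits, $\calD$ distinguishes it from the energy-conserving Haar ensemble with non-negligible advantage, contradicting pseudorandomness; so $\{H_n^M\}$ admits no energy-conserving PRU. The two cases establish the reduction.

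The substantive work is already contained in Theorems~\ref{main:thm:construct-pru-commuting} and~\ref{main:thm:hard-ham}; the reduction itself is routine. I expect the only point needing care---and the one I would flag as the main obstacle---to be the verification that the existence of energy-conserving PRUs is a \emph{tail} property of the Hamiltonian sequence, i.e.\ that overwriting $H_n$ for the finitely many $n$ below $M$'s (finite, whenever it exists) running time cannot change whether the family has energy-conserving PRUs; this is exactly what licenses using the halting time of $M$ as the threshold past which the family turns trivial. A secondary point is that the reduction must emit a genuinely uniform family, which holds because the bounded simulation of $M$ for $n$ steps followed by writing down either the all-zero Hamiltonian or the explicit local terms of $\mathcal{H}$ is a $\poly{n}$-time procedure.
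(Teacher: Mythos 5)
Your proof is correct and follows essentially the same route as the paper: a many-one reduction from the halting problem that outputs the trivial Hamiltonian once the embedded Turing machine halts and the hard Hamiltonian of Theorem~\ref{main:thm:hard-ham} otherwise, with the undecidability conclusion resting on the fact that energy-conserving PRU existence is a tail property of the Hamiltonian sequence. The only difference is stylistic: you branch directly on whether $M$ halts within $n$ steps, whereas the paper implements the same branch by dovetailing a reversible Hamiltonian-generating machine with the universal machine and reverse-computing to erase the output when the latter halts first (Lemma~\ref{lem:halting-tm}); your version avoids this extra machinery and makes the ``tail property'' reasoning more explicit, but the reduction is the same in substance.
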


\noindent Our construction is described in Section~\ref{main:undecidability}. A detailed statement and proof of this theorem is given in Theorem~\ref{thm:undecidable-deciding-pru} in Appendix~\ref{ap:undecidable}.

\section{Discussions}

Energy-conserving PRUs produce dynamics indistinguishable from truly random dynamics that satisfy energy conservation, making them natural tools for investigating and mimicking chaotic and thermalizing quantum dynamics in local Hamiltonians.
Our results lead to several implications and open questions from this perspective.

%
To obtain physical intuition about energy-conserving PRUs, let us first clarify what energy-conserving Haar-random unitaries represent. Energy-conserving Haar-random unitaries are most directly related to Hamiltonian dynamics when the Hamiltonian $H$ has a generic non-degenerate energy spectrum.
%
%
Consider an $n$-qubit Hamiltonian $H = \sum_{k=1}^{2^n} E_k \ketbra{k}{k}$ with a generic non-degenerate spectrum.
The subgroup of unitaries that conserve energy under $H$ are of the simple form $U = \sum_k e^{i \theta_k} \ketbra{k}{k}$, for an arbitrary phase $\theta_k \in [0, 2\pi)$ for each $k$ from $1$ to $2^n$.
An energy-conserving Haar-random unitary is hence:
\begin{equation}
U = \sum_k e^{i \theta_k} \ketbra{k}{k},
\end{equation}
for uniformly random phases $\theta_k \in [0, 2\pi)$.
When the energy spectrum is generic, the set of energy-conserving random unitaries is the same as the set of unitaries $e^{-iHt}$ for time $t \in (-\infty, \infty)$~\cite{mark2024maximum, mok2025optimal}. An energy-conserving Haar-random unitary corresponds to $e^{-iHt}$ for an extremely large random time $t$.
We note that the same insight is used to prove Theorem~\ref{main:thm:construct-pru-commuting} and~\ref{main:thm:hard-ham}, where the constructed Hamiltonians have nondegnerated spectra (with high probability).

The $2^n$ independent random phases $\theta_1, \ldots, \theta_{2^n}$ require $O(2^n)$ bits to even specify a single unitary. Hence, the total degree of randomness of this set is $2^{O(2^n)}$.
This extreme randomness requirement cannot be efficiently achieved through finite-time Hamiltonian evolution by the following counting argument.
If we consider unitaries generated by $H$ through evolution over $t \in [-T, T]$, the number of distinct unitaries is at most $O(T)$.
Therefore, achieving exponentially many bits of randomness requires $T$ to be doubly exponential in $n$.
%
%
Even for fast-forwardable Hamiltonians~\cite{atia2017fast, gu2021fast} such as commuting systems, creating an energy-conserving Haar-random unitary requires simulating $e^{-iHt}$ for doubly exponentially large $t$, demanding exponentially large quantum circuits.
When $H$ has energy degeneracies, energy-conserving Haar-random unitaries are  even stronger than extremely-long-time Hamiltonian dynamics: they require the unitary to scramble every degenerate subspace. 
%
Energy-conserving PRUs attempt to circumvent these intrinsic barriers by providing computationally efficient approximations that are indistinguishable from these exponentially complex objects.

Our central finding is that this circumvention is not always possible.
For our hard Hamiltonians (Theorem~\ref{main:thm:hard-ham}), the exponential complexity inherent in energy-conserving Haar-random unitaries creates detectable computational signatures.
The key insight is that certain features of this exponential complexity can be revealed in polynomial time, allowing an observer to efficiently distinguish genuine energy-conserving Haar unitaries from any polynomial-size quantum circuit.
An observer need only interact with the unitary polynomially many times to obtain evidence that the underlying dynamics encode superpolynomially complex circuits, thereby distinguishing genuine energy-conserving Haar unitaries from any attempted polynomial-size circuit approximation.
Under standard subexponential-hardness assumptions for quantum-secure one-way functions, our distinguishing algorithms can separate energy-conserving Haar unitaries from any subexponential-size quantum circuit.

These findings represent worst-case statements about the computational complexity of constructing energy-conserving pseudorandom dynamics.
The dichotomy between efficient energy-conserving PRUs for commuting Hamiltonians, and their proven impossibility for our hard systems, represent extreme ends of the computational spectrum.
The central open question is whether generic random Hamiltonians or naturally-occurring Hamiltonians are likely to admit energy-conserving PRUs.
This average-case complexity question remains unresolved and is not precluded by our undecidability result, since probabilistic statements about random Hamiltonian families could potentially be established even when no algorithm can decide individual cases.
Resolving this question would determine whether the computational barriers we have identified constitute fundamental obstacles to mimicking energy-conserving random quantum dynamics, or merely represent pathological edge cases within the broad space of physically realizable Hamiltonians.

\section{Proof overview}

In this section, we provide detailed descriptions of the theoretical constructions and proofs underlying each of our main results, Theorems~\ref{main:thm:construct-pru-commuting},~\ref{main:thm:hard-ham}, and~\ref{main:thm:undecidable-deciding-pru}.

\subsection{Hamiltonians with energy-conserving PRUs}\label{main:easy-ham}

In this section, we describe how to construct energy-conserving PRUs for random commuting Hamiltonians. We focus on Hamiltonians of the form $H=\sum_i\CJ_i h_i$, where $\{h_i\}$ forms a complete set of local terms that commute with each other, and $\{\CJ_i\}$ are i.i.d. Gaussian random variables (more precisely, Gaussian random variables with proper digitization). Technical details can be found in Appendix~\ref{ap:pru-commuting-ham}.

To construct energy-conserving PRUs, we first clarify the structure of energy-conserving Haar-random unitaries. As already mentioned, any unitary that commutes with $H$ must be block-diagonal with respect to its eigenspaces. The energy-conserving Haar-random unitaries then scramble each subspace independently.
\begin{observation}
[Structure of energy-conserving Haar-random unitaries, Fact~\ref{non-degenerated-pru} in Appendix~\ref{ap:preliminary}]\label{main:obs:non-degenerated-pru}
    Let $\mathcal{H}^{k}$, $1\leq k\leq m$ be the degenerate subspaces of $H$, then any $U\in\mathcal{C}_H$ takes the form $U=\bigoplus_{k=1}^mU_k$, where $U_k$ is a random unitary inside $\mathcal{H}^{k}$. 
\end{observation}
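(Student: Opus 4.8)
The plan is to identify the group $\mathcal{C}_H=\{U:[U,H]=0\}$ of energy-conserving unitaries with an explicit direct product of unitary groups, and then to use uniqueness of the Haar measure on a compact group to conclude that the Haar measure on this product factorizes into a product of Haar measures on the blocks.

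First I would write $H$ in spectral form $H=\sum_{k=1}^m E_k P_k$, where $E_1,\dots,E_m$ are the \emph{distinct} eigenvalues of $H$ and $P_k$ is the orthogonal projector onto the eigenspace $\mathcal{H}^k$, so that $\mathcal{H}=\bigoplus_{k=1}^m\mathcal{H}^k$. The elementary step is that $[U,H]=0$ forces $U$ to preserve each $\mathcal{H}^k$: if $\ket{v}\in\mathcal{H}^k$ then $HU\ket{v}=UH\ket{v}=E_kU\ket{v}$, so $U\ket{v}\in\mathcal{H}^k$; since $U$ is invertible, the spaces are finite-dimensional, and the $\mathcal{H}^k$ partition $\mathcal{H}$, it in fact maps each $\mathcal{H}^k$ onto itself. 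Hence $U=\bigoplus_{k=1}^m U_k$ with $U_k:=U|_{\mathcal{H}^k}$ unitary, and conversely any such block-diagonal unitary commutes with $H$. This gives a group isomorphism, which is also a homeomorphism, $\mathcal{C}_H\cong\prod_{k=1}^m\mathrm{U}(d_k)$ with $d_k=\dim\mathcal{H}^k$.

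Next I would invoke the standard fact that on a direct product $G=G_1\times\cdots\times G_m$ of compact groups the product measure $\mu_1\times\cdots\times\mu_m$ of the individual Haar probability measures is again a left-invariant probability measure on $G$, hence by uniqueness of Haar measure it equals the Haar measure of $G$. Transporting this statement along the isomorphism above shows that drawing $U$ from the Haar measure on $\mathcal{C}_H$ is the same as drawing each block $U_k$ \emph{independently} from the Haar measure on $\mathrm{U}(d_k)$ and forming $U=\bigoplus_k U_k$, which is exactly the assertion of the observation.

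All of these steps are routine, so I do not anticipate a genuine obstacle; the only point deserving care is that the decomposition $\mathcal{H}=\bigoplus_k\mathcal{H}^k$ must be taken with respect to the \emph{distinct} eigenvalues, i.e.\ the maximal degenerate subspaces. With a finer orthogonal decomposition the product-group identification fails, because unitaries that mix subspaces of equal energy also commute with $H$; and if two listed subspaces shared an eigenvalue one would be undercounting the commutant. Tracking $m$ and the $d_k$ as data determined by $H$ is the only additional bookkeeping needed for the uniform-in-$n$ usage of this fact elsewhere in the paper.
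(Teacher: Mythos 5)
The paper states this as an unproved ``Fact'' (Fact~\ref{non-degenerated-pru} in Appendix~\ref{ap:preliminary}), treating it as standard background and offering no argument. Your proposal is a correct and complete proof along the only natural route one would take---show the commutant of $H$ is exactly the block-diagonal subgroup $\prod_k \mathrm{U}(d_k)$ via the spectral projectors, then invoke uniqueness of Haar measure on a product of compact groups to factorize the measure---and your caveat about using the decomposition into \emph{maximal} degenerate subspaces (distinct eigenvalues) is exactly the point on which a careless version of the argument would fail.
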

\noindent Specifically, when $H$ has no energy degeneracy, $U=\sum_{k=1}^{2^n}e^{i\theta_k}\ketbra{k}{k}$, where $\ket{k}$ is the $k$-th eigenstate, and $\theta_k$ is a random phase factor. This ensemble is called the random phase ensemble in the literature~\cite{mark2024maximum, mok2025optimal}. The simplified structure of the random phase ensemble enables an efficient construction. This is where the random Gaussian coefficients $\{\CJ_i\}$ prove essential: they prevent $H$ from having energy degeneracies since it lacks level repulsion.

To add random phases to each energy eigenstate, we use the quantum phase estimation (QPE) algorithm~\cite{kitaev1995quantum} as a subroutine. 
The algorithm requires ancilla qubits to set the estimation precision~\cite{cleve1998quantum,kitaev2002classical,dalzell2023quantum}. Assume we are working with an $n$-qubit system with $m$ ancillas. Denote $U$ to be the $n$-qubit unitary, and $\ket{\psi}$ to be an eigenstate of $U$ with eigenvalue $e^{i2\pi\psi}$, where $\psi\in[0,1)$. QPE then transforms $\ket{\psi}\ket{0^m}$ to $\ket{\psi}\ket{\widetilde{\psi}}$, where $\ket{\widetilde{\psi}}$ represents the digitization of $\psi$ stored in the ancilla qubits. To achieve $2^{-m}$ precision, QPE applies $m$ controlled operations $U_c^{2^j}$ for $j=1,2,\ldots,m$, where $U_c$ is the controlled $U$ by a single qubit. We refer readers to Appendix~\ref{ap:preliminary} for a more detailed introduction.

Using linearity, QPE transforms any superposition state $\big(\sum_{\psi}c_{\psi}\ket{\psi}\big)\ket{0^m}$ into $\sum_{\psi}c_{\psi}\ket{\psi}\ket{\widetilde{\psi}}$. Then we add phase factors to each eigenstate by applying an oracle $\CO_f:\ket{x}\rightarrow (-1)^{f(x)}\ket{x}$ for $x\in\{0,1\}^m$ to the ancilla register, where $f:\{0,1\}^m\rightarrow \{0,1\}$ is a Boolean function. By choosing $f$ from pseudorandom Boolean functions (which can be constructed from one-way functions), $\CO_f$ can be efficiently constructed, and  generates pseudorandom phases that are computationally indistinguishable from truly random phases. After applying the inverse of QPE to uncompute the ancillas, the resulting state $\big(\sum_{\psi}(-1)^{f(\widetilde{\psi})}c_{\psi}\ket{\psi}\big)\ket{0^m}$ is computationally indistinguishable from $\big(\sum_{\psi}e^{i\theta_\psi}c_{\psi}\ket{\psi}\big)\ket{0^m}$ for truly random $\{\theta_\psi\}$. Therefore, when choosing $U$ to be $e^{iH}$, the sequential application of QPE, pseudorandom phases, and QPE-inverse acts indistinguishably from a random phase unitary, providing a construction for energy-conserving PRUs.

To properly discriminate different energy eigenvalues by QPE, we need to choose $m$ to be polynomial in $n$ to achieve an inverse exponential precision. This requires applying $m$ different controlled operations $e^{iHt}$ with $t=O(2^m)$. Nevertheless, since $H=\sum_i\CJ_ih_i$ is a commuting Hamiltonian, one can factorize $e^{iHt}=\prod_je^{i\CJ_jh_j t}$ and simulate each local evolution to achieve efficient simulation, a property known as fast-forwardability~\cite{atia2017fast, gu2021fast}. Together, the overall running time for the construction is polynomial in $n$.

\begin{figure}
    \centering
    \includegraphics[width=1.0\linewidth]{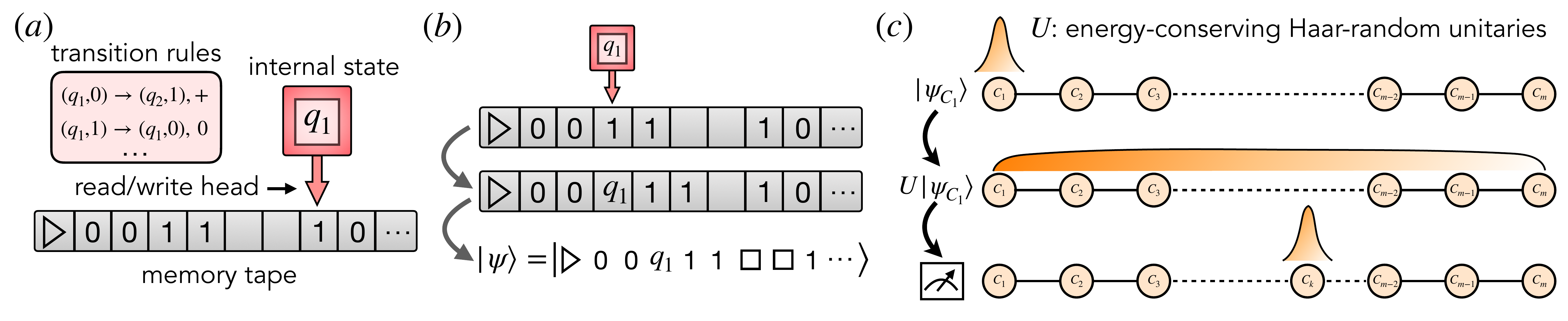}
    \caption{(a) Illustration of Turing machines. A TM consists of an infinite tape storing symbols, a read/write head that moves left or right who has a  internal state, and a set of transition rules to determine the actions and movements of the head. (b) To represent any configuration of TMs (with finite memory size) with a product state, we first put the head state into the tape to make the whole system strictly one dimensional. Then the configuration can be represented by a quantum product state with local Hilbert spaces containing all the symbols and internal states. (c) The idea of using energy-conserving Haar-random unitaries to do fast computation: any computation process can be viewed as moving along a one-dimensional path with vertices labeled by TM's configurations. After initialize the wavefunction to the left terminal corresponding to inputs, the Haar-random unitaries scramble the wavefunction to disperse along the whole path. Then a subsequent measurement collapses the wavefunction to any vertex with almost equal probability, making an exponentially large step forward at once. This idealized picture faces practical obstacles, see Section~\ref{main:hard-ham}. }
    \label{fig:illustration-tm}
\end{figure}

\subsection{Hamiltonians without energy-conserving PRUs}\label{main:hard-ham}

Now we turn to the construction of our hard Hamiltonian $H$ whose energy-conserving PRU do not exist. We prove the non-existence by presenting a universal algorithm to distinguish energy-conserving Haar-random unitaries of $H$ from any polynomial-size quantum circuit (or subexponential-size quantum circuit under subexponential hardness assumption for quantum-secure one-way functions).

The essential feature of $H$ is that its energy-conserving Haar-random unitary can be used to solve $\mathsf{PSPACE}$-complete problems: problems that can be solved with polynomial memory cost but unbounded time complexity. These problems are believed to be extremely hard even for quantum computers. With a $\mathsf{PSPACE}$-complete solver, one can efficiently invert quantum-secure one-way functions, which cannot be achieved by any polynomial-size quantum circuit. We construct the universal distinguishing algorithm by verifying whether one has successfully inverted the one-way function.

\subsubsection{Turing machine and Hamiltonian construction}

We first describe the construction of our hard Hamiltonian satisfying the mentioned properties. Our construction proceeds by embedding the dynamics of a Turing machine (TM) into a local Hamiltonian, similar to the Feynman-Kitaev construction of computational Hamiltonians~\cite{feynman1986quantum,kitaev2002classical}.

A TM, as illustrated in Fig.~\ref{fig:illustration-tm}, can be thought of as a one-dimensional dynamical system: it consists of an infinite tape storing symbols from a finite set $\Gamma$, a read/write head that moves left or right, and a finite set of internal states $Q$ of the head that together evolve under simple local update rules $\Delta$. When performing computation, we first load the inputs consecutively into the tape. Then the machine updates the tape and internal state according to the rules until it halts, if at all. The output of computation, $\acc$ or $\rej$, can then be read out. The number of steps the TM takes before halting is referred to as the time complexity of the problem. Despite this elementary structure, TMs are computationally universal in the sense that any algorithm can be expressed as such a sequence of updates. Thus, TMs serve as the foundation for modern theoretical computer science. For a more detailed overview, see Appendix~\ref{ap:preliminary} or standard textbooks~\cite{sipser1996introduction,papadimitriou2003computational,arora2009computational}.

For our purpose, we restrict the tape length to be finite and polynomial in input size, and require the TM to be reversible, i.e., any configuration of the machine has at most one predecessor. Note that the problems solvable by this type of TM form the set of $\mathsf{PSPACE}$ problems, named for the finite memory space of the TMs. We further require the tape to have periodic boundary conditions. With these restrictions, we can map any configuration of the TM with tape length $L$ to a one-dimensional quantum product state with length $L+1$, as illustrated in Figure~\ref{fig:illustration-tm}. In this mapping, each local Hilbert space contains all the symbols and internal states, thus forming identical qudit Hilbert spaces. Furthermore, since the update rules are local and homogeneous along the tape, we can express them as quantum isometries $V_\delta=\sum_{i=1}^L V_{\delta,i}$ for $\delta\in \Delta$, where $V_{\delta,i}$ represents the realization of transition rule $\delta$ at site $i$. Details of the mapping can be found in Appendix~\ref{ap:hardness-ham-construction}. Therefore, we define $H=\sum_{\delta\in \Delta}\big(V_{\delta}+V_{\delta}^\dagger\big)$ as a one-dimensional, local, and translationally invariant Hamiltonian.

By definition, $V_{\mathrm{forward}}=\sum_{\delta\in\Delta}V_{\delta}$ encodes all the transition rules. For example, let $\ket{\psi_\mathcal{C}}$ be the state that corresponds to some configuration $\mathcal{C}$ of the TM. Then $\ket{\psi_{\mathcal{C}'}}=V_{\mathrm{forward}}\ket{\psi_{\mathcal{C}}}$ represents the successor configuration after one step of update. Moreover, since the TM is designed to be reversible, $V_{\mathrm{forward}}^\dagger\ket{\psi_{\mathcal{C}'}}=\ket{\psi_{\mathcal{C}}}$. Therefore, we can view a computation process upon some input configuration as a unidirectional hopping along a one-dimensional path formed by successive configurations, as illustrated in Figure~\ref{fig:illustration-tm}. Each computational process corresponds to an invariant subspace, with effective Hamiltonian being the hopping Hamiltonian along the one-dimensional path. The initial configuration corresponds to a localized wavefunction at the initial terminal of the path. The hardness of $\mathsf{PSPACE}$-complete problems lies in the exponential length of the corresponding paths, thus having exponential time complexity.

This picture suggests that, if we have access to the energy-conserving Haar-random unitaries of $H$, difficult problems may be solved efficiently using the following method. To begin with, one prepares the product state $\ket{\psi_{\mathcal{C}_1}}$ that corresponds to the input configuration $\mathcal{C}_1$ for a given input $x$. This state serves as the source of the path, whose sink encodes the solution. Then one can sample a unitary $U$ from the energy-conserving Haar-random ensemble and apply it to $\ket{\psi_{\mathcal{C}_1}}$. The scrambling nature of $U$ will produce a wavefunction $U\ket{\psi_{\mathcal{C}_1}}$ dispersing along the entire path. A follow-up measurement in the computational basis will collapse $U\ket{\psi_{\mathcal{C}_1}}$ to any state $\ket{\psi_{\mathcal{C}_t}}$ along the chain with almost equal probabilities. In this way, with high probability, one can make an exponentially large step forward using one query of $U$. The core ideas are illustrated in Fig~\ref{fig:illustration-tm}.

This method, while plausible, faces two obstacles. First, even if we can make an exponentially large step forward along the path at once, the probability of precisely collapsing to the sink is still exponentially small. This can be overcome by adding idling steps to the TM after the solution is reached. In Appendix~\ref{ap:hardness-ham-construction}, we introduce duplications of the TM to double the length while keeping the solution readable along the second half of the path. Thus, the probability of reading out the solution after one query of $U$ is amplified to $O(1)$.

Second, different subspaces of chains can have energy degeneracies when, e.g., two paths have the same lengths. In this case, the energy-conserving Haar-random unitaries entangle multiple subspaces, causing a false readout. The way we overcome this is to add perturbations to the Hamiltonian to break degeneracies. More concretely, we demonstrate that perturbations with coefficients drawn from $\{0,1,10,1/2,1/4\}$ suffice to energetically separate all the $\acc$ and $\rej$ paths from each other and the rest of the Hilbert space. The proof can be found in Appendix~\ref{ap:hardness-ham-construction}, with additional details given in Appendices~\ref{hopping-problem} and~\ref{ap:proof}.

Using these results, we can construct a one-dimensional, local, and translationally invariant Hamiltonian whose energy-conserving Haar-random unitaries can be used to solve $\mathsf{PSPACE}$-complete problems.

\begin{lemma}
[Energy-conserving random unitary can be used to solve $\pspace$ problems, Theorem~\ref{thm:ru-solve-pspace} in Appendix~\ref{ap:hardness-ham-construction}]
\label{main:thm:ru-solve-pspace}
    For any $\pspace$-complete problem, there exists a  one-dimensional, local, and translational-invariant Hamiltonian $H$, whose matrix elements belong to $\{0,1,10,1/2,1/4\}$, such that a polynomial quantum algorithm with query access to the energy-conserving Haar-random unitaries of $H$  exists to solve the problem  with high probabilities.
\end{lemma}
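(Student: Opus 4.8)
The plan is to turn the picture in Fig.~\ref{fig:illustration-tm}(c) into an honest reduction whose oracle is an energy-conserving Haar-random unitary. Fix a $\pspace$-complete language and a reversible, space-$\poly{n}$ Turing machine $M$ deciding it; after standard padding we may assume $M$ runs on a length-$\ell(n)=\poly{n}$ tape with periodic boundary conditions, that every configuration has at most one predecessor, and that the initial configuration $\mathcal{C}_1(x)$ has none. As sketched in Section~\ref{main:hard-ham} (details deferred to Appendix~\ref{ap:hardness-ham-construction}), I would encode each configuration as a length-$(\ell+1)$ product state over a fixed qudit alphabet, realize the transition rules as translationally invariant local isometries $V_\delta=\sum_i V_{\delta,i}$, and set $H_0=\sum_{\delta\in\Delta}(V_\delta+V_\delta^\dagger)$. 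Reversibility makes $H_0$ block-diagonal with respect to the $V_{\mathrm{forward}}$-orbits, and on the orbit through $\mathcal{C}_1(x)$ --- spanned by the computation path $\mathcal{C}_1,\dots,\mathcal{C}_{t^{\ast}}$ --- it is exactly the adjacency matrix of the path graph $P_{t^{\ast}}$, whose spectrum $\{2\cos(\pi j/(t^{\ast}+1))\}_{j=1}^{t^{\ast}}$ is simple.

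Next comes the readability device. I would modify $M$ so that, once the answer bit $\acc/\rej$ has been copied to a reserved cell, the machine idles for another $\Theta(t^{\ast})$ reversible steps --- merely transporting a marker around the periodic tape --- without touching that cell, before halting; using instead the ``duplicate the whole TM'' variant of Appendix~\ref{ap:hardness-ham-construction}, the readable configurations (those displaying the correct answer in the clear) form a contiguous second half of the path, so at least a constant fraction of the path is readable. The algorithm is then: prepare the product state $\ket{\mathcal{C}_1(x)}$, apply one copy of the oracle $U$, measure in the computational basis and, if the outcome is a readable configuration, output its answer bit; otherwise repeat with a fresh copy of $U$ (or simply re-query $U$). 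Since $\ket{\mathcal{C}_1(x)}$ lies in the $t^{\ast}$-dimensional path block, each query costs $\poly{n}$, and $O(\log(1/\delta))$ rounds suffice to succeed with probability $1-\delta$ --- provided that (a) $U$ keeps all the amplitude inside the path block, and (b) the measured configuration is a sufficiently spread-out sample from that block.

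Point (b) is the easier one. By Observation~\ref{main:obs:non-degenerated-pru} and simplicity of $\mathrm{spec}(P_{t^{\ast}})$, the restriction of $U$ to the path block is $\sum_k e^{\ri\theta_k}\ketbrat{E_k}$ with i.i.d.\ uniform phases on the sine-wave eigenvectors $\ket{E_k}$ of $P_{t^{\ast}}$, so $\E_U\abs{\langle\mathcal{C}_t\vert U\vert\mathcal{C}_1\rangle}^2=\sum_k\abs{\langle E_k\vert\mathcal{C}_1\rangle}^2\abs{\langle\mathcal{C}_t\vert E_k\rangle}^2$, and summing this over $t$ gives $1$; a short second-moment estimate over the phases then shows the total squared overlap of $U\ket{\mathcal{C}_1}$ with the readable configurations is $\Omega(1)$ with probability $\Omega(1)$ over the ensemble, which is all the amplification step needs.

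Point (a) is where I expect the real work to be, and it is exactly the degeneracy-breaking step flagged in Section~\ref{main:hard-ham}. Because the energy-conserving Haar-random unitary has the form $U=\bigoplus_E U_E$ over the degenerate eigenspaces of the \emph{full} Hamiltonian, if some eigenvalue of $P_{t^{\ast}}$ also occurs in another $V_{\mathrm{forward}}$-orbit, then $U_E$ rotates amplitude off the computation path onto garbage configurations and the readout is silently corrupted. The fix is to add a translationally invariant local perturbation --- sums of projectors onto short local patterns --- with all matrix entries drawn from $\{0,1,10,1/2,1/4\}$, chosen so that every $\acc$ and every $\rej$ computation path is energetically isolated both from the rest of the Hilbert space and from every other such path. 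The hard part will be that this perturbation must simultaneously (i) be strictly one-dimensional, local, and translationally invariant, hence able to read only local content and never absolute positions; (ii) use only that five-element coefficient set, with the combined matrix entries staying inside it; (iii) preserve reversibility and the path structure of each orbit; and (iv) be certified to split bands of the form $\{2\cos(\pi j/(t^{\ast}+1))\}$ shifted by small pattern-dependent potentials. I would discharge this by a finite case analysis over the local patterns that can arise --- using the ``$10$'' to push entire families of orbits onto a separate energy scale and ``$1/2$'' and ``$1/4$'' for the fine splitting within a scale --- along the lines of Appendices~\ref{hopping-problem} and~\ref{ap:proof}. Combining the encoding, the readability-padded reversible TM, the degeneracy-breaking perturbation, and the prepare/apply-$U$/measure loop yields a $\poly{n}$-time quantum algorithm that, given query access to the energy-conserving Haar-random unitaries of the resulting uniform family of one-dimensional, local, translationally invariant Hamiltonians $H$ with matrix elements in $\{0,1,10,1/2,1/4\}$, decides the $\pspace$-complete problem with high probability.
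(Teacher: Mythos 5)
You have the right scaffolding (Feynman--Kitaev encoding of a reversible $\pspace$ machine, the duplicated-TM trick for readability, and the prepare/apply-$U$/measure loop), but the handling of degeneracy in point (a) --- which you correctly identify as ``the real work'' --- diverges from the paper and, as proposed, would not close. Two issues. First, the effective Hamiltonian on the computation orbit is \emph{not} the adjacency matrix of $P_{t^\ast}$: Lemma~\ref{structure-ham} shows it is adjacency plus a uniform shift plus \emph{boundary potentials} (interior sites get $+2$, endpoints $+1$), and the added terms $\tfrac12\ketbra{q_0^a}{q_0^a}$, $\tfrac14\ketbra{q_0^r}{q_0^r}$ further modify the sink potential. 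Those boundary potentials are the whole mechanism for degeneracy breaking: the eigenvalue condition becomes the transcendental equation of Appendix~\ref{hopping-problem}, and Lemma~\ref{lem:solution-eigenvalue} (proved via Kronecker's lemma and irreducibility of integer polynomials) shows the solution sets for different boundary constants and different path lengths are mutually disjoint and disjoint from $\pi\mathbb{Q}$. A ``finite case analysis over local patterns'' is not a substitute for this number-theoretic argument, and writing the spectrum as $\{2\cos(\pi j/(t^\ast+1))\}$ (which is the case only with the \emph{wrong} boundary potentials) would land you exactly in the dense $2\cos(\pi\mathbb{Q})$ set you need to avoid.

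Second, and more fundamentally, your goal that ``every $\acc$ and every $\rej$ computation path is energetically isolated $\dots$ from every other such path'' is both unachievable and unnecessary, and your point (b) silently relies on it. Because the Hamiltonian is local and translationally invariant, many type-$4a$ and type-$5a$ paths of the same length share identical boundary-potential structure and hence identical spectra; you cannot split them with bounded-precision, pattern-only perturbations. Your claim that the restriction of $U$ to the computation block is a phase unitary $\sum_k e^{\ri\theta_k}\ketbrat{E_k}$ therefore fails --- by Observation~\ref{main:obs:non-degenerated-pru}, $U$ will genuinely rotate amplitude out of the computation path into these degenerate blocks. The paper's Lemma~\ref{computation-Hamiltonian} establishes only the weaker statement that accept-type paths are isolated from reject-type paths, from loops, from type-2/3 paths, and from accept paths of \emph{different} length; and Lemma~\ref{lemma:pru-solve-pspace} then argues that the residual mixing is harmless precisely because all the degenerate partners (type-$4a$ paths, and second halves of type-$5a$ paths of the same length) still display $q^a$ in the measurement outcome. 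Without that structural observation, the ``$\Omega(1)$ readable mass'' estimate in your point (b) does not go through --- the amplitude could land on a degenerate garbage block that your readability argument says nothing about. Once you adopt the paper's weaker isolation target plus the ``degenerate partners still carry the answer'' argument, your remaining steps are essentially the paper's proof, and the second-moment estimate you mention is not even needed: the algorithm resamples $U$ each round, so a lower bound on $\mathbb{E}_{U,\text{meas}}[\text{readable}]$ (the paper gets $1/12$ via Lemma~\ref{lem:bound_C}) is already enough for constant-round amplification.
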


\subsubsection{Distinguishing a $\pspace$-solver from efficient quantum circuits}

A solver for $\mathsf{PSPACE}$-complete problems can be used to invert quantum-secure one-way functions, a property not possessed by any polynomial-size quantum circuit. To see this, we use the \emph{True Quantified Boolean Formula} problem (TQBF), a canonical $\mathsf{PSPACE}$-complete problem~\cite{stockmeyer1973word,chandra1981alternation}, as an illustration. TQBF consists of all fully quantified Boolean formulas that evaluate to true over the Boolean domain. Formally, the language is defined as:
\begin{equation*}
\mathrm{TQBF} := \left\{ \varphi = Q_1 x_1 \cdots Q_n x_n \; \psi(x_1, \ldots, x_n) \;\middle|\;
\begin{array}{l}
Q_i \in \{ \forall, \exists \},\ \psi \text{ is a Boolean formula}\\
\text{and } \varphi \text{ evaluates to true.}
\end{array}
\right\}.
\end{equation*}
We assume all quantifiers precede the propositional formula, and that $\psi$ is encoded either in conjunctive normal form or as a Boolean circuit. The input size of a formula is the number of bits required to encode the quantifiers and $\psi$. For example, the formula
$\phi = \forall x_1 \exists x_2 \forall x_3 \;\left[(x_1 \lor \neg x_2) \land (x_2 \lor x_3)\right]$ is in $\mathrm{TQBF}$.

Given a circuit $C$ computing the one-way function $f$ and target $y$, we construct the TQBF formula:
$\Phi_{C,y} := \exists x_1, \ldots, x_n \; \psi(x_1, \ldots, x_n, y)$
where $\psi(x_1, \ldots, x_n, y)$ is a Boolean formula that evaluates to true if and only if $C(x_1, \ldots, x_n) = y$. This formula can be constructed in polynomial time by simulating the circuit $C$. First, we query the TQBF solver on $\Phi_{C,y}$. If it returns $\rej$, then no preimage exists. If it returns $\acc$, we extract a witness using binary search: for each bit position $i = 1, \ldots, n$, we construct the formula: $\Phi_i^0 = \exists x_{i+1}, \ldots, x_n \; \psi(a_1, \ldots, a_{i-1}, 0, x_{i+1}, \ldots, x_n)$ where $a_1, \ldots, a_{i-1}$ are the bits determined in previous iterations. We query the solver on $\Phi_i^0$. If it returns $\acc$, we set $a_i = 0$; otherwise, we set $a_i = 1$. After $n$ such queries, we obtain $(a_1, \ldots, a_n)$, which is guaranteed to be a valid preimage since the original formula was satisfiable and the oracle is correct. The total number of oracle queries is $n + 1$. In this way, we invert the one-way function efficiently.

In contrast, any polynomial-size quantum circuit cannot invert one-way functions. So, we can construct an efficient algorithm to distinguish a $\mathsf{PSPACE}$-complete solver from any polynomial-size quantum circuit by checking if it correctly inverts the one-way function. Details are in Appendices~\ref{ap:tqbf} and~\ref{ap:verify-tqbf-via-oneway}.

\begin{lemma}[Distinguishing the TQBF solver, Theorem~\ref{thm:tqbf-verification} in Appendix~\ref{ap:verify-tqbf-via-oneway}]
\label{main:thm:tqbf-verification}
Assume quantum-secure one-way functions exist. There exists a polynomial-time classical algorithm $\mathcal{V}$ (the verifier) that, given black-box access to a purported $\mathsf{TQBF}$ solver $\mathcal{O}$, such that if $\CO$ solves $\mathsf{TQBF}$ correctly, outputs $\acc$ with high probability, if $\CO$ implements any polynomial size quantum circuit, outputs $\rej$ with high probability.
\end{lemma}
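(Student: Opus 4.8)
The plan is to turn the $\mathsf{TQBF}$-based inversion procedure described just above into a verifier $\mathcal{V}$ and then analyze it in the two cases. Fix a security parameter $n$, let $f_n$ be a quantum-secure one-way function with efficiently computable circuit $C_n$, and let $\CO$ be the black-box solver. On input $1^n$, the verifier $\mathcal{V}$ samples $x \leftarrow \{0,1\}^n$ uniformly, computes the challenge $y = f_n(x)$, and runs the inversion procedure: it queries $\CO$ on $\Phi_{C_n,y}$ and, adaptively for $i = 1, \dots, n$, on the search formula $\Phi_i^0$ (whose definition uses the bits $a_1, \dots, a_{i-1}$ already fixed), setting $a_i = 0$ if $\CO$ accepts and $a_i = 1$ otherwise; this yields a candidate $x' = (a_1, \dots, a_n)$. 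Finally $\mathcal{V}$ outputs $\acc$ if $f_n(x') = y$ and $\rej$ otherwise. Building the formulas via the polynomial-time circuit-to-formula reduction and all other steps take $\poly{n}$ time, and $\mathcal{V}$ makes only $n+1$ oracle calls, so $\mathcal{V}$ is a polynomial-time classical algorithm with black-box access to $\CO$.

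For completeness, I would argue that if $\CO$ decides $\mathsf{TQBF}$ correctly then $\Phi_{C_n,y}$ is true (since $x$ is a preimage of $y$), the standard binary-search invariant---that the prefix $(a_1, \dots, a_i)$ always extends to some preimage---is maintained at every step, and hence $x'$ satisfies $C_n(x') = y$; so $\mathcal{V}$ outputs $\acc$ with certainty, or with probability $1 - \negl{n}$ if the solver is only assumed correct up to negligible per-query error. For soundness, suppose $\CO$ is the input--output behavior of a family of quantum circuits of size $\poly{n}$. Then $\mathcal{V}^{\CO}$, which runs this circuit $n+1$ times interleaved with $\poly{n}$ classical computation, is itself a quantum algorithm of size $\poly{n}$; on the challenge $y = f_n(x)$ for uniformly random $x$ it outputs the string $x'$, and by construction $\mathcal{V}$ outputs $\acc$ \emph{exactly when} $f_n(x') = y$. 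Therefore $\Pr[\mathcal{V}^{\CO}(1^n) = \acc]$ equals the probability that a $\poly{n}$-size quantum algorithm inverts $f_n$ on a fresh random challenge, which is $\negl{n}$ by quantum one-wayness; so $\mathcal{V}$ rejects with probability at least $1 - \negl{n}$. Repeating $\mathcal{V}$ a constant number of times if desired makes both the acceptance probability (honest case) and the rejection probability (circuit case) at least $1 - \negl{n}$.

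The step I expect to require the most care is the bookkeeping of the soundness reduction rather than any hard calculation. One point is that $\mathcal{V}$ must never assume the oracle is self-consistent: an adversarial $\CO$ may answer the $n+1$ queries however it likes, but then $x'$ is simply some string that fails the final check $f_n(x') = y$, so a cheating poly-size circuit gains nothing; the argument uses only that $\mathcal{V}$ accepts \emph{iff} it holds a genuine preimage of a freshly sampled challenge. A second point is that, to rule out \emph{every} poly-size quantum circuit family $\CO$ (including non-uniform ones), one invokes non-uniform quantum one-wayness of $f_n$, as supported by the LWE-based construction recalled in Appendix~\ref{ap:preliminary}; and the one-way-function security parameter must be chosen to scale polynomially with the formula sizes that $\CO$ is queried on, so that the composed algorithm genuinely attacks $f_n$ at a matching parameter. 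With these arranged, the completeness/soundness gap established above yields the lemma.
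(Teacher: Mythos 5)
Your high-level construction is the same as the paper's: reduce one-way-function inversion to $\mathsf{TQBF}$ queries, run the prefix binary search using $\CO$, and accept if and only if the extracted string is a genuine preimage of the freshly sampled challenge $y=f(x)$. Your soundness argument also matches and is correct: composing $\mathcal V$ with any $\poly{n}$-size quantum circuit $\CO$ yields a $\poly{n}$-size quantum inverter for $f$, so by quantum one-wayness the final check $f(x')=y$ passes with only negligible probability, and $\mathcal V$ rejects with probability $1-\negl{n}$.

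The gap is on the completeness side. The formal statement of this lemma (Theorem~\ref{thm:tqbf-verification}) requires acceptance when $\CO$ solves $\mathsf{TQBF}$ correctly only \emph{with probability $\ge 2/3$ per query}, and this constant-error regime is not optional: in the downstream application the purported $\mathsf{TQBF}$ solver is built from the energy-conserving Haar-random unitary (Theorem~\ref{thm:ru-solve-pspace}), which succeeds with a constant probability, not with probability $1-\negl{n}$. Your proof covers a perfectly correct oracle or one with negligible per-query error. With a $2/3$-correct oracle, each of the $n+1$ queries in the binary search independently errs with constant probability, so a single run of the inversion succeeds with probability at most roughly $(2/3)^{n+1}$, i.e., exponentially small. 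Repeating the whole verifier a constant (or even polynomial) number of times, as you suggest, does not fix this, since every repetition fails for the same reason. The missing step is per-query amplification: replace each call to $\CO$ with the majority vote over $O(n)$ independent invocations of $\CO$ on the same formula (the paper's Lemma~\ref{lem:amplified-oracle}); a Chernoff bound then drives the per-query error to $2^{-\Omega(n)}$, and a union bound over the $n+1$ queries makes the honest inversion succeed with probability $1-\negl{n}$, after which your acceptance argument goes through.
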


\noindent Combining Lemmas~\ref{main:thm:ru-solve-pspace} and~\ref{main:thm:tqbf-verification}, we can prove Theorem~\ref{main:thm:hard-ham}. Proof details are shown in Appendix~\ref{ap:verifier-random-unitary}.

\subsection{Undecidability of existence energy-conserving PRU}\label{main:undecidability}

\begin{figure}
    \centering
    \includegraphics[width=0.8\linewidth]{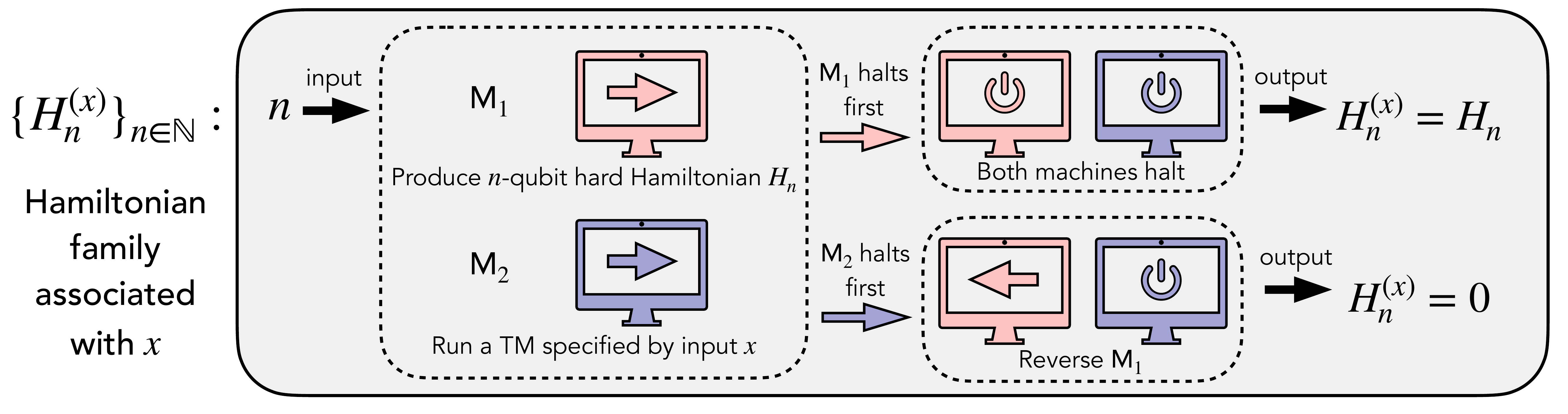}
    \caption{
    Turing machine that generates a Hamiltonian family $H_n^{(x)}$ with or without energy-conserving PRU, depending on whether universal TM halts upon input $x$. After receiving an put $n$, two machines run in parallel: $\tm_1$ generates the hard Hamiltonian $H$ from Section~\ref{main:hard-ham}, and $\tm_2$ is a UTM operating upon input $x$. If $\tm_1$ halts, then the whole system halts with output $H$. If $\tm_2$ halts, $\tm_1$ halts and does the reverse computing. Thus the system outputs $0$.}
    \label{fig:undecidable}
\end{figure}

Given that both easy and hard instances exist, it is natural to seek an algorithm that can decide whether a given family of Hamiltonians has energy-conserving PRUs. However, this problem is inherently undecidable, meaning that no algorithm, even with exponential computational resources, can solve it.

To meaningfully define a computational problem, we are not asking only for the solution for a given input length, where one can always solve it by brute force. Rather, we specify an infinite series of problems $\{f_n\}_{n\in\mathbb{N}^+}$ with all possible input lengths, and ask about the asymptotic cost to solve them when $n$ grows. When stating the aforementioned problem rigorously, we are asking whether \emph{a uniform family of} Hamiltonians $\{H_n\}_{n\in\mathbb{N}^+}$ has energy-conserving PRUs or not. What we take as input is in fact a set of rules that specifies the Hamiltonian for any system size, and more concretely, a TM that generates the Hamiltonian's description when taking $n$ as input.

Since we are taking TMs as inputs, the existence of energy-conserving PRUs can be linked to the solution of the halting problem, a canonical undecidable problem~\cite{turing1936computable,church1936unsolvable}. The halting problem asks whether the universal Turing machine (UTM), a TM that can simulate all other TMs, eventually halts or not upon given inputs.

Suppose we let two TMs run in parallel. The first TM generates the description of the hard Hamiltonian in Theorem~\ref{main:thm:hard-ham}, and the second one is a UTM that operates upon a fixed input $x$. If the first TM halts, then the whole system halts. If the second one halts, then the first one stops and performs exactly the inverse operations to uncompute the previous outputs. In this way, if the UTM eventually halts upon $x$, the whole machine outputs $0$ for sufficiently large $n$, whose energy-conserving PRU is the conventional PRU. Otherwise the machine outputs the hard Hamiltonian. Therefore, if we can decide whether a family of Hamiltonians has energy-conserving PRUs or not, we must be able to decide whether the UTM halts upon any given input $x$. The whole construction is illustrated in Fig~\ref{fig:undecidable}.

We note that when restricted to translationally invariant Hamiltonians, our proof does not apply. There a Hamiltonian is completely specified by its finite set of local terms. It is unclear whether determining the existence of energy-conserving PRUs is still hard in those cases.

\vspace{1em}
\section*{Acknowledgments}

We are grateful to Adam Bouland, Jeongwan Haah, Tony Metger, and John Preskill for insightful discussions about energy-conserving pseudorandom unitaries.
L.M. is grateful to the helpful discussions with Hongkang Ni on the proof of Lemma~\ref{lem:solution-eigenvalue}.
L.M. acknowledges support from Tsinghua University.
T.S. acknowledges support from the Walter Burke Institute for Theoretical Physics at Caltech. T.S. and H.H. acknowledge support from the U.S. Department of Energy, Office of Science, National Quantum Information Science Research Centers, Quantum Systems Accelerator. The Institute for Quantum Information and Matter is an NSF Physics Frontiers Center.

\setcounter{lemma}{0}
\setcounter{definition}{0}
\setcounter{fact}{0}
\setcounter{theorem}{0}
\clearpage
\vspace{2.5em}
\appendix

\noindent 
\textbf{\LARGE{}Appendices}
\vspace{2em}


\tableofcontents

\section{Preliminaries}\label{ap:preliminary}

\subsection{Notations and matrix norms}
We use the standard notations $O$, $\Omega$ and $\Theta$ to denote asymptotic upper, lower, and tight bound, respectively. We use the  $\poly{n}$ ($\mathsf{exp}(n)$) to denote any functions $f(n)$ such that there exists $c>0$ satisfying $f(n)=O(n^c)$ ($O(e^{n^c})$). $\negl{n}$  denotes $f(n)$ smaller than any inverse polynomial functions of $n$.

 Throughout the appendices, we will mostly use capital letter like $U$ to denote a unitary matrix, or a quantum circuit, and calligraphic letter like $\CU$ to denote an ensemble of unitary matrices, or other types of sets. The only exception that we use $\CO$ to denote an oracle, which is also a circuit, since $O$ is used to denote the asymptotic upper bound.

 Norms of matrices and vectors are denoted as $\norm{A}$ and $\norm{v}$. In this paper, we will use the trace norm of matrices, defined for matrices of any size
 \begin{align}
     \norm{A}_1:=\Tr{\sqrt{A^\dagger A}}.
 \end{align}
 It satisfies submultiplicativity, and triangle inequality
 \begin{align}
     \norm{A_1A_2}_1\leq \norm{A_1}_1\cdot\norm{A_2}_1,\quad
     \norm{A_1+A_2}_1\leq\norm{A_1}_1+\norm{A_2}_1
 \end{align}
If the matrix is a vector, we will use the standard concept of Euclidean norm, which is equivalent to the trace norm if the vector is viewed as a $n\times 1$ matrix. 
 \begin{align}
     \norm{v}_2=\sqrt{v^\dagger v}= \sqrt{\sum_i |v_i|^2}.
 \end{align}
We use $\norm{\cdot}_2$ instead of $\norm{\cdot}_1$ to emphasize the vector form.

 We will use sans-serif letter as $\mathsf{U}$ to denote a quantum channel. 
 Different from multiplication of matrices, we use $\mathsf{A}\circ\mathsf{B}$ to denote subsequent actions of channels, i.e., $\mathsf{A}\circ\mathsf{B}[\rho]:=\mathsf{A}[\mathsf{B}[\rho]]$.
 Channels are subjected to the measure of diamond norm,
 \begin{align}
     \norm{\mathsf{A}}_\diamond:=\sup_{\rho}\norm{(\mathsf{A}\otimes\mathsf{I})[\rho]}_1,
 \end{align}
where $\mathsf{I}$ is the identity channel of an ancillary system of any size, and $\rho$ is a density matrix.
The diamond norm also satisfies submultiplicativity, and triangle inequality. Note that since $\norm{\rho}_1=1$ when $\rho$ is a density matrix, $\norm{\mathsf{A}}_\diamond=1$ when $\mathsf{A}$ is a complete positive trace-preserving mapping.

We will also use sans-serif letter as $\mathsf{S}$ to denote registers, a collection of qubits to define ``system" or ``ancillas". Different usages can be easily identified from the context. We use $|\mathsf{S}|=n$ to denote the number of qubits in a register. We sometimes use subscript like $\ket{0_\sfS}$ to specify the register that a state lives in.

Specifically, $\tm$ and $\D[\tm]$ are used  to denote Turing machines (TMs), which we will define more clearly later.

\subsection{Pseudorandom unitaries}
In this section we review the properties of pseudorandom unitary (PRU) ensembles. We are interested in ensembles which reproduce the characteristics of, and are indistinguishable from, \emph{uniformly} random unitary transformations sampled via the Haar measure.
\begin{definition}[Haar ensemble]
    Given a compact Lie subgroup $\mathbb{V}$ of the $n$-qubit unitary group $\mathbb{U}(2^n)$, the Haar ensemble $\mu(\mathbb{V})$ is the unique ensemble over $\mathcal{\mathbb{V}}$ with normalized probability measure $d$ that is both left- and right-invariant, i.e., for any subset $\mathbb{S}\subseteq \mathbb{V}$ and any $V\in \mathbb{V}$, $d(\mathbb{S})=d(V\cdot \mathbb{S})=d(\mathbb{S}\cdot V)$. 
\end{definition}
A PRU ensemble over $\mathcal{U}$ is a unitary ensemble that can be efficiently generated, but can not be efficiently distinguished from $\mu(\mathcal{U})$\cite{ji2018pseudorandom,ma2024construct}. 
\begin{definition}[Pseudorandom unitaries]
    Let $\{\mathcal{U}_n\}_{n\in\mathbb{N}^+}$ be a uniform family of unitary ensembles $\mathcal{U}_n=\{U_k\}_{k\in\mathcal{K}_n}$, where $U_k\in\mathbb{U}(2^n)$ and $\mathcal{K}_n$ denotes the key subspace. We call $\{\mathcal{U}_n\}$ a pseudorandom unitary ensemble if the followings are satisfied:
    \begin{itemize}
    \item There exists a $\poly{n}$-time quantum algorithm parametrized by $k\in\mathcal{K}_n$ to implement all $U_k\in\mathcal{U}_n$.
        \item For any $\poly{n}$-time quantum algorithm $\{\alg_n^{\mathcal{E}}()\}_{n\in\mathbb{N}^+}$ with query access to unitary ensemble $\mathcal{E}$, if $k$ is uniformly drawn from $\mathcal{K}_n$,
    \begin{align*}
        \Big|\pr[\alg_n^{\mu(\mathbb{U}(2^n))}()=1]-\pr[\alg_n^{\mathcal{U}_n}()=1]\Big|\leq\negl{n}
    \end{align*}
    holds for sufficiently large $n$.
    \end{itemize}
\end{definition}
In this paper, we focus on the cases where the subgroup is a commuting group with some fixed local Hamiltonian.
With energy-conservation, the Haar-random unitaries factorize to the sum of random unitaries inside each degenerated subspace.
\begin{fact}\label{non-degenerated-pru}
    Let $\hilbert_n^{i}$, $i=1\leq i\leq k$ be the degenerated subspaces of $H_n$, then any $U\in\mu(\mathcal{C}_n[H_n])$ takes the form of $U=\oplus_{i=1}^kU_i$, where $U_i\in \mu(\mathbb{U}(\hilbert_n^i))$. Specifically, when $H_n$ has no energy degeneracy, $U=\sum_{E}e^{i\theta_E}\ketbra{E}{E}$, where $E$ spans over all the energy eigenstates, and $\theta_E\overset{\text{i.i.d.}}{\sim}\operatorname{Unif(0,2\pi)}$. 
\end{fact}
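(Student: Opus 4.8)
The plan is to reduce the statement to two standard facts: that commuting with $H_n$ forces block-diagonality with respect to the eigenspaces, and that the Haar measure on a direct product of compact groups is the product of the Haar measures.

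First I would fix the spectral decomposition $H_n = \sum_{i=1}^k E_i P_i$, with $E_1,\dots,E_k$ the distinct eigenvalues and $P_i$ the orthogonal projector onto $\hilbert_n^i$. Since the eigenvalues are pairwise distinct, Lagrange interpolation expresses each spectral projector as a polynomial in $H_n$,
\begin{equation}
P_i = \prod_{j\neq i}\frac{H_n - E_j I}{E_i - E_j},
\end{equation}
so any unitary $U$ with $[U,H_n]=0$ automatically satisfies $[U,P_i]=0$ for every $i$. This forces $U$ to preserve each eigenspace $\hilbert_n^i=\range(P_i)$ (and hence its orthogonal complement), so $U$ restricts to a unitary $U_i:=U|_{\hilbert_n^i}$ on each block, i.e.\ $U=\bigoplus_{i=1}^k U_i$; conversely every such direct sum commutes with $H_n$. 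This identifies the group $\mathcal{C}_n[H_n]:=\{U\in\mathbb{U}(2^n):[U,H_n]=0\}$ with the direct product $\mathbb{U}(\hilbert_n^1)\times\cdots\times\mathbb{U}(\hilbert_n^k)$. (That $\mathcal{C}_n[H_n]$ is a compact Lie subgroup of $\mathbb{U}(2^n)$, so that the Haar ensemble is well-defined, follows from it being the preimage of $0$ under the continuous map $U\mapsto[U,H_n]$, hence closed.)

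Next I would invoke uniqueness of Haar measure. On the product group $\mathbb{U}(\hilbert_n^1)\times\cdots\times\mathbb{U}(\hilbert_n^k)$, the product measure $\mu(\mathbb{U}(\hilbert_n^1))\times\cdots\times\mu(\mathbb{U}(\hilbert_n^k))$ is normalized and both left- and right-invariant — invariance can be checked one coordinate at a time — so by the uniqueness clause in the definition of the Haar ensemble it must coincide with $\mu(\mathcal{C}_n[H_n])$. Translated through the identification of the previous paragraph, this says precisely that sampling $U\sim\mu(\mathcal{C}_n[H_n])$ is the same as sampling the blocks $U_i\sim\mu(\mathbb{U}(\hilbert_n^i))$ independently and forming $U=\bigoplus_i U_i$. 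Finally, the non-degenerate case is the special case where every $\hilbert_n^i$ is one-dimensional, spanned by an eigenvector $\ket{E}$: then $\mathbb{U}(\hilbert_n^i)\cong\mathbb{U}(1)=\{e^{i\theta}\}$ with Haar measure uniform in $\theta$, giving $U=\sum_E e^{i\theta_E}\ketbra{E}{E}$ with the $\theta_E$ i.i.d.\ uniform on $[0,2\pi)$.

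The argument is essentially routine; the only point that warrants care is the Haar-measure step, where the clean route is to appeal to uniqueness of the bi-invariant probability measure on a compact group together with the fact that a product of invariant measures is invariant on the product group, rather than trying to construct the invariant measure on the centralizer by hand.
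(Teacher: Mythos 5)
The paper states this as an unproved ``Fact'' (a standard structural result about the centralizer of a diagonalizable Hermitian operator), so there is no paper proof to compare against. Your argument is correct and complete: the Lagrange-interpolation observation that each $P_i$ is a polynomial in $H_n$ cleanly gives the block-diagonal structure and the group isomorphism $\mathcal{C}_n[H_n]\cong\prod_i\mathbb{U}(\hilbert_n^i)$, and appealing to uniqueness of the bi-invariant probability measure to identify the Haar ensemble with the product of the block-Haar ensembles is exactly the right way to make the measure-theoretic step rigorous. The non-degenerate specialization to i.i.d.\ uniform phases follows immediately.
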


This definition of PRU  generalizes to the setting with conserved energy or charges.
\begin{definition}[Energy-conserving pseudorandom unitary ensemble]
Let $\{H_n\}_{n\in\mathbb{N}^+}$ be a uniform family of $n$-qubits  Hamiltonians. Define the commuting subgroup $\mathcal{C}_n^H=\{U|U\in\mathbb{U}(2^n),UH_n=H_nU\}$. 
Let $\{\mathcal{U}_n\}_{n\in\mathbb{N}^+}$ be a uniform family of unitary ensembles $\mathcal{U}_n=\{U_k\}_{k\in\mathcal{K}_n}$, where $U_k\in\mathbb{U}(2^n)$ and $\mathcal{K}_n$ denotes the key subspace. We call $\mathcal{U}_n$ a energy-conserving pseudo-random unitary ensemble if the followings are satisfied:
    \begin{itemize}
    \item There exists a  $\poly{n}$-time quantum algorithm parametrized by $k\in\mathcal{K}_n$ to implement all $U_k\in\mathcal{U}_n$.
        \item For any $\poly{n}$-time quantum algorithm $\{\alg_n^{\mathcal{E}}()\}_{n\in\mathbb{N}^+}$ with query access to unitary ensemble $\mathcal{E}$, if $k$ is uniformly drawn from $\mathcal{K}_n$,
    \begin{align*}
        \Big|\pr[\alg_n^{\mu(\mathcal{C}_n^H)}()=1]-\pr[\alg_n^{\mathcal{U}_n}()=1]\Big|\leq\negl{n}
    \end{align*}
    holds for sufficiently large $n$.
    \end{itemize}
\end{definition}
In above, we state standard definitions of energy-conserving PRU for qubit systems.
They can be generalized to qudit systems for $d=O(1)$ in a straightforward way, where we skip the explicit statements.

\subsection{Cryptographic primitives}

Previous constructions of PRU\cite{schuster2024random,cui2025unitary,foxman2025random} assume the existence of quantum-secure one-way functions, which is a standard cryptography assumption and widely believed to hold.

\begin{definition}[Quantum-Secure One-Way Function]
\label{def:quantum-owf}
A function $f: \{0,1\}^* \to \{0,1\}^*$ is a \emph{quantum-secure one-way function} if:
\begin{enumerate}
\item $f$ is computable in polynomial time;
\item For any $\poly{n}$-time quantum algorithm $\{\alg_n()\}_{n\in\mathbb{N}^+}$, for sufficiently large $n$ we have
\begin{equation}
    \Pr_{x \in \{0,1\}^n}[f(\CA_n(f(x))) = f(x)] \leq \negl{n}
\end{equation}
holds for sufficiently large $n$.
\end{enumerate}
\end{definition}

Throughout this paper, we will make the same assumption in the proofs of both easy and hard Hamiltonians. In the construction of energy-conserving PRU for random commuting Hamiltonians, we will use the pseudo-random functions, which can be constructed out of one-way functions.
\begin{definition}
    [Pseudorandom function]\label{def:prf}
    Let $\{\CF_n\}$ denote a uniform family of functions $\CK_n=\{f_k\}_{k\in{\CK_n}}$, where $f_k:\{0,1\}^n\rightarrow \{0,1\}$ and $\CK_n$ denotes the key subspace. We say $\{\CF_n\}$ are pseudorandom functions if
    \begin{itemize}
        \item $f_k$ is computable in polynomial time;
        \item For any $\poly{n}$-time quantum algorithm $\{\alg_n^{\CO}()\}_{n\in\mathbb{N}^+}$ with query access to an oracle $\CO$, if $k$ is uniformly drawn from $\mathcal{K}_n$,
    \begin{align*}
        \Big|\pr[\alg_n^{\CO_\CF}()=1]-\pr[\alg_n^{\CO_\CR}()=1]\Big|\leq\negl{n}
    \end{align*}
    holds for sufficiently large $n$, where $\CO_\CF:\ket{x}\rightarrow (-1)^{f_k(x)}\ket{x}$ for $x\in\{0,1\}^n$, and $\CO_\CR:\ket{x}\rightarrow e^{i\theta}\ket{x}$ for $\theta\sim\operatorname{Unif}[0,2\pi)$.
    \end{itemize}
\end{definition}

\subsection{Turing machines and complexity classes}

We now review Turing machines (TM), a model for universal computation\cite{sipser1996introduction,papadimitriou2003computational,arora2009computational}. We will use this framework to define notions of computational complexity. We focus on decision problems where the answer is either $\acc$ or $\rej$.
\begin{definition}
    [Decision problem] A decision problem is specified by a language $L\subseteq \{0,1\}^*=\cup_{n=0}^\infty \{0,1\}^n$. The problem is that give any $x\in \{0,1\}^*$, output $\acc$ if $x\in L$, otherwise output $\rej$. $|x|$ is called the size of input.  
\end{definition}\noindent
Given an instance of a decision problem, we can consider the corresponding TM. Informally, a TM consists of three parts, as illustrated in Fig~\ref{fig:illustration-tm}: a tape with some symbols, a read-write head with some internal states set $Q$, and a set of transition rules $\Delta$. The tape provides the space for storage information and doing operations. Each cell of the tape contains one symbol from a $\textit{finite}$ set of alphabet $\Gamma$, like 0,1.
The head can read the symbol of the cell it currently locates, and is associated with an internal state $q_\mu\in Q$. Then according to the symbol and the current internal state $q_\mu$, the head can change the internal state, rewrite the symbol, and then move left or right or stay still. The rules that determine the behavior of the head is the transition function $\Delta:Q\times\Gamma\rightarrow Q\times \Gamma \times \{+,-,\mathsf{0}\}$.
Here $q,x$ denote the current internal state and symbol, and in one round of operation they will be changed to $q'$ and $x'$, respectively. After then the head will move right, left, or stay, depending on $s=+,-,\mathsf{0}$. In this way a TM change the contents of the tape step by step and arrive the final output.


More formally,
\begin{definition}
    [Turing machines] A (deterministic) Turing Machine $\tm$ is defined by a triple $\langle Q,\Gamma,\Delta\rangle$, such that
    \begin{itemize}
        \item Q is a finite set of internal states that contains the initial state $q_0\in Q$ and two halting states $q_r,q_a\in Q$, corresponding to $\acc$ and $\rej$.
        \item $\Gamma$ is a finite set of symbols that contains the blank symbol $b$.
        \item $\Delta$ is a finite set of transition functions $\Delta: Q\times\Gamma\rightarrow Q\times \Gamma \times \{+,-,\mathsf{0}\}$ that satisfies for any $(q,x,q',x',s)\in \Delta$, $q\neq q_a,q_r$ and $q'\neq q_0$. 
    \end{itemize}
    
    The TM has a two-way infinite tape of cells indexed by $\mathbb{Z}$ and a single read/write
    tape head that moves along the tape. A configuration of $\tm$, $\C=(q,B,i)\in Q\times \Gamma^*\times \mathbb{Z}=:\mathscr{C}$, is a complete
    description of the contents of the tape $B$, the location of the tape head $i$ and the internal state $q$ of the head. At any time, only a finite number of the tape cells may
    contain non-blank symbols.

    For any configuration $\C\in \mathscr{C}$, the successor configuration $\C'$is defined by
    applying the transition function to the current state and the symbol scanned by the
    head, replacing them by those specified in the transition function and moving the
    head right ($+$), left ($-$), or stay still ($\mathsf{0}$). 

    The initial configuration satisfies the following conditions: the head is in cell 0, called the starting cell, and the machine is in state $q_0$. We say that an initial configuration has input $x\in (\Gamma\setminus \{b\})^*$ if $x$ is written on the tape in positions $0,1,2,\cdots ,|x|$ and all other tape cells are blank. The TM halts on input $x$ if it eventually enters the final state $q_a$ or $q_r$. The output of this computation task is then $\acc$ or $\rej$, depending on halting state. The number of steps a TM takes to halt on input $x$ is its running time on input $x$. 
\end{definition} \label{turing-machine}

By choosing appropriate alphabet, internal states set and transition functions set, a TM runs an \textit{algorithm} which solves a corresponding decision problem. 

To simulate the TM with a quantum Hamiltonian we need to use TMs with a fixed-size memory, i.e., TMs with a fixed tape-length. The tape-length will correspond to the length of the one-dimensional Hilbert space that the Hamiltonian residues.
\begin{definition}
    [Turing machines with fixed-size memory]
   Given a  Turing machine $\tm=\langle Q,\Gamma,\Delta\rangle$, the corresponding Turing machine with fixed memory-size is denoted as $\tm(l)$, where $l\in\mathbb{N}^+$ is the tape length. For any $\tm(l)$, we choose a periodic boundary condition, i.e., the left-most cell of the tape is linked to the $l-$the cell. The configuration of TM with fixed-size memory is defined analogously.
\end{definition}

TMs have many variants. One useful variant is the \textit{Reversible Turing machines} (RTM), which is the TM that each configuration has at most one predecessor. 
\begin{definition}
    [Reversible Turing machine] 
    A Reversible Turing machine  is a TM $\braket{Q,\Gamma, \Delta}$ that any configuration has at most one predecessor and one successor.
\end{definition} \label{reversible-turing-machine}
\begin{fact} [Structure of RTM; cf.\cite{sipser1996introduction,papadimitriou2003computational,arora2009computational}]
    Let $\tm=\langle Q,\Gamma, \Delta\rangle$ be a TM defined by Definition. \ref{turing-machine}. $\tm$ is reversible iff $\Delta$ satisfies the follows:
    \begin{itemize}
    \item Unidirection: For any $\delta=(q,x,q',x',s) \in \Delta$, $s$ is uniquely determined by $q'$.
    \item One-to-one: Ignoring $s$ and viewing $\Delta$ as a map from $Q\times \Gamma$ to $Q\times \Gamma$, it is an injection.
    \end{itemize}
\end{fact}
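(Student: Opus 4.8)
The plan is to prove the two directions of the equivalence separately, in each case arguing directly about the predecessors of an arbitrary configuration; note at the outset that because $\Delta$ is a partial function on $Q\times\Gamma$, every configuration automatically has at most one \emph{successor}, so the only content of reversibility is the bound on the number of predecessors.

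\textbf{The ``if'' direction.} Assume Unidirection and One-to-one, and fix a target configuration $\C'=(q',B',i')$. I would reconstruct any possible predecessor $\C=(q,B,i)$ in four forced steps. (i) By Unidirection, every rule of $\Delta$ with post-state $q'$ carries the same move $s$, so the predecessor's head must sit at $i=i'-s$; if $\Delta$ has no rule with post-state $q'$ then $\C'$ has no predecessor and we are done. (ii) The symbol overwritten in the final step is then forced to be $x'=B'(i)$. (iii) Reading One-to-one as injectivity of the induced partial map $(q,x)\mapsto(q',x')$ on $Q\times\Gamma$, the pair $(q,x)$ is determined by the now-known $(q',x')$. (iv) Finally $B$ equals $B'$ with cell $i$ reset to $x$, and the internal state is $q$. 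Hence the predecessor, when it exists, is unique, so $\tm$ is reversible.

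\textbf{The ``only if'' direction.} I would prove the contrapositive: if either condition fails, exhibit a configuration with two distinct predecessors. Suppose One-to-one fails, i.e. there are $(q_1,x_1)\ne(q_2,x_2)$ with the same image, so $\Delta$ contains rules $\delta_1=(q_1,x_1,q',x',s_1)$ and $\delta_2=(q_2,x_2,q',x',s_2)$. Take $\C'=(q',B',i')$ with $B'$ holding the symbol $x'$ in cell $i'-s_1$ and in cell $i'-s_2$ (one cell if $s_1=s_2$, otherwise two) and blanks elsewhere. Its $\delta_1$-predecessor $\C_1$ has head at $i'-s_1$, state $q_1$, and tape $B'$ with that cell reset to $x_1$; its $\delta_2$-predecessor $\C_2$ is defined analogously. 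A direct check shows $\C_1\to\C'$ and $\C_2\to\C'$, while $\C_1\ne\C_2$: if $s_1\ne s_2$ the head positions differ, and if $s_1=s_2$ then $(q_1,x_1)\ne(q_2,x_2)$ forces the state or the head-cell symbol to differ. If instead Unidirection fails, there are rules $(q_1,x_1,q',x'_1,s_1)$ and $(q_2,x_2,q',x'_2,s_2)$ sharing the post-state $q'$ but with $s_1\ne s_2$; planting $x'_1$ in cell $i'-s_1$ and $x'_2$ in the distinct cell $i'-s_2$, the same recipe produces predecessors seated at the two different head positions $i'-s_1\ne i'-s_2$. Either way $\tm$ has a configuration with two predecessors, so it is not reversible.

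The one place warranting care is the tape bookkeeping in the ``only if'' construction: one must verify that the cells whose contents are \emph{forced} in the two candidate predecessors are mutually consistent — disjoint single cells when the head positions differ, a single cell when they coincide — and that the side conditions $q\ne q_a,q_r$ and $q'\ne q_0$ from Definition~\ref{turing-machine} never obstruct the choice of witnessing configurations. None of this is deep; the statement is the standard characterization of reversible Turing machines due to Bennett, and the argument reduces to a short finite case analysis.
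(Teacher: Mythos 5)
The paper states this as a ``Fact'' with a \emph{cf.} pointer to standard textbooks and gives no proof of its own, so there is no in-paper argument to compare against. Your proof is correct and is the standard Bennett-style characterization: the ``if'' direction reconstructs a would-be predecessor in a forced chain (Unidirection fixes the move and hence the prior head cell, the overwritten symbol is then read off the current tape, One-to-one pins down the prior state and read symbol, the tape follows), and the ``only if'' direction exhibits two distinct predecessors by contrapositive. The tape-bookkeeping worry you flag resolves exactly as you say: when $s_1\neq s_2$ the two forced cells $i'-s_1$ and $i'-s_2$ are disjoint, when $s_1=s_2$ there is one forced cell, and the restrictions $q\neq q_a,q_r$, $q'\neq q_0$ are constraints on rules rather than on configurations (they simply mean a $q_0$-configuration has no predecessor, which your step (i) already covers). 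No gaps.
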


 The construction of RTM insures every configuration has at most one predecessor and one successor. Then is it possible to construct another RTM to reverse all the computational processes? The answer is definitely yes, but with non-standard form of transition rules. Recall the transition rule $(q,x,q',x',s)$ is defined that the head movement is always the final action. As a result, to exactly reverse the computation of a RTM, we need to move the head first.
 \begin{definition}
     [Reverse  transition rules]\label{def:inverse-transition-rule}
     Given a RTM $\tm=\langle Q,\Gamma, \Delta\rangle$, one can always construct a set of transition rules to reverse its computation, $\Delta^{-1}=\{(q,s,x,q',x')~|~(q',x',q,x,-s)\in\Delta\}$. Here $(q,s,x,q',x')$ means that depending on the current internal state $q$, the head first moves along $s$ direction and then read the current symbol $x$. Subsequently, the head modifies of internal state and symbol to $q'$ and $x'$. We refer transition rules of this form the reverse form. Note that $(q,\mathsf{0},x,q',x')=(q,x,q',x',\mathsf{0})$, so we always regard the transition rules with $s=\mathsf{0}$ as the standard form.

     In the following, we regard a machine that satisfies the definition of TM except for  containing reverse form transition rules also as a TM.
 \end{definition}

 The class $\pspace$ contains the languages that can be solved by a TM with polynomial-size tape with the input. But the computing time is unlimited.
\begin{definition}
    [$\pspace$] $\pspace=\{L\}$ is the set of languages $L$ where there exists a Turing machine $\tm_L$ and a polynomial $p(\cdot)$, such that any $x\in\{0,1\}^*$, one can determine if $x\in L$ using $\tm_L(l)$ for $l=\mathfrak{O}(p(|x|))$. We call this  $\tm_L$ the Turing machine that solves $L$ in polynomial space. 
\end{definition}

\subsection{Quantum phase estimation}
\begin{figure}[!t]
    \centering
    \includegraphics[width=0.39\linewidth]{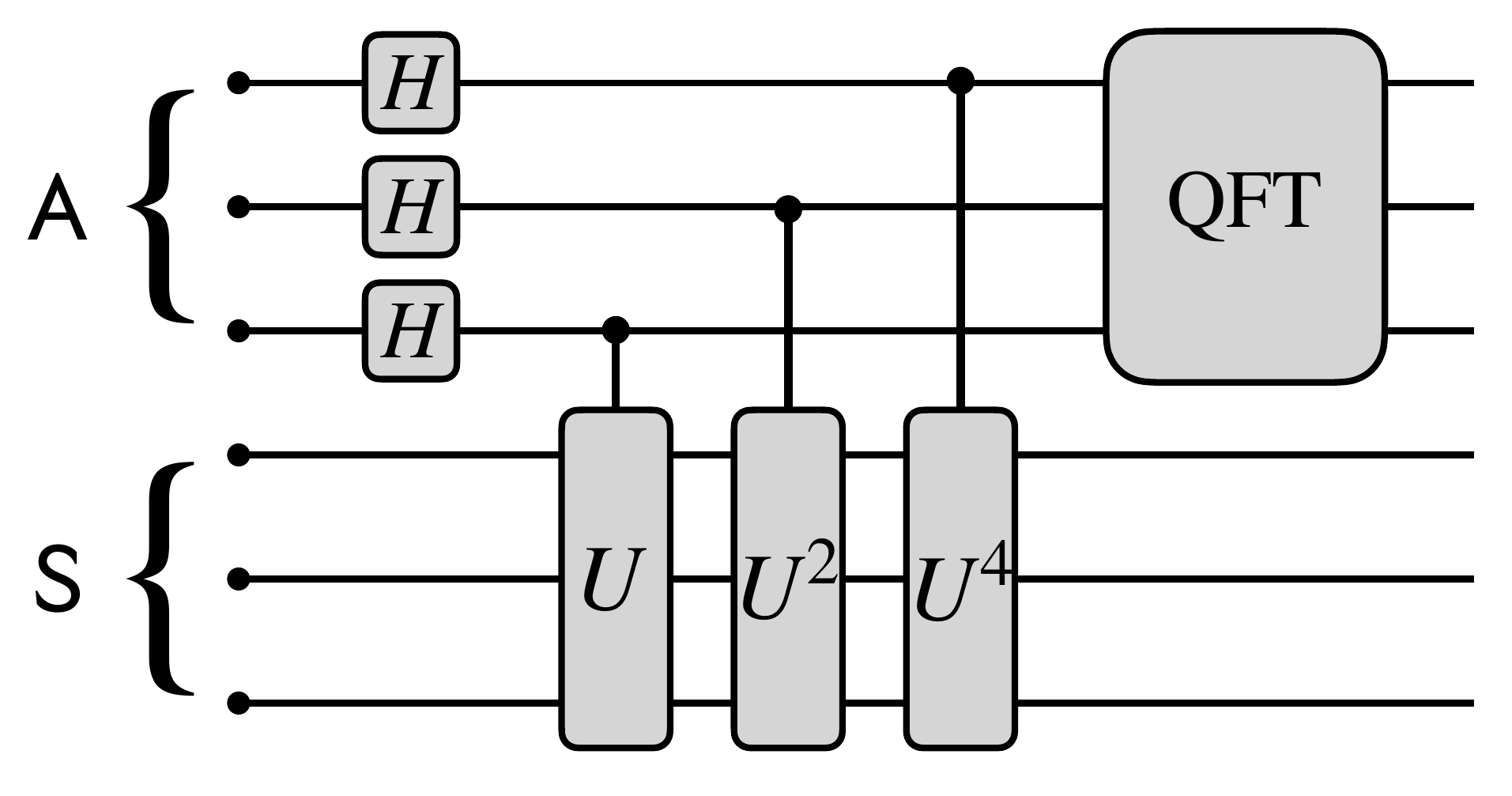}
    \caption{Circuit structure for quantum phase estimation.}
    \label{fig:qpe}
\end{figure}

In this section we review the properties of the quantum phase estimation (QPE) algorithm\cite{kitaev1995quantum,cleve1998quantum,kitaev2002classical,dalzell2023quantum}, which we use as a subroutine in the construction of energy-conserving PRUs. The circuit implementation of QPE is sketched in Fig~\ref{fig:qpe}. 

 Let $\ket{\psi}$ be a $n$-qubit quantum state on register $\sfS$ and $U$ be a unitary such that $U\ket{\psi}=e^{i2\pi \theta}\ket{\psi}$. The goal of QPE is to estimate the phase $\theta\in[0,1)$. For this, we need an ancilla register $\sfA$ to restore the digitalization of $\theta$, which we take to be $m=\poly{n}$ qubits.

Starting from $\ket{\psi}\otimes\ket{0^m}$, the first step is to apply $H^{\otimes m}$ of $\sfA$, where $H$ is Hadamard gate. 
\begin{align}
    \ket{\psi}\otimes\ket{0^m}\xrightarrow{H^{\otimes m}}\frac{1}{\sqrt{2^m}}
    \sum_{x\in\{0,1\}^m}\ket{\psi}\otimes\ket{x}.
\end{align}
Second, we apply controlled $U^{2^{i-1}}$ on $\sfS$ controlled by the $i$-th qubit in $\sfA$ for $1\leq i\leq m$. After this step,
\begin{align}
    \frac{1}{\sqrt{2^m}}
    \sum_{x\in\{0,1\}^m}\ket{\psi}\otimes\ket{x}
    \xrightarrow{\text{controlled $U$}}
    \frac{1}{\sqrt{2^m}}
    \sum_{x\in\{0,1\}^m}e^{i2\pi \hat{x}\theta}\ket{\psi}\otimes\ket{x},
\end{align}
where $\hat{x}$ is the decimal number of $x$. Finally, we utilize the inverse quantum Fourier transformation (QFT),
\begin{align}
    \qft^\dagger:\quad \ket{x\in\{0,1\}^m}\rightarrow \frac{1}{\sqrt{2^m}}\sum_{y\in\{0,1\}^m}e^{-i2\pi \hat{x}\hat{y}/2^m}\ket{y}.
\end{align}
Applying inverse QFT on the register $\sfA$, we have
\begin{align}
    \frac{1}{\sqrt{2^m}}
    \sum_{x\in\{0,1\}^m}e^{i2\pi \hat{x}\theta}\ket{\psi}\otimes\ket{x}
    &\xrightarrow{\qft^\dagger}
    \frac{1}{2^m}\sum_{x,y\in\{0,1\}^m}e^{i2\pi \hat{x}(\theta-\hat{y}/2^m)}\ket{\psi}\otimes\ket{y}\notag\\
    &=
    \sum_{y\in\{0,1\}^m}\alpha_y(\theta)\ket{\psi}\otimes\ket{y},
\end{align}
where
\begin{align}
    \alpha_y(\theta)=\frac{1}{2^m}\frac{1-
    e^{i2\pi 2^m(\theta-\hat{y}/2^m)}
    }{1-e^{i2\pi (\theta-\hat{y}/2^m)}}.
\end{align}

The amplitude distribution $\alpha_y(\theta)$ is sharply peaked around $\theta$. From standard error analysis, we have

\begin{fact}
    [Accuracy of quantum phase estimation, cf.\cite{kitaev1995quantum,cleve1998quantum,kitaev2002classical,dalzell2023quantum}]
    \label{fact:accuracy-qpe}
    For a $n$-qubit unitary $U\in\mathbb{U}(2^n)$ and a $n$-qubit state $\ket{\psi}$ such that $U\ket{\psi}=e^{2\pi i\phi}\ket{\psi}$, $m\in\BN^+$ and $\epsilon\in(0,1)$, there is a quantum circuit acting on $n+k\equiv n+m+1+\lceil 2\log_2\epsilon^{-1}\rceil$ qubits that satifies the following: When taking $\ket{\psi}\otimes\ket{0}_k$ as input, outputs a state $\ket{\psi}\otimes \ket{\phi_{m,\epsilon}}$, such that $\norm{\prod_{\phi,m,\epsilon}\ket{\phi_{m,\epsilon}}}_2\leq \epsilon$, where $\prod_{\phi,m,\epsilon}=\sum_{b\in\{0,1\}^k,|\hat{b}/2^k-\phi|>2^{-(m+1)}}\ketbra{b}{b}$, with $\hat{b}$ the decimal number of $b$. The circuit uses $2^k-1$ times of controlled $U$.
\end{fact}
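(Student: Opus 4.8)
The plan is to adapt the textbook analysis of quantum phase estimation~\cite{cleve1998quantum,kitaev2002classical,dalzell2023quantum}, taking care that the crude error bounds reproduce the stated value $k=m+1+\lceil 2\log_2\epsilon^{-1}\rceil$. The circuit is the standard one: append a $k$-qubit ancilla register $\sfA$ in $\ket{0}_k$, apply $H^{\otimes k}$ on $\sfA$, then for $j=0,1,\dots,k-1$ apply the controlled operation $U^{2^{j}}$ with control on the $j$-th ancilla qubit, and finish with the inverse quantum Fourier transform $\qft^{\dagger}$ on $\sfA$. Since $\ket\psi$ is an eigenstate of $U$, each controlled-$U^{2^{j}}$ returns $\sfS$ to $\ket\psi$ and only multiplies the ancilla branch $\ket x$ by $e^{2\pi i\hat x\phi}$; linearity then gives the output product state $\ket\psi\otimes\ket{\phi_{m,\epsilon}}$ with
\begin{equation}
\ket{\phi_{m,\epsilon}}=\sum_{y\in\{0,1\}^{k}}\alpha_y(\phi)\ket y,\qquad
\alpha_y(\phi)=\frac{1}{2^{k}}\,\frac{1-e^{2\pi i\,2^{k}(\phi-\hat y/2^{k})}}{1-e^{2\pi i(\phi-\hat y/2^{k})}}
\end{equation}
(and $\alpha_y(\phi)=1$ when $\hat y/2^{k}=\phi$), which is exactly the expression in the main text with $m$ replaced by $k$. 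The number of controlled-$U$ invocations is $\sum_{j=0}^{k-1}2^{j}=2^{k}-1$, as claimed.

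It remains to prove $\norm{\prod_{\phi,m,\epsilon}\ket{\phi_{m,\epsilon}}}_2^{2}=\sum_{y\,:\,|\hat y/2^{k}-\phi|>2^{-(m+1)}}|\alpha_y(\phi)|^{2}\le\epsilon^{2}$, with distances taken modulo $1$. Write $\phi=(\hat b+\gamma)/2^{k}$ with integer $\hat b\in\{0,\dots,2^{k}-1\}$ and $\gamma\in[0,1)$, and parametrize an outcome $y$ by $d=(\hat y-\hat b)\bmod 2^{k}$ chosen in $(-2^{k-1},2^{k-1}]$, so that $\hat y/2^{k}-\phi\equiv(d-\gamma)/2^{k}\pmod1$. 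Bounding the numerator of $\alpha_y(\phi)$ by $|1-e^{-2\pi i\gamma}|\le2$ and the denominator from below via $|1-e^{2\pi i t}|=2|\sin\pi t|\ge4\lVert t\rVert$ (distance to the nearest integer; valid since $\sin\pi s\ge2s$ on $[0,\tfrac12]$), one obtains $|\alpha_y(\phi)|^{2}\le\frac{1}{4(\lvert d\rvert-1)^{2}}$ whenever $\lvert d\rvert\ge2$. Since $\gamma<1$, the bad event $\lvert d-\gamma\rvert/2^{k}>2^{-(m+1)}$ forces $\lvert d\rvert\ge2^{\,k-m-1}$, which exceeds $2$ for $\epsilon<\tfrac{1}{\sqrt2}$; hence
\begin{equation}
\sum_{y\ \mathrm{bad}}|\alpha_y(\phi)|^{2}\ \le\ 2\!\!\sum_{d\ge 2^{\,k-m-1}}\!\!\frac{1}{4(d-1)^{2}}\ \le\ \frac{1}{2\bigl(2^{\,k-m-1}-2\bigr)}\ \le\ \frac{1}{2(\epsilon^{-2}-2)}\ \le\ \epsilon^{2},
\end{equation}
using $2^{\,k-m-1}=2^{\lceil 2\log_2\epsilon^{-1}\rceil}\ge\epsilon^{-2}$ for the third inequality and $\epsilon^{-2}\ge4$ (i.e.\ $\epsilon\le\tfrac12$) for the last; for larger $\epsilon$ one simply invokes the $\epsilon=\tfrac12$ circuit.

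The main, and essentially only, obstacle is this last estimate: chasing the constants in the Dirichlet-kernel tail bound so that they align with the precise relation among $k$, $m$, and $\epsilon$ — in particular handling the half-grid shift in the target accuracy $2^{-(m+1)}$ and the degenerate cases $d=\pm1$, where the denominator lower bound is vacuous but the corresponding outcomes are already within tolerance. The circuit specification, the factorization of the eigenstate, and the gate count are routine.
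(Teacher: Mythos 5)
The paper states this as an imported \emph{Fact} and cites \cite{kitaev1995quantum,cleve1998quantum,kitaev2002classical,dalzell2023quantum} without giving a proof, so there is no in-paper argument to compare against; your task is really to reconstruct the standard QPE tail bound with the paper's particular constants. Your reconstruction is essentially correct. The circuit, the eigenstate factorization, the amplitude formula $\alpha_y(\phi)$, and the gate count $2^k-1$ are all right. The key estimate is also sound: writing $\phi=(\hat b+\gamma)/2^k$ and $d=(\hat y-\hat b)\bmod 2^k\in(-2^{k-1},2^{k-1}]$, the bound $|\alpha_y(\phi)|^2\le \frac{1}{4(|d|-1)^2}$ for $|d|\ge 2$ follows from $|1-e^{2\pi i t}|\ge 4\lVert t\rVert$ and the numerator bound $2$; and the bad event does force $|d|\ge 2^{k-m-1}\ge\epsilon^{-2}$, so the integral comparison gives $\sum_{\mathrm{bad}}|\alpha_y|^2\le\frac{1}{2(\epsilon^{-2}-2)}\le\epsilon^2$ once $\epsilon\le\tfrac12$. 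Your observation that $d=0,\pm1$ are never bad for $k\ge m+2$ patches the divergence of the denominator bound at $|d|=1$.

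The one soft spot is your treatment of $\epsilon\in(\tfrac12,1)$. You propose to ``invoke the $\epsilon=\tfrac12$ circuit,'' but the statement fixes the ancilla count at $k=m+1+\lceil 2\log_2\epsilon^{-1}\rceil$, and this $k$ is \emph{smaller} for larger $\epsilon$; running the $\epsilon=\tfrac12$ circuit uses more qubits than the statement allows and changes the projector $\Pi_{\phi,m,\epsilon}$ (which is defined in terms of $k$), so the fallback as phrased does not actually verify the Fact in that regime. A direct check does save the range $\epsilon\in(\tfrac12,1/\sqrt2\,]$, where $\lceil 2\log_2\epsilon^{-1}\rceil=2$ gives $D=4$ and the tail bound $\tfrac14<\epsilon^2$ still holds, but for $\epsilon\in(1/\sqrt2,1)$ one has $D=2$ and the expression $\frac{1}{2(D-2)}$ is vacuous, so you would need either a slightly sharper tail estimate (e.g.\ summing $\sum_{d\ge 2}\frac{1}{4(d-1)^2}=\frac{\pi^2}{24}$ directly rather than by integral comparison) or a small adjustment of constants. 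This is a genuine, if minor, gap in your write-up. It is also worth noting, as you implicitly do, that the distance $|\hat b/2^k-\phi|$ in the Fact must be read modulo $1$ for the statement to be true near $\phi\approx 0$ or $\phi\approx 1$; the paper leaves this implicit throughout. Neither issue matters for the paper's downstream applications, where $\epsilon$ is taken to be exponentially small.
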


The main cost of QPE is the repeated  implements of controlled $U$, with other parts implemented efficiently. As we have seen, to achieve a precision $2^{-m}$, we need to implement $U^{2^j}$ for $j$ from $1$ to $m$. This means that an inverse exponential precision generally requires exponential time complexity to achieve. For our application later, however, we will use a specific $U$ such that the cost could be significantly reduced.

\section{Distinguishing energy-conserving ensemble from Haar-random unitaries}
\label{ap:distinguish-energy-conserving-form-RU}
To motivate the energy-conserving PRU, we begin by proving any energy-conserving unitary ensemble does not form PRU (with no energy constraint). The intuition for this result is to simply check the energy. Let $\ket{\psi}$ be an arbitrary input state, one can estimate the energy expectation $\mathbb{E}_U\big[\bra{\psi}U^\dagger H U\ket{\psi}\big]$ for $U$ drawn from some random unitary ensemble. When $U$ is drawn from Haar-random unitaries, $\mathbb{E}_U\big[\bra{\psi}U^\dagger H U\ket{\psi}\big]\propto \Tr{H}$, which is the energy expectation of infinite temperature state. When $U$ is drawn from any ensemble $\{U\}$ satisfying $[U,H]=0,\forall U$, $\mathbb{E}_U\big[\bra{\psi}U^\dagger H U\ket{\psi}\big]=\bra{\psi}H\ket{\psi}$. Thus, a distinguishing algorithm for $\{U\}$ and Haar-random unitaries can be constructed by checking the average energy. 

More precisely, we consider the local sparse Hamiltonian: $H$ is given by the sum of $k$-local terms, where each qubit is acted by at most $d$ $k$-local terms. This is often referred to as $k$-local Hamiltonians with bounded degree $d$. Existing results~\cite{huang2022learning, anshu2021improved, brandao2013product} demonstrated that for such Hamiltonians, a product state with significant energy deviation from the Haar-random ones can be efficiently constructed.
\begin{lemma}
    [Product state with large deviation. Corollary 2 of~\cite{huang2022learning} ]\label{lem:product-state-large-deviation}
    Given an $n$-qubit $k$-local Hamiltonian $H=\sum_{P\in\{I,X,Y,Z\}^n:|P|\leq k}\alpha_P P$ with bounded degree $d$, $|\alpha_P|\leq 1$ for all $P$, and $k=O(1)$. There is a random algorithm that runs in time $O(nd)$ and produces either a random maximizing state $\ket{\psi}=\ket{\psi_1}\otimes\cdots\ket{\psi_n}$ satisfying
    \begin{align}
        \mathbb{E}_{\ket{\psi}}\big[\bra{\psi}H\ket{\psi}\big]
        \geq \mathbb{E}_{\ket{\phi}:\operatorname{Haar}}\big[\bra{\phi}H\ket{\phi}\big]+\frac{C}{\sqrt{d}}\sum_{P\neq I}|\alpha_P|,
    \end{align}
    or a random minimizing state $\ket{\psi}=\ket{\psi_1}\otimes\cdots\ket{\psi_n}$ satisfying
    \begin{align}
        \mathbb{E}_{\ket{\psi}}\big[\bra{\psi}H\ket{\psi}\big]
        \leq\mathbb{E}_{\ket{\phi}:\operatorname{Haar}}\big[\bra{\phi}H\ket{\phi}\big]-\frac{C}{\sqrt{d}}\sum_{P\neq I}|\alpha_P|
    \end{align}
    for some constant $C$.
\end{lemma}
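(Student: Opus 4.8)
The plan is to exhibit an explicit product state achieving the claimed deviation, built from single-qubit Pauli eigenstates, and to reduce the problem to a classical degree-bounded optimization over signs. First note that a Haar-random state satisfies $\mathbb{E}_{\ket\phi}[\bra\phi H\ket\phi] = \Tr(H)/2^n = \alpha_I$, so the target is a product state whose energy is at least $\alpha_I + \frac{C}{\sqrt d}\sum_{P\ne I}|\alpha_P|$ in expectation. I would restrict attention to states $\ket\psi = \bigotimes_{j=1}^n \ket{\psi_j}$ where $\ket{\psi_j}$ is the $s_j$-eigenstate ($s_j\in\{\pm1\}$) of one of the single-qubit Paulis $W_j\in\{X,Y,Z\}$. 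For such a state a nonidentity Pauli string $P$ contributes $\bra\psi P\ket\psi = \prod_{j\in\mathrm{supp}(P)}s_j$ if $W_j = P_j$ for every $j\in\mathrm{supp}(P)$ (call $P$ \emph{alive} for the chosen directions), and $0$ otherwise. Hence $\bra\psi H\ket\psi = \alpha_I + \sum_{P\ \text{alive}}\alpha_P\prod_{j\in\mathrm{supp}(P)}s_j$, a classical weighted $\{\pm1\}$ objective (a hypergraph generalization of Max-Cut). Choosing the directions $W_j$ independently and uniformly at random, the expected total weight of alive terms is $\sum_P|\alpha_P|\,3^{-|P|}\ge 3^{-k}\sum_{P\ne I}|\alpha_P|$, while the alive-term hypergraph still has every qubit in at most $d$ terms and every term of size at most $k$; so it suffices to choose signs achieving objective value at least $\frac{c}{\sqrt d}$ times the alive weight.

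For the sign step I would use the weighted-hypergraph analogue of Shearer's Max-Cut bound, proved by a greedy / conditional-expectation argument. Process the qubits in a uniformly random order; when qubit $j$ is processed, every alive term whose support has $j$ as its last element in the order has its back-product $\prod_{j'\in\mathrm{supp}(P),\,j'\ne j}s_{j'}$ already fixed, so set $s_j = \sign(\Lambda_j)$ with $\Lambda_j = \sum_{P:\ \mathrm{last}(P)=j}\alpha_P\cdot(\text{back-product of }P)$, making qubit $j$'s marginal contribution equal to $|\Lambda_j|$. The objective then equals $\sum_j |\Lambda_j|$, and the bound follows from combining (i) a ``square-root gain per qubit'' estimate, $\mathbb{E}|\Lambda_j|\ \ge\ c'\sqrt{\sum_{P:\ \mathrm{last}(P)=j}\alpha_P^2}$ over the randomness of order and directions (using that the back-products behave like Rademacher variables), with (ii) the Cauchy--Schwarz telescoping
\begin{equation}
\sum_j \sqrt{\textstyle\sum_{P:\ \mathrm{last}(P)=j}\alpha_P^2}\ \ge\ \frac{1}{\sqrt d}\sum_j\sum_{P:\ \mathrm{last}(P)=j}|\alpha_P|\ =\ \frac{1}{\sqrt d}\sum_{P\ \text{alive}}|\alpha_P|,
\end{equation}
valid since at most $d$ terms are completed at any single qubit while each term is completed at exactly one qubit. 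Feeding in the $3^{-k}$ factor from the direction step and absorbing $k,c,c'$ into one constant $C$ gives the maximizing bound; the minimizing bound follows by running the same procedure on $-H$. Efficiency is immediate: besides sampling the directions and the order, the algorithm only computes each $\Lambda_j$, which inspects the $\le d$ terms on qubit $j$, each of constant size, for total time $O(nd)$.

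The hard part will be estimate (i): showing that the greedily chosen back-products are ``random enough'' that each newly processed qubit yields a gain of order $\sqrt{\cdot}$. A plain second-moment argument on $\sum_{P\ \text{alive}}\alpha_P\prod s_j$ only gives $\mathbb{E}[(\cdot)^2] = \sum\alpha_P^2$ and hence a deviation of order $(\sum|\alpha_P|)/\sqrt m$ with $m$ the number of terms, which can be of order $nd$ and is far weaker than the claimed $1/\sqrt d$. Extracting the $1/\sqrt d$ genuinely requires exploiting the sequential structure, and the correlations that the greedy rule introduces among the signs must be controlled by a careful inductive or martingale estimate (equivalently, a method-of-conditional-expectations derandomization with the same gain analysis). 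This is precisely the technical content of Corollary~2 of~\cite{huang2022learning}, which I would invoke directly, or reproduce along the lines above.
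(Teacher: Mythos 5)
The paper does not prove this lemma itself; it is cited directly from Corollary~2 of~\cite{huang2022learning}. Your sketch correctly reconstructs the overall shape of that argument and of related results in the product-state-approximation literature~\cite{brandao2013product,anshu2021improved}: restrict to tensor products of single-qubit Pauli eigenstates, sample the axis on each qubit uniformly so that each nonidentity term is ``alive'' with probability $3^{-|P|}\geq 3^{-k}$, then optimize the $\pm1$ signs. The Haar baseline $\mathbb{E}_{\ket{\phi}}[\bra{\phi}H\ket{\phi}]=\Tr(H)/2^n=\alpha_I$, the reduction to a classical $\{\pm1\}$ multilinear objective, and the Cauchy--Schwarz bookkeeping in your step~(ii) are all right, and you are correct that a plain global second-moment bound only yields $\Omega(\sum_P|\alpha_P|/\sqrt{m})$ with $m$ the number of terms rather than the claimed $\Omega(\sum_P|\alpha_P|/\sqrt{d})$, so a local or sequential argument is essential.

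The genuine gap is exactly the one you flag in step~(i). In a fully greedy pass, the back-products feeding into $\Lambda_j$ are not Rademacher: they are deterministic functions of earlier greedy choices, which were made precisely to tilt the objective and are therefore correlated with the coefficients of the terms completed at step~$j$. A conditional-expectation derandomization starting from uniformly random signs certifies only that the final value is at least the starting mean of~$0$, not that it exceeds~$0$ by an $\Omega(1/\sqrt d)$ fraction of $\sum_P|\alpha_P|$. The arguments that actually extract the $1/\sqrt d$ gain decouple differently --- for instance, by first committing genuinely random signs on one subset of the qubits so that each remaining qubit sees an honest mean-zero local field amenable to a Khintchine-type lower bound, and only then optimizing the rest, with additional care to control cross-terms. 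Your sketch correctly locates the difficulty but does not close it; deferring to the cited Corollary~2, as the paper itself does, is the appropriate resolution.
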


This result enables explicit construction of an efficient distinguishing algorithm.
\begin{theorem}
    [Energy-conserving unitary ensemble does not form PRU]\label{thm:distinguish-energy-conserving-form-RU}
    Given an $n$-qubit $k$-local Hamiltonian with bounded degree $d$ satisfying $k,d=O(1)$.
    Any random unitary ensemble $\{U\}$  satisfying $[U,H]=0$ for all $U$ can be efficiently distinguished from Haar-random ensemble.
\end{theorem}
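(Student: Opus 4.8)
The plan is to make rigorous the ``check the energy'' heuristic: use Lemma~\ref{lem:product-state-large-deviation} to prepare a probe state whose energy is far from infinite temperature, and to control the two error probabilities of the resulting test using the first and second moments of the Haar measure. \textbf{Probe state.} First I would run the randomized algorithm of Lemma~\ref{lem:product-state-large-deviation} a polynomial number of times. Writing $\Delta:=\tfrac{C}{\sqrt d}\sum_{P\neq I}|\alpha_P|$, in each run the algorithm outputs a product state whose \emph{expected} energy differs from $\Tr{H}/2^n=\alpha_I$ by at least $\Delta$ (in the maximizing or minimizing direction). Since $\bra{\psi}H\ket{\psi}$ has magnitude at most $\norm{H}=\poly{n}$ and, for a product state with $k=O(1)$, can be computed exactly in classical polynomial time, a Markov-type argument shows that at least a $1/\poly{n}$ fraction of the sampled states satisfy $\big|\bra{\psi}H\ket{\psi}-\alpha_I\big|\geq \Delta/2$; keeping one such state (the condition being checkable classically) fails only with probability $\negl{n}$.

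\textbf{The test.} Given oracle access to the unknown $U$, I would repeat $T=\poly{n}$ times: prepare $\ket{\psi}$, apply $U$ (only forward access is needed), and estimate $\bra{\psi}U^\dagger HU\ket{\psi}=\sum_P\alpha_P\,\bra{\psi}U^\dagger PU\ket{\psi}$ by single-qubit measurements of $U\ket{\psi}$ in the bases dictated by each weight-$\leq k$ Pauli $P$; there are only $O(n)$ such terms, so by Hoeffding and a union bound a large enough polynomial $T$ yields an estimate $\hat E$ with $|\hat E-\bra{\psi}U^\dagger HU\ket{\psi}|\leq \Delta/16$ except with probability $\negl{n}$. The algorithm declares ``energy-conserving'' if $|\hat E-\bra{\psi}H\ket{\psi}|<\Delta/4$ and ``Haar-random'' otherwise.

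\textbf{Correctness.} If $[U,H]=0$ then $U^\dagger HU=H$, so $\bra{\psi}U^\dagger HU\ket{\psi}=\bra{\psi}H\ket{\psi}$ and the test accepts except with probability $\negl{n}$. If $U$ is Haar-random, the twirl identity $\mathbb{E}_U[U^\dagger HU]=(\Tr{H}/2^n)\,\mathbb{I}$ gives mean $\alpha_I$, while a standard $2$-design computation bounds $\mathrm{Var}\!\big(\bra{\psi}U^\dagger HU\ket{\psi}\big)\leq \Tr{H^2}/2^{2n}=O(n/2^n)$, using $\Tr{H^2}=2^n\sum_P\alpha_P^2\leq O(n\,2^n)$. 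By Chebyshev, $\bra{\psi}U^\dagger HU\ket{\psi}$ then lies within $\Delta/16$ of $\alpha_I$ except with probability $\negl{n}$, hence $|\hat E-\bra{\psi}H\ket{\psi}|\geq \Delta/2-\Delta/8=3\Delta/8>\Delta/4$ and the test rejects. As long as $\Delta$ is at least an inverse polynomial in $n$ — true whenever $H$ is not proportional to $I$, and $\Delta=\Omega(1)$ for generic local $H$ — both failure probabilities are $\negl{n}$, so the distinguishing advantage is $1-\negl{n}$ while the procedure uses only $\poly{n}$ queries and gates.

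\textbf{Expected main obstacle.} The only nonroutine ingredient is the Haar concentration bound of the previous step: one must verify that the fluctuations of $\bra{\psi}U^\dagger HU\ket{\psi}$ are parametrically smaller than the deterministic gap $\Delta$. I expect this to follow cleanly from the $1$- and $2$-design moments of the Haar measure (equivalently, from Lévy's lemma applied to the $1/\norm{H}$-Lipschitz map $\ket{\phi}\mapsto\bra{\phi}H\ket{\phi}$ on the unit sphere), combined with the elementary bounds $\norm{H}=\poly{n}$ and $\Tr{H^2}\leq \poly{n}\cdot 2^{n}$ that hold for any bounded-degree $k$-local Hamiltonian; no stronger input appears necessary, and the remaining steps (sampling-error control, classical computation on the product state) are routine.
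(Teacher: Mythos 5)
Your proposal is correct and follows essentially the same route as the paper's proof: use Lemma~\ref{lem:product-state-large-deviation} to prepare a product probe state whose energy deviates from $\Tr H/2^n$ by $\Delta = \tfrac{C}{\sqrt d}\sum_{P\neq I}|\alpha_P|$, apply the unknown unitary, estimate the post-evolution energy, and threshold, with the second-moment concentration of Haar-random states controlling the completeness error. The bookkeeping differences — fixing one good probe state by rejection sampling rather than averaging over the random $\ket{\psi}$, Chebyshev in place of Jensen applied to the Haar variance, anchoring the threshold at $\bra{\psi}H\ket{\psi}$ rather than at $E_{\operatorname{Haar}}$, and direct Pauli-basis measurements rather than classical shadows — are all interchangeable and do not change the substance of the argument. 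One small imprecision worth correcting: $\Delta\geq 1/\poly n$ does \emph{not} follow merely from $H\not\propto I$ (consider $H = I + 2^{-n}Z_1$); both you and the paper implicitly work under a normalization in which some local term has $\Omega(1)$ weight so that $\Delta=\Omega(1)$, and that is the operative hypothesis to state.
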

\begin{proof}
    To prove this, we demonstrate that there is a universal efficient algorithm to distinguish $\{U\}$ from the Haar-random ensemble.

    \paragraph{The algorithm.} The algorithm takes two steps. First, estimating $E_{\operatorname{Haar}}=\mathbb{E}_{\ket{\phi}:\operatorname{Haar}}\big[\bra{\phi}H\ket{\phi}\big]$ to a high precision by simulating pseudo-random states $\ket{\phi}$. Second, estimating 
    \begin{align}
        E_{U}=\mathbb{E}_{\ket{\psi},U} \big|\big[\bra{\psi}U^\dagger HU\ket{\psi}\big]-E_{\operatorname{Haar}}\big|
    \end{align}
    for $\ket{\psi}$ sampled by Lemma~\ref{lem:product-state-large-deviation}, and for $U$ sampled from the unitary ensemble to be distinguished. If $E_U$ is smaller than  $\frac{C}{2\sqrt{d}}\sum_{P\neq I}|\alpha_P|$, output $\acc$. Otherwise output $\rej$.

    \paragraph{Performance guarantee.} When $U$ is drawn from the Haar-random ensemble,
    \begin{align}
        E_U&=\mathbb{E}_{\ket{\psi},U} \big|\big[\bra{\psi}U^\dagger HU\ket{\psi}\big]-\mathbb{E}_{\ket{\phi}:\operatorname{Haar}}\big[\bra{\phi}H\ket{\phi}\big]\big|
        \notag\\
        &\leq \mathbb{E}_{\ket{\psi}}\Big[
        \sqrt{
        \mathbb{E}_{\ket{\phi}:\operatorname{Haar}}[\bra{\phi}H\ket{\phi}\bra{\phi}H\ket{\phi}]-
        \mathbb{E}_{\ket{\phi}:\operatorname{Haar}}\big[\bra{\phi}H\ket{\phi}\big]^2
        }
        \Big]\notag\\
        &=\mathbb{E}_{\ket{\psi}}\Bigg[
        \sqrt{\frac{1}{2^n(2^n+1)}\Big(\Tr{H^2}-\frac{\Tr{H}^2}{2^n}\Big)
        }
        \Bigg]\notag\\
        &=\sqrt{\frac{1}{2^n(2^n+1)}}\norm{
        H-\frac{\Tr{H}}{2^n}\operatorname{I}
        }_F\notag\\
        &\leq \sqrt{\frac{1}{2^n+1}}\norm{
        H-\frac{\Tr{H}}{2^n}\operatorname{I}
        }_\infty\notag\\
        &=\CO(n/2^{\frac{n}{2}}).
    \end{align}

     When $U$ is drawn from any energy-conserving unitary ensemble,
     \begin{align}
         E_U=\mathbb{E}_{\ket{\psi},U} \big|\big[\bra{\psi}U^\dagger HU\ket{\psi}\big]-\mathbb{E}_{\ket{\phi}:\operatorname{Haar}}\big[\bra{\phi}H\ket{\phi}\big]\big|
         \geq \frac{C}{\sqrt{d}}\sum_{P\neq I}|\alpha_P|=\Omega(1)
     \end{align}
     by Lemma~\ref{lem:product-state-large-deviation}.  Therefore, there is an exponential separation for $E_U$ between these two cases. Estimating $E_{\operatorname{Haar}}$ and $E_{U}$ up to an inverse polynomial precision suffices to distinguish these two cases.

     \paragraph{Runtime.} Since $H$ is $k$-local with bounded degree $d$, estimating $\bra{\psi}H\ket{\psi}$ up to $\epsilon$ error requires $O(\log(n)/\epsilon^2)$ samples of $\ket{\psi}$ using, e.g., classical shadow tomography~\cite{huang2020predicting}. While $\norm{H}_\infty=O(n)$, estimating $\mathbb{E}_{\ket{\psi}}\big[\bra{\psi}H\ket{\psi}\big]$ up to $\epsilon$ precision requires $O(n^2/\epsilon^2)$ samples of $\bra{\psi}H\ket{\psi}$ by Chernoff bound. Combining these two, one can estimate $E_{\operatorname{Haar}}$ and $E_U$ to an inverse polynomial precision by querying $U$ polynomial times.
\end{proof}

\section{Energy-conserving PRU for commuting random Hamiltonians}
\label{ap:pru-commuting-ham}

In this section, we construct the energy-conserving PRU for commuting Hamiltonians with random coefficients drawn from Gaussian distribution. Roughly speaking, our construction takes three steps: first, introduce ancillary qubits and apply QPE to rotate into energy eigenbasis. Second, apply quantum-secure pseudo-random function to ancilla qubits. This amounts to add a pseudo-random phase to each energy eigenstates. Finally, applying QPE$^\dagger$ to rotate back to original basis. Pseudo-random phase unitaries, which are energy-conserving PRU for non-degenerated Hamiltonians, are constructed this way. Crucially, commuting Hamiltonians enable the application of $e^{iHt}$ for $t=\mathsf{exp}(n)$, which makes it possible to reserve every energy eigenstate perfectly.

To avoid insufficient covering, we restrict ourselves to local terms that are drawn from a complete set
\begin{definition}
    [Complete set of commuting observables]\label{def:csco}
    A set of observables $\{A_1,\cdots,A_k\}$ is called a complete set of commuting observables, iff the followings hold
    \begin{itemize}
        \item $[A_i,A_j]=0$ for $i\neq j$.
        \item Let $\{\ket{\psi(l)}\}$ be a set of common basis such that $A_i\ket{\psi(l)}=\mu_i(l)\ket{\psi(l)}$ for all $i$, then the ordered tuple $(\mu_1(l),\cdots,\mu_k(l))$ uniquely determine $\ket{\psi(l)}$.
    \end{itemize}
    In words, in the complete set of commuting observables the quantum numbers of all $A_i$s together complete specify the quantum state.
\end{definition}

Ideally, we consider the family of commuting Hamiltonians $H_n=\sum_i\CJ_ih_i$, where $\{h_i\}$ forms a complete set and $\CJ_i$s are drawn from Gaussian distribution. However, to facilitate computation on digital computers, one needs to digitalize every continuous variable. Therefore, we use the digitalized Gaussian random variable as coefficients.
\begin{definition}
    [Digitalization of Gaussian random variable]\label{def:digitalization}
    Let $R,\delta>0$ such that $R/\delta$ is an odd integer. A $(R,\delta)$-digitalization of Gaussian random variable $\CJ$ is denoted as $\CJ'=Q_{R,\delta}(\CJ)$. $\CJ'$ is a random variable taking the values $-R,-R+\delta,\cdots,-\delta,0,\delta,\cdots,R-\delta,R$ such that
    \begin{align}
        \CJ'=\begin{cases}
            k\delta,\quad \CJ\in\Big(\Big(k-\frac{1}{2}\Big)\delta,\Big(k+\frac{1}{2}\Big)\Big]\\
            -R,\quad \CJ\in \Big(-\infty, -R+\frac{1}{2}\delta\Big]\\
            R,\quad \CJ\in\Big(R-\frac{1}{2}\delta,\infty\Big).
        \end{cases}
    \end{align}
    A $(R,\delta)$-digitalization takes $O(\log_2(R/\delta))$ number of bits to represent a Gaussian random variable.
\end{definition}

\begin{definition}
    [Random commuting local Hamiltonian ensemble]\label{def:commuting-ham}
    Fix a locality parameter $d\ge 1$ and a set of precision parameter $(R,\delta)$. Lett $\mathcal{T}$ be a finite set of Hermitian operators with dimensions at most $2^d$ and no spectral degeneracy.   For each system size $n$, pick up $M(n)=\Theta(n)$ number of template observables $h^{(\alpha_i)}\in\mathcal{T}$, $1\leq i\leq M(n)$, and assign local supports $S_i\subseteq [n]$ with $|S_i|=|\operatorname{supp}(h^{(\alpha_i)})|$ for each $h^{(\alpha_i)}$, such that $\{h_i\} :=\{ \bigl(h^{(\alpha_i)}\bigr)_{S_i}\otimes I_{[n]\setminus S_i}\}$ is a complete set of commuting observables.
    Draw i.i.d. coefficients $\CJ_i\sim Q_{R,\delta}(\mathcal N(0,1))$, independent of the $h_i$, the commuting random local Hamiltonian is defined as $H_n = \sum_{i=1}^{M(n)} \CJ_i h_i$.
\end{definition}
As a simple example, we can take $\CT=\{\sigma_z\}$ and $\{h_1,h_2,\cdots h_n\}=\{\sigma_z^1,\sigma_z^2,\cdots \sigma_z^n\}$.

Throughout the rest of this section, if $\lambda\in\{0,1\}^*$ is a bitstring, we will denote its corresponding decimal number (i.e, $\sum_{i=0}^{|\lambda|-1}2^{i}\lambda_{|\lambda|-i}$) be $\hat{\lambda}$.

 \subsection{Smallest energy gap}

In this subsection, we prove that random commuting Hamiltonians have at least exponentially small energy gap with high probability, enabling an accurate discrimination of eigenstates by QPE with polynomial number of ancillas.
 
To warm up, we first prove the claim first for random commuting Hamiltonians with coefficients drawn from the continuous Gaussian distribution.
\begin{fact}[Convolution of Gaussian distributions]
    \label{fact:convolution-gaussian}
    Let $X_i\sim \CN(\mu_i,\sigma_i)$, $1\leq i\leq n$ be Gaussian random variables. Then the random variable $\sum_{i=1}^n\alpha_iX_i$ obeys distribution $\CN(\sum_{i=1}^n\alpha_i\mu_i,\sum_{i=1}^n|\alpha_i|\sigma_i)$ for any $\alpha_1,\cdots\alpha_n\in \RR$.
\end{fact}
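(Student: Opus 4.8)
Write $Y:=\sum_{i=1}^n\alpha_iX_i$. The statement asserts two things: that $Y$ is again Gaussian, and that it has mean $\sum_i\alpha_i\mu_i$ and spread $\sum_i|\alpha_i|\sigma_i$. I would establish Gaussianity via the characteristic function. When the $X_i$ are independent --- the case in which this Fact is actually applied, since the coefficients $\CJ_i$ in Definition~\ref{def:commuting-ham} are i.i.d. --- the characteristic function factorizes and each factor is a Gaussian one:
\begin{equation}
\E\!\big[e^{\ri tY}\big]=\prod_{i=1}^n\E\!\big[e^{\ri t\alpha_iX_i}\big]=\prod_{i=1}^n\exp\!\Big(\ri t\alpha_i\mu_i-\tfrac12 t^2\alpha_i^2\sigma_i^2\Big)=\exp\!\Big(\ri t\textstyle\sum_i\alpha_i\mu_i-\tfrac12 t^2\sum_i\alpha_i^2\sigma_i^2\Big).
\end{equation}
This is the characteristic function of a normal distribution, so $Y$ is Gaussian; for correlated but jointly Gaussian $X_i$ the same conclusion holds because affine images of Gaussian vectors are Gaussian, or elementarily by induction on $n$ using the two-variable convolution identity.

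\textbf{Identifying the parameters.} The mean is immediate from linearity of expectation, $\E[Y]=\sum_i\alpha_i\mu_i$, matching the location parameter in the statement. For the spread I would center $Y$ and apply the triangle inequality in $L^2$: since $\|X_i-\mu_i\|_{L^2}=\sigma_i$, Minkowski's inequality gives
\begin{equation}
\mathrm{std}(Y)=\Big\|\textstyle\sum_{i=1}^n\alpha_i\,(X_i-\mu_i)\Big\|_{L^2}\ \le\ \sum_{i=1}^n|\alpha_i|\,\|X_i-\mu_i\|_{L^2}\ =\ \sum_{i=1}^n|\alpha_i|\sigma_i,
\end{equation}
with equality exactly when the summands $\alpha_i(X_i-\mu_i)$ are almost surely nonnegatively proportional (in particular for $n=1$). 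Combining this with Gaussianity and the computed mean yields $Y\sim\CN\!\big(\sum_i\alpha_i\mu_i,\ \sum_i|\alpha_i|\sigma_i\big)$, which is the assertion of the Fact (the second slot recording the spread in the sense just derived).

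\textbf{The only delicate point.} This is a textbook fact, so there is no real obstacle; the one thing worth a comment is the spread. For independent summands the exact standard deviation is $\sqrt{\sum_i\alpha_i^2\sigma_i^2}$, and the quantity $\sum_i|\alpha_i|\sigma_i$ appearing in the statement is its $\|\cdot\|_2\le\|\cdot\|_1$ majorant, coinciding with it only in the colinear/degenerate cases. This slack is immaterial for how the Fact is used in Section~\ref{ap:pru-commuting-ham}: to lower-bound the minimum spectral gap of $H_n=\sum_i\CJ_ih_i$ one applies it to each pairwise gap $\sum_i\CJ_i\big(\mu_i(j)-\mu_i(k)\big)$, and one only needs that this is Gaussian and --- because $\{h_i\}$ is a complete set of observables drawn from the finite family $\CT$ --- that the coefficient vector $(\mu_i(j)-\mu_i(k))_i$ is nonzero with some entry bounded below in modulus by a fixed constant, so that the standard deviation is bounded away from zero and the anti-concentration probability at any fixed point is $O(\epsilon)$, small enough to survive a union bound over the $\binom{2^n}{2}$ pairs $j\neq k$.
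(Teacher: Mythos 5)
The paper offers no proof of this statement at all: it is asserted as a standard ``Fact'' and immediately invoked in Lemma~\ref{lem:smallest-gap-random-commuting}, so there is nothing to compare your argument against line by line. Your characteristic-function derivation of Gaussianity and of the mean is the standard one and is correct. More importantly, you are right that the Fact as written is literally false: for independent $X_i$ the standard deviation of $\sum_i\alpha_iX_i$ is $\sqrt{\sum_i\alpha_i^2\sigma_i^2}$, and $\sum_i|\alpha_i|\sigma_i$ is only its $\ell_1$-majorant (your Minkowski argument), with equality only in colinear or single-term cases. Your assessment that the slack is harmless downstream is also accurate, and it is worth spelling out why: in the proof of Lemma~\ref{lem:smallest-gap-random-commuting} the quantity $\sum_i|e_{k_i}-e_{k'_i}|$ is used as the scale parameter in the anti-concentration bound $\pr(|E|\le\epsilon)\lesssim\epsilon/\sigma$, and since this is an over-estimate of the true $\sigma$ the intermediate inequality is not, as written, a valid upper bound on the probability; but the final bound only uses $\sigma\ge e_{\min}$, which holds equally for the correct $\sigma=\bigl(\sum_i(e_{k_i}-e_{k'_i})^2\bigr)^{1/2}\ge\max_i|e_{k_i}-e_{k'_i}|\ge e_{\min}$, so the conclusion of the lemma survives with the corrected Fact. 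In short: your proof establishes the corrected statement, correctly diagnoses the error in the stated one, and correctly verifies that the paper's application is unaffected.
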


\begin{lemma}
    [Smallest energy gap of random commuting Hamiltonians]
    \label{lem:smallest-gap-random-commuting}
    Let $H_n = \sum_{i=1}^{M(n)} \CJ_i h_i$ be a random commuting Hamiltonian with $\CJ_i\overset{\text{i.i.d.}}{\sim}\CN(0,1)$. $\{\lambda_i\}_{1\leq i\leq 2^n}$ labels the eigenvalues of $H_n$. Then for sufficiently large $\beta>0$, 
    \begin{align}
        \pr\Big(\min_{(k,k')\in[2^{M(n)}]^2,k\neq k'}\{|\lambda_k-\lambda_{k'}|\}
        \leq  2^{-\beta n}
        \Big)=O(2^{-(\beta n-2M(n)}).
    \end{align}
\end{lemma}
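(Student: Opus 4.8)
\emph{Proof plan.} The plan is to diagonalise $H_n$ in the common eigenbasis of the commuting family $\{h_i\}$, identify each pairwise eigenvalue difference as a nondegenerate centred Gaussian whose variance is bounded below by a structural constant, apply Gaussian anti-concentration to each difference, and close with a union bound over pairs of eigenvalues.

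\textbf{Step 1 (reduction to Gaussian anti-concentration).} Since the $h_i$ pairwise commute, fix a common orthonormal eigenbasis $\{\ket{k}\}$ with $h_i\ket{k}=\mu_i(k)\ket{k}$, where each $\mu_i(k)\in\spec(h^{(\alpha_i)})$ is a fixed real number (only the $\CJ_i$ are random). Then $H_n\ket{k}=\lambda_k\ket{k}$ with $\lambda_k=\sum_{i=1}^{M(n)}\CJ_i\,\mu_i(k)$, and for $k\neq k'$,
\begin{equation}
\lambda_k-\lambda_{k'}=\sum_{i=1}^{M(n)}\CJ_i\,v_i^{(k,k')},\qquad v_i^{(k,k')}:=\mu_i(k)-\mu_i(k') .
\end{equation}
As a fixed linear combination of independent standard normals, $\lambda_k-\lambda_{k'}$ is a centred Gaussian with standard deviation $\sigma_{k,k'}=\norm{\vec{v}^{(k,k')}}_2$ (this is the content of Fact~\ref{fact:convolution-gaussian}).

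\textbf{Step 2 (the slope is bounded below).} Because $\{h_i\}$ is a \emph{complete} set of commuting observables, the tuple $\big(\mu_1(k),\dots,\mu_{M(n)}(k)\big)$ determines $\ket{k}$ uniquely, so $\vec{v}^{(k,k')}\neq 0$ whenever $k\neq k'$. Any nonzero coordinate $v_i^{(k,k')}$ is a difference of two distinct eigenvalues of the template $h^{(\alpha_i)}\in\mathcal{T}$; since $\mathcal{T}$ is finite and none of its elements is spectrally degenerate, every such difference has magnitude at least $\gamma:=\min_{h\in\mathcal{T}}\min\big\{|a-b| : a\neq b,\ a,b\in\spec(h)\big\}>0$. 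Hence $\sigma_{k,k'}\geq\gamma$ for all $k\neq k'$, uniformly in $n$.

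\textbf{Step 3 (anti-concentration and union bound).} For $Z\sim\CN(0,\sigma^2)$ with $\sigma\geq\gamma$, bounding the density by its peak value gives $\pr(|Z|\leq\epsilon)\leq 2\epsilon/(\gamma\sqrt{2\pi})$. Applying this with $\epsilon=2^{-\beta n}$ to each $\lambda_k-\lambda_{k'}$ and summing over the unordered pairs of distinct eigenvalues of $H_n$ — of which there are at most $\binom{2^{M(n)}}{2}\leq 2^{2M(n)}$, since completeness caps the number of distinct local-eigenvalue tuples (hence of distinct eigenvalues of $H_n$) by $2^{M(n)}$ for binary templates, and $2^{O(M(n))}$ in general — yields
\begin{equation}
\pr\!\Big(\min_{k\neq k'}|\lambda_k-\lambda_{k'}|\leq 2^{-\beta n}\Big)\ \leq\ 2^{2M(n)}\cdot\frac{2}{\gamma\sqrt{2\pi}}\,2^{-\beta n}\ =\ O\big(2^{-(\beta n-2M(n))}\big),
\end{equation}
which is the assertion; taking $\beta$ large enough (recall $M(n)=\Theta(n)$) forces the exponent $\beta n-2M(n)$ to diverge.

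\emph{Main obstacle.} Each step on its own is routine; the one place demanding genuine structural input is Step 2 — obtaining a lower bound $\sigma_{k,k'}\geq\gamma$ that is \emph{uniform} in $n,k,k'$. This is exactly where both hypotheses are consumed: completeness of $\{h_i\}$ forces the difference vector to be nonzero, and finiteness of $\mathcal{T}$ together with the absence of spectral degeneracy within each template upgrades ``nonzero'' to ``bounded away from $0$ by an absolute constant.'' The only other care needed is the eigenvalue count in the union bound; with non-binary templates one loses a constant factor in the exponent, which is harmless since ``sufficiently large $\beta$'' already absorbs the constant $\lim_n 2M(n)/n$.
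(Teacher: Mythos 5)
Your proof is correct and follows essentially the same route as the paper: diagonalize in the common eigenbasis, observe each pairwise eigenvalue difference is a centred Gaussian whose scale is bounded below by the minimum template gap (using completeness to ensure the difference vector is nonzero), apply Gaussian anti-concentration, and union-bound over the $O(2^{2M(n)})$ pairs. The only substantive difference is that you use the correct $\ell^2$ formula $\sigma_{k,k'}=\lVert\vec v^{(k,k')}\rVert_2$ for the standard deviation of the Gaussian linear combination, whereas the paper invokes its Fact~\ref{fact:convolution-gaussian} which misstates the scale as $\sum_i|\alpha_i|\sigma_i$ — a slip that happens not to affect the final bound since both quantities are lower-bounded by the same minimum gap.
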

\begin{proof}
    Let $\CE=\bigcup_{h\in\CT}\spec(h)$. 
    There exists a set of eigenstates $\ket{\psi(\mathbf{e})}$ completely specified by ordered tuple $\mathbf{e}=(e_1,\cdots, e_{M(n)})$ for $e_i\in\CE$, whose eigenvalues are $\lambda(\mathbf{e})=\sum_{i=1}^{M(n)}\CJ_ie_i$. As a result,
    \begin{align}
        \lambda_{k}-\lambda_{k'}=\Big|\sum_{i=1}^{M(n)}\CJ_i e_{k_i}-\sum_{i=1}^{M(n)}\CJ_i e_{k'_i}\Big|
        =\Big|\sum_{i=1}^{M(n)}\CJ_i(e_{k_i}-e_{k'_i})\Big|
        =:\big|E(\mathbf{e}_k,\mathbf{e}_{k'}) \big|
    \end{align}
    for $\mathbf{e}_k\neq\mathbf{e}_{k'}$.
    
     Due to Fact.\ref{fact:convolution-gaussian}, 
    \begin{align}
        E(\mathbf{e}_k,\mathbf{e}_{k'})\sim \CN(0,\sum_{i=1}^{M(n)}|e_{k_i}-e_{k'_i}|).
    \end{align}
    As a result, for any constant $\beta>0$,
    \begin{align}
        \pr\big(|E(\mathbf{e}_k,\mathbf{e}_{k'})|\leq  2^{-\beta n}\big)=
        2\operatorname{erf}\bigg(\frac{ 2^{-\beta n}}{\sum_{i=1}^{M(n)}|e_{k_i}-e_{k'_i}|}\bigg)\leq
        \frac{4}{\sqrt{\pi}}\frac{ 2^{-\beta n}}{\sum_{i=1}^{M(n)}|e_{k_i}-e_{k'_i}|}\leq
        \frac{4\cdot  2^{-\beta n}}{\sqrt{\pi}e_{\min}}.
    \end{align}
    In the last inequality we introduce $e_{\min}=\min_{e,e'\in\CE,e\neq e'}\{|e-e'|\}$ and use the fact that $\mathbf{e}_k\neq\mathbf{e}_{k'}$. Due to union bound,
    \begin{align}
        \pr\Big(\min_{\{k,k'\}\subset[2^{M(n)}]}\{|\lambda_k-\lambda_{k'}|\}
        \leq  2^{-\beta n}\Big)
        &\leq \sum_{\{k,k'\}\in[2^{M(n)}]}\pr\big(|
        \lambda_k-\lambda_{k'}|\leq  2^{-\beta n}
        \big)\notag\\
        &\leq \frac{2^{2M(n)+1-\beta n}}{\sqrt{\pi}e_{\min}}.
    \end{align}
    Since $M(n)=\Theta(n)$, we can always choose a $\beta$ such that RHS is smaller.
\end{proof}

Built upon Lemma~\ref{lem:smallest-gap-random-commuting}, we prove the same property holds for digitalized coefficients.
\begin{lemma}
    [Smallest gap under digitalization]\label{lem:small-gap-digitalization}
    Let $H_n = \sum_{i=1}^{M(n)} \CJ_i h_i$ be a random commuting Hamiltonian with $\CJ_i$ being $(R,\delta)$-digitalizatio of $\CN(0,1)$. 
    $\{\lambda_i\}_{1\leq i\leq 2^n}$ labels the eigenvalues of $H_n$. Then for $R=\poly{n}$, $\delta=O(1/(e^{\beta n}M(n)))$, and  sufficiently large $\beta>0$, 
    \begin{align}
        \pr\Big(\min_{(k,k')\in[2^{M(n)}]^2,k\neq k'}\{|\lambda_k-\lambda_{k'}|\}
        \leq  2^{-\beta n}
        \Big)=\negl{n}.
    \end{align}
    The digitalization using $\poly{n}$ number of bits in total.
\end{lemma}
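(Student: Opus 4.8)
The plan is to reduce the digitized case to the continuous case already settled in Lemma~\ref{lem:smallest-gap-random-commuting} via a coupling argument, treating digitization as a small deterministic perturbation of the Gaussian coefficients. Concretely, couple the digitized Hamiltonian to the continuous one on a single probability space by setting $\CJ'_i = Q_{R,\delta}(\CJ_i)$ with $\CJ_i \overset{\text{i.i.d.}}{\sim} \CN(0,1)$. Introduce the \emph{bulk event} $\mathcal{B} := \{\max_i |\CJ_i| \le R\}$; by Definition~\ref{def:digitalization}, on $\mathcal{B}$ we have $|\CJ'_i - \CJ_i| \le \delta$ for every $i$. As in the proof of Lemma~\ref{lem:smallest-gap-random-commuting}, write the eigenvalues of the continuous (resp.\ digitized) Hamiltonian as $\lambda(\mathbf e) = \sum_{i} \CJ_i e_i$ (resp.\ $\lambda'(\mathbf e) = \sum_i \CJ'_i e_i$) for tuples $\mathbf e$ with entries in $\CE := \bigcup_{h\in\CT}\spec(h)$. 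Then on $\mathcal{B}$, for any two distinct eigenvalue-tuples $\mathbf e_k \neq \mathbf e_{k'}$,
\begin{align*}
\big|\lambda'_k - \lambda'_{k'}\big| &\ge \big|\lambda_k - \lambda_{k'}\big| - \sum_{i=1}^{M(n)} |\CJ'_i - \CJ_i|\,|e_{k_i} - e_{k'_i}| \\
&\ge \big|\lambda_k - \lambda_{k'}\big| - \delta\, M(n)\, D,
\end{align*}
where $D := \max_{e,e'\in\CE}|e - e'|$ is a constant depending only on the fixed finite template set $\CT$ from Definition~\ref{def:commuting-ham}.

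Next I fix the parameters. Choosing $\delta = O\!\big(1/(e^{\beta n} M(n))\big)$ makes $\delta\, M(n)\, D \le 2^{-\beta n}$ for all large $n$ (since $e^{\beta n}\ge 2^{\beta n}$), so on $\mathcal{B}$ the occurrence of $\{\min_{k\neq k'}|\lambda'_k - \lambda'_{k'}| \le 2^{-\beta n}\}$ forces $\{\min_{k\neq k'}|\lambda_k - \lambda_{k'}| \le 2^{\,1-\beta n}\}$. Hence, by a union bound,
\begin{align*}
\pr\Big(\min_{(k,k')\in[2^{M(n)}]^2,\,k\neq k'}\!|\lambda'_k-\lambda'_{k'}| \le 2^{-\beta n}\Big) \le \pr(\mathcal{B}^{\comp}) + \pr\Big(\min_{(k,k')\in[2^{M(n)}]^2,\,k\neq k'}\!|\lambda_k-\lambda_{k'}| \le 2^{\,1-\beta n}\Big).
\end{align*}
The second term is controlled directly by Lemma~\ref{lem:smallest-gap-random-commuting} applied with threshold $2^{\,1-\beta n}=2^{-(\beta n-1)}$: its proof gives a bound $O\!\big(2^{\,2M(n)+1-\beta n}\big)$, which is $\negl{n}$ provided $\beta$ exceeds twice the constant hidden in $M(n)=\Theta(n)$ — exactly the "sufficiently large $\beta$" hypothesis of that lemma. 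For the first term, the standard Gaussian tail bound $\pr(|\CJ_i|>R)\le 2e^{-R^2/2}$ together with a union bound over the $M(n)=\Theta(n)$ coefficients gives $\pr(\mathcal{B}^{\comp}) \le 2M(n)\,e^{-R^2/2}$, which is $\negl{n}$ once $R=\poly{n}$ (indeed $R=\Theta(n)$ already suffices). Combining the two estimates proves the claimed probability bound.

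Finally, the bit-count claim is immediate from Definition~\ref{def:digitalization}: each coefficient costs $O(\log_2(R/\delta)) = O\!\big(\log_2(\poly{n}\cdot e^{\beta n} M(n))\big) = O(n)$ bits, and there are $M(n) = \Theta(n)$ of them, for $O(n^2) = \poly{n}$ bits in total. There is no deep obstacle here; the only mildly delicate points are (i) checking that $\beta$, $R$, and $\delta$ can be chosen simultaneously consistent with all constraints — the condition on $\beta$ is inherited verbatim from Lemma~\ref{lem:smallest-gap-random-commuting}, while $R$ and $\delta$ are pushed polynomially and (inverse-)exponentially in $n$ respectively, keeping $\log(R/\delta)$ linear in $n$ — and (ii) the requirement in Definition~\ref{def:digitalization} that $R/\delta$ be an odd integer, which costs at most a harmless rounding of $\delta$ that does not affect any of the asymptotics above.
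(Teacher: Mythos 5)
Your proof is correct and follows essentially the same strategy as the paper's: couple the digitized coefficients to their Gaussian seeds, restrict to a bulk event on which each coefficient's rounding error is at most $\delta$ (the paper phrases this via a Lipschitz bound in Euclidean norm on the gap-minimizing function $f_n$, whereas you use a direct term-by-term triangle inequality; the two give the same $\Theta(\delta\, M(n))$ bound up to constants), control the tail of the bulk event with Gaussian tails and a union bound, and then invoke Lemma~\ref{lem:smallest-gap-random-commuting} on the slightly enlarged threshold. The parameter bookkeeping and the bit-count claim are handled correctly.
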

\begin{proof}
Let $\CE=\bigcup_{h\in\CT}\spec(h)$. For $\mathbf{e}\neq \mathbf{e}'\in\CE^{M(n)}$, let
\begin{align}
    f_n^{(\mathbf{e},\mathbf{e}')}(x_1,\cdots,x_{M(n)})=\Big|
        \sum_{i=1}^{M(n)}x_i (e_i-e'_i)
        \Big|.
\end{align}
$f_n^{(\mathbf{e},\mathbf{e}')}$ satisfies
\begin{align}
    \Big|f_n^{(\mathbf{e},\mathbf{e}')}(x_1,\cdots,x_{M(n)})
    -f_n^{(\mathbf{e},\mathbf{e}')}(x'_1,\cdots,x'_{M(n)})
    \Big|&\leq \Bigg|
        \sum_{i=1}^{M(n)}(x_i-x'_i) (e_i-e'_i)
        \Bigg|\notag\\
        &\leq e_{\max}\sqrt{M(n)}\norm{\mathbf{x}-\mathbf{x}}_2,
\end{align}
where $e_{\max}=\max_{e,e'\in\CE}|e-e'|$, and $\mathbf{x}=(x_1,\cdots,x_{M(n)})$. As a result, $f_n^{(\mathbf{e},\mathbf{e}')}$ is $e_{\max}\sqrt{M(n)}$-Lipschitz for all $\mathbf{e}\neq \mathbf{e}'$.
\begin{align}
f_n(x_1,x_2,\cdots,x_{M(n)})=\min_{\mathbf{e},\mathbf{e}'\in\CE^{M(n)},\mathbf{e}\neq\mathbf{e}'}
f_n^{(\mathbf{e},\mathbf{e}')}(x_1,\cdots,x_{M(n)})
\end{align}
is also a $e_{\max}\sqrt{M(n)}$-Lipschitz function. 

Let $\CJ_i'$ be the random Gaussian seed to define $\CJ_i$.
Using the fact that if $|\CJ'_i|<R-\delta/2$ for all $i$, then $\norm{(\CJ_1,\cdots,\CJ_{M(n)})-(\CJ'_1,\cdots,\CJ_{M(n)}')}_2\leq \delta\sqrt{M(n)}/2$, we have
\begin{align}\label{eq:digitalization:step2}
    \pr\Big(\min_{(k,k')\in[2^{M(n)}]^2,k\neq k'}\{|\lambda_k-\lambda_{k'}|\}
        \leq  2^{-\beta n}
        \Big)&\leq
    \pr\Big(f_n(\CJ_1,\CJ_2,\cdots,\CJ_{M(n)})\leq  2^{-\beta n}\Big)\notag\\
    &= \pr\Big(f_n(\CJ_1,\CJ_2,\cdots,\CJ_{M(n)})\leq  2^{-\beta n}~,~\exists |\CJ'_i|\geq R-\delta/2\Big)\notag\\
    &\quad +
    \pr\Big(f_n(\CJ_1,\CJ_2,\cdots,\CJ_{M(n)})\leq  2^{-\beta n}~,~
    \forall i,|\CJ'_i|<R-\delta/2\Big)\notag\\
    &\leq \pr\big(\exists |\CJ'_i|\geq R-\delta/2\big)\notag\\
    & \quad+\pr\Big(f_n(\CJ'_1,\CJ'_2,\cdots,\CJ'_{M(n)})\leq  2^{-\beta n}+
    \delta e_{\max}M(n)/2
    \Big).
\end{align}
In the last inequality, we use $\pr(|f(x)|\leq a)\leq \pr(|f(y)|\leq a +L\norm{x-y}_2)$ if $f$ is $L$-Lipschitz and $x$ depends on $y$. For the first term, using union bound,
\begin{align}
    \pr\big(\exists |\CJ_i|\geq R-\delta/2\big)\leq M(n)\sqrt{\frac{2}{\pi}}
    e^{-(R-\delta/2)^2/2}
\end{align}
For the second term, taking $\delta\leq 1/(2^{\beta n}M(n)e_{\max})$ and using Lemma~\ref{lem:smallest-gap-random-commuting},
\begin{align}
    \pr\Big(f_n(\CJ_1,\CJ_2,\cdots,\CJ_{M(n)})\leq  2^{-\beta n-1}+
    \delta e_{\max}M(n)/2
    \Big)&\leq  \pr\Big(f_n(\CJ_1,\CJ_2,\cdots,\CJ_{M(n)})\leq  2^{-\beta n}\Big)\notag\\
    &=O(2^{-(\beta n-2M(n))}).
\end{align}
    As a result, taking $R=\poly{n}$, $\delta = \mathcal{O}(1/(2^{\beta n}M(n))$ and sufficiently large $\beta$, RHS of Eq.~\eqref{eq:digitalization:step2} is $\negl{n}$. The number of bits used for digitalization is $M(n)\cdot O(\log_2(R/\delta))=\poly{n}$.
\end{proof}

\subsection{Random phase unitary with quantum phase estimation}
\begin{figure}
    \centering
    \includegraphics[width=1.0\linewidth]{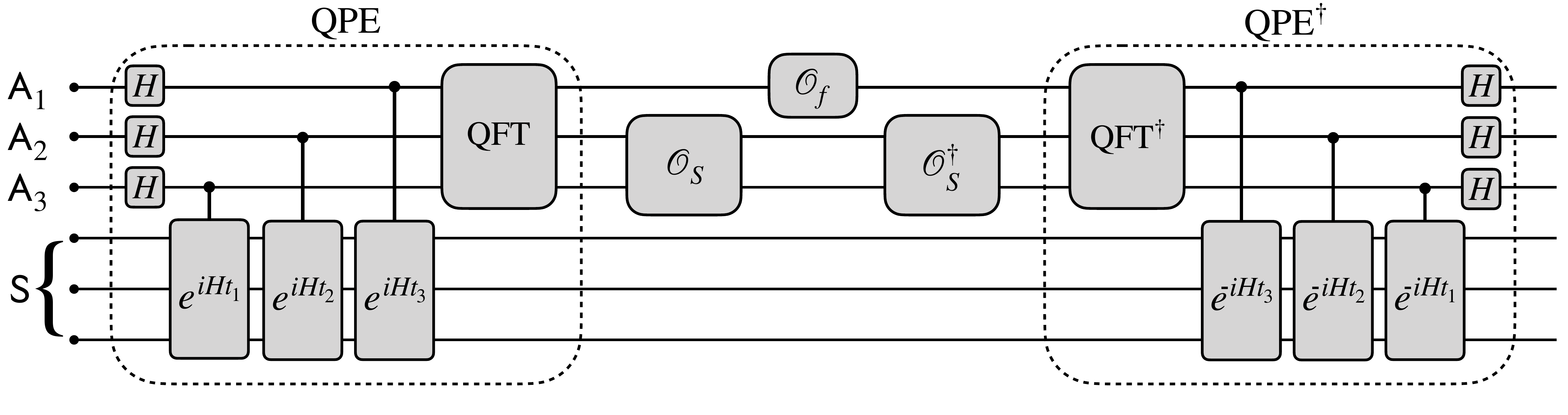}
    \caption{Entire construction of energy-conserving PRU for random commuting Hamiltonians.}
    \label{fig:pru-commuting}
\end{figure}

In this subsection, we describe how to construct an arbitrary phase unitary $U_\theta=\sum_{k=1}^{2^n}e^{i\theta_k}\ketbra{\lambda_k}{\lambda_k}$,
given a unitary $U=\sum_{k=1}^{2^n}e^{i2\pi\lambda_k}\ketbra{\lambda_k}{\lambda_k}$ with significantly large energy gap, using oracles.  We will promote this to the pseudo-random ensemble and discuss the implementation of oracles in the next subsection.

We denote $\sfS$ as the $n$-qubit system register that $U$ residues, and introducing an ancillary register $\sfA$ to facilitate the construction.
For convenience, we split the ancillary register into three disjoint subset, $\sfA=\sfA_1\cup\sfA_2\cup\sfA_3$, with sizes $m_1,m_2,m_3$, respectively. Our construction uses the following three oracles:
\begin{itemize}
    \item A QPE oracle $\CO_{\qpe}$ that performs quantum phase estimation for $U$, on system $\sfS$ and ancilla $\sfA$.
    \item A random offset oracle $\CO_S$. We choose $\CO_S$ to act only on $\sfA_2\cup\sfA_3$, i.e., $S\in\{0,1\}^{m_2+m_3}$, and $\CO_S:\ket{x}\rightarrow \ket{x+0_{\sfS\cup\sfA_1}\cdot S_{\sfA_2\cup\sfA_3}\mod 2^{m_1+m_2+m_3}}$, where we use $0_{\sfS\cup\sfA_1}\cdot S_{\sfA_2\cup\sfA_3}$ to denote the concatenation to two bitstrings $0_{\sfS\cup\sfA_1}$ and $S_{\sfA_2\cup\sfA_3}$.
    \item A phase oracle acting on $\sfA_1$. Let $f:\{0,1\}^{m_1}\rightarrow [0,2\pi)$, then $\CO_f:\ket{x}\rightarrow e^{if(x_{\sfA_1})}\ket{x}$.
\end{itemize}
We will demonstrate that $U_{f,S}=\CO_{\qpe}^\dagger \CO_{S}^\dagger \CO_f \CO_S \CO_{\qpe}$ approximates $V_{f,S}=\sum_{i=1}^{2^n}e^{if(\widetilde{\lambda}^S_k)}\ketbra{\lambda_k}{\lambda_k}$ well, with $\widetilde{\lambda}_k^S$ the following coarse-grained value of $\lambda_k$:
\begin{align}
        \widetilde{\lambda}_k^S: \hat{\widetilde{\lambda}}^S_k=\argmax_{N/2^{m_1},N\in\mathbb{Z}}\Big\{\lambda_k+\frac{\hat{S}'}{2^{m_1+m_2+m_3}}
        \geq \frac{N}{2^{m_1}}
        \Big\},\quad S'=0_{\sfA_1}S_{\sfA_2\cup\sfA_3}.
\end{align}
When the gap of $U$ is larger than $2^{-m_1}$, different $\lambda_k$ corresponds to different $\widetilde{\lambda}_k^S$. Thus every $\ket{\lambda_k}$ could have different phases. 
We will later take $U=e^{iH}$ to implement the energy-conserving PRU. The entire circuit is sketched in Fig~\ref{fig:pru-commuting}.

We emphasize that the coarse-grained resolution of oracle $\CO_f$ and the random offset oracle $\CO_S$ are necessary. At its core, the QPE algorithm performs the following transformation: $\ket{\lambda_k}\ket{0_{\sfA}}\rightarrow \ket{\lambda_k}\ket{\phi_k}$, where $\ket{\phi_k}$ encodes a binary approximation of $\lambda_k$. However, since QPE is of finite precision, $\ket{\phi_k}$ as non-negligible weight on several neighboring bitstrings on $\sfA$.
We choose $\CO_f$ to  depend only on $m_1$ coarse bits of the phase register $\sfA$, ignoring the remaining ``fine" bits on $\sfA_2\cup\sfA_3$. Any $\ket{\phi_k}$ has approximately the same coarse-bit over its support (with high probability). This ensures $\CO_{\qpe}^\dagger$ can coherently uncompute the phase register to return it to $\ket{0_\sfA}$.

The random offset $\CO_S$ eliminates the principle worst case: if $\lambda_k$ lies near the coarse-bin boundary, i.e., $\lambda_k\approx N/2^{m_1}$ for integer $N$, then $\ket{\phi_k}$ straddles two different coarse values, spoiling the uncomputation. By adding a uniformly random shift to the fine bits, $\CO_S$ moves $\ket{\phi_k}$ away from boundary with high probability, thus rescuing the uncomputation.

We begin by showing that $\CO_S$ can prevent  ``hitting the boundary of coarse-bin''.
\begin{lemma}
    [Random offset prevents hitting the boundary]\label{lem:qpe-offset}
    Let $U$ be a $n$-qubit untary acting on the system register $\sfS$ with the set of eigenvalues and eigenstates $\{e^{i2\pi \lambda_i},\ket{\lambda_i}\}_{1\leq i\leq 2^n},\lambda_i\in[0,1)$, and they satisfy $|\lambda_i-\lambda_j|>2^{-\beta n}$ for all $i\neq j$. Introducing ancilla register $\sfA=\sfA_1\cup\sfA_2\cup\sfA_3$ such that $m_1>\beta n$.
    Denote $\CO_{\qpe}$ to be the QPE oracle performed on system $\sfS$ and ancilla $\sfA$, and $O_S$ to be the random offset oracle acting on $\sfA_2\cup\sfA_3$. Then for any state $\ket{\psi}$ on $\sfS$,
    \begin{align}
        \pr_S\Bigg(
        \norm{\exists i\in[2^n],\prod_{m}\CO_{S}\CO_{\qpe}\ket{\lambda_i}
        \ket{0_{\sfA}}}_2> 2^{-m_3/2}
        \Bigg)  \leq 2^{-(m_1-n-2)},
    \end{align}
    where
    \begin{align}
        \prod_{m}=\operatorname{I}_{\sfS}\otimes\sum_{x\in\{0,1\}^{m_1}}
        \sum_{b\in\{0,1\}^{m_1+m_2+m_3},|\hat{b}/2^{m_1+m_2+m_3}-\hat{x}/2^{m_1}|<2^{-(m_1+m_2)}}
        \ketbra{b_\sfA}{b_\sfA}
    \end{align}
    is the projector to bitstrings near the boundary (See Fig~\ref{fig:illustration-projector}).
\end{lemma}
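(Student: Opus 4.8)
The plan is to track the phase register eigenstate-by-eigenstate and show that the random offset $\CO_S$ moves the bulk of each QPE output off the coarse-bin lattice $2^{-m_1}\Z$ with probability $1-O(2^{-m_1})$; a union bound over the $2^n$ eigenstates then yields the claim. Concretely, run $\CO_{\qpe}$ with precision parameter $m_1+m_2$ and the remaining $m_3$ ancilla qubits reserved for tail suppression, so that Fact~\ref{fact:accuracy-qpe} furnishes $\CO_{\qpe}\ket{\lambda_i}\ket{0_\sfA}=\ket{\lambda_i}\ket{\phi_i}$ with a split $\ket{\phi_i}=\ket{\phi_i^{\mathrm{in}}}+\ket{\phi_i^{\mathrm{out}}}$ in which $\ket{\phi_i^{\mathrm{in}}}$ is supported only on bitstrings $b$ with $\labs{\hat b/2^{K}-\lambda_i}\le 2^{-(m_1+m_2)}$ (writing $K:=m_1+m_2+m_3$), and $\norm{\ket{\phi_i^{\mathrm{out}}}}_2\le 2^{-m_3/2}$. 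Because $\CO_S$ is a computational-basis permutation and $\prod_m$ is diagonal in that basis, the triangle inequality gives $\norm{\prod_m\CO_S\ket{\lambda_i}\ket{\phi_i}}_2\le\norm{\prod_m\CO_S\ket{\lambda_i}\ket{\phi_i^{\mathrm{in}}}}_2+2^{-m_3/2}$, so it is enough to show that the first term vanishes for all but an $O(2^{-m_1})$ fraction of offsets $S$.

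For the geometric core, work with the fractional value $v(b):=\hat b/2^{K}\in[0,1)$ read cyclically. The oracle $\CO_S$ acts as the cyclic translation $v\mapsto v+\hat S/2^{K}\bmod 1$, so $\CO_S\ket{\phi_i^{\mathrm{in}}}$ is supported on an arc of half-width $2^{-(m_1+m_2)}$ about $c_i^S:=\lambda_i+\hat S/2^{K}\bmod 1$, whereas $\prod_m$ annihilates $\ket{b}$ precisely when $v(b)$ lies within $2^{-(m_1+m_2)}$ of $2^{-m_1}\Z$. Hence $\prod_m\CO_S\ket{\lambda_i}\ket{\phi_i^{\mathrm{in}}}=0$ whenever the circular distance from $c_i^S$ to $2^{-m_1}\Z$ exceeds $2\cdot 2^{-(m_1+m_2)}$. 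Since $\hat S$ is uniform over $\{0,\dots,2^{m_2+m_3}-1\}$, the value $c_i^S\bmod 2^{-m_1}$ is uniform over a set of $2^{m_2+m_3}$ equally spaced points (spacing $2^{-K}$) tiling a single period $[0,2^{-m_1})$; the ``dangerous'' sub-arc near the endpoints has total length $O(2^{-(m_1+m_2)})$, so it contains only $O(2^{m_3})$ of these points, giving $\pr_S\big[\prod_m\CO_S\ket{\lambda_i}\ket{\phi_i^{\mathrm{in}}}\ne 0\big]=O(2^{-m_1})$ for each fixed $i$. (Here the hypotheses $m_1>\beta n$ and $\labs{\lambda_i-\lambda_j}>2^{-\beta n}$, together with the construction's sizing of $m_2,m_3$, fix these constants; $m_1>\beta n$ in particular forces distinct eigenstates into distinct coarse bins, which is what makes the coarse-graining meaningful.)

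Combining the two pieces, for each fixed $i$ the event $\norm{\prod_m\CO_S\CO_{\qpe}\ket{\lambda_i}\ket{0_\sfA}}_2>2^{-m_3/2}$ forces the offset-failure event of probability $O(2^{-m_1})\le 2^{2-m_1}$, and a union bound over $i\in[2^n]$ yields $\pr_S\big[\exists i:\,\norm{\prod_m\CO_S\CO_{\qpe}\ket{\lambda_i}\ket{0_\sfA}}_2>2^{-m_3/2}\big]\le 2^{n}\cdot 2^{2-m_1}=2^{-(m_1-n-2)}$, which is the stated bound. The version for an arbitrary input $\ket{\psi}=\sum_i c_i\ket{\lambda_i}$ is then immediate, since $\CO_S$ and $\prod_m$ act only on the ancilla and are diagonal with respect to the $\ket{\lambda_i}$ on $\sfS$, so $\norm{\prod_m\CO_S\CO_{\qpe}\ket{\psi}\ket{0_\sfA}}_2^2=\sum_i\labs{c_i}^2\norm{\prod_m\CO_S\ket{\lambda_i}\ket{\phi_i}}_2^2\le\max_i\norm{\prod_m\CO_S\ket{\lambda_i}\ket{\phi_i}}_2^2$.

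The step I expect to be the main obstacle is the geometric estimate: one must treat the offset grid as genuinely discrete rather than continuous, and handle the cyclic wraparound with care — in particular, checking that an eigenphase $\lambda_i$ sitting near $0$ or $1$ is still rescued by most offsets rather than being pushed onto the boundary at $0$ — and then pin down the constants in the dangerous-arc count and their dependence on $m_1,m_2,m_3$ closely enough that the union bound over all $2^n$ eigenstates indeed closes strictly below $1$ with exponent $m_1-n-2$.
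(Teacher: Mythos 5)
Your overall approach is the same as the paper's: split the QPE output $\ket{\phi_i}$ into an in-window part plus a small tail controlled by Fact~\ref{fact:accuracy-qpe}, show that a uniformly random offset $S$ moves the in-window support away from the coarse-bin boundaries with high probability so that $\prod_m$ annihilates it, and then union-bound over the $2^n$ eigenstates. (The paper uses a projector-inclusion argument where you use a decomposition plus triangle inequality, but these are interchangeable here.) One small slip in the geometry description: you write that $\prod_m$ ``annihilates $\ket{b}$ precisely when $v(b)$ lies within $2^{-(m_1+m_2)}$ of $2^{-m_1}\Z$,'' but $\prod_m$ is defined to \emph{keep} precisely those $\ket{b}$; your subsequent conclusion (that $\prod_m\CO_S\ket{\phi_i^{\mathrm{in}}}=0$ when $c_i^S$ is far from the lattice) is nevertheless correct.

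The genuine error is in the final arithmetic. You correctly count the offset grid ($2^{m_2+m_3}$ equally spaced points of spacing $2^{-K}$ tiling one period $[0,2^{-m_1})$) and correctly estimate the dangerous arc length as $O(2^{-(m_1+m_2)})$, hence $O(2^{m_3})$ dangerous offsets. But $O(2^{m_3})/2^{m_2+m_3}=O(2^{-m_2})$, \emph{not} $O(2^{-m_1})$. The paper's own proof computes this same ratio as
\begin{equation}
\frac{2\bigl(2^{-(m_1+m_2)}+2^{-(m_1+m_2)}\bigr)}{2^{-m_1}} = 2^{-(m_2-2)},
\end{equation}
and the union bound over $i\in[2^n]$ then yields $2^{-(m_2-n-2)}$. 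This is also the bound that the proof of Lemma~\ref{lem:phase-unitary} explicitly invokes. The $m_1$ appearing in the displayed inequality of Lemma~\ref{lem:qpe-offset} is a typo for $m_2$: the exponent should involve $m_2$ (which controls how finely the offset grid samples the coarse bin) rather than $m_1$ (which only sets the number of coarse bins). Your miscalculation happens to reproduce the typo in the statement rather than the bound the proof and its downstream application actually establish; with $O(2^{-m_2})$ in place of $O(2^{-m_1})$, your argument is correct and matches the paper's.
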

\begin{proof}
    From Fact~\ref{fact:accuracy-qpe},
        \begin{gather}
            \norm{\prod_{\lambda_i}\CO_{\qpe}\ket{\lambda_i}
        \ket{0_{\sfA}}}_2\leq 2^{-m_3/2},\notag\\
        \prod_{\lambda_i}=\operatorname{I}_\sfS\otimes\sum_{b\in\{0,1\}^{m_1+m_2+m_3},|\hat{b}/2^{m_1+m_2+m_3}-\lambda_i|>2^{-(m_1+m_2)}}
        \ketbra{b_\sfA}{b_\sfA}.
        \end{gather}
    This leads to
    \begin{gather}
        \norm{\prod_{\lambda_i,S}\CO_S\CO_{\qpe}\ket{\lambda_i}
        \ket{0_{\sfA}}}_2\leq 2^{-m_3/2},\notag\\
        \prod_{\lambda_i,S}=\operatorname{I}_\sfS\otimes\sum_{b\in\{0,1\}^{m_1+m_2+m_3},|\hat{b}/2^{m_1+m_2+m_3}-\lambda_i-\hat{S}'/2^{m_1+m_2+m_3}|>2^{-(m_1+m_2)}}
        \ketbra{b_\sfA}{b_\sfA},
    \end{gather}
    where $S'=0_{\sfA_1}S_{\sfA_2\cup\sfA_3}$.
    When uniformly sampling $S$, $\hat{S}'/2^{m_1+m_2+m_3}$ is uniformly distributed over $[0,2^{-m_1})$.
    As a result,
    \begin{align}
        \pr_S\Bigg(
        \norm{\prod_{m}\CO_{S}\CO_{\qpe}\ket{\lambda_i}
        \ket{0_{\sfA}}}_2> 2^{-m_3/2}
        \Bigg) &\leq
        \pr_S\Big(
        \operatorname{im}\Big(\prod_{m}\Big)\not\subseteq
        \operatorname{im}\Big(\prod_{\lambda_i,S}\Big)
        \Big)\notag\\
        &\leq \frac{2(2^{-(m_1+m_2)}+2^{-(m_1+m_2)})}{2^{-m_1}}\notag\\
        &=2^{-(m_2-2)}
    \end{align}
    The second to last line comes from sweeping over the region of length $2^{-m_1}$ that contains $\lambda_i$. See Fig~\ref{fig:illustration-projector} for an illustration. Then a union bound yields the desired inequality.
\end{proof}

\begin{figure}[t!]
    \centering
    \includegraphics[width=0.5\linewidth]{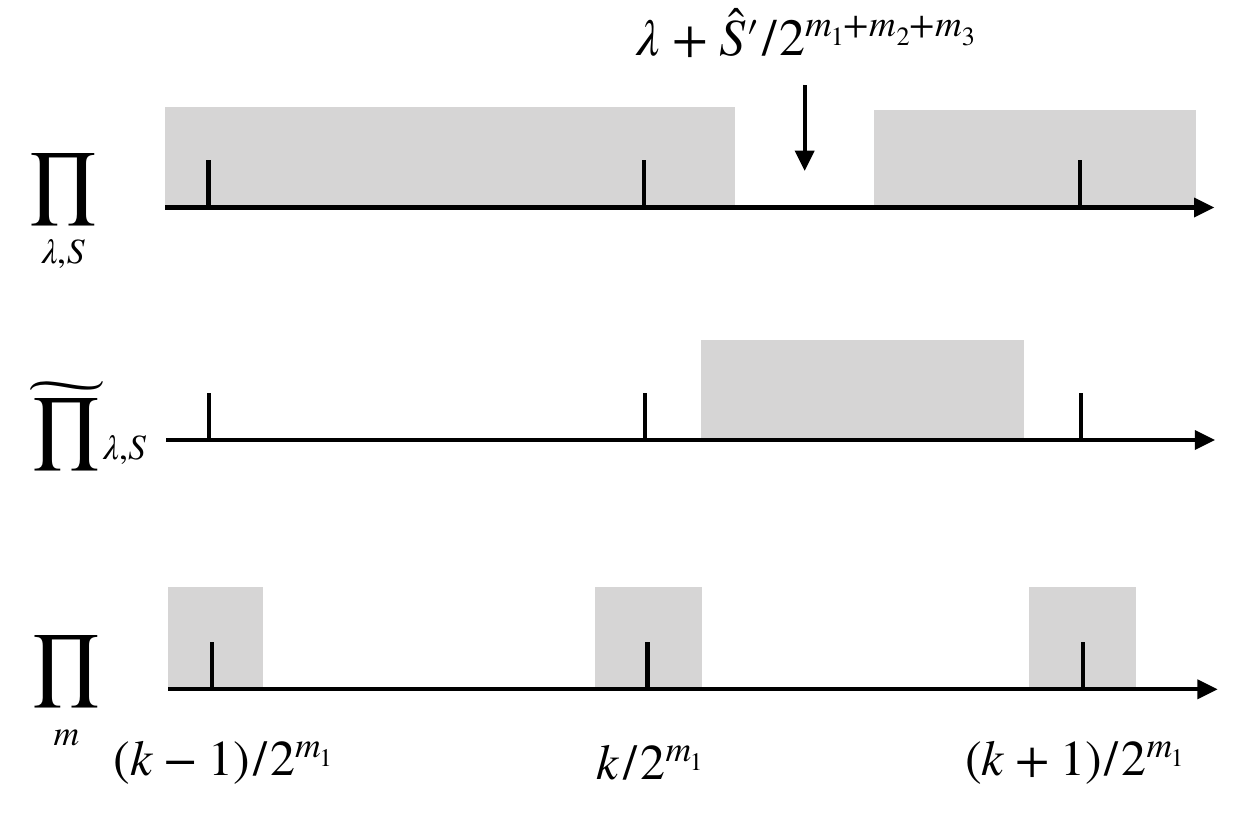}
    \caption{Definitions of projectors used in Lemma~\ref{lem:qpe-offset} and Lemma~\ref{lem:phase-unitary}.}
    \label{fig:illustration-projector}
\end{figure}

Based on Lemma~\ref{lem:qpe-offset}, we can prove that $U_{f,S}$ approximates the phase unitary well, using the number of ancillas $m_1,m_2,m_3=O(n)$.
\begin{lemma}
    [Constructing arbitrary phase unitary]\label{lem:phase-unitary}
    Let $U$ be a $n$-qubit untary acting on the system register $\sfS$ with the set of eigenvalues and eigenstates $\{e^{i2\pi \lambda_i},\ket{\lambda_i}\}_{1\leq i\leq 2^n},\lambda_i\in[0,1)$, and they satisfy $|\lambda_i-\lambda_j|>2^{-\beta n}$ for all $i\neq j$. Introducing ancilla register $\sfA=\sfA_1\cup\sfA_2\cup\sfA_3$ such that
    \begin{align}
        m_1>\beta n,\quad m_2>n+2,\quad m_3>2n,\quad m_1>m_3.
    \end{align}
     Denote $\CO_{\qpe}$ be the oracle to implement the QPE for $U$, acting on system $\sfS$ and ancilla $\sfA$. 
    Denote $\CO_S$ to be the random offset oracle acting on $\sfA_2\cup\sfA_3$, and $\CO_f$ be the oracle that applies the phase function $\ket{x}\rightarrow e^{if(x)}\ket{x} $ for $f:\{0,1\}^{m_1}\rightarrow [0,2\pi)$ on  $\sfA_1$. 
    Then $U_{f,S}=\CO_{\qpe}^\dagger \CO_S^\dagger \CO_f\CO_S \CO_{\qpe} $ satisfies that for any quantum state $\ket{\psi}$ on $\sfS$,
    \begin{align}
        \pr_S\Bigg(\norm{
        U_{f,S}\ket{\psi}\otimes \ket{0_{\sfA}}-\sum_{i=1}^{2^n}e^{if(\widetilde{\lambda}_k^S)}
        \bra{\lambda_ik}\psi\rangle \ket{\lambda_k}\otimes\ket{0_{\sfA}}
        }_2\leq2^{-(m_3/2+2+n)} \Bigg)> 1-2^{-(m_2-n-2)}.
    \end{align}
\end{lemma}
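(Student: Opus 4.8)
The plan is to analyze the effect of each oracle in $U_{f,S}=\CO_{\qpe}^\dagger \CO_S^\dagger \CO_f\CO_S \CO_{\qpe}$ on a single eigenstate $\ket{\lambda_k}$ and then conclude by linearity. By Fact~\ref{fact:accuracy-qpe}, applying $\CO_{\qpe}$ to $\ket{\lambda_k}\otimes\ket{0_\sfA}$ produces a state $\ket{\lambda_k}\otimes\ket{\phi_k}$ whose amplitude outside the window $|\hat b/2^{m_1+m_2+m_3}-\lambda_k|>2^{-(m_1+m_2)}$ has norm at most $2^{-m_3/2}$ (with the QPE parameters matched so the ancilla sizes are $m_1+m_2$ for precision and $m_3$ for the success-probability tail). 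The offset $\CO_S$ shifts the encoded value by $\hat S'/2^{m_1+m_2+m_3}$, so $\CO_S\CO_{\qpe}\ket{\lambda_k}\ket{0_\sfA}$ is concentrated near $\lambda_k + \hat S'/2^{m_1+m_2+m_3}$. The key point, furnished by Lemma~\ref{lem:qpe-offset}, is that with probability $\geq 1-2^{-(m_2-n-2)}$ over $S$, \emph{simultaneously for all} $k$ the support of this state avoids the boundary region $\prod_m$; that is, every bitstring $b$ in the support has the same leading $m_1$ bits, and those bits equal exactly $\hat{\widetilde\lambda}_k^S$. Conditioned on this good event, $\CO_f$ (which reads only the $\sfA_1$ register) multiplies the entire state by the single phase $e^{if(\widetilde\lambda_k^S)}$, up to the $\leq 2^{-m_3/2}$ leakage outside the window. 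Then $\CO_S^\dagger$ and $\CO_{\qpe}^\dagger$ coherently uncompute the ancilla back to $\ket{0_\sfA}$ on the in-window part, leaving $e^{if(\widetilde\lambda_k^S)}\ket{\lambda_k}\otimes\ket{0_\sfA}$ plus an error vector of norm $O(2^{-m_3/2})$.

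The next step is to assemble these per-eigenstate estimates. Writing $\ket{\psi}=\sum_k \braket{\lambda_k}{\psi}\ket{\lambda_k}$ and expanding $U_{f,S}\ket{\psi}\otimes\ket{0_\sfA}$ by linearity, the target vector $\sum_k e^{if(\widetilde\lambda_k^S)}\braket{\lambda_k}{\psi}\ket{\lambda_k}\otimes\ket{0_\sfA}$ differs from it by $\sum_k \braket{\lambda_k}{\psi}\,\be_k$ where each $\be_k$ has norm $O(2^{-m_3/2})$. A naive triangle inequality over $2^n$ terms gives $2^n\cdot O(2^{-m_3/2})$, which is too lossy; instead I would bound the error norm by $\big(\sum_k|\braket{\lambda_k}{\psi}|\big)\max_k\norm{\be_k}\leq 2^{n/2}\cdot O(2^{-m_3/2})$ using Cauchy–Schwarz, or more carefully use near-orthogonality of the $\be_k$ (they live in essentially disjoint ancilla bins) to get $\big(\sum_k|\braket{\lambda_k}{\psi}|^2\norm{\be_k}^2\big)^{1/2}=O(2^{-m_3/2})$. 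Either route, combined with the condition $m_3>2n$, yields the stated bound $2^{-(m_3/2+2+n)}$ after bookkeeping the constants (the extra $2^{-(2+n)}$ slack is where the precise constants in Fact~\ref{fact:accuracy-qpe} and the number of error contributions get absorbed). The failure probability $2^{-(m_2-n-2)}$ is inherited verbatim from Lemma~\ref{lem:qpe-offset}.

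I expect the main obstacle to be the careful error accounting in the uncomputation step: $\CO_{\qpe}^\dagger$ is only guaranteed to clean up the ancilla exactly on the part of the state that lies in the ``good'' window with uniform coarse bits, and one must track how the small out-of-window tail (norm $\le 2^{-m_3/2}$) propagates through $\CO_S^\dagger$ and $\CO_{\qpe}^\dagger$ without blowing up — this is fine since those are unitaries, so the tail norm is preserved, but one must be careful that $\CO_f$ acting on the tail does not coherently interfere badly. The other delicate point is making the ``all $k$ simultaneously avoid the boundary'' event do the right work: it must guarantee not merely that each $\ket{\phi_k}$ is away from \emph{its own} bin boundary, but that the leading $m_1$ bits it carries are precisely $\hat{\widetilde\lambda}_k^S$ as defined, which requires matching the definition of $\widetilde\lambda_k^S$ (the argmax/flooring convention after adding the offset) to the rounding behavior of QPE; this is bookkeeping but easy to get off by one bin. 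Finally, one should double check that the hypothesis $|\lambda_i-\lambda_j|>2^{-\beta n}$ together with $m_1>\beta n$ indeed forces distinct $\widetilde\lambda_i^S\neq\widetilde\lambda_j^S$, so that the phase function genuinely assigns independent phases — though this last fact is only needed downstream, not for the statement of Lemma~\ref{lem:phase-unitary} itself.
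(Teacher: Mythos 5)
Your proposal follows essentially the same conceptual route as the paper's proof: analyze the action per eigenstate $\ket{\lambda_k}$, condition on the ``good offset'' event of Lemma~\ref{lem:qpe-offset}, argue that on this event $\CO_f$ acts as a single phase $e^{if(\widetilde\lambda_k^S)}$ on each $\ket{\lambda_k}\ket{\phi_k}$ up to the QPE tail, and uncompute with $\CO_S^\dagger\CO_\qpe^\dagger$. One point you raise is a genuine improvement over what the paper actually does: the per-eigenstate error vectors $\be_k$ are each supported on $\ket{\lambda_k}\otimes(\cdot)$ with \emph{orthogonal} system components, so they are exactly orthogonal, and
\begin{align}
    \norm{\sum_k \braket{\lambda_k}{\psi}\,\be_k}_2 = \Big(\sum_k |\braket{\lambda_k}{\psi}|^2 \norm{\be_k}_2^2\Big)^{1/2} \leq \max_k \norm{\be_k}_2 .
\end{align}
The paper instead uses the plain triangle inequality $\sum_k|\braket{\lambda_k}{\psi}|\,\norm{\be_k}_2$ and bounds $\sum_k|\braket{\lambda_k}{\psi}|\le 2^n$, needlessly paying a factor of $2^n$. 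Your route is both tighter and cleaner.

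That said, the last step of your argument is wrong, and the error is instructive. You claim that ``either route, combined with $m_3>2n$, yields the stated bound $2^{-(m_3/2+2+n)}$ after bookkeeping the constants,'' attributing the extra $2^{-(2+n)}$ factor to constants in Fact~\ref{fact:accuracy-qpe}. But this cannot be: the per-eigenstate tail is $\Theta(2^{-m_3/2})$, so for $\ket{\psi}=\ket{\lambda_k}$ the total error is already $\Theta(2^{-m_3/2})$, which is strictly larger than $2^{-(m_3/2+2+n)}$. No argument can deliver the bound as printed. What has happened is a sign typo in the lemma statement: the paper's own derivation concludes with $2^{-m_3/2+2+n}$ (that is, $2^{n+2-m_3/2}$), and your orthogonality argument tightens this to $2^{-m_3/2+2}$, but neither matches the displayed $2^{-(m_3/2+2+n)}$. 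You should carry forward the bound you can actually prove (your $O(2^{-m_3/2})$), note the typo, and check whether the downstream use in Lemma~\ref{lem:secure-query} (which takes $2^{-(m_3/2+1+n)}$ after the channel-distance doubling) still gives a negligible adversary advantage with the corrected exponent — it does, since $m_3$ can be chosen to be any sufficiently large polynomial. Aside from this, your sketch of the boundary-avoidance and uncomputation bookkeeping (the two leakage sources and their combination via a triangle inequality, giving the $2^{-m_3/2+2}$) matches the paper's use of $\widetilde{\prod}_{k,S}$ and the inclusion $\operatorname{im}(1-\widetilde{\prod}_{k,S})\subset\operatorname{im}(\prod_{k,S})\cup\operatorname{im}(\prod_m)$.
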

\begin{proof}
    Denote,
    \begin{align}
        \CO_{S}\CO_{\qpe}\ket{\lambda_k}
        \ket{0_{\sfA}}=\ket{\lambda_k}\ket{\phi_k}.
    \end{align}
    From Fact~\ref{fact:accuracy-qpe},
    \begin{gather}
        \norm{\prod_{k,S}\ket{\lambda_k}\ket{\phi_k}}_2
        \leq 2^{-m_3/2},\notag\\
        \prod_{k,S}=\operatorname{I}_\sfS\otimes\sum_{b\in\{0,1\}^{m_1+m_2+m_3},|\hat{b}/2^{m_1+m_2+m_3}-\lambda_k-\hat{S}'/2^{m_1+m_2+m_3}|>2^{-(m_1+m_2)}}
        \ketbra{b_\sfA}{b_\sfA}.
    \end{gather}
    
    Define
    \begin{align}
        \widetilde{\prod}_{k,S}=\operatorname{I}_\sfS\otimes\sum_{b\in\{0,1\}^{m_1+m_2+m_3},2^{-(m_2+m_3)}<|\hat{b}/2^{m_1+m_2+m_3}-\widetilde{\lambda}^S_k|<2^{-m_1}-2^{-(m_2+m_3)}}
        \ketbra{b_\sfA}{b_\sfA}.
    \end{align}
    We have
    \begin{align}
        \CO_f\ket{\lambda_k}\ket{\phi_k}&=
        \CO_f \Big(\widetilde{\prod}_{k,S}\ket{\lambda_k}\ket{\phi_k}+
        \Big(1-\widetilde{\prod}_{k,S}\Big)\ket{\lambda_k}\ket{\phi_k}
        \Big) \notag\\
        &=e^{if(\widetilde{\lambda}^S_k)}\widetilde{\prod}_{k,S}\ket{\lambda_k}\ket{\phi_k}+\CO_f\Big(1-\widetilde{\prod}_{k,S}\Big)\ket{\lambda_k}\ket{\phi_k}.
    \end{align}\label{eq:phase-unitary:1}
    
    Define
    \begin{align}
        \prod_{m} = \operatorname{I}_\sfS\otimes\sum_{x\in\{0,1\}^{m_1}}
        \sum_{b\in\{0,1\}^{m_1+m_2+m_3},|\hat{b}/2^{m_1+m_2+m_3}-\hat{x}/2^{m_1}|<2^{-(m_2+m_3)}}
        \ketbra{b_\sfA}{b_\sfA}.
    \end{align}
    We assume to sample $S$ such that for all $i\in[2^n]$
    \begin{align}
        \norm{\prod_{m}\ket{\lambda_k}\ket{\phi_k}}_2\leq 2^{-m_3/2}.
    \end{align}
    The probability is at least $1-2^{-(m_2-n-2)}$ according to Lemma~\ref{lem:qpe-offset}. Note that when $m_1>m_3$,
    \begin{align}
         \operatorname{im}\Big(1-\widetilde{\prod}_{k,S}\Big)\subset
         \operatorname{im}\Big(\prod_{k,S}\Big)\bigcup
         \operatorname{im}\Big(\prod_{m}\Big)
    \end{align}
    As a result,
    \begin{align}
        \norm{\Big(1-\widetilde{\prod}_{k,S}\Big)\ket{\lambda_k}\ket{\phi_k}}_2\leq
        \norm{\prod_{k,S}\ket{\lambda_k}\ket{\phi_k}}_2
        +\norm{\prod_{m}\ket{\lambda_k}\ket{\phi_k}}_2
        \leq 2^{-m_3/2+1}.
    \end{align}
    Combining this with Eq.~\eqref{eq:phase-unitary:1}, we have
    \begin{align}
        \norm{\CO_f\ket{\lambda_k}\ket{\phi_k}-e^{if(\widetilde{\lambda}^S_k)}\widetilde{\prod}_{k,S}\ket{\lambda_k}\ket{\phi_k}}_2\leq 2^{-m_3/2+1}.
    \end{align}
    Using triangle inequality,
    \begin{align}
        \norm{O_f\ket{\lambda_k}\ket{\phi_k}-e^{if(\widetilde{\lambda}^S_k)}\ket{\lambda_k}\ket{\phi_k}}_2\leq 2^{-m_3/2+2}.
    \end{align}

    Now we analyze $\CO_{S}\CO_{\qpe}\ket{\psi}
        \ket{0_{\sfA}}=\sum_{k}
        \bra{\lambda_k}\psi\rangle\ket{\lambda_k}\ket{\phi_k}$.
    \begin{align}
        \norm{\CO_f \CO_{S}\CO_{\qpe}\ket{\psi}
        \ket{0_{\sfA}}-\sum_k\bra{\lambda_k}\psi\rangle
        e^{if(\widetilde{\lambda}^S_k)}\ket{\lambda_k}\ket{\phi_k}}_2
        &\leq \sum_k
        \big|\bra{\lambda_k}\psi\rangle\big|
        \norm{\CO_f\ket{\lambda_k}\ket{\phi_k}-e^{if(\widetilde{\lambda}^S_k)}\ket{\lambda_k}\ket{\phi_k}}_2\notag\\
        &\leq 2^{-m_3/2+2+n}.
    \end{align}
    Applying $\CO_{\qpe}^\dagger \CO_S^\dagger$ to the two terms in the LHS, we get the desired inequality.
\end{proof}

\subsection{Constructing energy-conserving PRU}
In this subsection, we turn the oracle-based construction of arbitary phase unitary given in the previous subsection to an explicit construction energy-conserving PRU for random commuting Hamiltonians.

We introduce the following two technical lemmas that are helpful.
\begin{lemma}
    [From state distance to channel distance]\label{lem:state-to-channel}
    Let $\sfS$, $\sfA$ be two registers and $U$, $V$ be unitaries acting on $\sfS\cup\sfA$ and $\sfS$, respectively. If for all state $\ket{\psi}$ on $\sfS$,
    \begin{align}
        \norm{\big(U-V\otimes \operatorname{I}_{\sfA}\big)\ket{\psi}\ket{0_{\sfA}}}_2\leq \epsilon,
    \end{align}
    then the following two channels
    \begin{align}
        \mathsf{U}[\rho]:=\Tr_{\sfA}\big[{U\big(\rho\otimes \ketbra{0_\sfA}{0_{\sfA}}\big)U^\dagger}\big],\quad
        \mathsf{V}[\rho]:=V\rho V^\dagger
    \end{align}
    are close in diamond distance, i.e.,
    \begin{align}
        \norm{\mathsf{U}-\mathsf{V}}_{\diamond}\leq 2\epsilon.
    \end{align}
\end{lemma}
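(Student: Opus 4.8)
The plan is to reduce the diamond-norm bound to the given state-distance bound in three moves: first upgrade the hypothesis so that it also covers inputs entangled with an arbitrary outside register, then turn the resulting $\ell_2$-closeness of the two output purifications into trace-distance closeness of the reduced states obtained after discarding $\sfA$, and finally appeal to the definition of the diamond norm together with convexity of the trace norm.

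First I would repackage the hypothesis. Define the linear map $T:\CH_\sfS\to\CH_{\sfS\sfA}$ by $T\ket{\psi}=\big(U-V\otimes\operatorname{I}_\sfA\big)\ket{\psi}\ket{0_\sfA}$; the assumption says exactly that $\norm{T}_\infty\leq\epsilon$ in operator norm. For any reference register $\mathsf{R}$ and any unit vector $\ket{\Psi}$ on $\sfS\otimes\mathsf{R}$, reordering tensor factors gives $\big(U\otimes\operatorname{I}_\mathsf{R}-V\otimes\operatorname{I}_{\sfA\mathsf{R}}\big)\big(\ket{\Psi}\otimes\ket{0_\sfA}\big)=(T\otimes\operatorname{I}_\mathsf{R})\ket{\Psi}$, so the left-hand side has Euclidean norm at most $\norm{T\otimes\operatorname{I}_\mathsf{R}}_\infty=\norm{T}_\infty\leq\epsilon$, since the operator norm is unchanged by tensoring with an identity.

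Next I would pass from states to channels. Fix a density matrix $\rho$ on $\sfS\otimes\mathsf{R}$; by convexity of the trace norm it suffices to take $\rho=\ketbra{\Psi}{\Psi}$ pure. Set $\ket{\alpha}=(U\otimes\operatorname{I}_\mathsf{R})(\ket{\Psi}\otimes\ket{0_\sfA})$ and $\ket{\beta}=(V\otimes\operatorname{I}_\mathsf{R})\ket{\Psi}\otimes\ket{0_\sfA}$, which are unit vectors with $\norm{\ket{\alpha}-\ket{\beta}}_2\leq\epsilon$ by the previous step. Since $(\mathsf{U}\otimes\mathsf{I}_\mathsf{R})[\rho]=\Tr_{\sfA}\ketbra{\alpha}{\alpha}$ and $(\mathsf{V}\otimes\mathsf{I}_\mathsf{R})[\rho]=\Tr_{\sfA}\ketbra{\beta}{\beta}$, I would combine contractivity of the partial trace under $\norm{\cdot}_1$ with the elementary estimate $\norm{\ketbra{\alpha}{\alpha}-\ketbra{\beta}{\beta}}_1\leq 2\norm{\ket{\alpha}-\ket{\beta}}_2$ for unit vectors (which follows from $\norm{\ketbra{\alpha}{\alpha}-\ketbra{\beta}{\beta}}_1=2\sqrt{1-|\brkt{\alpha}{\beta}|^2}$ and $1-|\brkt{\alpha}{\beta}|^2\leq\norm{\ket{\alpha}-\ket{\beta}}_2^2$) to obtain $\norm{(\mathsf{U}\otimes\mathsf{I}_\mathsf{R})[\rho]-(\mathsf{V}\otimes\mathsf{I}_\mathsf{R})[\rho]}_1\leq 2\epsilon$. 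Taking the supremum over $\mathsf{R}$ and over all density matrices $\rho$ yields $\norm{\mathsf{U}-\mathsf{V}}_\diamond\leq 2\epsilon$.

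The only step needing real care is the first: the hypothesis controls only inputs supported on $\sfS$, while the diamond norm probes inputs entangled with an external system, so one must show the error does not grow when a passive reference is adjoined. Encoding the hypothesis as an operator-norm bound on $T$ and using stability of the operator norm under $\otimes\operatorname{I}$ dispatches this cleanly; everything afterwards is routine, relying only on contractivity of the partial trace and the textbook relation between the Euclidean distance of two pure states and the trace distance of the associated projectors.
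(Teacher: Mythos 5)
Your proof is correct, and it takes a genuinely different---and in my view cleaner---route than the paper's. The paper expands the input $\ket{\Psi}$ on $\sfS\otimes\sfB$ in a Schmidt basis, applies the triangle inequality over all pairs of Schmidt indices, and bounds each term via $\norm{\ket{\psi_1}\bra{\phi_1}-\ket{\psi_2}\bra{\phi_2}}_1\leq\norm{\ket{\psi_1}-\ket{\psi_2}}_2+\norm{\ket{\phi_1}-\ket{\phi_2}}_2$. This leaves a prefactor $\sum_{ij}\mu_i\mu_j=\big(\sum_i\mu_i\big)^2$, which the paper then bounds by $1$ citing Cauchy--Schwarz; but with $\sum_i\mu_i^2=1$ Cauchy--Schwarz only gives $\sum_i\mu_i\le\sqrt{r}$ for $r$ the Schmidt rank, so that closing step is not tight and the decomposition argument as written actually overshoots. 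Your approach sidesteps the issue entirely: packaging the hypothesis as $\norm{T}_\infty\leq\epsilon$ for the difference map $T\ket{\psi}=(U-V\otimes\operatorname{I}_\sfA)\ket{\psi}\ket{0_\sfA}$ and invoking $\norm{T\otimes\operatorname{I}_{\mathsf R}}_\infty=\norm{T}_\infty$ shows that the two output purifications of any (possibly entangled) input differ by at most $\epsilon$ in Euclidean norm as a \emph{single} vector, with no expansion into components. Combined with monotonicity of the trace norm under partial trace and the standard identity $\norm{\ketbra{\alpha}{\alpha}-\ketbra{\beta}{\beta}}_1=2\sqrt{1-|\brkt{\alpha}{\beta}|^2}\leq2\norm{\ket{\alpha}-\ket{\beta}}_2$, this delivers the $2\epsilon$ bound with no rank- or dimension-dependent loss. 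The operator-norm repackaging is exactly the right tool for upgrading a worst-case guarantee on pure states of $\sfS$ alone to one on $\sfS\otimes\mathsf{R}$, which is the only genuinely nontrivial point in the lemma.
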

\begin{proof}
    Let $\sfB$ be any ancillary system.
    It is sufficient to prove that, 
    \begin{align}
        \norm{
        (U\otimes\operatorname{I}_{\sfB})(\rho\otimes \ketbra{0_\sfA}{0_{\sfA}})
        (U^\dagger\otimes\operatorname{I}_{\sfB})-
        (V\otimes\operatorname{I}_{\sfA\cup\sfB})(\rho\otimes \ketbra{0_\sfA}{0_{\sfA}})
        (V^\dagger\otimes\operatorname{I}_{\sfA\cup\sfB})
        }_1\leq 2\epsilon 
    \end{align}
    holds for any pure state $\rho=\ketbra{\Psi}{\Psi}$ on $\sfS\cup\sfB$. 

    Denote the Schmidt decomposition of $\ket{\Psi}$ as $\ket{\Psi}=\sum_{i}\mu_i\ket{\psi^i_{\sfS}}\ket{\psi^i_{\sfB}}$, where $\mu_i$ is real positive and $\sum_i\mu_i^2=1$. We have
    \begin{gather}
        \norm{
        (U\otimes\operatorname{I}_{\sfB})(\rho\otimes \ketbra{0_\sfA}{0_{\sfA}})
        (U^\dagger\otimes\operatorname{I}_{\sfB})-
        (V\otimes\operatorname{I}_{\sfA\cup\sfB})(\rho\otimes \ketbra{0_\sfA}{0_{\sfA}})
        (V^\dagger\otimes\operatorname{I}_{\sfA\cup\sfB})
        }_1\notag\\
        =\norm{
        \sum_{ij}\mu_i\mu_j\Bigg(\Big( U \ketbra{\psi^i_{\sfS}}{\psi^j_{\sfS}}\otimes \ketbra{0_\sfA}{0_{\sfA}} U^\dagger
        \Big)-
        \Big( V\otimes \operatorname{I}_{\sfA} \ketbra{\psi^i_{\sfS}}{\psi^j_{\sfS}}\otimes \ketbra{0_\sfA}{0_{\sfA}} V^\dagger
        \otimes \operatorname{I}_{\sfA}\Big)
        \Bigg)\otimes \Big(\ketbra{\psi^i_{\sfB}}{\psi^j_{\sfB}} \Big)
        }_1\notag\\
        \leq \sum_{ij}\mu_i\mu_j
        \norm{
        \Big( U \ket{\psi^i_{\sfS}}\bra{\psi^j_{\sfS}}\otimes \ketbra{0_\sfA}{0_{\sfA}} U^\dagger
        \Big)-
        \Big( V\otimes \operatorname{I}_{\sfA} \ket{\psi^i_{\sfS}}\bra{\psi^j_{\sfS}}\otimes \ketbra{0_\sfA}{0_{\sfA}} V^\dagger
        \otimes \operatorname{I}_{\sfA}\Big)
        }_1\notag\\
        \leq \sum_{ij}\mu_i\mu_j\Big(
        \norm{
        U \ket{\psi^i_{\sfS}}\ket{0_{\sfA}}-V\otimes \operatorname{I}_{\sfA}\ket{\psi^i_{\sfS}}\ket{0_{\sfA}}
        }_2
        +\norm{
        U \ket{\psi^j_{\sfS}}\ket{0_{\sfA}}-V\otimes \operatorname{I}_{\sfA}\ket{\psi^j_{\sfS}}\ket{0_{\sfA}}
        }_2
        \Big)\notag\\
        \leq 2\epsilon\sum_{ij}\mu_i\mu_j\notag\\
        \leq 2\epsilon.
    \end{gather}
    In the third line we use $\norm{\ket{\psi}\bra{\phi}}_1=1$ and triangle inequality. In the fourth line we use 
    \begin{align}
        \norm{\ket{\psi_1}\bra{\phi_1}-\ket{\psi_2}\bra{\phi_2}}_1\leq \norm{\ket{\psi_1}-\ket{\psi_2}}_2+\norm{\ket{\phi_1}-\ket{\phi_2}}_2.
    \end{align}
    In the last line we use Cauchy-Schwartz inequality.
\end{proof}

\begin{lemma}
    [Simulate the random phase unitary]\label{lem:mimic-random-phase}
    Let $U$ be a $n$-qubit untary with the set of eigenvalues and eigenstates $\{e^{i2\pi \lambda_k},\ket{\lambda_k}\}_{1\leq k\leq 2^n},\lambda_k\in[0,1)$. Assume the eigenvalues are well-separated. Define the following two ensemble of unitaries:
    \begin{itemize}
        \item The random phase ensemble, $\mathcal{U}_{\operatorname{RU}}=\{\sum_{k=1}^{2^n}e^{i\theta_k}\ketbra{\lambda_k}{\lambda_k}\}_{\theta}$, $(\theta_1,\cdots\theta_{2^n})\sim\operatorname{Unif[0,2\pi)^{ 2^n}}$.
        \item The random function ensemble, $\mathcal{V}_{\operatorname{RF}}=\{V_{f,S}\}_{f,S}=\{\sum_{k=1}^{2^n}e^{if(\widetilde{\lambda}^S_k)}\ketbra{\lambda_k}{\lambda_k}\}_{f,S}$. $S$ is uniformly sampled from $\{0,1\}^{m_2+m_3}$, with $m_1,m_2,m_3$ satisfying the requirement in Lemma~\ref{lem:phase-unitary}. $f\sim \operatorname{Unif[0,2\pi)^{ 2^n}}$ is a random function.
    \end{itemize}
    Then $\mathcal{U}_{\operatorname{RU}}=\mathcal{V}_{\operatorname{RF}}$.
\end{lemma}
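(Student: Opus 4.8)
The plan is to show that the two ensembles induce \emph{identical} probability distributions on the unitary group, by conditioning on the random offset $S$. The structural fact driving everything is that, under the hypotheses $|\lambda_k-\lambda_{k'}|>2^{-\beta n}$ for $k\neq k'$ together with $m_1>\beta n$, the coarse-graining map $k\mapsto\widetilde\lambda_k^S$ is \emph{injective} on $\{1,\dots,2^n\}$ for every fixed $S$.

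First I would establish this injectivity. By definition $\widetilde\lambda_k^S$ is the largest value of the form $N/2^{m_1}$ with $\lambda_k+\hat S'/2^{m_1+m_2+m_3}\geq N/2^{m_1}$, i.e.\ $\widetilde\lambda_k^S=\lfloor 2^{m_1}(\lambda_k+c)\rfloor/2^{m_1}$ with the common shift $c=\hat S'/2^{m_1+m_2+m_3}\in[0,2^{-m_1})$ (taken mod $1$ on the phase register). If $k\neq k'$ then $|\lambda_k-\lambda_{k'}|>2^{-\beta n}>2^{-m_1}$, so $\bigl|2^{m_1}(\lambda_k+c)-2^{m_1}(\lambda_{k'}+c)\bigr|>1$, hence the two floors differ and $\widetilde\lambda_k^S\neq\widetilde\lambda_{k'}^S$. (I would remark that the $\lambda_i\in[0,1)$ are genuinely distinct, so there is no degenerate case, and note that the wraparound at $1$ is harmless since the bins partition the circle $\mathbb{R}/\mathbb{Z}$ into $2^{m_1}$ arcs of length $2^{-m_1}$ and no arc can contain two points more than $2^{-m_1}$ apart.)

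Next, fix any $S\in\{0,1\}^{m_2+m_3}$. Since $f$ is a uniformly random function $\{0,1\}^{m_1}\to[0,2\pi)$ — meaning its $2^{m_1}$ output values are i.i.d.\ uniform — and $\widetilde\lambda_1^S,\dots,\widetilde\lambda_{2^n}^S$ are $2^n$ \emph{distinct} points of its domain, the tuple $\bigl(f(\widetilde\lambda_1^S),\dots,f(\widetilde\lambda_{2^n}^S)\bigr)$ has exactly the distribution $\operatorname{Unif}[0,2\pi)^{2^n}$, the same as $(\theta_1,\dots,\theta_{2^n})$. Consequently, the conditional law of $V_{f,S}=\sum_{k=1}^{2^n}e^{if(\widetilde\lambda_k^S)}\ketbra{\lambda_k}{\lambda_k}$ given $S$ equals the law of $\sum_{k=1}^{2^n}e^{i\theta_k}\ketbra{\lambda_k}{\lambda_k}$, which is precisely $\mathcal U_{\operatorname{RU}}$.

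Finally, since this conditional law does not depend on the value of $S$, averaging over the uniform choice of $S\in\{0,1\}^{m_2+m_3}$ leaves the distribution unchanged; hence the marginal ensemble $\mathcal V_{\operatorname{RF}}$ coincides with $\mathcal U_{\operatorname{RU}}$, as claimed. The only genuinely delicate point — the step I would treat most carefully — is the injectivity claim: one must check that no random shift $c$ can conspire with the bin boundaries to merge two eigenvalues into one bin, which is exactly what the strict chain $2^{-m_1}<2^{-\beta n}<|\lambda_k-\lambda_{k'}|$ rules out, and one must phrase the floor/wraparound argument on $\mathbb{R}/\mathbb{Z}$ cleanly. Everything else is routine bookkeeping about i.i.d.\ uniform variables.
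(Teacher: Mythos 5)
Your proposal follows essentially the same route as the paper's proof: condition on $S$, show that $k\mapsto\widetilde\lambda_k^S$ is injective because consecutive eigenvalues are more than one bin-width $2^{-m_1}$ apart, conclude that $\bigl(f(\widetilde\lambda_1^S),\dots,f(\widetilde\lambda_{2^n}^S)\bigr)$ is a tuple of i.i.d.\ $\operatorname{Unif}[0,2\pi)$ variables identical in law to $(\theta_1,\dots,\theta_{2^n})$, and average over $S$. Your floor-function argument is if anything more explicit than the paper's terse bin-counting.

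One caveat on the step you yourself flagged: the sentence ``the wraparound at $1$ is harmless since \dots no arc can contain two points more than $2^{-m_1}$ apart'' quietly switches from line distance to circular distance. The hypothesis is a \emph{line}-distance bound $|\lambda_k-\lambda_{k'}|>2^{-\beta n}$ on $[0,1)$; a pair with $\lambda_k$ near $1$ and $\lambda_{k'}$ near $0$ satisfies that bound while having small circular distance, and an offset $c\in[0,2^{-m_1})$ can push $\lambda_k+c$ past $1$ so that $\lfloor 2^{m_1}(\lambda_k+c)\rfloor=2^{m_1}\equiv 0$ collides with $\widetilde\lambda_{k'}^S=0$. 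This is a looseness you inherit from the paper, which does not discuss wraparound at all; it is resolved in the intended application by normalizing $\widetilde H_n$ with enough margin that the top of the spectrum stays strictly below $1-2^{-m_1}$, but as written the circle-covering remark does not follow from the stated gap hypothesis.
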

\begin{proof}
    Denote $\mathbb{N}_{N}=[N/2^{m_1},(N+1)/2^{m_1})$, $N=[2^{m_1}-1]$. By the requirement of Lemma~\ref{lem:phase-unitary}, each $\mathbb{N}_N$ contains at most one $\lambda_k$. As a result, different $\widetilde{\lambda}^S_k$ belongs to different $\mathbb{N}_N$. Thus for each $S$, $f(\widetilde{\lambda}^S_k)$ are independent random numbers drawn from $\operatorname{Unif[0,2\pi)}$. This concludes $\mathcal{U}_{\operatorname{RU}}=\mathcal{V}_{\operatorname{RF}}$.
\end{proof}

Based on these preparations, we now demonstrate that any exponentially-accurate approximation of $\mathcal{U}=\{U_{f,S}\}_{f,S}=\{\CO^\dagger_{\qpe}\CO^\dagger_S \CO_f \CO_S \CO_{\qpe}\}_{f,S}$ is indistinguishable from the random phase ensemble $\mathcal{U}_{\operatorname{RU}}$ against any quantum algorithms that makes adaptive queries.
\begin{lemma}
    [Secure against adaptive queires]\label{lem:secure-query}
    Let $U$ be a $n$-qubit untary on the system register $\sfS$ whose the eigenvalues are well-separated.
    Let $\sfA$ be an ancillary register whose size satisfying the requirement of Lemma~\ref{lem:phase-unitary}. 
    
    Define $U_{f,S}=\CO^\dagger_{\qpe}\CO^\dagger_S \CO_f \CO_S \CO_{\qpe}$ as the ideal oracle on $\sfS\cup\sfA$ and the associated channel as $\mathsf{U}_{f,S}[\rho]:=U_{f,S}\rho U_{f,S}^\dagger$. Define $\mathcal{U}_{\operatorname{RU}}$ being the random phase ensemble w.r.t. $U$ and the associated channel as $\mathsf{U}_{\theta}[\rho]:=U_{\theta}\rho U_{\theta}^\dagger, U_{\theta}\in\mathcal{U}_{\operatorname{RU}}$. 
    
    Let $\{\mathsf{W}_{f,S}\}_{f,S}$ be a set of quantum channels implemented to approximate $\mathsf{U}_{f,S}$, such that $\norm{\mathsf{W}_{f,S}-\mathsf{U}_{f,S}}_\diamond\leq 2^{-\alpha n}$ for a positive constant $\alpha$ and all $f,S$.

    Introduce an ancillary register $\sfB$ with $|\sfB|=\poly{n}$. Fix a sequence of unitaries $\{A_1,\cdots A_t,A_{t+1}\}$ acting on $\sfS\cup\sfB$, and write $\mathsf{A}_i[\rho]:=A_{i}\rho A_{i}^\dagger$. Define the following two adaptive adversaries
    \begin{gather}
        \mathsf{AW}_{f,S}= \mathsf{A}_{t+1}\circ \prod_{i=1}^t\Big(
        \mathsf{W}_{f,S}\circ \mathsf{A}_{i}
        \big)
        ,\quad
        \mathsf{AU}_{\theta}[\rho_0]=\mathsf{A}_{t+1}\circ \prod_{i=1}^t\Big(
        \mathsf{U}_{\theta}\circ \mathsf{A}_{i}
        \big).
    \end{gather}
    Then for $S$ uniformly drawn from $\{0,1\}^{m_2+m_3}$, and $f$ is uniformly drawn from $[0,2\pi)^{2^n}$,
    \begin{gather}
        \norm{
        \mathbb{E}_{S,f}\Big[\mathsf{AW}_{f,S}\big[\ketbra{0_{\sfS\cup\sfA\cup\sfB}}{0_{\sfS\cup\sfA\cup\sfB}}\big] \Big]-
        \mathbb{E}_{U_{\theta}\sim\mathcal{U}_{\operatorname{RU}}}\Big[\mathsf{AU}_{\theta}\big[\ketbra{0_{\sfS\cup\sfA\cup\sfB}}{0_{\sfS\cup\sfA\cup\sfB}}\big] \Big]
        }_1\notag\\
        <t\Big(2^{-(m_3/2+1+n)}+2^{-\alpha n}\Big)+2^{-(m_2-n-3)}
    \end{gather}
\end{lemma}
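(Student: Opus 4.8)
The plan is a three–stage hybrid argument that strips off, one at a time, the implementation error, the finite–precision error, and the gap between the exact phase ensemble and the ideal random phase ensemble. Write $\rho_0=\ketbrat{0_{\sfS\cup\sfA\cup\sfB}}$, let $V_{f,S}=\sum_{k}e^{if(\widetilde{\lambda}^S_k)}\ketbrat{\lambda_k}$ be the exact phase unitary that $U_{f,S}$ is designed to approximate (Lemma~\ref{lem:phase-unitary}), set $\mathsf{V}_{f,S}[\rho]:=V_{f,S}\rho V_{f,S}^\dagger$ (acting on $\sfS$, trivially on $\sfA$), and let $\mathsf{AV}_{f,S}=\mathsf{A}_{t+1}\circ\prod_{i=1}^t\big(\mathsf{V}_{f,S}\circ\mathsf{A}_i\big)$ be the corresponding adversary. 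First I would move from $\mathsf{AW}_{f,S}$ to $\mathsf{AU}_{f,S}$: replacing the $i$-th occurrence of $\mathsf{W}_{f,S}$ by $\mathsf{U}_{f,S}$ changes the final state by at most $\norm{\mathsf{W}_{f,S}-\mathsf{U}_{f,S}}_\diamond\le 2^{-\alpha n}$ in trace distance, since the surrounding maps $\mathsf{A}_j$ and the remaining queries are CPTP (hence contractive in trace distance) and the diamond norm already accounts for an arbitrary reference register such as $\sfB$; telescoping over the $t$ queries gives $\norm{\mathsf{AW}_{f,S}[\rho_0]-\mathsf{AU}_{f,S}[\rho_0]}_1\le t\,2^{-\alpha n}$ for every $f,S$.

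Next I would move from $\mathsf{AU}_{f,S}$ to $\mathsf{AV}_{f,S}$, conditioning on the good event $G_S$ appearing in Lemma~\ref{lem:qpe-offset}: QPE places only $2^{-m_3/2}$ amplitude near every coarse–bin boundary, and this holds for \emph{all} eigenstates $\ket{\lambda_i}$ simultaneously, so $G_S$ depends only on $S$ and not on any adversarial state, with $\pr_S(G_S)\ge 1-2^{-(m_2-n-2)}$. On $G_S$, Lemma~\ref{lem:phase-unitary} gives $\norm{U_{f,S}\ket{\psi}\ket{0_\sfA}-(V_{f,S}\ket{\psi})\ket{0_\sfA}}_2\le 2^{-(m_3/2+2+n)}$ for every $\ket{\psi}$ on $\sfS$; by the argument proving Lemma~\ref{lem:state-to-channel} (with $\sfB$ as the reference register) this upgrades to $\norm{U_{f,S}(\sigma\otimes\ketbrat{0_\sfA})U_{f,S}^\dagger-(V_{f,S}\otimes\operatorname{I}_\sfA)(\sigma\otimes\ketbrat{0_\sfA})(V_{f,S}^\dagger\otimes\operatorname{I}_\sfA)}_1\le 2^{-(m_3/2+1+n)}$ for every state $\sigma$ on $\sfS\cup\sfB$. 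I would then telescope over query positions, switching $\mathsf{U}_{f,S}$ to $\mathsf{V}_{f,S}$ from the first position onward: in the hybrid where positions $1,\dots,j-1$ already use $\mathsf{V}_{f,S}$, the ancilla $\sfA$ is \emph{exactly} in $\ket{0_\sfA}$ when position $j$ is reached, because $\mathsf{V}_{f,S}$ and every $\mathsf{A}_i$ leave $\sfA$ untouched and $\rho_0$ starts with $\sfA=\ket{0_\sfA}$; hence the displayed bound applies with $\sigma$ the current reduced state on $\sfS\cup\sfB$, each switch costs at most $2^{-(m_3/2+1+n)}$, and the subsequent operations agree and are contractive. This gives $\norm{\mathsf{AU}_{f,S}[\rho_0]-\mathsf{AV}_{f,S}[\rho_0]}_1\le t\,2^{-(m_3/2+1+n)}$ on $G_S$, and $\le 2$ always.

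Finally, Lemma~\ref{lem:mimic-random-phase} identifies the ensemble $\{V_{f,S}\}_{f,S}$ — over uniform $S$ and a uniformly random function $f$ — with the random phase ensemble $\mathcal{U}_{\operatorname{RU}}$ on $\sfS$; since $\sfA$ is never acted on and the maps $\mathsf{A}_i$ are fixed, this yields the exact identity $\mathbb{E}_{f,S}[\mathsf{AV}_{f,S}[\rho_0]]=\mathbb{E}_{U_\theta\sim\mathcal{U}_{\operatorname{RU}}}[\mathsf{AU}_\theta[\rho_0]]$. Combining the three estimates by the triangle inequality and averaging over $(f,S)$,
\begin{align}
\Big\|\mathbb{E}_{S,f}\big[\mathsf{AW}_{f,S}[\rho_0]\big]-\mathbb{E}_{U_\theta\sim\mathcal{U}_{\operatorname{RU}}}\big[\mathsf{AU}_\theta[\rho_0]\big]\Big\|_1
&\le t\,2^{-\alpha n}+t\,2^{-(m_3/2+1+n)}+2\,\pr_S(G_S^c)\notag\\
&< t\big(2^{-(m_3/2+1+n)}+2^{-\alpha n}\big)+2^{-(m_2-n-3)},
\end{align}
using $2\,\pr_S(G_S^c)\le 2^{-(m_2-n-1)}<2^{-(m_2-n-3)}$, which is the claimed bound. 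The hard part will be the ancilla bookkeeping in the middle stage: Lemma~\ref{lem:phase-unitary} controls $U_{f,S}$ only when $\sfA$ is prepared in $\ket{0_\sfA}$, which forces the replacement to proceed from the left so that every not-yet-replaced $\mathsf{U}_{f,S}$ still sees $\sfA=\ket{0_\sfA}$ (this is exactly where one uses that $\mathsf{V}_{f,S}$ and the $\mathsf{A}_i$ do not touch $\sfA$), and one must also verify that $G_S$ is genuinely state–independent so that conditioning on it is compatible with an adaptive adversary preparing arbitrary intermediate states.
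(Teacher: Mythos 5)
Your proof is correct and follows the same overall decomposition as the paper's: triangle inequality through the intermediate channel $\mathsf{AV}_{f,S}$ built from the exact coarse-phase unitaries, Lemma~\ref{lem:mimic-random-phase} to identify $\mathbb{E}_{f,S}[\mathsf{AV}_{f,S}]$ with $\mathbb{E}_{\theta}[\mathsf{AU}_{\theta}]$, a diamond-norm telescope for the $\mathsf{W}\leftrightarrow\mathsf{U}$ gap, and conditioning on the good $S$-event (plus a bad-$S$ trace bound $\le 2$) for the $\mathsf{U}\leftrightarrow\mathsf{V}$ gap. Where you genuinely diverge — and improve — is the middle stage. The paper reuses the diamond-norm telescope of Eq.~\eqref{eq:adaptive-distance}, implicitly treating $\norm{\mathsf{U}_{f,S}-\mathsf{V}_{f,S}}_\diamond$ on $\sfS\cup\sfA$ as bounded by Lemma~\ref{lem:state-to-channel}; but that lemma only controls the channel with $\ket{0_\sfA}$-preparation and $\sfA$-trace baked in, and the full diamond distance on $\sfS\cup\sfA$ between $U_{f,S}(\cdot)U_{f,S}^\dagger$ and $(V_{f,S}\otimes I_\sfA)(\cdot)(V_{f,S}^\dagger\otimes I_\sfA)$ is not small. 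Your left-to-right hybrid, where positions $1,\ldots,j-1$ already use $\mathsf{V}_{f,S}$, ensures $\sfA$ is exactly $\ket{0_\sfA}$ at the swap position, so the trace-distance estimate derived from Lemma~\ref{lem:phase-unitary} (via the argument inside Lemma~\ref{lem:state-to-channel}, with $\sfB$ as the reference) applies cleanly at each hop, after which the remaining CPTP maps are contractive. You also correctly flag that the good $S$-event must be adversary-independent, which it is since Lemma~\ref{lem:phase-unitary} quantifies over all $\ket{\psi}\in\sfS$. Minor typo at the end: $2\,\pr_S(G_S^c)<2\cdot 2^{-(m_2-n-2)}=2^{-(m_2-n-3)}$, not $\le 2^{-(m_2-n-1)}$; the final bound is unaffected.
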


\begin{proof}
    Introduce the following two channels
    \begin{gather}
        \mathsf{AV}_{f,S}=\mathsf{A}_{t+1}\circ \prod_{i=1}^t\Big(
        \mathsf{V}_{f,S}\circ \mathsf{A}_{i}
        \big),\quad \mathsf{AU}_{f,S}=\mathsf{A}_{t+1}\circ \prod_{i=1}^t\Big(
        \mathsf{U}_{f,S}\circ \mathsf{A}_{i}
        \big),
    \end{gather}
    where $\mathsf{V}_{f,S}[\rho]:=V_{f,S}\rho V_{f,S}^\dagger$, and $V_{f,S}=\sum_{k=1}^{2^n}e^{if(\widetilde{\lambda}^S_k)}\ketbra{\lambda_k}{\lambda_k}$. Lemma~\ref{lem:mimic-random-phase} asserts that $\mathbb{E}_{\theta}[\mathsf{AU}_{\theta}]=\mathbb{E}_{f,S}[\mathsf{AV}_{f,S}]$. Using triangle inequality,
    \begin{gather}
        \norm{
        \mathbb{E}_{S,f}\Big[\mathsf{AW}_{f,S}\big[\ketbra{0_{\sfS\cup\sfA\cup\sfB}}{0_{\sfS\cup\sfA\cup\sfB}}\big] \Big]-
        \mathbb{E}_{\theta}\Big[\mathsf{AU}_{\theta}\big[\ketbra{0_{\sfS\cup\sfA\cup\sfB}}{0_{\sfS\cup\sfA\cup\sfB}}\big] \Big]
        }_1 
        \notag\\
        \leq \mathbb{E}_{S,f}\Big[\norm{
        \mathsf{AW}_{f,S}\big[\ketbra{0_{\sfS\cup\sfA\cup\sfB}}{0_{\sfS\cup\sfA\cup\sfB}}\big] -
    \mathsf{AV}_{f,S}\big[\ketbra{0_{\sfS\cup\sfA\cup\sfB}}{0_{\sfS\cup\sfA\cup\sfB}}\big] 
        }_1 \Big]\notag\\
        \leq \mathbb{E}_{S,f}\Big[\norm{
        \mathsf{AW}_{f,S}\big[\ketbra{0_{\sfS\cup\sfA\cup\sfB}}{0_{\sfS\cup\sfA\cup\sfB}}\big] -
    \mathsf{AU}_{f,S}\big[\ketbra{0_{\sfS\cup\sfA\cup\sfB}}{0_{\sfS\cup\sfA\cup\sfB}}\big] 
        }_1\notag\\
        +\norm{
        \mathsf{AU}_{f,S}\big[\ketbra{0_{\sfS\cup\sfA\cup\sfB}}{0_{\sfS\cup\sfA\cup\sfB}}\big] -
    \mathsf{AV}_{f,S}\big[\ketbra{0_{\sfS\cup\sfA\cup\sfB}}{0_{\sfS\cup\sfA\cup\sfB}}\big] 
        }_1
        \Big].
    \end{gather}
    Using $\norm{\mathsf{W}_{f,S}-\mathsf{U}_{f,S}}_{\diamond}\leq 2^{-\alpha n}$,
     the submultiplicative of diamond norm, and $\norm{\Phi}_{\diamond}=1$ when $\Phi$ is a channel,
    \begin{align}\label{eq:adaptive-distance}
    \norm{\mathsf{AW}_{f,S}-\mathsf{AU}_{f,S}}_{\diamond}&=
        \norm{
        \mathsf{A}_{t+1}\circ\prod_{i=1}^t\Big(\mathsf{W}_{f,S}\circ\mathsf{A}_i\Big)-
        \mathsf{A}_{t+1}\circ\prod_{i=1}^t\Big(\mathsf{U}_{f,S}\circ\mathsf{A}_i\Big)
        }_{\diamond} \notag\\
        &=
        \norm{
        \prod_{i=1}^t\Big(\mathsf{W}_{f,S}\circ\mathsf{A}_i\Big)-
        \prod_{i=1}^t\Big(\mathsf{U}_{f,S}\circ\mathsf{A}_i\Big)
        }_{\diamond} \notag\\
        &=\norm{
        \sum_{j=1}^t \Big(\prod_{i=j+1}^t \mathsf{W}_{f,S}\circ\mathsf{A}_i\Big)\circ
        \Big(\mathsf{W}_{f,S}\circ\mathsf{A}_j-\mathsf{U}_{f,S}\circ\mathsf{A}_j\Big)\circ
        \Big(\prod_{i=1}^{j-1} \mathsf{U}_{f,S}\circ\mathsf{A}_i\Big)
        }_{\diamond}\notag\\
        &\leq \sum_{j=1}^t\norm{
         \Big(\prod_{i=j+1}^t \mathsf{W}_{f,S}\circ\mathsf{A}_i\Big)\circ
        \Big(\mathsf{W}_{f,S}\circ\mathsf{A}_j-\mathsf{U}_{f,S}\circ\mathsf{A}_j\Big)\circ
        \Big(\prod_{i=1}^{j-1} \mathsf{U}_{f,S}\circ\mathsf{A}_i\Big)
        }_{\diamond}
        \notag\\
        &\leq \sum_{j=1}^t \norm{
        \mathsf{W}_{f,S}-\mathsf{U}_{f,S}
        }_{\diamond}\notag\\
        &\leq t2^{-\alpha n}. 
    \end{align}
    As a result,
    \begin{align}
        \mathbb{E}_{S,f}\Big[\norm{
        \mathsf{AW}_{f,S}\big[\ketbra{0_{\sfS\cup\sfA\cup\sfB}}{0_{\sfS\cup\sfA\cup\sfB}}\big] -
    \mathsf{AU}_{f,S}\big[\ketbra{0_{\sfS\cup\sfA\cup\sfB}}{0_{\sfS\cup\sfA\cup\sfB}}\big] 
        }_1\Big]\leq t2^{-\alpha n}
    \end{align}
    
    Define
    \begin{gather}
        \CS_1:=\bigg\{
        S~\bigg|~\norm{
        U_{f,S}\ket{\psi}\otimes \ket{0_{\sfA}}-\big(V_{f,S} \otimes \operatorname{I}_\sfA\big)\ket{\psi}\otimes\ket{0_{\sfA}}
        }_2\leq2^{-(m_3/2+2+n)},\quad \forall \ket{\psi}\in\sfS
        \bigg\},\notag\\
        \CS_2:=\{0,1\}^{m_2+m_3}\setminus \CS_1.
    \end{gather}
    When $S\in\CS_1$, using Lemma~\ref{lem:state-to-channel},
    \begin{align}
        \norm{
        \mathsf{U}_{f,S}-
    \mathsf{V}_{f,S}
        }_\diamond \leq 2^{-(m_3/2+1+n)}.
    \end{align}
    By repeating the calculation of Eq.~\eqref{eq:adaptive-distance}, we have
    \begin{align}
        \norm{
        \mathsf{AU}_{f,S}\big[\ketbra{0_{\sfS\cup\sfA\cup\sfB}}{0_{\sfS\cup\sfA\cup\sfB}}\big] -
    \mathsf{AV}_{f,S}\big[\ketbra{0_{\sfS\cup\sfA\cup\sfB}}{0_{\sfS\cup\sfA\cup\sfB}}\big] 
        }_1\leq t2^{-(m_3/2+1+n)}.
    \end{align}

    Therefore,
    \begin{gather}
        \mathbb{E}_{f,S}\Big[\norm{
        \mathsf{U}_{f,S}[\ketbra{0_{\sfS\cup\sfA\cup\sfB}}{0_{\sfS\cup\sfA\cup\sfB}}] -
    \mathsf{V}_{f,S}[\ketbra{0_{\sfS\cup\sfA\cup\sfB}}{0_{\sfS\cup\sfA\cup\sfB}}] 
        }_1\Big]\notag\\
        =\pr(\CS_1)\cdot\mathbb{E}_{f}\Big[\norm{
        \mathsf{U}_{f,S\in\CS_1}[\ketbra{0_{\sfS\cup\sfA\cup\sfB}}{0_{\sfS\cup\sfA\cup\sfB}}] -
    \mathsf{V}_{f,S\in\CS_1}[\ketbra{0_{\sfS\cup\sfA\cup\sfB}}{0_{\sfS\cup\sfA\cup\sfB}}] 
        }_1\Big]\notag\\
        +\pr(\CS_2)\cdot\mathbb{E}_{f}\Big[\norm{
        \mathsf{U}_{f,S\in\CS_2}[\ketbra{0_{\sfS\cup\sfA\cup\sfB}}{0_{\sfS\cup\sfA\cup\sfB}}] -
    \mathsf{V}_{f,S\in\CS_2}[\ketbra{0_{\sfS\cup\sfA\cup\sfB}}{0_{\sfS\cup\sfA\cup\sfB}}] 
        }_1\Big]\notag\\
        \leq t2^{-(m_3/2+1+n)}+\pr(\CS_2)\norm{
        \mathsf{U}_{f,S\in\CS_2}[\ketbra{0_{\sfS\cup\sfA\cup\sfB}}{0_{\sfS\cup\sfA\cup\sfB}}] -
    \mathsf{V}_{f,S\in\CS_2}[\ketbra{0_{\sfS\cup\sfA\cup\sfB}}{0_{\sfS\cup\sfA\cup\sfB}}] 
        }_1\notag\\
        < t2^{-(m_3/2+1+n)}+2^{-(m_2-n-3)}.
    \end{gather}
    In the last line we use $\pr(\CS_2)<2^{-(m_2-n-2)}$ from Lemma~\ref{lem:phase-unitary}, and the fact $\norm{\rho_1-\rho_2}_1\leq 2$. 

    These together give
    \begin{gather}
        \norm{
        \mathbb{E}_{S,f}\Big[\mathsf{AW}_{f,S}\big[\ketbra{0_{\sfS\cup\sfA\cup\sfB}}{0_{\sfS\cup\sfA\cup\sfB}}\big] \Big]-
        \mathbb{E}_{\theta}\Big[\mathsf{AU}_{\theta}\big[\ketbra{0_{\sfS\cup\sfA\cup\sfB}}{0_{\sfS\cup\sfA\cup\sfB}}\big] \Big]
        }_1 \notag\\<t\Big(2^{-(m_3/2+1+n)}+2^{-\alpha n}\Big)+2^{-(m_2-n-3)}.
    \end{gather}
\end{proof}\noindent
By choosing large enough $m_1,m_2,m_3=O(n)$, these two adversaries give an order-$\negl{n}$ difference even when $t$ is super-polynomially large.
Thus, any inverse-exponentially accurate approximations of ensemble
$\{\CO_{\qpe}^\dagger \CO_{S}^\dagger \CO_{f}\CO_S \CO_{\qpe}\}_{f,S}$ is computationally indistinguishable from the random phase ensemble of $U$.

Finally, we turn to explicit constructions of $U_{f,S}$. We note that the QPE oracle, which uese $e^{iHt}$ for exponentially-long $t$, can be constructed efficiently for commuting Hamiltonians. There the time evolution operator $e^{iHt}=e^{i\sum_jh_j t}$ can be factorized into $\prod_je^{ih_j t}$. One only needs to run the quantum simulation algorithm for each local term. In other words, the commuting Hamiltonians can be fast-forwarded.
\begin{fact}
    [Fast-forwarding of commuting Hamiltonians, cf.\cite{atia2017fast}]
    \label{fact:fast-commuting}
    For any exponentially small precision $\epsilon$, there exists an efficient quantum algorithm $\{U_n(T)\}$ acting on $n+c=n+\poly{n}$ qubits such that for any $T=2^{O(n)}$ and any $n$-qubit
    state $\ket{\psi}$,
    \begin{align}
        \norm{(e^{-iH_nT}\otimes \mathbb{I}_c-U_n(T))\ket{\psi}\otimes\ket{0_c}}_2\leq \epsilon,
    \end{align}
    where $H_n$ is any $n$-qubit commuting Hamiltonian with $d=\CO(\log n)$.
\end{fact}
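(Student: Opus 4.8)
The plan is to exploit commutativity to factorize the long-time evolution $e^{-iH_nT}$ into a product of evolutions under the individual terms, and then to implement each factor separately to inverse-exponential precision; since every term is supported on $\CO(\log n)$ qubits, each factor is a $\poly{n}$-dimensional unitary whose approximate synthesis costs only $\poly{n,\log(1/\epsilon)}$, which is polynomial because $\epsilon$ is inverse-exponential.

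First I would write $H_n=\sum_{j=1}^{M}\CJ_j h_j$ with $M=\poly{n}$ terms, where the coefficients $\CJ_j$ and the local matrices $h_j$ are given to $\poly{n}$ bits of classical precision and $[\CJ_j h_j,\CJ_k h_k]=0$ for all $j,k$ by the commuting hypothesis. Pairwise commutativity yields the \emph{exact} identity $e^{-iH_nT}=\prod_{j=1}^{M}e^{-i\CJ_j h_j T}$, with no Trotter error whatsoever, for every $T$. Each factor $e^{-i\CJ_j h_j T}$ acts as the identity outside $\operatorname{supp}(h_j)$, hence is a unitary on a Hilbert space of dimension $D_j=2^{|\operatorname{supp}(h_j)|}=2^{\CO(\log n)}=\poly{n}$.

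I would then implement each factor to operator-norm precision $\epsilon':=\epsilon/M$. Classically diagonalize the fixed small matrix $h_j=W_j\Lambda_j W_j^\dagger$ once, and for each computational basis label $k$ of $\operatorname{supp}(h_j)$ classically compute $\theta_{j,k}:=\big(-\CJ_j(\Lambda_j)_{kk}T\big)\bmod 2\pi$ to $b=\poly{n}+\lceil\log(1/\epsilon')\rceil$ fractional bits. This takes $\poly{n,\log(1/\epsilon)}$ classical time: $\CJ_j T$ has only $\poly{n}+\CO(n)$ bits even though $T=2^{\CO(n)}$, and $\pi$ is computable to arbitrary precision, so reducing modulo $2\pi$ to $b$ bits is efficient. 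Then build a circuit $\tilde U_j$ realizing, in sequence, the basis change $W_j^\dagger$ on $\operatorname{supp}(h_j)$, the diagonal phase $\ket{k}\mapsto e^{i\theta_{j,k}}\ket{k}$ (implemented by writing the precomputed $b$-bit angle into a workspace register with a hardcoded Boolean circuit on $\CO(\log n)$ input bits, applying phase kickback, and uncomputing), and $W_j$. Since $W_j$ is a fixed $D_j$-dimensional unitary, decomposing it into $\CO(D_j^2)$ two-level rotations and approximating each via Solovay--Kitaev produces a circuit of size $\poly{D_j,\log(1/\epsilon')}=\poly{n,\log(1/\epsilon)}$ with $\poly{n}$ ancillas returned to $\ket{0}$; the construction is arranged so that $\norm{\big(\tilde U_j-e^{-i\CJ_j h_j T}\otimes\mathbb{I}\big)\ket{\phi}\ket{0}}_2\le\epsilon'$ for every $n$-qubit state $\ket{\phi}$.

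Finally, I would set $U_n(T):=\tilde U_M\cdots\tilde U_1$ on $n+c$ qubits with $c=\poly{n}$ shared ancillas uncomputed after each factor. A hybrid argument, replacing $\tilde U_j$ by $e^{-i\CJ_j h_j T}\otimes\mathbb{I}$ one factor at a time and using unitarity of the remaining factors, gives $\norm{\big(e^{-iH_nT}\otimes\mathbb{I}_c-U_n(T)\big)\ket{\psi}\otimes\ket{0_c}}_2\le M\epsilon'=\epsilon$ for every $\ket{\psi}$, since each intermediate state $(e^{-i\CJ_{j-1}h_{j-1}T}\cdots e^{-i\CJ_1 h_1 T}\ket{\psi})\ket{0_c}$ has its ancillas in $\ket{0_c}$, which is exactly what licenses the per-factor bound. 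The total gate and ancilla count is $\poly{n,\log(1/\epsilon)}=\poly{n}$. I expect the main obstacle to lie in the third step: confirming that the classical precomputation of $\big(-\CJ_j(\Lambda_j)_{kk}T\big)\bmod 2\pi$ remains polynomial-time despite an exponentially large $T$ (it does, since only polynomially many fractional bits are ever needed), that the Solovay--Kitaev synthesis of the $\poly{n}$-dimensional factors to inverse-exponential precision stays polynomial-cost (it does, since $\log(1/\epsilon)=\poly{n}$), and that each $\tilde U_j$ returns its ancillas cleanly to $\ket{0}$ so that the final error bound holds on $\ket{\psi}\otimes\ket{0_c}$ rather than on a larger dilated space.
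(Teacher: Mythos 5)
Your proposal is correct and follows the standard route for fast-forwarding commuting Hamiltonians, which is exactly what the paper implicitly relies on by citing Atia--Aharonov rather than reproving Fact~\ref{fact:fast-commuting}: exact factorization $e^{-iH_nT}=\prod_j e^{-i\CJ_j h_j T}$, classical diagonalization of each $\poly{n}$-dimensional local term, reduction of the accumulated phase modulo $2\pi$ to $\poly{n}$ fractional bits (which absorbs the exponentially large $T$), Solovay--Kitaev synthesis at cost $\poly{\log(1/\epsilon)}=\poly{n}$, and a hybrid argument over the $M=\poly{n}$ factors with ancillas returned to $\ket{0}$. The three potential obstacles you flag are all correctly dispatched as you say, so the argument closes.
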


Putting everything together, we have
\begin{theorem}
    [Constructing energy-conserving PRU for random commuting Hamiltonians]
    \label{thm:construct-pru-commuting}
    Let $H_n=\sum_i\CJ_ih_i$ be a commuting Hamiltonian with digitalized Gaussian random coefficients $\CJ_i$. There is an efficient ensemble of unitary $\{U_{g,S}\}$ that forms the energy-conerving PRU of $H_n$ with probability at least $1-\negl{n}$. Each unitary $U_{g,S}$ is an exponentially accurate approximation of $\CO_{\qpe}^\dagger \CO_{S}^\dagger \CO_{g}\CO_S \CO_{\qpe}$, where $\CO_{\qpe}$ is the QPE oracle of $U=e^{i2\pi\widetilde{H}_n}$ with $\widetilde{H}_n$ being the normalized version of $H_n$ such that all the eigenvalues belong to $[0,1)$, $\CO_S$ is the random offset oracle, and $O_{g}:\ket{x}\rightarrow (-1)^{g(x)}\ket{x}$ implements a pseudo-random function $g:\{0,1\}^*\rightarrow \{0,1\}$. The oracles acting on system and ancillary registers with sizes properly specified by Lemma~\ref{lem:phase-unitary}.
\end{theorem}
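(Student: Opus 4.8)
The plan is to assemble the pieces developed above---the spectral-gap bound for random commuting Hamiltonians, the oracle-based phase-unitary construction together with its channel-level accuracy, the equivalence of the random-function and random-phase ensembles, the adaptive-security lemma, and fast-forwardability of commuting Hamiltonians---into a single chain of reductions, and then to replace the idealized random phase function by a pseudorandom one through a standard hybrid argument. First I would invoke Lemma~\ref{lem:small-gap-digitalization} to fix, with probability $1-\negl{n}$ over the draw of the digitalized Gaussian coefficients $\{\CJ_i\}$, a constant $\beta$ such that every pair of eigenvalues of $H_n$ differs by at least $2^{-\beta n}$; in particular $H_n$ is non-degenerate. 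Every remaining claim is made on this ``good event.'' I would then rescale and shift $H_n$ to $\widetilde H_n$ with eigenvalues in $[0,1)$, still separated by at least $2^{-\beta' n}$ for some constant $\beta'$, and set $U=e^{i2\pi\widetilde H_n}$, so that $U$ has exactly the eigenstates of $H_n$ and spectral gap $>2^{-\beta' n}$; fixing ancilla sizes $m_1,m_2,m_3=\Theta(n)$ as required by Lemma~\ref{lem:phase-unitary} (in particular $m_1>\beta' n$) makes every eigenvalue resolvable by QPE.

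\emph{Efficiency.} The only costly primitive inside $\CO_{\qpe}^\dagger\CO_S^\dagger\CO_g\CO_S\CO_{\qpe}$ is $\CO_{\qpe}$, which invokes the controlled powers $U^{2^j}=e^{i2\pi\widetilde H_n\,2^j}$ for $j$ up to $m_1+m_2+m_3=O(n)$, i.e.\ time-$2^{O(n)}$ evolution of $\widetilde H_n$. Because $\widetilde H_n$ is commuting with bounded locality, Fact~\ref{fact:fast-commuting} supplies a $\poly{n}$-size circuit approximating $e^{-i\widetilde H_n T}$ to inverse-exponential accuracy for any $T=2^{O(n)}$; the remaining parts of QPE, the offset oracle $\CO_S$ (a classical modular addition), and the phase oracle $\CO_g$ for a pseudorandom function $g$ built from the assumed quantum-secure one-way function (Definition~\ref{def:prf}) are all efficient. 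Hence there is a $\poly{n}$-size circuit, keyed by the pair $(g,S)$, that implements a channel $\mathsf{W}_{g,S}$ with $\norm{\mathsf{W}_{g,S}-\mathsf{U}_{g,S}}_\diamond\le 2^{-\alpha n}$ for some constant $\alpha>0$.

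\emph{Pseudorandomness.} I would establish indistinguishability of $\{\mathsf{W}_{g,S}\}$ from the energy-conserving Haar ensemble $\mu(\mathcal{C}_n^H)$ by bounding, one by one, the advantage of any $\poly{n}$-query adaptive adversary across a short sequence of hybrids: (i) $\mathsf{W}_{g,S}$ versus the ideal channel $\mathsf{U}_{g,S}$, controlled by $t\cdot 2^{-\alpha n}$ over $t=\poly{n}$ queries by submultiplicativity of the diamond norm (the computation of Eq.~\eqref{eq:adaptive-distance}); (ii) $\mathsf{U}_{g,S}$ with pseudorandom $g$ versus $\mathsf{U}_{f,S}$ with a truly random phase function $f$---here any efficient distinguisher reduces to a distinguisher of the phase oracles $\CO_g$ and $\CO_\CR$ of Definition~\ref{def:prf}, since the adversary's queries to $\mathsf{U}_{g,S}$ can be simulated from black-box access to the phase oracle on $\sfA_1$ together with the key-independent, efficiently constructible pieces $\CO_{\qpe},\CO_S$; (iii) Lemma~\ref{lem:secure-query} (which internally uses Lemmas~\ref{lem:phase-unitary}, \ref{lem:state-to-channel} and~\ref{lem:mimic-random-phase}) then bounds the distance between the adaptive adversary acting on $\{\mathsf{U}_{f,S}\}$ with uniform $(f,S)$ and on the random-phase ensemble $\mathcal{U}_{\operatorname{RU}}$ of $U$ by $t\big(2^{-(m_3/2+1+n)}+2^{-\alpha n}\big)+2^{-(m_2-n-3)}$, which is $\negl{n}$ for $m_1,m_2,m_3=\Theta(n)$ chosen large enough; and (iv) on the good event, Fact~\ref{non-degenerated-pru} identifies $\mathcal{U}_{\operatorname{RU}}$ (taken with respect to $U=e^{i2\pi\widetilde H_n}$, whose eigenbasis coincides with that of $H_n$) with $\mu(\mathcal{C}_n^H)$. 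Summing the four advantages yields total advantage $\negl{n}$, so $\{\mathsf{W}_{g,S}\}_{(g,S)}$ is an energy-conserving PRU for $H_n$.

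\emph{Main obstacle.} The heaviest analytic work---propagating the finite QPE precision coherently through the uncomputation and showing that the random offset defeats the boundary-straddling worst case---is already packaged in Lemmas~\ref{lem:qpe-offset}--\ref{lem:secure-query}, so the residual difficulty lies in hybrid~(ii): one must verify carefully that $\mathsf{U}_{g,S}$, and its efficient approximation, can be driven using only black-box calls to the phase oracle acting on $\sfA_1$, so that a distinguishing adversary genuinely reduces to a PRF adversary of the form in Definition~\ref{def:prf}; this rests on $\CO_{\qpe}$ and $\CO_S$ being key-independent and efficiently implementable and on the uncomputation of Lemma~\ref{lem:phase-unitary} being robust for all but a $\negl{n}$ fraction of offsets $S$. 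A secondary bookkeeping point is to keep the two layers of failure separate: the ``$1-\negl{n}$'' in the statement is the probability over the Hamiltonian's coefficients (the good event of Lemma~\ref{lem:small-gap-digitalization}), whereas the $\negl{n}$ distinguishing advantages hold deterministically once that event has occurred.
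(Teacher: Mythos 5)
Your proposal assembles exactly the right ingredients (Lemmas~\ref{lem:small-gap-digitalization}, \ref{lem:phase-unitary}, \ref{lem:mimic-random-phase}, \ref{lem:secure-query}, Fact~\ref{non-degenerated-pru}, Fact~\ref{fact:fast-commuting}) and the overall reduction matches the paper's, but hybrid~(ii) is placed at the wrong level and, as stated, the PRF reduction does not go through. You compare $\mathsf{U}_{g,S}$ against $\mathsf{U}_{f,S}$ and appeal to PRF security; however both channels contain the \emph{exact} QPE oracle $\CO_{\qpe}$, which requires implementing $e^{i2\pi\widetilde{H}_n t}$ for $t=2^{O(n)}$ perfectly. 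A PRF adversary, which by Definition~\ref{def:prf} must be $\poly{n}$-time, cannot simulate these ideal channels given only black-box access to $\CO_g$ or $\CO_\CR$. Your ``Main obstacle'' paragraph asserts that ``$\CO_{\qpe}$ and $\CO_S$ [are] key-independent and efficiently implementable,'' but the second claim is false for the exact $\CO_{\qpe}$: Fact~\ref{fact:fast-commuting} only yields an inverse-exponentially accurate \emph{approximation} $\tilde{\CO}_{\qpe}$, not the exact oracle. So the computational step of your hybrid chain is not supported by the stated assumption.

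The fix is to reorder so that the PRF switch happens between the efficiently implementable circuits rather than the ideal oracles: bound $|\mathrm{Adv}(\mathsf{W}_{g,S},\mathsf{W}_{f,S})|$ by PRF security (the PRF adversary can simulate $\mathsf{W}_{\cdot,S}=\tilde{\CO}_{\qpe}^{\dagger}\CO_S^{\dagger}\CO_{\cdot}\CO_S\tilde{\CO}_{\qpe}$ in $\poly{n}$ time from phase-oracle access, since $\tilde{\CO}_{\qpe}$ and $\CO_S$ are key-independent and efficient), then bound $|\mathrm{Adv}(\mathsf{W}_{f,S},\mu(\mathcal{C}_n^H))|$ purely statistically. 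Lemma~\ref{lem:secure-query} is already formulated to do the latter in one shot---it compares the approximate channels $\{\mathsf{W}_{f,S}\}$ with uniformly random $f,S$ directly against the random-phase ensemble, internally absorbing both the diamond-norm slack $t\,2^{-\alpha n}$ (your hybrid~(i)) and the QPE/uncomputation error (your hybrid~(iii)). This is exactly the sequence the paper uses; once you move the PRF step to the $\mathsf{W}$ level and let Lemma~\ref{lem:secure-query} absorb hybrids~(i) and~(iii), the argument closes. The remaining points---good event from Lemma~\ref{lem:small-gap-digitalization}, rescaling to $\widetilde H_n$, ancilla sizing from Lemma~\ref{lem:phase-unitary}, fast-forwarding for efficiency, and the two-layer failure bookkeeping---are all stated correctly.
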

\begin{proof}

The construction is sketched in Fig~\ref{fig:pru-commuting}.
Lemma~\ref{lem:small-gap-digitalization} asserts that with probability at least $1-\negl{n}$, $H_n$ has at a least inverse-exponentially small energy gap. We prove that the construction forms energy-conserving PRU for this case.
    \paragraph{Security.}  
    Replace the pseudo-random function $g$ by a truly random  function $f$. By Lemma~\ref{lem:secure-query}, any inverse-exponentially accurate approximation of the ensemble $\{\CO_{\qpe}^\dagger \CO_{S}^\dagger \CO_{f}\CO_S \CO_{\qpe}\}_{f,S}$ is computationally indistinguishable from the random phase ensemble of $U=e^{i2\pi\widetilde{H}_n}$, which is exactly the energy-conserving random unitary ensemble of $H_n$ due to Fact~\ref{non-degenerated-pru}. Since a pseudo-random function $g$ is indistinguishable from the truly random function $f$, ensemble $\{\CO_{\qpe}^\dagger \CO_{S}^\dagger \CO_{g}\CO_S \CO_{\qpe}\}_{g,S}$ is also indistinguishable from the energy-conserving random unitary ensemble.
    Thus, $\{U_{g,S}\}$ is the energy-conserving PRU.
    
    \paragraph{Construction time.}
    Oracle $\CO_g$ is the standard phase oracle. Since $g$ can be efficiently specified and computed, this oracle is efficiently constructable. $\CO_S$ is the addition oracle, which is also efficiently constructable when $S$ is drawn from a set of size $2^{O(n)}$. 

    $\CO_{\qpe}$ uses $O(n)$ number of unitary $U(t)=e^{i2\pi\widetilde{H}_n t}$, with $t=2^{O(n)}$. By Fact~\ref{fact:fast-commuting}, each $U(t)$ can be efficiently implemented up to any inverse-exponentially small precision. By choosing a large enough precision, an  approximation of $\CO_{\qpe}$ with inverse exponential error can be constructed. As a result, $U_{g,S}$ is an inverse exponentially accurate approximation for $\CO_{\qpe}^\dagger \CO_{S}^\dagger \CO_{g}\CO_S \CO_{\qpe}$.
\end{proof}

\section{Solving $\pspace$ problems with energy-conserving random unitary}
\label{ap:hardness-ham-construction}

From this section, we delve into the construction and demonstration of 1D translational-invariant local Hamiltonian whose energy-conserving PRU does not exist. The distinctive feature for this Hamiltonian is that its energy-conserving random unitary can be used to solve $\pspace$-complete problems, therefore essentially differs from any efficient quantum circuit.

In this section, we construct the Hamiltonian and $\pspace$ solver. Using the Feynman-Kitaev type construction\cite{feynman1986quantum,kitaev2002classical}, one can construct 1D translational-invariant local Hamiltonians that encode the dynamics of any classical Turing machine up to polynomial memory size (limited by the system length). Our method generalizes this construction to enable its energy-conserving random unitary to solve $\pspace$ problems.

\subsection{Quantum simulation of reversible Turing machines}\label{subsec:simulation-rtm}

We begin by describe our construction for RTM. The procedure is illustrated in Fig~\ref{fig:illustration-tm}.

Fix the length of the tape, the configuration space of any RTM can be embedded into a Hilbert space made up of local pieces. The core idea is illustrated in Fig.~\ref{fig:illustration-tm}. First, we can put the head inside the tape, so that the entire RTM is an one-dimensional system. This one-dimensional system can be represented by a product state $\ket{\psi}=\ket{x_{k_1}}_1\ket{x_{k_2}}_2\cdots\ket{x_{k_{R-1}}}_{R-1}\ket{q_\mu}_R\ket{x_{k_R+1}}_{R+1}\cdots \ket{x_{k_{l+1}}}_{l+1}\in \otimes_{i=1}^{l+1}\hilbert_i$, where $l$ is the tape length, $R$ denotes the location of the head, and $x_{k_m}\in\Gamma$, $q_\mu\in Q$.

Having the quantum state representation, the transition functions $\delta$ can be represented as a set of local quantum isometries. For example, for $\delta_k:(q_\mu\rightarrow q_\nu,x\rightarrow y,+)$, it can be represented by 
\begin{align*}
V_{\delta_k}=\sum_{i=1}^{l}\ket{y}_i\ket{q_\nu}_{i+1}\bra{q_\mu}\bra{x}_{i+1}.
\end{align*}
$V_{\delta_k}$ encodes transition function $\delta_k$ in the following sense: For any quantum product state $\ket{\psi_\C}$ that corresponds to the configuration $\C$ of the RTM, 
$V_k\ket{\psi_\C}=\ket{\psi_{\C'}}$, where $\C':\C\xrightarrow{\delta_k}\C'$.
While $V_k$ is not unitary, it is local and translational-invariant. So we can encode the transition functions $\delta$ of the TM into a local, translational-invariant Hamiltonian defined as $H=V+V^\dagger$. 

In below we summarize this construction. For later convenience, we include the cases where the RTM contains reverse form of transition rules (Definition~\ref{def:inverse-transition-rule}).
\begin{definition}
    [Hamiltonian and state representations of reversible of Turing machine]
    \label{hamiltonian-representation}
    Given a reversible Turing machine $\tm$, whose  transition rules may contain both standard and reverse forms, $\Delta=\Delta_s\cup\Delta_n$.
    The Hamiltonian representation $\{H_n\}_{n\in\mathbb{N}^+}$ of $\tm$ is a uniform family of qudits Hamiltonians where $H_n$ is the Hamiltonian representation of $\tm(n)$. $H_n$ is defined on Hilbert space $\hilbert_{\tm(n)}=\otimes_{i=1}^{n+1}\hilbert_i$, such that $\hilbert_i$ is spanned by orthonormal states $\{\ket{x}_i,\ket{q}_i\}$ for all $x\in \Gamma$ and $q\in Q$. 

    Let $\mathcal{T}$ be the translation operator that maps any $\ket{\psi}_i\in\hilbert_i$ to $\ket{\psi}_{i+1}\in\hilbert_{i+1}$ (with periodic boundary condition). $H_n$ is defined as 
    \begin{align}
    H_n=\sum_{i=0}^{n}\mathcal{T}^{-1,i}\big(\sum_{\ket{v_k}\bra{v_l}\in\mathcal{V}}
    (\ket{v_k}+\ket{v_{l}})(\bra{v_k}+\bra{v_l})
    \big)
    \mathcal{T}^i,
    \end{align}
    where
    \begin{itemize}
        \item $\mathcal{V}=(\mathcal{V}^s_+\cup\mathcal{V}^s_-\cup\mathcal{V}^s_{\mathsf{0}})
        \cup (\mathcal{V}^n_+\cup\mathcal{V}^n_-)
        $.
        \item $\mathcal{V}^s_+=\{\ket{x'}_1\ket{q'}_2\bra{q}_1\bra{x}_2~\big|~ (q,x,q',x',+)\in\Delta^s\}$.  $\mathcal{V}^n_+=\{\ket{y}_1\ket{q'}_2\ket{x'}_3\bra{q}_1\bra{y}_2\bra{x}_3~\big|~  (q,+,x,q',x')\in\Delta^n, y\in\Gamma \}$.
        \item $\mathcal{V}^s_-=\{\ket{q'}_1\ket{y}_2\ket{x'}_3\bra{y}_1\bra{q}_2\bra{x}_3 ~\big| ~ (q,x,q',x',-)\in\Delta^s, y\in\Gamma\}$.
        $\mathcal{V}^n_-=\{\ket{q'}_1\ket{x'}_2\bra{x}_1\bra{q}_2~\big|~ (q,-,x,q',x')\in\Delta^n\}$.
        \item $\mathcal{V}^s_\mathsf{0}=\{\ket{q'}_1\ket{x'}_2\bra{q}_1\bra{x}_2~\big|~ (q,x,q',x',\mathsf{0})\in\Delta^s\}$.
    \end{itemize}
    $H_n$ is a local, translational-invariant Hamiltonian that encodes the dynamics of $\tm(n)$.

    Any configuration $\C=\braket{q,x,R}$ of $\tm(n)$ can be represented by a product state in $\hilbert_{\tm(n)}$: 
    \begin{align}
        \ket{\psi_\C}=\ket{x_{1}}_1\ket{x_{2}}_2\cdots\ket{x_{{j-1}}}_{R-1}\ket{q}_R\ket{x_{j}}_{R+1}\cdots \ket{x_{l}}_{l+1},
    \end{align}
    which we named the configuration state. 
    We call the subspace of $\hilbert_{\tm(n)}$ spanned by all configuration states the computational subspace $\hilbert_{\tm(n)}^\mathscr{C}$.
\end{definition}

To study the structure of this Hamiltonian, we find that the configuration space of a RTM (with fixed-size memory) can be enumerated.
\begin{fact}\label{computation-structure}
    Let $\tm(n)$ be a RTM with fixed-size memory, and $\mathscr{C}$ be the (finite) set of configurations of $\tm(n)$. Let $\mathcal{G}=(\mathscr{C},E)$ be a directed graph whose vertices are the configurations. The edges are defined as follows: if $C_2$ is the successor of $C_1$, draw a directed edge from $C_1$ to $C_2$.

    $\mathcal{G}$ can be divided into disjoint subgraphs of loops and paths:
    \begin{enumerate}
        \item Directed loops where $q_0,q_a,q_r$ are never contained.
        \item Directed paths where $q_0,q_a,q_r$ are never contained.
        \item Directed paths where a configuration that contains $q_0$ as the source. $q_0,q_a,q_r$ are never contained in other vertices.
        \item Directed paths where a configuration that contains $q_a$ or $q_r$ as the sink. $q_0,q_a,q_r$ are never contained in other vertices.
        \item Directed paths where a configuration that contains $q_0$ as the source and a configuration that contains $q_a$ or $q_r$ as the sink. $q_0,q_a,q_r$ are never contained in other vertices.
    \end{enumerate}
 \end{fact}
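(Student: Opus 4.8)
The plan is to obtain the decomposition in two stages: a purely combinatorial statement about the shape of $\mathcal{G}$, followed by a short case analysis that confines the distinguished internal states $q_0,q_a,q_r$ to the endpoints of paths.

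First I would observe that $\mathcal{G}$ is a functional graph in both directions. Every configuration $\C\in\mathscr{C}$ has out-degree at most $1$, because $\tm(n)$ is deterministic: at most one transition rule applies to a given internal state together with the scanned symbol, so $\C$ has at most one successor. Every $\C$ has in-degree at most $1$, because $\tm(n)$ is reversible, so each configuration has at most one predecessor (Definition~\ref{reversible-turing-machine}). Moreover $\mathscr{C}$ is finite, since the tape length is fixed. The next step is the elementary lemma that a finite directed graph in which every vertex has in-degree at most $1$ and out-degree at most $1$ is a vertex-disjoint union of simple directed paths and simple directed cycles. I would prove it inside a single weakly connected component: its underlying undirected multigraph has maximum degree at most $2$, hence is a simple path or a simple cycle (a self-loop, i.e.\ a configuration that is its own successor, counting as a length-$1$ cycle); and at each vertex the at most two incident edges cannot both be incoming, which would force in-degree $2$, nor both outgoing, so the orientation is consistent along the whole path or cycle. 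Here paths are allowed to degenerate to a single vertex, namely a configuration with neither predecessor nor successor.

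Second I would determine where $q_0,q_a,q_r$ can occur. By the constraint on $\Delta$ in Definition~\ref{turing-machine}, no transition rule has left state $q_a$ or $q_r$, so any configuration whose head state is $q_a$ or $q_r$ has out-degree $0$; and no transition rule has right state $q_0$, so any configuration whose head state is $q_0$ has in-degree $0$. A configuration carries exactly one internal state, so no configuration contains two of $q_0,q_a,q_r$. Hence (i) every vertex of a directed cycle has in-degree and out-degree exactly $1$, so no cycle contains any of $q_0,q_a,q_r$ --- this is item~1 --- and (ii) on a directed path every interior vertex has in-degree and out-degree exactly $1$, so it contains none of $q_0,q_a,q_r$, while a $q_0$-configuration can appear only as the source of a path (in-degree $0$) and a $q_a$- or $q_r$-configuration only as the sink (out-degree $0$).

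Finally I would split the family of paths according to the four possibilities: the source either is or is not a $q_0$-configuration, and the sink either is or is not a $q_a$- or $q_r$-configuration. These give items~2, 3, 4, 5 respectively, with the degenerate single-vertex paths slotted in consistently: a lone $q_0$-configuration under item~3, a lone $q_a$- or $q_r$-configuration under item~4, and any other lone configuration under item~2. Since these cases are mutually exclusive and exhaustive for the connected components, and the components are vertex-disjoint, the claimed decomposition follows. I do not anticipate a genuine obstacle: the only points that need care are the graph-theoretic lemma --- in particular checking that the orientation around an undirected cycle is forced and that self-loops and single-vertex paths are handled --- and the bookkeeping of the degenerate paths in the final split; everything else is immediate from the definitions of deterministic and reversible Turing machines.
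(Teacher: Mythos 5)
Your proof is correct and takes the same route the paper sketches in the remark immediately following the Fact: the constraint on $\Delta$ in the definition of a Turing machine forces every $q_0$-configuration to have in-degree $0$ and every $q_a$- or $q_r$-configuration to have out-degree $0$, while reversibility and determinism bound in- and out-degree by $1$, so each weakly connected component of the configuration graph is a directed path or cycle. You make explicit the graph-theoretic lemma, the consistency-of-orientation check around each component, and the placement of degenerate single-vertex paths, all of which the paper leaves implicit, and these details are handled correctly.
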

  We emphasize that any configurations that contain $q_0$ ($q_r$ or $q_a$) must be the source (sink) of a path, because they are designed to not have predecessor (successor).

This fact asserts that $H_n$ is highly fragmented in the subspace $\hilbert_{\tm(n)}^\mathscr{C}$. In each subspace, the effective Hamiltonian is as follows.
\begin{lemma}\label{structure-ham}
    Let $H_n$ be the Hamiltonian representation of a RTM $\tm(n)$. For any configuration $\C\in\mathscr{C}$ that belongs to a ``path'', there exist $T_1,T_2\in \mathbb{N}^+$ and a $T_1+T_2+1$-dimensional invariant subspace $\hilbert_{\tm(n)}^\C\subseteq \hilbert_{\tm(n)}^\mathscr{C}$ that contains $\ket{\psi_\C}$, where the effective Hamiltonian reads
    \begin{align*}
        H_n\Big|_{\hilbert_{\tm(n)}^\C}=\sum_{t=-T_1}^{T_2-1}
    \big(\ket{\psi_{\C_{t+1}}}\bra{\psi_{\C_{t}}}+h.c.\big)+2\sum_{t=-T_1+1}^{T_2-1}\big(\ket{\psi_{\C_t}}
        \bra{\psi_{\C_t}}\big)+\ket{\psi_{\C_{T_1}}}\bra{\psi_{\C_{T_1}}}
        +\ket{\psi_{\C_{T_2}}}\bra{\psi_{\C_{T_2}}},
    \end{align*}
    where $\C_0=\C$, and $\C_{t+1}$ denotes the successor of $\C_t$.
    
    If $\C$ belongs to a ``loop'', the effective Hamiltonian reads 
    \begin{align*}
        H_n\Big|_{\hilbert_{\tm(n)}^\C}=\sum_{t=-T_1}^{T_2}
        \big(\ket{\psi_{\C_{t+1}}}\bra{\psi_{\C_{t}}}+h.c.\big)+2\sum_{t=-T_1}^{T_2+1}\big(\ket{\psi_{\C_t}}
        \bra{\psi_{\C_t}}\big),
    \end{align*}
    where $\C_{-T_1}=\C_{T_2+1}$.  
\end{lemma}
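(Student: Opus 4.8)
The plan is to prove Lemma~\ref{structure-ham} by computing the matrix elements of $H_n$ directly in the configuration basis $\{\ket{\psi_\C}\}_{\C\in\mathscr{C}}$ and reading off, on each connected component of the graph $\mathcal{G}$ of Fact~\ref{computation-structure}, a one-dimensional hopping Hamiltonian.

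First I would record the action of the two pieces of $H_n$ on configuration states. Expanding each local term $(\ket{v_k}+\ket{v_l})(\bra{v_k}+\bra{v_l})$ of Definition~\ref{hamiltonian-representation} and collecting the translates, write $H_n = V_{\mathrm{forward}} + V_{\mathrm{forward}}^\dagger + D$, where $V_{\mathrm{forward}}$ gathers the forward pieces $\ket{v_k}\bra{v_l}$ and $D = \sum(\ket{v_k}\bra{v_k}+\ket{v_l}\bra{v_l})$ is the manifestly diagonal remainder. Since a configuration state carries exactly one head symbol, the projector $\bra{v_l}$ onto an input pattern $(q,x)$ matches $\ket{\psi_\C}$ only at the head's cell, and only when $\delta$ is the (at most one, by determinism) applicable rule at $\C$, in which case $\ket{v_k}$ rewrites the local pattern to give exactly the successor configuration; hence $V_{\mathrm{forward}}\ket{\psi_\C}=\ket{\psi_{\C'}}$ when $\C$ has a successor $\C'$ and $=0$ otherwise. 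Dually, I would show that $\bra{v_k}$ onto an output pattern matches $\ket{\psi_\C}$ exactly when $\C$ has a predecessor via $\delta$: such a match pins down the post-transition head state and, together with the rule data and the unidirection property of reversible TMs, reconstructs a unique candidate predecessor that $\delta$ indeed sends to $\C$; the one-to-one property then forbids a second rule from also producing such a match, as that would give $\C$ two predecessors. Hence $V_{\mathrm{forward}}^\dagger\ket{\psi_\C}=\ket{\psi_{\C''}}$ when $\C$ has a predecessor $\C''$ and $=0$ otherwise --- in particular $V_{\mathrm{forward}}^\dagger$ annihilates every configuration containing $q_0$, since no rule outputs the state $q_0$. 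The same analysis shows that the only term contributing to an off-diagonal element between neighbouring configurations is the single edge-implementing term, so that element equals $1$.

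Given these facts, I would then read off the structure. By Fact~\ref{computation-structure}, $\mathscr{C}$ decomposes into disjoint paths and loops; since $V_{\mathrm{forward}}$ shifts each configuration to its successor, $V_{\mathrm{forward}}^\dagger$ to its predecessor, and $D$ is diagonal, the span of the configuration states along the component through $\C$ is an $H_n$-invariant subspace of dimension $T_1+T_2+1$ lying inside $\hilbert_{\tm(n)}^{\mathscr{C}}$, which is the claimed $\hilbert_{\tm(n)}^{\C}$. On a path $\C_{-T_1}\to\cdots\to\C_{T_2}$ the hopping part of $H_n$ restricted to this subspace is $\sum_t(\ketbra{\psi_{\C_{t+1}}}{\psi_{\C_t}}+\mathrm{h.c.})$ with every amplitude equal to $1$. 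For the diagonal $D$, the $\ket{v_l}\bra{v_l}$ projectors place weight $1$ on each configuration with a successor --- all but the sink $\C_{T_2}$ --- and the $\ket{v_k}\bra{v_k}$ projectors place weight $1$ on each configuration with a predecessor --- all but the source $\C_{-T_1}$; adding these, every interior configuration acquires diagonal weight $2$, while the endpoints $\C_{-T_1}$ (which contains $q_0$) and $\C_{T_2}$ (which contains $q_a$ or $q_r$) acquire weight $1$. This is exactly the displayed effective Hamiltonian (with the source term to be read as $\ketbra{\psi_{\C_{-T_1}}}{\psi_{\C_{-T_1}}}$). For a loop, every configuration has both a successor and a predecessor, so $D$ is uniformly $2$ and the hopping closes into a cycle, giving the stated loop Hamiltonian.

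I expect the main obstacle to be the combinatorial bookkeeping of the second step --- making rigorous that reversibility (the unidirection and one-to-one conditions) turns the output-pattern/predecessor correspondence into a genuine bijection and rules out all accidental coincidences, i.e.\ a local term intended for one edge also overlapping another configuration of the same component, both in the hopping and in the diagonal. Some additional care is needed with the periodic boundary condition and with short or unreachable components (for instance isolated halting configurations with no predecessor), for which the endpoint conventions in the statement must be interpreted precisely; the remaining steps are routine linear algebra once this skeleton is in place.
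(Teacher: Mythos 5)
Your proposal is correct and follows essentially the same route as the paper's proof: decompose $H_n$ into the forward hopping operator $V_{\mathrm{forward}}=\sum_{i}\mathcal{T}^{-1,i}\big(\sum_{\ket{v_k}\bra{v_l}}\ket{v_k}\bra{v_l}\big)\mathcal{T}^i$, its adjoint, and the diagonal remainder $V_p$, then observe that $V_p$ contributes $1$ to every configuration with a successor and $1$ to every configuration with a predecessor, giving $2$ in the bulk and $1$ at the path endpoints (and uniformly $2$ on loops). You also correctly flagged that the displayed boundary term $\ket{\psi_{\C_{T_1}}}\!\bra{\psi_{\C_{T_1}}}$ in the statement should read $\ket{\psi_{\C_{-T_1}}}\!\bra{\psi_{\C_{-T_1}}}$; the paper's proof is terser but uses the same decomposition and counting.
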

\begin{proof}
We prove the result for the case of $\C$ belonging to a path. The case that $\C$ belongs to a loop follows directly.

Let $V_{\text{forward}}=\sum_{i=0}^{n}T^{-1,i}\big(\sum_{\ket{v_k}\bra{v_l}\in\mathcal{V}}\ket{v_k}\bra{v_l}\big)T^i$. Since $\ket{v_k}\bra{v_l}$ corresponds to a transition rule of $\tm$, we have
$V_{\text{forward}}\ket{\psi_{\C_t}}=\ket{\psi_{\C_{t+1}}}$. $V_{\text{forward}}+V_{\text{forward}}^\dagger$ becomes the first term of effective Hamiltonian.

The remaining part is $V_{p}=\sum_{i=0}^{n}T^{-1,i}\big(\sum_{\ket{v_k}\bra{v_l}\in\mathcal{V}}\ket{v_k}\bra{v_k}+\ket{v_l}\bra{v_l}\big)T^i$. $V_p$ gives energy penalty 1 to every configuration with one predecessor and also to every configuration with one successor. So the energy penalty of source and sink terms is 1 while those for the remaining terms are 2.

    Since a RTM is reversible, each configuration can appear only once in a subspace. Furthermore, different configuration states are orthogonal to each other. So the  dimension of $\hilbert_{\tm(n)}^\C=\operatorname{span}\{\ket{\psi_{\C_t}} \}$ is  $T_1+T_2+1$.
\end{proof}

In each invariant subspace, the effective Hamiltonian can be interpreted as a 1D hopping Hamiltonian along the chain noded by $\ket{\psi_{\C_t}}$. The hardness of the $\pspace$ problem lies in the fact that this path could be exponentially long, preventing any local updating algorithm on $\ket{\psi_{\C_t}}$ to efficiently get the solution. The energy-conserving Haar-random unitary of $H_n$ can potentially overcome this difficulty, with  the following ideas. 

To begin with, one prepares the product state $\ket{\psi_{\C_1}}$ that corresponds to the input configuration $\C_1$ of a $\pspace$ RTM for a given instance $x$. This state serves as the source of one path, whose sink encodes the solution.  Then one can sample a unitary $U$ from the energy-conserving Haar-random ensemble, and acting it on $\ket{\psi_{\C_1}}$. The scrambling nature of $U$ will produce a wavefunction $U\ket{\psi_{\C_1}}$ dispersing along the entire path. A follow-up measurement on the basis $\operatorname{span}\{\ket{x}_i,\ket{q}_i\}$ will collapse $U\ket{\psi_{\C_1}}$ to any state $\ket{\psi_{\C_t}}$ along the path with almost equal probabilities. In this way, with high probability, one can make an exponentially large step forward using one query of $U$. This procedure is illustrated by Fig~\ref{fig:illustration-tm}.

However, there are two obstacles in the way turning this intuition into a practical protocol. First of all, even if one can make a big step forward, the probability to precisely collapse to the sink is still exponentially small. More severely, different paths and loops could have energy degeneracies. Therefore, a energy-conserving Haar-random unitary could mix different subspaces with $\acc$ and $\rej$ as outputs, producing a wrong solution. 

In the next two subsections, we resolve these two issues.


\subsection{Duplicated Turing machine}
\begin{figure}
    \centering
    \includegraphics[width=0.6\linewidth]{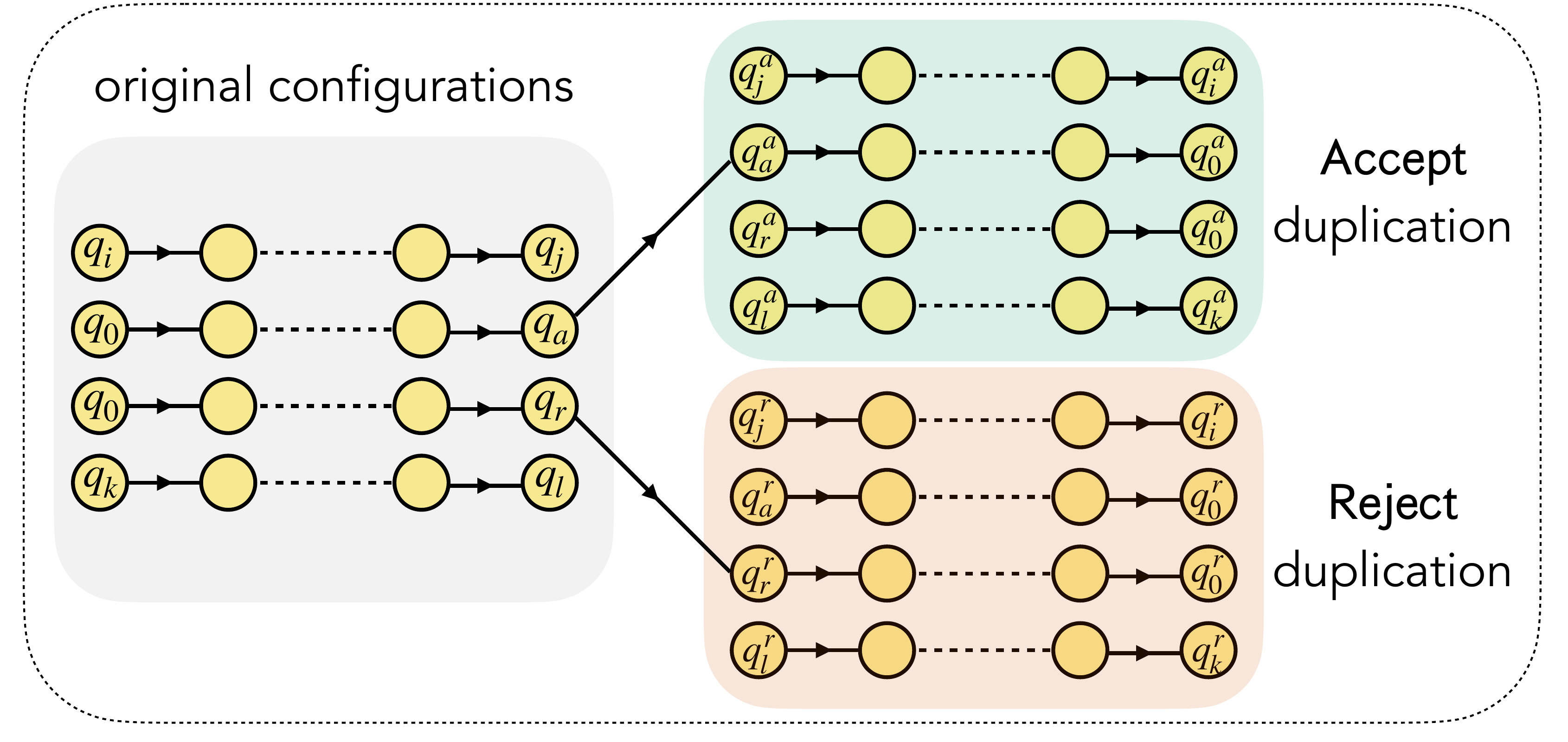}
    \caption{Configuration space of the duplicated RTM.}
    \label{fig:drtm}
\end{figure}

In this subsection, we introduce a construction we named the duplicated TM to amplify the probability to collapse to a state containing the solution information. 

A duplicated TM $\D[\tm]$ is constructed from a reference RTM $\tm=\langle Q,\Gamma,\Delta\rangle$. It contains two additional sets of internal states $Q^a$ and $Q^r$, together with two extra sets of transition rules $\Delta^a$ and $\Delta^r$. $Q^{a,r}$ are duplications of $Q$, i.e., $Q^{a,r}=\{q^{a,r}~\big|~q\in Q\}$. Similarly, $\Delta^{a,r}$ are duplications of $\Delta^{-1}$ (the inverse transition rules, see Definition ~\ref{def:inverse-transition-rule}) whose internal states are chosen from $Q^{a,r}$ instead. In addition, when the machine has configuration $\C=(q_{a},B,i)$ or $(q_{r},B,i)$, we forward it to $\C'=(q_{a}^a,B,i)$ or $(q_{r}^r,B,i)$, respectively. The global halting states for $\D[\tm]$ are $q_0^a$ and $q_0^r$ instead.

Following this construction, whenever the machine enters the old halting states $q_a$ or $q_r$, it enters $q_a^a$ or $q_r^r$ and does exactly the inverse computation. As a result, every computation process from legitimate input states now requires a doubled computation time, whereas the solution (i.e., $a$ or $r$) is encoded in all the configurations belonging to the latter half process. Using the ideas at the end of \ref{subsec:simulation-rtm}, we can read out the solution as long as the measured state collapses to the latter half path, whose probability is roughly $1/2=\CO(1)$.

\begin{definition}[Duplicated RTM]
    Given a RTM $\tm=\langle Q,\Gamma,\Delta\rangle$ whose transition rules are of the standard form, the Duplicated reversible Turing machine is another Turing machine, $\D[\tm]=\braket{Q^\D,\Gamma,\Delta^\D}$, where 
    \begin{itemize}
        \item $Q^\D=Q\cup Q^a\cup Q^r$, where $Q^{a,r}=\{q^{a,r}~\big|~q\in Q\}$. The halting state are now $q_0^a$ and $q^r_0$.
        \item $\Delta^\D=\Delta\cup \Delta^a\cup\Delta^r\cup\Delta^{\operatorname{trans}}$, where
        $\Delta^{a,r}=\{(q^{a,r},s,x,(q^{a,r})',x'~\big|~ (q,s,x,q',x')\in\Delta^{-1}\}$, and 
        $\Delta^{\operatorname{trans}}=\{(q_a,x,q_a^a,x,\mathsf{0}),(q_r,x,q_r^r,x,\mathsf{0})~\big|~ x\in \Gamma\}$.
    \end{itemize}
    In other words, when the RTM enters its own halting state $q_{a,r}$, the Duplicated RTM then enters $Q^{a,r}$ and does the exact reverse operation until reaching the global halting state $q_0^{a,r}$.
    The duplicated machine of a RTM with fixed-size memory can be defined analogously and labeled as $\D[\tm(n)]$,
\end{definition}

Following Fact~\ref{computation-structure}, we can also enumerate the configuration space of $\D[\tm]$. The structure is illustrated in Fig~\ref{fig:drtm}. 
\begin{lemma}\label{computation-structure-drtm}
    Let $\tm=\langle Q,\Gamma,\Delta\rangle$ be a RTM and $\D[\tm(n)]$ be the duplicated RTM with fixed-size memory, and $\mathscr{C}$ be the (finite) set of configurations of $\D[\tm(n)]$. Let $\mathcal{G}=(\mathscr{C},E)$ be a directed graph whose vertices are the configurations. The edges are defined as follows: if $C_2$ is the successor of $C_1$, draw a directed edge from $C_1$ to $C_2$.

    $\mathcal{G}$ can be divided into disjoint subgraphs of loops and paths of the following types:
    \begin{enumerate}
        \item Directed loops where $q_0,q_0^a,q_0^r$ are never contained.
        \item Directed paths where $q_0,q_0^a,q_0^r$ are never contained.
        \item Directed paths where a configuration that contains $q_0$ as the source. $q_0,q_0^a,q_0^r$ are never contained in other vertices.
        \item Directed paths where a configuration that contains $q_0^a$ or $q_0^r$ as the sink. $q_0,q_0^a,q_0^r$ are never contained in other vertices.
        \item Directed paths where a configuration that contains $q_0$ as the source and a configuration that contains $q_0^a$ or $q_0^r$ as the sink. $q_0,q_0^a,q_0^r$ are never contained in other vertices.
    \end{enumerate}
    Moreover they have the following properties, 
    \begin{itemize}
        \item Any configuration belongs to directed path of  type-3 does not contain $q\in Q^{a,r}$ as internal states.
        \item Any configuration belongs to directed path of  type-4 kind does not contain $q\in Q$ as internal states.
        \item Any direct path of  type-5 has length $2T$, where the first half of vertices contain $q\in Q$ as internal states, and the second half of vertices contain $q\in Q^r$ or $Q^a$ as internal states.
        \item Any initial configuration of $\D[\tm(n)]$ corresponding to some legal inputs belongs to directed path of the 5th kind.
    \end{itemize}
 \end{lemma}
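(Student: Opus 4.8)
The plan is to obtain the loop/path decomposition for free from the structure theorem for ordinary reversible Turing machines, Fact~\ref{computation-structure}, once we have checked that the duplicated machine $\D[\tm(n)]$ is itself a reversible Turing machine with fixed-size memory whose initial state is $q_0$ and whose two halting states are $q_0^a$ and $q_0^r$. Granting this, Fact~\ref{computation-structure} applied verbatim to $\D[\tm(n)]$, reading the triple $(q_0,q_0^a,q_0^r)$ in place of $(q_0,q_a,q_r)$, yields exactly the five listed families of subgraphs of $\mathcal{G}$. The four refined properties then need a separate argument, which I would base on a ``sector monotonicity'' observation made available by the fact that $\Delta^a$ and $\Delta^r$ are duplicates of the reverse rules $\Delta^{-1}$ (Definition~\ref{def:inverse-transition-rule}).

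First I would verify that $\D[\tm(n)]$ is reversible and well-formed. Viewing $\Delta^\D=\Delta\cup\Delta^a\cup\Delta^r\cup\Delta^{\operatorname{trans}}$ as a partial map on $Q^\D\times\Gamma$, the four pieces act on disjoint sets of internal states --- $\Delta$ on $Q\setminus\{q_a,q_r\}$, $\Delta^{\operatorname{trans}}$ on $\{q_a,q_r\}$, and $\Delta^a,\Delta^r$ on $Q^a,Q^r$ --- so their domains are disjoint, and each piece is an injection on its domain (for $\Delta$ by reversibility of $\tm$, for $\Delta^a,\Delta^r$ because $\Delta^{-1}$ is an injection whenever $\Delta$ is, and for $\Delta^{\operatorname{trans}}$ trivially). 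Using that no rule of $\Delta$ targets $q_0$ and no rule of $\Delta^{-1}$ targets $q_a$ or $q_r$ --- both consequences of the constraints in Definition~\ref{turing-machine} --- I would check that the four pieces also have pairwise disjoint ranges, from which unidirection (the head move being determined by the target state) is consistent across the whole rule set; combined with the facts that no rule of $\Delta^\D$ targets $q_0$ and no rule issues from $q_0^a$ or $q_0^r$, this establishes the initial/halting-state constraints. The periodic length-$(n+1)$ memory is inherited directly from $\tm(n)$.

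For the refined properties, assign to each configuration $\C$ a \emph{sector} $0$, $a$, or $r$ according to whether its internal state lies in $Q$, $Q^a$, or $Q^r$, and write $\C^a$ (resp.\ $\C^r$) for the configuration obtained from a sector-$0$ configuration $\C$ by replacing its internal state $q$ with $q^a$ (resp.\ $q^r$). Inspection of the four rule-sets shows that every edge of $\mathcal{G}$ either preserves the sector or is a $\Delta^{\operatorname{trans}}$-edge issuing from a sector-$0$ configuration with state $q_a$ (resp.\ $q_r$) and landing in sector $a$ (resp.\ $r$); since no edge leaves sector $a$ or $r$, every loop has constant sector and every directed path is a (possibly empty) sector-$0$ prefix followed by a (possibly empty) sector-$a$ or sector-$r$ suffix, joined by at most one such transit edge. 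The key step is then: a nonempty suffix evolves under $\Delta^a$ (resp.\ $\Delta^r$), which duplicates $\Delta^{-1}$, and because $\Delta$ is injective, tracing $\Delta^{-1}$ from the transit configuration $\C_k=(q_a,B,i)$ retraces the prefix $\C_0\to\cdots\to\C_k$ in reverse and halts exactly upon reaching the source $\C_0$, which has no $\Delta$-predecessor. So any path with a transit has the form $\C_0\to\cdots\to\C_k\to\C_k^{a}\to\C_{k-1}^{a}\to\cdots\to\C_0^{a}$ (or its $r$-version), with sink the duplicate of its source. Properties 1--3 follow: a type-3 path (source $q_0$, no $q_0^a,q_0^r$) cannot have a transit, since a transit would make its sink $\C_0^a$ carry internal state $q_0^a$, so it lies wholly in sector $0$; a type-4 path (sink $q_0^a$ or $q_0^r$, source $\ne q_0$) cannot have a transit, since a transit would force the source state to be $q_0$, so it lies wholly in sector $a$ or $r$; a type-5 path has a transit, so it is of the displayed form with $2T$ vertices ($T=k+1$), the first $T$ in sector $0$ and the last $T$ in sector $a$ or $r$. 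Finally, a legal input configuration $(q_0,x,0)$ runs, under $\tm$'s forward dynamics --- which halt on legal inputs since $\tm$ decides a $\pspace$ language within its fixed memory --- to a configuration with state $q_a$ or $q_r$, producing a transit and hence a type-5 path; this is property 4.

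The conceptual content is light, so I expect the main obstacle to be the bookkeeping in the first step. The delicate point is that $q_a$ and $q_r$ may legitimately occur as \emph{targets} of $\Delta$ (the reference machine can transition into a halting state) and therefore as \emph{sources} of $\Delta^a$ and $\Delta^r$, so one must track sectors carefully to confirm that $\Delta^{\operatorname{trans}}$ never collides with $\Delta^a$ or $\Delta^r$ on $q_a^a$- or $q_r^r$-configurations, and that the presence of reverse-form rules does not break unidirection. A secondary subtlety is making the ``exact time-reversal'' argument of the third paragraph airtight at its two boundaries --- the transit at $q_a/q_r$ and the termination at $q_0$ rather than at $q_a^a/q_r^r$ --- which is precisely the place where the choice of $q_0^a,q_0^r$ (and not $q_a^a,q_r^r$) as the global halting states is used.
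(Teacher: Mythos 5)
Your proof is correct and follows essentially the same route as the paper's: verify that $\D[\tm(n)]$ is itself a RTM with initial state $q_0$ and halting states $q_0^a,q_0^r$ so Fact~\ref{computation-structure} gives the five-way decomposition, then exploit that $\Delta^{a},\Delta^{r}$ duplicate the reverse rules $\Delta^{-1}$ to show that a path crossing a $\Delta^{\operatorname{trans}}$-edge must exactly mirror its sector-$0$ prefix, forcing the sink to be the superscripted copy of the source. Your reversibility check and the sector-monotonicity formalism are spelled out more explicitly than in the paper's terse proof, but the conceptual content and sequence of steps are the same.
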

 \begin{proof}
     By construction, $\D[\tm(n)]$ is a RTM. So the enumeration of configuration space follows from Fact. \ref{computation-structure}. 

     Let $P=(\C_1,\C_2\cdots,\C_T)$ be a directed path where $\C_1$ contains $q_0$ as internal state. If $\C_t$ contains $q\in Q^{a}$ as internal state for some $2\leq t\leq T$, by construction there must be $\C_{t'}$ with $t'\leq t$ that contains $q^a_a$ as internal state. As a result, $\C_{t'-1}$ contains $q_a$ as internal state.

     Since $\Delta^a$ is the reverse of $\Delta$, the $\D[\tm(n)]$ does exactly the reverse computation after reaching $\C_{t'}$. That is, for any $0\leq s\leq t'-1$, $\C_{t'-1+s}$ and $\C_{t'-s}$ are the same configuration except the former contains internal state $q_\mu\in Q$ and the latter contians $q_\mu^a$. As a result, whenever $P$ has a vertex that contains internal state in $Q^a$, $P$ must be symmetric with sink contains $q_0^a$. Otherwise, $P$ does not have any vertex that contains internal state in $Q^a$. This argument extends to the cases where $\C_T$ contains $q_0^a$ as internal state, and to the cases relative to $Q^r$. This proves the first three properties.

     Legal input configuration of $\D[\tm(n)]$ is also a legal input configuration of $\tm(n)$, thus it must contains $q_0$ and be the source of a direct path. Moreover, this path must contain $q_r$ or $q_a$ and then contain $q^r_0$ or $q^a_0$ by above arguments. So it belongs to the 5th kind.
 \end{proof}
From now, we denote $H_n$ to be the Hamiltonian representation of $\D[\tm(n)]$. It is clear the Hilbert subspace spanned by configuration states of any path (loop) is an invariant subspace of $H_n$. We call these subspaces the type-$i$ ($i=1,2,3,4,5$) path (loop) subspace with length $T$, and denote $\hilbert_{\D[\tm(n)]}^{(i)}$ as the union of all the type-$i$ subspaces.
Every path (loop) subspace is contained in the computation subspace $\hilbert^C_{\D[\tm(n)]}$ (Definition~\ref{hamiltonian-representation}) of $H_{n}$. $\cup_{1\leq i\leq 5}\hilbert_{\D[\tm(n)]}^{(i)}=\hilbert^\mathscr{C}_{\D[\tm(n)]}$

 In below we will choose $\tm$ to be a RTM that solves a $\pspace$-complete problem specified by language $L$. Assume $\tm$ decides if $x\in L$ using space $p(|x|)\in\mathbb{N}^+$.  For all $n\in\mathbb{N}^+$, the Hamiltonian representation of $\D[\tm(p(n))]$ is $H_{p(n)}$, which forms a uniform family of Hamiltonians.

\subsection{Lifting the degeneracy}

Loop and path subspaces of $H_{p(n)}$ can have energy degeneracy. Therefore energy-conserving Haar-random unitary of $H_{p(n)}$ can mix a subspace containing $\C_x$ with $x\in L$ and the one with $x\notin L$, and even with a subspace not corresponding to any legitimate computation process. This will cause unexpected computational errors.

In the following we energetically separate out the path subspaces corresponding to legitimate computation processes. Furthermore, we separate out the $\acc$ and $\rej$ path subspaces, so that the energy-conserving Haar-random unitary can be safely used to solve $L$.

In below we will call a directed path of type-4 that contains $q_0^a$ ($q_0^r$) as the sink the type-4$a$ (type-4$r$) path, and use the same notations for type-5$a$ and 5$r$.

First,  $\hilbert_{\D[\tm_L(p(n))]}$ contains not only the computational subspace $\hilbert^\mathscr{C}_{\D[\tm(n)]}$, but also the subspaces spanned by illegal configurations, i.e, the configurations containing multiple internal states $\ket{q}$s. To energetically separate the computational subspace, consider the following Hamiltonian 
\begin{align}
    H = H_{p(n)} + \gamma \sum_{q\in Q}\sum_{i=1}^{p(n)+1}\ket{q}_i\bra{q}_i.
\end{align}
Note that the second term commutes with $H_{p(n)}$. After adding this term, the total Hilbert space $\hilbert_{\D[\tm_L(p(n))]}$ is divided into:
\begin{itemize}
    \item The subspace spanned by the states that do not contain $\ket{q}$. This subspace has energy exactly 0.
    \item The subspace spanned by the states that contain one $\ket{q}$ (the computational subspace). Due to Lemma~\ref{structure-ham}, this subspace has energy $\gamma\leq E\leq \gamma +4$
    \item Any subspace containing state with more than one $\ket{q}$. Since $H_{p(n)}$ is a positive semidefinite Hamiltonian, this kind of subspaces has non-negative energy before adding $\gamma \sum_{q\in Q}\sum_{i=1}^{p(n)+1}\ket{q}_i\bra{q}_i$. So after adding the penalty term, their energies are at least $2\gamma$.
\end{itemize}
As a result, one can always energetically separate out the computational subspace by choosing large enough $\gamma>4$.

Now we separate the type-5a\&4a and type-5r\&4r paths subspaces from rest of the states in the computation subspace. The key technical tool is the following lemma.

\begin{lemma}
    \label{lem:solution-eigenvalue}
    For the following equation
    \begin{align}
        -\frac{1}{2^{m+1}-1}\tan\Big(\frac{k}{2}\Big)=\tan(Nk),\quad k\in(0,\pi),\quad
        m,N\in\mathbb{N}^+,
    \end{align}
    denote $\CS(m,N)$ as the solution set for $k$ with fixed $m$ and $N$. Define $\CS(m)=\bigcup_{N\in\mathbb{N}^+}\CS(m,N)$, then it satisfies the following:
    \begin{itemize}
        \item For $m_1\neq m_2$, $\CS(m_1)\bigcap\CS(m_2)=\CS(m_1)\bigcap\pi\mathbb{Q}=\CS(m_2)\bigcap\pi\mathbb{Q}=\emptyset$.
        \item For any $m$ and $N_1\neq N_2$, $\CS(m,N_1)\bigcap \CS(m,N_2)=\emptyset$ .
    \end{itemize}
\end{lemma}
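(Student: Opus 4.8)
The plan is to analyze the transcendental equation $-\frac{1}{2^{m+1}-1}\tan(k/2) = \tan(Nk)$ by treating the prefactor $c_m := \frac{1}{2^{m+1}-1}$ as the key data: two different values of $m$ give two different (and, crucially, transcendental-vs-transcendental distinct) prefactors, and rationality of $k/\pi$ would force an algebraic relation that cannot hold. Concretely, I would first rewrite the equation in a polynomial-in-trigonometric-values form: multiplying out, $-c_m \tan(k/2) = \tan(Nk)$ becomes $-c_m \sin(k/2)\cos(Nk) = \cos(k/2)\sin(Nk)$, i.e.\ $c_m \sin(k/2)\cos(Nk) + \cos(k/2)\sin(Nk) = 0$. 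Using product-to-sum identities this is a finite $\mathbb{Z}[c_m]$-linear combination of terms $\sin(jk/2)$ for integer $j$ ranging over an explicit finite set depending on $N$; so for fixed $k$, the equation says $c_m$ is a root of an explicit rational linear equation with coefficients built from $\{\sin(jk/2)\}$.

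For the second bullet (fixed $m$, different $N_1 \ne N_2$): suppose $k \in \CS(m,N_1) \cap \CS(m,N_2)$. Then $\tan(N_1 k) = \tan(N_2 k) = -c_m\tan(k/2)$, so $\tan(N_1 k) = \tan(N_2 k)$, which forces $(N_1 - N_2)k \in \pi\mathbb{Z}$, hence $k = \ell\pi/(N_1-N_2)$ for some integer $\ell$, so $k/\pi$ is rational. But then I claim $k \notin \CS(m, N_i)$: this follows from the first bullet's rationality statement $\CS(m)\cap\pi\mathbb{Q} = \emptyset$, which I would prove independently (see below). So it suffices to prove $\CS(m)\cap \pi\mathbb{Q}=\emptyset$ and $\CS(m_1)\cap\CS(m_2)=\emptyset$ for $m_1\ne m_2$.

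For $\CS(m)\cap\pi\mathbb{Q}=\emptyset$: if $k = p\pi/q \in (0,\pi)$ with $p,q$ coprime, then $\cos k$, $\sin k$, hence $\tan(k/2)$ and $\tan(Nk)$, are algebraic numbers; moreover $\tan(k/2)$ and $\tan(Nk)$ both lie in the real cyclotomic field $\mathbb{Q}(\cos(\pi/q))$ (being rational functions of $\cos$, $\sin$ of rational multiples of $\pi$). The equation then reads $c_m = -\tan(Nk)/\tan(k/2) \in \mathbb{Q}(\cos(\pi/q))$, which is automatically true since $c_m\in\mathbb{Q}$ — so this line alone does \emph{not} give a contradiction, and I will instead need a sharper argument. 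The natural route: note $\tan(N k)$ and $\tan(k/2)$ are values of tangent at rational multiples of $\pi$, and Niven's theorem (the only rational values of $\tan$ at rational multiples of $\pi$ are $0, \pm 1$) constrains things, but $-c_m\tan(k/2)$ is generically irrational. More robustly, I would argue by a \emph{degree/counting} argument on each period: on each interval where both branches of $\tan$ are continuous, the function $g(k) = \tan(Nk) + c_m\tan(k/2)$ is strictly monotincreasing (derivative $N\sec^2(Nk) + \frac{c_m}{2}\sec^2(k/2) > 0$ since $c_m>0$), so it has exactly one zero per branch-interval of $\tan(Nk)$; this pins down $|\CS(m,N)|$ exactly and, more importantly, lets me compare $\CS(m_1)$ and $\CS(m_2)$ and $\CS(m,N_1), \CS(m,N_2)$ by a transversality/interlacing argument — two strictly monotone functions $\tan(N_1 k)+c\tan(k/2)$ and $\tan(N_2 k)+c\tan(k/2)$ (or with $c_{m_1}$ vs $c_{m_2}$) agree at a point $k$ only if the \emph{other} two terms agree there, which reduces to $\tan(N_1 k)=\tan(N_2 k)$ (handled above) or $c_{m_1}\tan(k/2)=c_{m_2}\tan(k/2)$, forcing $\tan(k/2)=0$, impossible on $(0,\pi)$.

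The main obstacle I anticipate is the rationality claim $\CS(m)\cap\pi\mathbb{Q}=\emptyset$: the cheap field-theoretic argument fails because $c_m$ is rational, so I expect the real proof to combine (i) the monotonicity/one-zero-per-branch structure with (ii) a Galois-conjugation argument — if $k=p\pi/q$ solved the equation, applying a Galois automorphism $\sigma$ of $\mathbb{Q}(\zeta_q)/\mathbb{Q}$ sending $\zeta_q \mapsto \zeta_q^a$ would produce another solution $k' = ap\pi/q$ of the \emph{same} equation (since $c_m$ is fixed by $\sigma$), and by choosing $a$ appropriately I can force $k'$ into a branch-interval that already contains its unique zero, contradiction — or alternatively a direct estimate showing $-c_m\tan(k/2)$ cannot equal $\tan(Nk)$ for rational $k/\pi$ because the left side, scaled by the algebraic-integer denominator clearing, lands strictly between consecutive values forced by the minimal polynomial. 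I would try the Galois/monotonicity hybrid first, as it is cleanest, and fall back to an explicit resultant computation (the equation, after clearing denominators, becomes a polynomial identity in $\zeta = e^{ik}$ with coefficients in $\mathbb{Z}[c_m] = \mathbb{Z}[1/(2^{m+1}-1)]$; its roots on the unit circle are not roots of unity because reducing mod a prime dividing $2^{m+1}-1$ degenerates the polynomial in a controlled way) only if needed.
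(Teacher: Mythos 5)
Your proposal correctly identifies the two crux points -- the rationality claim $\CS(m)\cap\pi\mathbb{Q}=\emptyset$ is the hard one, and the second bullet reduces to it via $(N_1-N_2)k\in\pi\mathbb{Z}$ -- and your Galois-conjugation idea is indeed the right tool. However, there are two genuine gaps.

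First, your argument for $\CS(m_1)\cap\CS(m_2)=\emptyset$ does not cover the \emph{mixed} case. You separately dispose of the cases $N_1=N_2$ (giving $(c_{m_1}-c_{m_2})\tan(k/2)=0$) and, implicitly, $c_{m_1}=c_{m_2}$ (giving $\tan(N_1k)=\tan(N_2k)$). But when $m_1\ne m_2$ and $N_1\ne N_2$ simultaneously, a common $k$ only yields $\tan(N_1k)/\tan(N_2k)=c_{m_1}/c_{m_2}$, from which no monotonicity or interlacing argument extracts a contradiction. The paper avoids this case-split entirely: it substitutes $x=e^{ik/2}$, turning the equation into the vanishing of $g_{m,N}(x)=2^m x^{2N}+(2^{m+1}-1)\sum_{j=1}^{2N-1}x^j+2^m$, and proves that each $g_{m,N}$ is \emph{irreducible over} $\mathbb{Z}$. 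Since distinct $g_{m,N}$ are visibly non-proportional, pairwise irreducibility immediately gives all three disjointness statements at once. The irreducibility proof examines a hypothetical divisor $\phi(x)$: a power-of-two leading coefficient of $\phi$ contradicts the oddness of the coefficient of $x^{2N-1}$ in $g_{m,N}$, and a leading coefficient $1$ forces $\phi$ to be cyclotomic by Kronecker's lemma (a monic integer polynomial with all roots on the closed unit disk factors into cyclotomics and powers of $x$), which contradicts the no-rational-solutions lemma.

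Second, the rationality claim itself is left as a plan, not a proof. You correctly anticipate that the cheap field-theoretic argument fails because $c_m\in\mathbb{Q}$, and you propose a Galois automorphism $\zeta_q\mapsto\zeta_q^a$. The paper's version of this move is to note that a putative root $e^{i2\pi p/q}$ of $g_{m,N}$ would force $\Phi_q\mid g_{m,N}$, hence $e^{i2\pi/q}$ is also a root; the contradiction is then a direct \emph{magnitude comparison}: with $\delta:=\min_s\lvert 2N/q-s\rvert$, one has $\lvert\tan(4\pi N/q)\rvert=\lvert\tan(2\pi\delta)\rvert$, and since $\delta$ is a nonzero multiple of $1/q$ strictly less than $1/2$ one can lower-bound $\lvert\tan(2\pi\delta)\rvert$ so as to exceed $\frac{1}{2^{m+1}-1}\lvert\tan(2\pi/q)\rvert$. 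This is the step your ``force $k'$ into a branch-interval that already contains its unique zero'' sketch is gesturing at, but the monotonicity counting does not by itself pin down which branch $k'$ lands in; the explicit size estimate is what closes the argument. Your resultant/mod-$p$ fallback is an interesting alternative heuristic, but as stated it is not developed to the point where one could check it.

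In short: your plan has the right flavor (convert to a question about roots of unity, exploit the Galois action fixing $c_m$), but it is missing both the algebraic packaging (irreducibility of $g_{m,N}$) that handles the simultaneous $m_1\ne m_2$, $N_1\ne N_2$ case, and the concrete magnitude estimate that makes the Galois-conjugation argument land.
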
\noindent
We provide the proof in Appendix~\ref{ap:proof}.

Combining the two results above, we have the following lemma.
\begin{lemma}\label{computation-Hamiltonian}
Let $H_{p(n)}$ be the Hamiltonian representation of a duplicated RTM with memory-size $p(n)$. 
Define the computation Hamiltonian as
    \begin{align}
        H_{\operatorname{comp}} = H_{p(n)}+\sum_{i=1}^{p(n)+1}\Big(10 \sum_{q\in Q}(\ket{q}_i\bra{q}_i)+\frac{1}{2}\ket{q_0^a}_i\bra{q_0^a}_i+\frac{1}{4} \ket{q_0^r}_i\bra{q_0^r}_i 
        \Big).
    \end{align}
The additional terms preserve all the loop and path subspaces. Furthermore, $H_{\operatorname{comp}}$ satisfies:
\begin{itemize}
    \item Any energy eigenstate contained in $\hilbert^{(4a)}_{\D[\tm(p(n))]}\cup\hilbert^{(5a)}_{\D[\tm(p(n))]}$ has no energy degeneracy with rest of the eigenstates, and so as the ones contained in $\hilbert^{(4r)}_{\D[\tm(p(n))]}\cup\hilbert^{(5r)}_{\D[\tm(p(n))]}$.
    \item Inside $\hilbert^{(4a)}_{\D[\tm(p(n))]}\cup\hilbert^{(5a)}_{\D[\tm(p(n))]}$, the path subspaces with different lengths have no energy degeneracy with each other. The same for paths in $\hilbert^{(4r)}_{\D[\tm(p(n))]}\cup\hilbert^{(5r)}_{\D[\tm(p(n))]}$.
\end{itemize}
\end{lemma}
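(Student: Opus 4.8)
The plan is to diagonalize $H_{\operatorname{comp}}$ block by block. All three added terms are diagonal in the standard product basis of $\hilbert_{\tm(p(n))}$, hence commute with the orthogonal projector onto every path subspace $\hilbert^\C_{\D[\tm(p(n))]}$, every loop subspace, and the ``no head'' and ``$\ge 2$ heads'' subspaces of Definition~\ref{hamiltonian-representation}; so $H_{\operatorname{comp}}$ is block-diagonal with respect to the same decomposition as $H_{p(n)}$, which is exactly the assertion that the added terms preserve all loop and path subspaces. On the ``no head'' subspace every local term of $H_{p(n)}$ and each perturbation annihilates a headless basis state, so $H_{\operatorname{comp}}=0$ there; on any ``$\ge 2$ heads'' block the $10$-penalty alone contributes at least $20$ and the remaining terms are positive semidefinite, so those eigenvalues lie in $[20,\infty)$. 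Every path or loop block lies in the computational subspace, where each configuration carries exactly one head, so there the $10$-penalty assigns $+10$ to the unique head site and acts as $10\,\mathbb I$.

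I next dispose of the ``unperturbed'' blocks. On a loop of length $L$ --- which, being of type $1$, contains none of $q_0,q_0^a,q_0^r$ --- Lemma~\ref{structure-ham} gives $H_{\operatorname{comp}}=12\,\mathbb I+A_{C_L}$, with spectrum $\{12+2\cos(2\pi j/L)\}_{j=0}^{L-1}$. On a type-$1$, $2$, or $3$ path of length $L$, the only possibly-distinguished head state is $q_0$, which occurs only at the source of a type-$3$ path and there receives merely the generic $10$-penalty; so by Lemma~\ref{structure-ham}, $H_{\operatorname{comp}}=10\,\mathbb I+Q_{P_L}$, where $Q_{P_L}$ is the signless Laplacian of the $L$-vertex path, with spectrum $\{2-2\cos(\pi j/L)\}_{j=0}^{L-1}$. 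In every such case each eigenvalue is $12+2\cos k$ for some $k\in\pi\mathbb Q$, and all lie in $[10,14]$.

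The heart of the proof is the type-$4a/5a$ and type-$4r/5r$ blocks. By Lemma~\ref{computation-structure-drtm}, a path of length $L$ of such a type carries $q_0^a$ (resp.\ $q_0^r$) at its sink and nowhere else and no other distinguished head state, so $H_{\operatorname{comp}}$ on it equals $10\,\mathbb I+Q_{P_L}+\epsilon\,\ketbra{L}{L}$ with $\epsilon=\tfrac12$ (resp.\ $\tfrac14$) at the sink site $t=L$. I diagonalize this with the standing-wave ansatz $\psi_t\propto\sin\!\big(k(t-\tfrac12)\big)$: it automatically solves the bulk three-term recursion with $E=12+2\cos k$ and the unperturbed left boundary condition (on-site $11$), and the right boundary condition collapses, by a one-line identity, to $\tan(kL)=-\tfrac{\epsilon}{2-\epsilon}\tan(k/2)$. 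This is exactly the equation of Lemma~\ref{lem:solution-eigenvalue} with $N=L$ and $\tfrac{1}{2^{m+1}-1}=\tfrac{\epsilon}{2-\epsilon}$, i.e.\ $m=1$ for the $a$-paths ($\epsilon=\tfrac12$) and $m=2$ for the $r$-paths ($\epsilon=\tfrac14$) --- which is precisely why those coefficients were chosen. To conclude that the $L$ eigenvalues of this $L$-dimensional block are captured \emph{exactly} by $\{12+2\cos k:k\in\CS(1,L)\}$, resp.\ $\CS(2,L)$, and by nothing else, I rule out spurious modes: $H_{\operatorname{comp}}\succeq 10\,\mathbb I+Q_{P_L}\succeq 10\,\mathbb I$ forbids $E\le 10$ (equality would force $\psi_L=0$ on the alternating null vector of $Q_{P_L}$, a contradiction); substituting imaginary $k=i\kappa$ turns the boundary equation into $\tanh(\kappa L)=-\tfrac{\epsilon}{2-\epsilon}\tanh(\kappa/2)$, which has no solution with $\kappa>0$, ruling out $E>14$; and a linear ($k=0$) ansatz shows $E=14$ is not attained.

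Finally I assemble the two bullets. Since $k\mapsto 12+2\cos k$ is a bijection of $(0,\pi)$ onto $(10,14)$, two of the eigenstates above share an energy iff their $k$-labels coincide; the $a$-path eigenstates have $k\in\CS(1)$, the $r$-path eigenstates have $k\in\CS(2)$, the loop and type-$1/2/3$-path eigenstates have $k\in\pi\mathbb Q$, and the non-computational eigenvalues lie in $\{0\}\cup[20,\infty)$. By Lemma~\ref{lem:solution-eigenvalue}, $\CS(1)\cap\CS(2)=\emptyset$ and $\CS(m)\cap\pi\mathbb Q=\emptyset$, so any eigenstate in $\hilbert^{(4a)}_{\D[\tm(p(n))]}\cup\hilbert^{(5a)}_{\D[\tm(p(n))]}$ --- which lies in the open band $(10,14)$ --- is non-degenerate with every eigenstate outside that subspace, and symmetrically for $\hilbert^{(4r)}_{\D[\tm(p(n))]}\cup\hilbert^{(5r)}_{\D[\tm(p(n))]}$; this is the first bullet. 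For the second, two $a$-paths of distinct lengths $L_1\ne L_2$ carry $k$-labels in $\CS(1,L_1)$ and $\CS(1,L_2)$, which are disjoint by the last clause of Lemma~\ref{lem:solution-eigenvalue}, and likewise for $r$-paths. The step I expect to be the main obstacle is the exact spectral analysis of the boundary-perturbed hopping chain --- pinning its entire spectrum to $\CS(m,L)$ with no leftover bound state or band-edge mode --- together with checking that the coefficients $\tfrac12$, $\tfrac14$ (and the cushion constant $10$, whose sole role is to slide the computational band into $(10,14)$, safely below the multi-head states) are exactly what make the boundary condition reduce to the Lemma~\ref{lem:solution-eigenvalue} form with integer $m$.
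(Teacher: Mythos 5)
Your proof is correct and follows essentially the same route as the paper's: shift the computational subspace into the band $(10,14)$ with the $10$-penalty, reduce each path block to a boundary-perturbed hopping chain whose right-boundary quantization condition is exactly Eq.~\eqref{ap:eq:solution2} with $m=1$ (resp.\ $m=2$), and then invoke Lemma~\ref{lem:solution-eigenvalue}. The main difference is one of explicitness: you spell out the signless-Laplacian identification, the standing-wave ansatz, and the exclusion of bound-state/band-edge modes, which the paper delegates (without detailed argument) to the assertion in Appendix~\ref{hopping-problem} that the boundary equation has exactly $N$ real solutions in $(0,\pi)$ when $V_2\in(-1,1)$.
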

\begin{proof}
    First choose a large enough $\gamma$ to energetically separate out the computational subspace. As described before, $\gamma=10$ suffices. 
    Now we analyze the energy spectra of loop and path subspaces from type-1 to type-5. In this proof we  use $\ket{\psi_{\C_1}}$ and $\ket{\psi_{\C_T}}$ to denote the source and sink of a length-$T$ path, respectively.

    \textbf{Type-1}: The eigenvalues of type-1 loops are not changed by the additional terms. Due to Lemma~\ref{structure-ham}, for a loop of length $T$, the energy eigenvalues are $12+2\cos(2\pi t/T)$ for $t=0,1,\cdots, T-1$. 
    Any eigenvalue of loops is contained in $12+2\cos (\pi\mathbb{Q})$.

    \textbf{Type-2\&3}: The eigenvalues of type-2\&3 paths are not changed by the additional terms. Due to Lemma~\ref{structure-ham}, the effective Hamiltonian is a hopping Hamiltonian along an one-dimensional chain with boundary potentials $-\ket{\psi_{\C_{1}}}\bra{\psi_{\C_{1}}}-\ket{\psi_{\C_{T}}}\bra{\psi_{\C_{T}}}$. This problem is analyzed in details in Appendix~\ref{hopping-problem}. The energy eigenvalue is $E=12+2\cos k$, where $k$ is determined by $\tan(Tk)=0$. So $k=n\pi/T$, $n=1,2,\cdots, T$. As a result, any eigenvalue of type-2 paths is contained in $12+2\cos (\pi\mathbb{Q})$.

    \textbf{Type-4\&5}: The effective Hamiltonians of  type-4a\&5a paths are hopping Hamiltonians with boundary potential $-1$ and $-1/2$. Leveraging the result from Appendix~\ref{hopping-problem}, the eigenvalues are $E=12+2\cos k$, where $k$ is determined by
    \begin{align}\label{hopping-solution}
        -\frac{1}{3}\tan \frac{k}{2}=\tan Tk,
    \end{align}
    where $T$ is the length of the path. Similarly, the eigenvalue of type-4r\&5r paths is also $E=12+2\cos k$, where
    \begin{align}\label{hopping-solution-2}
        -\frac{1}{7}\tan \frac{k}{2}=\tan Tk.
    \end{align}
    Due to the first statement of Lemma~\ref{lem:solution-eigenvalue}, eigenstates contained in  $\hilbert^{(4a)}_{\D[\tm(p(n))]}\cup\hilbert^{(5a)}_{\D[\tm(p(n))]}$ and $\hilbert^{(4r)}_{\D[\tm(p(n))]}\cup\hilbert^{(5r)}_{\D[\tm(p(n))]}$ have no energy degeneracy with rest of the eigenstates. Due to the second statement of Lemma~\ref{lem:solution-eigenvalue}, type-4a\&5a (4r\&5r) paths with different lengths have no energy degeneracy with each other.
\end{proof}
Note that it is not necessary to separate a type-4a path from a type-5a one with the same length, because they both contain $q^a$s as internal states, serving as the signal for $\acc$. Similarly, we can keep type-4r and 5r paths degenerated.

\subsection{Construct the $\pspace$ solver}
In this subsection, we put everything together and
 demonstrate how to use the energy-conserving random unitary of $H_{\operatorname{comp}}$ to solve $\pspace$ problems. 
 
 Note that while Lemma~\ref{computation-Hamiltonian} successfully separate out the desired subspaces, the effective Hamiltonian of one subspace becomes the hopping Hamiltonian with unequal boundary potentials. Thus, the collapse probability to the second half of the chain is no longer guaranteed to be $1/2$. Luckily, it can still be bounded.
 To this end, we need to first deal with the normalization factor of wavefunctions of $H_{\operatorname{comp}}$ from Appendix~\ref{hopping-problem}. 

\begin{lemma}\label{lem:bound_C}
    Let $C_k(T) = T-\sin(Tk)/\sin k$. Let $k_1,\cdots k_T\in(0,\pi)$ be the solutions of Eq.~\eqref{hopping-solution} or Eq.~\eqref{hopping-solution-2}. There exists constants $T_0\in\mathbb{N}^+$,  such that for all $T>T_0$, the following holds: 
\begin{itemize}
    \item 
    $C_{k_t}(T)/C_{k_t}(2T)\leq 5/6$ for $t=2,3,\cdots,T-1$.
    \item $C_{k_1}(T)/C^2_{k_1}(2T),C_{k_T}(T)/C^2_{k_T}(2T)\leq 1/96$.
\end{itemize}
\end{lemma}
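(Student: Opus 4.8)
The plan is to reduce the lemma to explicit formulas for the roots $k_t$ of the transcendental equation and then run elementary estimates. Both Eq.~\eqref{hopping-solution} and Eq.~\eqref{hopping-solution-2} are instances of $\tan(Tk)=-\tfrac1c\tan(k/2)$ with $c=3$ and $c=7$, so I would treat a general fixed $c\ge 3$. First I would locate the roots. On $(0,\pi)$, $\tan(Tk)$ is strictly increasing from $-\infty$ to $+\infty$ on each of the $T-1$ intervals strictly between consecutive poles $\tfrac{(2j-1)\pi}{2T}$, and on the last interval $(\pi-\tfrac{\pi}{2T},\pi)$ it runs from $-\infty$ up to $0$, while the right-hand side $-\tfrac1c\tan(k/2)$ is continuous and strictly decreasing from $0^-$ to $-\infty$. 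A sign check then gives no root in $(0,\tfrac{\pi}{2T})$ and exactly one root $k_t$ in each of the remaining $T$ intervals, so $k_1<\dots<k_T$ as claimed. Moreover, for $1\le t\le T-1$ one has $k_t\in\bigl(\tfrac{(2t-1)\pi}{2T},\tfrac{(2t+1)\pi}{2T}\bigr)$, and since $\tan(Tk_t)=-\tfrac1c\tan(k_t/2)<0$ the root lies in the lower half of its interval, hence $k_t<\tfrac{t\pi}{T}\le\pi-\tfrac{\pi}{T}$. For the extreme root $k_T$, write $\varepsilon:=\pi-k_T\in(0,\tfrac{\pi}{2T})$; using $\tan(Tk_T)=-\tan(T\varepsilon)$ and $\tan(k_T/2)=\cot(\varepsilon/2)$ the equation becomes $\tan(T\varepsilon)=\tfrac1c\cot(\varepsilon/2)$ with $T\varepsilon\in(0,\tfrac\pi2)$, and if $\varepsilon<1/T$ then the right side exceeds $\tfrac{2\cos(1/2)}{c}T$ while the left side stays below $\tan 1$, which is impossible once $T$ is larger than an absolute constant; hence $\pi-k_T\ge 1/T$ for all large $T$.

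Next I would extract closed forms directly from the eigenvalue equation. Using $\sin k=2\sin(k/2)\cos(k/2)$, the equation rearranges to $\tfrac{\sin(Tk_t)}{\sin k_t}=-\tfrac{\cos(Tk_t)}{2c\cos^2(k_t/2)}$, and squaring it gives $\cos^2(Tk_t)=\tfrac{c^2\cos^2(k_t/2)}{1+(c^2-1)\cos^2(k_t/2)}$ (here $\cos(Tk_t)\neq 0$, since the poles of $\tan(Tk)$ are never roots). Combining these, and using $\sin(2Tk_t)=2\sin(Tk_t)\cos(Tk_t)$,
\begin{align}
\left|\frac{\sin(Tk_t)}{\sin k_t}\right| &= \frac{1}{2\cos(k_t/2)\sqrt{c^2-(c^2-1)\sin^2(k_t/2)}}\ \le\ \frac{1}{2\cos(k_t/2)}, \\
C_{k_t}(2T) &= 2T-\frac{\sin(2Tk_t)}{\sin k_t} = 2T+\frac{\cos^2(Tk_t)}{c\cos^2(k_t/2)} = 2T+\frac{c}{1+(c^2-1)\cos^2(k_t/2)}\ >\ 2T .
\end{align}
In particular $2T<C_{k_t}(2T)\le 2T+c$, and $C_{k_t}(T)=T-\tfrac{\sin(Tk_t)}{\sin k_t}\le T+\tfrac{1}{2\cos(k_t/2)}$.

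With these in hand the two bullets follow by plugging in the root locations. For $2\le t\le T-1$, Step 1 gives $k_t\le\pi-\tfrac{\pi}{T}$, so $\cos(k_t/2)=\sin\!\bigl(\tfrac{\pi-k_t}{2}\bigr)>\sin\tfrac{\pi}{2T}>\tfrac1T$ by concavity of $\sin$ on $[0,\tfrac\pi2]$; hence $\bigl|\tfrac{\sin(Tk_t)}{\sin k_t}\bigr|<\tfrac T2$ and $\tfrac{C_{k_t}(T)}{C_{k_t}(2T)}<\tfrac{T+T/2}{2T}=\tfrac34\le\tfrac56$. For $t=1$, $k_1<\tfrac{\pi}{T}$ forces $\cos(k_1/2)>\cos\tfrac{\pi}{2T}\ge\tfrac1{\sqrt2}$ (for $T\ge 2$), so $C_{k_1}(T)<T+1$ and $\tfrac{C_{k_1}(T)}{C_{k_1}(2T)^2}<\tfrac{T+1}{4T^2}\le\tfrac1{96}$ once $T\ge 25$. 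For $t=T$, Step 1 gives $\pi-k_T\ge 1/T$, hence $\cos(k_T/2)=\sin\!\bigl(\tfrac{\pi-k_T}{2}\bigr)\ge\tfrac{\pi-k_T}{\pi}\ge\tfrac1{\pi T}$, so $C_{k_T}(T)\le T+\tfrac{\pi T}{2}$ and $\tfrac{C_{k_T}(T)}{C_{k_T}(2T)^2}<\tfrac{(1+\pi/2)T}{4T^2}=\tfrac{1+\pi/2}{4T}\le\tfrac1{96}$ for $T$ above an absolute constant. Taking $T_0$ to be the largest of the thresholds that appear completes the argument.

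The step I expect to be the real obstacle is pinning down the extreme root in Step 1 — in particular the lower bound of order $1/T$ on $\pi-k_T$, and, more broadly, making the root-counting fully airtight — because the eigenstate attached to $k_T$ sits at the band edge where $C_{k_T}(T)$ is no longer $\approx T$ but grows like $\Theta(T)$; one has to show it grows by at most a constant multiple of $T$, which is precisely what keeps the extra $C_{k_T}(2T)^{-2}$ factor in the second bullet from being swamped. Everything after Step 1 is routine trigonometry once the closed forms of Step 2 are in place. I would also sanity-check at the outset that the definition $C_k(T)=T-\sin(Tk)/\sin k$ and the boundary constants $c\in\{3,7\}$ are exactly those produced by the hopping-chain analysis of Appendix~\ref{hopping-problem}, since the entire argument hinges on the precise algebraic shape of the eigenvalue equation.
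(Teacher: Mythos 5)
Your proof is correct and takes a genuinely different route from the paper's. The paper controls $C_{k_t}(T)$ and $C_{k_t}(2T)$ only through the crude bounds $\lvert\sin(Tk)\rvert\le1$ and $\lvert\sin(2Tk)\rvert\le1$, which yields
$C_{k_t}(T)/C_{k_t}(2T)\le\bigl(1+\tfrac{1}{T\sin k_t}\bigr)/\bigl(2-\tfrac{1}{T\sin k_t}\bigr)$;
this makes everything hinge on lower-bounding $\sin k_t$, which in turn forces a careful localization of the extreme root $k_T$ inside $(\pi-\tfrac{\pi}{2T},\pi-\tfrac{\pi}{3T})$ just to keep $2T-1/\sin k_t$ bounded away from zero. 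Your key move is to feed the eigenvalue relation back into $\sin(Tk)/\sin k$ and $\sin(2Tk)/\sin k$ to get the exact identities
$\tfrac{\sin(Tk_t)}{\sin k_t}=-\tfrac{\cos(Tk_t)}{2c\cos^2(k_t/2)}$ and
$C_{k_t}(2T)=2T+\tfrac{c}{1+(c^2-1)\cos^2(k_t/2)}$,
so the denominator bound $C_{k_t}(2T)>2T$ is unconditional and independent of where the roots sit. That collapses the second bullet to the single estimate $\cos(k_t/2)\gtrsim 1/T$ and removes the need for the paper's delicate two-sided squeeze on $k_T$; your one-sided bound $\pi-k_T\ge1/T$ (obtained by the cotangent contradiction with $\varepsilon<1/T$) suffices. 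The paper's approach is more self-contained in the sense that it never exploits the algebraic form of the boundary condition beyond the location of the roots, but yours is structurally cleaner, gives better constants, and handles both bullets uniformly with $\cos(k_t/2)$ as the single controlling quantity. The root-counting and the first-half/second-half placement of $k_t$ in each $\pi/T$ bin are the same in both proofs, and that part of your argument is airtight as written.
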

\begin{proof}
    The solutions of Eq~\eqref{hopping-solution} satisfies $(t-1/2)\pi/T\leq k_t\leq t\pi/T$ for all $t\in[T]$. First, we prove the first statement. Notice
    \begin{align}
        \frac{C_{k_t}(T)}{C_{k_t}(2T)}=\frac{T-\frac{\sin Tk}{\sin k}}{2T-\frac{\sin 2Tk}{\sin k}}\leq \frac{T+\frac{1}{\sin k}}{2T-\frac{1}{\sin k}}=
        \frac{1+\frac{1}{T\sin k}}{2-\frac{1}{T\sin k}}
    \end{align}
For $2\leq t\leq T-1$, $k_t\in (3\pi/(2T),\pi-\pi/T)$. Using the fact that there always exists $T_1>0$ such that for all $T>T_1$, 
\begin{align}
    \min\Big\{\sin\Big(\frac{3\pi}{2T}\Big),\sin\Big(\pi-\frac{\pi}{T}\Big)\Big\}=
    \sin\Big(\pi-\frac{\pi}{T}\Big)\geq \frac{\pi}{0.99 T}.
\end{align}
As a result, for $T>T_1$, $1/(T\sin k_t)\leq 0.99/\pi \leq 1/3 $. Then
\begin{align}
    \frac{C_{k_t}(T)}{C_{k_t}(2T)}=\frac{1+\frac{1}{T\sin k}}{2-\frac{1}{T\sin k}}\leq
    \frac{1}{2}+\frac{1}{T\sin k}\leq \frac{5}{6}.
\end{align}

Now we prove the second statement. To prove this, we first analyze the behavior of $k_T$. Note that for $k\in(\pi-\pi/(2T)+0^+,\pi)$, the LHS of Eq.~\eqref{hopping-solution} monotonically decreases, whereas the RHS monotonically increases. At $k=\pi-\pi/(2T)+0^+$, 
\begin{align}
    -\frac{\alpha}{\alpha+2}\tan\Big(\frac{\pi}{2}-\frac{\pi}{4T} \Big)=\operatorname{LHS}>\operatorname{RHS}=-\infty.
\end{align}
At $k = \pi-\pi/(3T)$, there exists $T_2>0$ such that for all $T>T_2$,
\begin{align}
    -\frac{\alpha}{\alpha+2}\tan\Big(\frac{\pi}{2}-\frac{\pi}{6T}\Big)=\operatorname{LHS}<\operatorname{RHS}=-\sqrt{3}
\end{align}
As a result, for $T>T_2$, $k_T\in(\pi-\pi/(2T),\pi-\pi/(3T))$. Following this result, we have
\begin{align}
    \max\Big\{\frac{1}{C_{k_1}(2T)},\frac{1}{C_{k_T}(2T)}\Big\}\leq
    \frac{1}{2T-\frac{1}{\min_{k_1,k_T}\{\sin k\}}}=\frac{1}{2T-\frac{1}{\sin(\pi/(3T))}}.
\end{align}
Now use the fact that there exists $T_3>0$, such that for all $T>T_3$, $1/\sin(\pi/(3T))\leq 3 \cdot0.99 T/\pi$. Then
\begin{align}
    \max\Big\{\frac{1}{C_{k_1}(2T)},\frac{1}{C_{k_T}(2T)}\Big\}\leq\frac{1}{T}
    \cdot \frac{\pi}{2\pi-3\cdot 0.99}.
\end{align}

Next, following the same proof the the first statement, there exists $T_4>0$ such that for $T>T_4$, 
\begin{align}
    \max\Big\{\frac{C_{k_1}(T)}{C_{k_1}(2T)},\frac{C_{k_T}(T)}{C_{k_T}(2T)}\Big\}\leq
    \frac{1}{2}+\frac{1}{T\sin (\pi/(3T))}\leq \frac{1}{2}+\frac{3}{0.99\cdot \pi}.
\end{align}
As a result, 
\begin{align}
    \max\Big\{\frac{C_{k_1}(T)}{C^2_{k_1}(2T)},\frac{C_{k_T}(T)}{C^2_{k_T}(2T)}\Big\}\leq
    \frac{1}{T}\Big(\frac{1}{2}+\frac{3}{0.99\cdot \pi}\Big)\Big(\frac{\pi}{2\pi-3\cdot 0.99}\Big)
\end{align}
Let $T_5>0$ be defined such that for all $T>T_5$, RHS of above equation is smaller than $1/96$.

Combining these two, we have that for $T_0=\max\{T_1,T_2,T_3,T_4,T_5\}$, the two statements hold.

\end{proof}

Using the above lemma, we can prove that the random phase unitary of $H_{\operatorname{comp}}$ can be used to solve the $\pspace$ problem. See the following two lemmas.
\begin{lemma}\label{solving-with-random-phase}
    Let $H_{\operatorname{comp}}$ be the computation Hamiltonian of $\D[\tm(p(n)]$, and $\mathcal{U}$ is the random phase unitary ensemble  w.r.t. eigenstates of $H_{\operatorname{comp}}$. Let $\ket{\psi_{\C_1}}$ be the configuration state corresponds to the initial state of $\D[\tm(p(n))]$ with $x$, where $|x|=n$. And $T$ (being even) denotes the running time of $\D[\tm(p(n)]$ on input $x$. Then there exists $T_0>0$, such that when $T>T_0$ the following holds:

    Prepare $\ket{\psi_{\C_1}}$ and randomly draw $U\in\mathcal{U}$, then doing measurement on $U\ket{\psi_{\C_1}}$ on the local basis $\{\ket{x}_i,\ket{q}_i\}$ of $\D[\tm(p(n))]$. The probability to collapse to the second half path, i.e., states $\ket{\psi_{\C_t}}$, $t=T/2+1,\cdots,T$, is greater than $1/12$.
\end{lemma}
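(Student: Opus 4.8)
The plan is to dephase the random-phase unitary into a diagonal ensemble, which turns the phase-averaged collapse probability into a classical weighted sum over the eigenmodes of a single one-dimensional tight-binding chain, and then to control that sum using the normalization estimates of Lemma~\ref{lem:bound_C}.

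First I would isolate the relevant invariant subspace. Since $\C_1$ is a legal input configuration, Lemma~\ref{computation-structure-drtm} places $\ket{\psi_{\C_1}}$ at the source of a type-5 directed path of even length $T$, with configuration states $\ket{\psi_{\C_1}},\dots,\ket{\psi_{\C_T}}$ spanning a subspace $\hilbert^{\C}$ left invariant by $H_{\operatorname{comp}}$. On $\hilbert^{\C}$, Lemma~\ref{structure-ham} together with the penalty terms of Lemma~\ref{computation-Hamiltonian} identifies $H_{\operatorname{comp}}|_{\hilbert^{\C}}$ with a hopping Hamiltonian on a $T$-site chain with uniform bulk potential and the two \emph{unequal} boundary defects $-1$ (at the source $\C_1$) and $-\tfrac12$ or $-\tfrac34$ (at the sink $\C_T$). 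By Appendix~\ref{hopping-problem} this chain has a nondegenerate spectrum $E_{k_j}=12+2\cos k_j$, $j=1,\dots,T$, with the $k_j\in(0,\pi)$ the solutions of Eq.~\eqref{hopping-solution} or Eq.~\eqref{hopping-solution-2}, and its eigenbasis $\{\ket{E_{k_j}}\}$ has explicit sinusoidal amplitudes whose squared norms are the quantities entering $C_{k_j}(\cdot)$. We may choose the global eigenbasis defining the random-phase ensemble $\mathcal{U}$ to be compatible with the invariant-subspace decomposition and hence to contain $\{\ket{E_{k_j}}\}_{j=1}^{T}$ (cf.~Fact~\ref{non-degenerated-pru}); then $U|_{\hilbert^{\C}}=\sum_{j}e^{i\theta_j}\ketbra{E_{k_j}}{E_{k_j}}$ with i.i.d.\ uniform phases $\theta_j$, so $U\ket{\psi_{\C_1}}$ remains inside $\hilbert^{\C}$ and a subsequent measurement in the local basis $\{\ket{x}_i,\ket{q}_i\}$ necessarily returns one of the configurations $\C_1,\dots,\C_T$.

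Next I would perform the dephasing. Writing $\ket{\psi_{\C_1}}=\sum_j c_j\ket{E_{k_j}}$, the measurement of $U\ket{\psi_{\C_1}}$ yields $\C_t$ with probability $p_t=\bigl|\sum_j e^{i\theta_j}c_j\langle\psi_{\C_t}\vert E_{k_j}\rangle\bigr|^2$, and averaging over the independent uniform phases annihilates all cross terms, giving
\begin{align}
\E_{\theta}\Bigl[\sum_{t=T/2+1}^{T}p_t\Bigr]=\sum_{j=1}^{T}\abs{c_j}^2\,w_j,\qquad w_j:=\sum_{t=T/2+1}^{T}\bigl|\langle\psi_{\C_t}\vert E_{k_j}\rangle\bigr|^2 ,
\end{align}
where $w_j$ is the fraction of the eigenmode $\ket{E_{k_j}}$ supported on the second half of the chain and $\sum_j\abs{c_j}^2=1$. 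It therefore suffices to lower bound this classical sum by $1/12$. From the explicit eigenvectors in Appendix~\ref{hopping-problem}, the first-half weight of $\ket{E_{k_j}}$ equals $C_{k_j}(T)/C_{k_j}(2T)$, so $w_j=1-C_{k_j}(T)/C_{k_j}(2T)$, while the source amplitude obeys $\abs{c_j}^2\le 4/C_{k_j}(2T)$. For the two edge modes $j\in\{1,T\}$, the estimate $1/C_{k_1}(2T),1/C_{k_T}(2T)=O(1/T)$ obtained within the proof of Lemma~\ref{lem:bound_C} gives $\abs{c_1}^2,\abs{c_T}^2=O(1/T)$, so these two modes contribute negligibly; for every bulk mode $2\le j\le T-1$, the first bullet of Lemma~\ref{lem:bound_C} bounds the first-half weight by $5/6$, whence $w_j\ge 1-5/6=1/6$.

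Combining these estimates, for $T$ exceeding the threshold $T_0$ of Lemma~\ref{lem:bound_C} (enlarged if needed so the error term stays below $1/2$),
\begin{align}
\E_{\theta}\Bigl[\sum_{t=T/2+1}^{T}p_t\Bigr]=\sum_{j=1}^{T}\abs{c_j}^2 w_j\ \ge\ \frac16\sum_{j=2}^{T-1}\abs{c_j}^2\ =\ \frac16\bigl(1-\abs{c_1}^2-\abs{c_T}^2\bigr)\ \ge\ \frac16\bigl(1-O(1/T)\bigr)\ >\ \frac1{12},
\end{align}
which is exactly the asserted bound on the probability of collapsing to the second half of the path. I expect the main obstacle to be the input from Appendix~\ref{hopping-problem} invoked in the third step: one must pin down the precise sinusoidal form of the eigenvectors of a chain carrying two unequal boundary defects and carefully track the source/sink reflection asymmetry, both to justify the identity $w_j=1-C_{k_j}(T)/C_{k_j}(2T)$ and to rule out the possibility that some non-edge mode piles up at the source — which is exactly what the extra power of $C_{k_1}(2T)$ in the denominator of Lemma~\ref{lem:bound_C}'s second bullet is designed to encode.
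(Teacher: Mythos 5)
Your argument mirrors the paper's: dephase the random phases to turn the collapse probability into a classical mixture $\sum_j\abs{c_j}^2 w_j$ over eigenmodes of the $T$-site hopping chain, recognize the first-half weight as $C_{k_j}(T)/C_{k_j}(2T)$, and control the sum with Lemma~\ref{lem:bound_C}. The one place you diverge is in disposing of the two edge modes $j\in\{1,T\}$: the paper invokes the second bullet of Lemma~\ref{lem:bound_C} to bound their contribution to the first-half probability by $\tfrac{1}{96}\cdot 4=\tfrac{1}{24}$ each, yielding $1-\tfrac56-\tfrac1{12}=\tfrac1{12}$ exactly, whereas you instead drop them from the lower bound and show $\abs{c_1}^2+\abs{c_T}^2=O(1/T)$ via $\abs{c_j}^2\le 4/C_{k_j}(2T)$ together with the estimate $1/C_{k_1}(2T),1/C_{k_T}(2T)=O(1/T)$ that appears inside the \emph{proof} of Lemma~\ref{lem:bound_C} rather than in its statement. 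Both routes land at the same $1/12$ threshold after enlarging $T_0$; your variant is arguably a touch cleaner (strict inequality comes for free), at the small cost of citing a sub-estimate from another proof rather than a lemma statement.
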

 \begin{proof}
Let $E_k$ and $\ket{k}$ denote the eigenvalue and eigenstate of the effective Hamiltonian of the Type-5 path that contains $\ket{\psi_{\C_1}}$. Then $\ket{\psi_{\C_1}} = \sum_k\ket{k}\braket{k|\psi_{\C_1}}$. So after applying the random phase unitary, $\ket{\psi_{\C_1}}\rightarrow \sum_k e^{i\theta_k}\ket{k}\braket{k|\psi_{\C_1}}$, where $\theta_k$ is i.i.d. from uniform distribution over $[0,2\pi)$.

The probability to get any of the output state $\ket{\psi_{\C_t}}$ is 
\begin{align}
    \pr(\ket{\psi_{\C_t}})=\mathbb{E}_{\theta_1\cdots \theta_T}\big[\big|
    \sum_k\braket{\psi_{\C_t}|k}\braket{k|\psi_{\C_1}}e^{i\theta_k}\big|^2
    \big]=\sum_k|\braket{k|\psi_{\C_t}}|^2|\braket{k|\psi_{\C_1}}|^2.
\end{align}
In the last equality we use $\mathbb{E}_{\theta_1\cdots\theta_T}[e^{i(\theta_i-\theta_j)}]=\delta_{ij}$. Using the solution from Appendix~\ref{hopping-problem}, above expression is further simplified 
\begin{align}
    \pr(\ket{\psi_{\C_t}})=\sum_k\frac{1}{N_k(2T)^2}|e^{-ikt}-e^{ikt-ik}|^2|e^{-ik}-1|^2.
\end{align}
Here we introduce $N_k(2T)=\sum_{t=1}^{T}|e^{-ikt}-e^{ikt-ik}|^2=2T-\sin 2Tk/\sin k$.
According to Lemma. \ref{lem:bound_C}, there exists $T_0>0$ such that for all $T>T_0$,
  the probability to get measurement outcome $t=T/2+1,\cdots T$ is 
\begin{align}
    \pr\Big(\frac{T}{2}+1\leq t\leq T\Big) &=
    1-\sum_{t=1}^{T/2}\pr(\ket{\psi_{\C_t}})\notag\\
    &=1-\sum_{t=1}^{T}\frac{N_{k_t}(T)}{N_{k_t}^2(2T)}|e^{-i{k_t}}-1|^2\notag\\
    & = 1-\sum_{t=2}^{T-1}\frac{N_{k_t}(T)}{N^2_{k_t}(2T)}|e^{-ik_t}-1|^2-
    \frac{N_{k_1}(T)}{N^2_{k_1}(2T)}|e^{-ik_1}-1|^2-\frac{N_{k_T}(T)}{N^2_{k_T}(2T)}|e^{-ik_T}-1|^2\notag\\
    &\geq 1-\frac{5}{6}\sum_{t=1}^T\frac{|e^{-ik_t}-1|^2}{N_{k_t}(2T)}
    -\frac{1}{96}\big(|e^{-ik_1}-1|^2+|e^{-ik_T}-1|^2\big)\notag\\
    &\geq \frac{1}{12}
\end{align}
In the last inequality, we use $\sum_{t=1}^T|e^{-ik_t}-1|^2/N_{k_t}(2T)=\sum_{t=1}^T|\braket{\psi_{\C_1}|k_t}|^2=1$ and $ |e^{-ik_1}-1|^2\leq 4$.
 \end{proof}

\begin{lemma}\label{lemma:pru-solve-pspace}
    Let $L\in\pspace$. Then there exists one Turing machine $\tm=\braket{Q,\Gamma,\Delta}$ that can determine if $x\in L$ using polynomial space $p(|x|)$, such that the followings are satisfied for large-enough $|x|$: 

    Denote the computation Hamiltonian of $\D[\tm(p(n))]$ as $H_{\operatorname{comp}}$, and its energy-conserving random unitary ensemble as $\{\mathcal{U}_{p(n)}\}$. Let 
    \begin{align}
        P_{a,r}=\sum_{i=1}^{p(n)}\sum_{q\in Q}\ket{q^{a,r}}_i\bra{q^{a,r}}_i
    \end{align}
    be operators in $\hilbert_{\D[\tm(p(n))]}$, and $\ket{\psi_{\C_1}}$ be the configuration state corresponds to initial configuration of $\D[\tm(p(n))]$ with input $x$ where $|x|=n$. Then there exists $n_0$ that for all $n>n_0$, if $x\in L$,
    \begin{align}
        \mathbb{E}_{U\sim \mathcal{U}_{p(n)}}[\bra{\psi_{\C_1}}U^\dagger P_aU \ket{\psi_{\C_1}}]\geq \frac{1}{12},\quad  \mathbb{E}_{U\sim \mathcal{U}_{p(n)}}[\bra{\psi_{\C_1}}U^\dagger P_rU \ket{\psi_{\C_1}}]=0,
    \end{align}
    and if $x\notin L$,
    \begin{align}
        \mathbb{E}_{U\sim \mathcal{U}_{p(n)}}[\bra{\psi_{\C_1}}U^\dagger P_rU \ket{\psi_{\C_1}}]\geq \frac{1}{12},\quad  \mathbb{E}_{U\sim \mathcal{U}_{p(n)}}[\bra{\psi_{\C_1}}U^\dagger P_aU \ket{\psi_{\C_1}}]=0.
    \end{align}
\end{lemma}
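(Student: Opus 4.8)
The plan is to combine the structural results already established in this section into a clean case analysis. First I would fix $L \in \pspace$ and invoke the definition of $\pspace$ to obtain a reversible Turing machine $\tm = \braket{Q,\Gamma,\Delta}$ that decides membership in $L$ using space $p(|x|)$; reversibility can always be arranged (cf.\ Fact on the structure of RTM) at the cost of a polynomial blowup in $Q$, $\Gamma$, and $p$. Its duplicated version $\D[\tm(p(n))]$ is well-defined, and we take $H_{\operatorname{comp}}$ to be the computation Hamiltonian of $\D[\tm(p(n))]$ from Lemma~\ref{computation-Hamiltonian}. The key input is Lemma~\ref{computation-structure-drtm}: the configuration state $\ket{\psi_{\C_1}}$ of the initial configuration on input $x$ with $|x| = n$ lives in a type-5 path subspace $\hilbert^\C$ of length $2T$, where $T$ is the running time of $\tm(p(n))$ on $x$, and by the last two properties of that lemma the first half of the path carries internal states in $Q$ while the second half carries internal states in $Q^a$ if $x \in L$ (type-5$a$) or in $Q^r$ if $x \notin L$ (type-5$r$).

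Next I would use Fact~\ref{non-degenerated-pru} together with the nondegeneracy guarantee of Lemma~\ref{computation-Hamiltonian}. Since every eigenstate in $\hilbert^{(5a)} \cup \hilbert^{(4a)}$ (resp.\ $\hilbert^{(5r)} \cup \hilbert^{(4r)}$) has no energy degeneracy with the rest of the spectrum, the energy-conserving Haar-random unitary $U \sim \mathcal{U}_{p(n)}$ restricted to the invariant subspace $\hilbert^\C$ containing $\ket{\psi_{\C_1}}$ acts as an independent random phase unitary on the eigenbasis of the effective Hamiltonian $H_{\operatorname{comp}}|_{\hilbert^\C}$, with no leakage to any other subspace. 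This is exactly the setting of Lemma~\ref{solving-with-random-phase}: applying $U$ to $\ket{\psi_{\C_1}}$ and measuring in the local basis $\{\ket{x}_i, \ket{q}_i\}$ collapses onto a configuration state $\ket{\psi_{\C_t}}$, and the probability of landing in the second half $t \in \{T+1,\dots,2T\}$ (in the $2T$-length convention) exceeds $1/12$ once $T > T_0$. (Here one must be careful to match the length conventions: Lemma~\ref{solving-with-random-phase} is stated for a path of length $T$ with second half $t = T/2+1,\dots,T$; the duplicated machine produces a path of length $2T'$ where $T'$ is the genuine running time, so the ``second half'' is precisely the set of configurations carrying the $Q^a$ or $Q^r$ internal states.)

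Now I would translate collapse probabilities into expectations of the projectors $P_a$ and $P_r$. When $x \in L$, the entire second half of the path consists of configurations whose unique internal-state cell holds some $q^a \in Q^a$, so each such $\ket{\psi_{\C_t}}$ satisfies $P_a \ket{\psi_{\C_t}} = \ket{\psi_{\C_t}}$ and $P_r \ket{\psi_{\C_t}} = 0$; moreover \emph{no} configuration in this entire type-5$a$ subspace contains a $Q^r$ state (by construction of $\D[\tm]$, once the machine forks into the accept-copy it never touches $Q^r$), so $P_r$ annihilates the whole subspace, giving $\mathbb{E}_U[\bra{\psi_{\C_1}} U^\dagger P_r U \ket{\psi_{\C_1}}] = 0$ exactly. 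Hence
\begin{align}
\mathbb{E}_{U \sim \mathcal{U}_{p(n)}}\big[\bra{\psi_{\C_1}} U^\dagger P_a U \ket{\psi_{\C_1}}\big]
= \pr\big(\text{collapse to second half}\big) \geq \tfrac{1}{12},
\qquad
\mathbb{E}_{U \sim \mathcal{U}_{p(n)}}\big[\bra{\psi_{\C_1}} U^\dagger P_r U \ket{\psi_{\C_1}}\big] = 0.
\end{align}
The case $x \notin L$ is symmetric with the roles of $P_a$ and $P_r$ interchanged, via the type-5$r$ subspace. Choosing $n_0$ so that $p(n_0)$ forces the running time $T$ to exceed $T_0$ for all $n > n_0$ (the $\pspace$ machine on non-trivial instances always runs for at least a growing number of steps, or one pads $\tm$ with idling steps to guarantee this) completes the argument.

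\textbf{Main obstacle.} The routine parts are the case split and the projector bookkeeping; the one genuinely delicate point is the reduction to Lemma~\ref{solving-with-random-phase}, specifically verifying that the energy-conserving Haar-random unitary does \emph{not} mix the type-5 subspace of interest with any other invariant subspace. This is where the nondegeneracy conclusions of Lemma~\ref{computation-Hamiltonian} (which in turn rest on Lemma~\ref{lem:solution-eigenvalue}) are essential: if two distinct path subspaces shared an eigenvalue, the Haar-random unitary on that degenerate eigenspace could rotate $\ket{\psi_{\C_1}}$ partly into a subspace with the opposite $\acc/\rej$ label, breaking the exact ``$= 0$'' statements. One must therefore state carefully that $\ket{\psi_{\C_1}}$ has zero overlap with every degenerate eigenspace that touches both an $a$-type and an $r$-type (or a non-computational) subspace — which follows because type-5$a$/4$a$ and type-5$r$/4$r$ eigenstates are entirely nondegenerate with everything else, so the only degeneracies within the relevant label class are among same-label paths, which is harmless. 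I would spell this out as the crux of the proof.
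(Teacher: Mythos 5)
Your proposal matches the paper's proof in its overall architecture: choose a $\pspace$ machine whose running time grows with $|x|$ (so that $T > T_0$ eventually), feed the initial configuration state into Lemma~\ref{solving-with-random-phase} to get the $1/12$ collapse probability, and use the nondegeneracy structure of Lemma~\ref{computation-Hamiltonian} to keep the $a$- and $r$-type subspaces energetically sealed off from each other, which gives the exact $=0$ statements. The case split and projector bookkeeping are the same.

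The gap is in the handling of degeneracy. Your main argument asserts that $U$ restricted to the invariant subspace containing $\ket{\psi_{\C_1}}$ ``acts as an independent random phase unitary \ldots with no leakage to any other subspace.'' That is not what Lemma~\ref{computation-Hamiltonian} gives you: it guarantees that 4a/5a eigenstates are nondegenerate with \emph{everything outside the 4a/5a class}, and that 4a/5a paths of \emph{different lengths} are mutually nondegenerate, but it deliberately leaves open the possibility of degeneracy between same-length type-4a and type-5a paths (or between two type-5a paths of the same length), since those have identical hopping Hamiltonians and hence identical spectra. When such degeneracies exist, the energy-conserving Haar-random unitary mixes the corresponding eigenvectors with a genuine $U(K)$ rotation rather than a phase, so Lemma~\ref{solving-with-random-phase} does not apply verbatim, and the equality ``$\mathbb{E}[P_a] = \pr(\text{collapse to second half})$'' you write is not literally correct (type-4a partners contribute from their first halves as well). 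You notice the issue in your ``Main obstacle'' paragraph and correctly observe that the remaining degeneracies are within the same-label class, but you only assert ``which is harmless'' without giving the reason. The paper's proof supplies the missing step: because every same-length degenerate partner of a type-5a path is itself type-4a or 5a, the collapse probability mass that leaks into those partners lands either in the second half of a 5a path or anywhere in a 4a path, and in both cases the outcome carries an internal state in $Q^a$, so $\mathbb{E}[P_a]$ can only \emph{increase} relative to the nondegenerate calculation. Spelling out that monotonicity is the one piece of content your write-up is missing.
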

\begin{proof}
    First, choose a $\tm$ that determines if $x\in L$ in polynomial space, such that for energy legitimate input $x$ with $|x|=n$, the running time is at least $n$. This can be easily achieved by demanding $\tm$ to first sweep the whole input string before doing computation. In the following, we prove the $x\in L$ statement holds for $H_{\operatorname{comp}}$ defined from $\tm$. Generalization to $x\notin L$ cases is straightforward.

    We begin by dealing with an idealized situation.
    Assume the subspace of path that contains $\ket{\psi_{\C_1}}$ (type-5a) has no energy degeneracy with rest of the eigenstates, then the energy-conserving PRU in this subspace is equivalent to the random phase unitary ensemble. Due to Lemma~\ref{solving-with-random-phase}, when the length of the path is greater than $T_0$, measuring the local basis projects the state $U\ket{\psi_{\C_1}}$ to the second half path with probability $\geq1/12$. By our construction of $\tm$, this always holds for large enough $|x$. Finally, notice that since the second half path contains configurations having $q^a$ as internal state (Lemma~\ref{computation-structure-drtm}), the probability to collapse to the second half path is exactly equal to $\mathbb{E}_{U\sim \mathcal{U}_{p(n)}}[\bra{\psi_{\C_1}}U^\dagger P_aU \ket{\psi_{\C_1}}]$. So we prove the theorem with the aforementioned assumption.

    In reality, the path that contains $\ket{\psi_{\C_1}}$ can have energy degeneracy with other type-4a\&5a paths with the same length. $U\ket{\psi_{\C_1}}$ has equal probabilities to collapse to each of the path.
    However, recall that type-4a paths have $q^a$ as internal states along the whole path. Therefore, whether the degenerated path is of type-4a or 5a, the second half path must contain configurations with $q^a$ as internal states. As a result, $\mathbb{E}_{U\sim \mathcal{U}_{p(n)}}[\bra{\psi_{\C_1}}U^\dagger P_aU \ket{\psi_{\C_1}}]$ cannot decrease. That completes the proof.
\end{proof}

Putting everything together, we prove that with query access to the energy-conserving random unitary ensemble of $H_\text{comp}$, one can construct a $\pspace$ solver.

\begin{algorithm}
\caption{$\mathsf{PSPACE}$ solver}
\label{alg:pspace-solver}
\KwIn{Input $x\in\{0,1\}^*$, query access to the unitary ensemble $\{\mathcal{U}_n\}$, and a polynomial $p(n)$}
\KwOut{$\mathsf{Accept}$ or $\mathsf{Reject}$}
Construct state $\ket{\psi_{\mathcal{C}_1}}$ that corresponds to the input $x$ in the Hilbert space of length-$p(|x|)$ chain.\;
\For{$t=1$ \KwTo $15$}{
     Query the unitary ensemble and get $U\ket{\psi_{\mathcal{C}_1}}$ for $U\in\mathcal{U}_{p(|x|)}$.\;
    Measure $U\ket{\psi_{\mathcal{C}_1}}$ on the local basis. \;
    \uIf{$q^a$ appears in the measurement outcome}
        {\Return $\mathsf{Accept}$}
    \uElseIf{$q^r$ appears in the measurement outcome}
        {\Return $\mathsf{Reject}$}
}
\Return $\mathsf{Accept}$
\end{algorithm}

\begin{theorem}
[Energy-conserving random unitary can be used to solve $\pspace$ problems]
\label{thm:ru-solve-pspace}
    For any language $L\in\pspace$, there exists a (uniform family of) translational-invariant one-dimensional local Hamiltonian $H$, whose matrix elements belong to $\{0,1,10,1/2,1/4\}$, such that a polynomial quantum algorithm with query access to the energy-conserving random unitary ensemble of $H$ can solve all the instances of $L$ with sufficiently large input size, with success probability at least $2/3$.
\end{theorem}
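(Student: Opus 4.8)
The plan is to assemble the constructions of Section~\ref{subsec:simulation-rtm} onward and then do the probability bookkeeping of Algorithm~\ref{alg:pspace-solver}. First I would fix, for the given $L\in\pspace$, a reversible Turing machine $\tm=\langle Q,\Gamma,\Delta\rangle$ that decides $L$ in space $p(|x|)$ and that, without loss of generality, begins every run by sweeping the whole input tape before computing, so that its running time on any legal length-$n$ input is at least $n$; consequently the duplicated machine runs for at least $2n$ (an automatically even) steps. Then I would form $\D[\tm(p(n))]$, take its Hamiltonian representation $H_{p(n)}$ (Definition~\ref{hamiltonian-representation}), and add the penalty terms of Lemma~\ref{computation-Hamiltonian} to obtain the computation Hamiltonian $H_{\operatorname{comp}}$. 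Inspecting these definitions shows that $H_{\operatorname{comp}}$ is a sum of geometrically local, translationally-invariant terms on a one-dimensional chain of $p(n)+1$ qudits of fixed local dimension $|\Gamma|+|Q^{\D}|=O(1)$, and that each such term has matrix elements in $\{0,1\}$ (the Feynman--Kitaev propagation and penalty blocks $(\ket{v_k}+\ket{v_l})(\bra{v_k}+\bra{v_l})$) or in $\{10,\tfrac{1}{2},\tfrac{1}{4}\}$ (the degeneracy-lifting penalties); this $H_{\operatorname{comp}}$ is the Hamiltonian claimed in the theorem.

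The heart of the argument is Lemma~\ref{lemma:pru-solve-pspace}. Prepare the product state $\ket{\psi_{\C_1}}$ encoding the initial configuration of $\D[\tm(p(n))]$ on input $x$; this takes $\poly{n}$ time since it is a product over $p(n)+1$ constant-dimensional sites. Lemma~\ref{lemma:pru-solve-pspace} yields an $n_0$ such that for all $n>n_0$: if $x\in L$ then $\mathbb{E}_{U\sim\mathcal{U}_{p(n)}}\!\big[\bra{\psi_{\C_1}}U^\dagger P_a U\ket{\psi_{\C_1}}\big]\ge\tfrac{1}{12}$ and $\mathbb{E}_{U}\!\big[\bra{\psi_{\C_1}}U^\dagger P_r U\ket{\psi_{\C_1}}\big]=0$, and symmetrically with $a\leftrightarrow r$ if $x\notin L$. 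The content behind this is three ingredients already in place: the hopping picture of each invariant path subspace (Lemma~\ref{structure-ham}); the doubling trick of the duplicated machine (Lemma~\ref{computation-structure-drtm}) combined with the collapse-probability bound of Lemma~\ref{solving-with-random-phase} (which relies on the normalization estimates of Lemma~\ref{lem:bound_C} and needs the path length to exceed $T_0$, guaranteed for large $n$ by the input-sweeping padding); and the spectral separation of Lemma~\ref{computation-Hamiltonian}, which forces any energy degeneracy of the type-$5a$ path through $\ket{\psi_{\C_1}}$ to be with type-$4a$ or type-$5a$ paths of the same length only --- all of which carry $q^a$ symbols throughout their second halves --- so that Haar mixing inside the commutant of $H_{\operatorname{comp}}$ can neither push the $P_a$ weight below $\tfrac{1}{12}$ nor generate any $P_r$ weight.

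It then remains to analyze Algorithm~\ref{alg:pspace-solver}. In each of its $15$ independent rounds a fresh $U\sim\mathcal{U}_{p(n)}$ is queried and $U\ket{\psi_{\C_1}}$ is measured in the local basis; the probability that the outcome contains a $q^a$ (resp.\ $q^r$) symbol equals $\mathbb{E}_U[\bra{\psi_{\C_1}}U^\dagger P_a U\ket{\psi_{\C_1}}]$ (resp.\ its $P_r$ analogue). When $x\in L$, no round ever produces a $q^r$, so the algorithm never returns $\rej$ and hence returns $\acc$ with probability $1$; when $x\notin L$, no round ever produces a $q^a$, so the algorithm returns $\rej$ unless all $15$ rounds miss $q^r$, an event of probability at most $(1-\tfrac{1}{12})^{15}<\tfrac{1}{3}$. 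Thus the algorithm is correct with probability $\ge 2/3$ for $n>n_0$, and since it makes only $O(1)$ oracle queries and otherwise merely prepares and measures a $\poly{n}$-size product state, it runs in polynomial quantum time; this proves the theorem.

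The main obstacle is not this final assembly but the lemmas it invokes --- in particular Lemma~\ref{computation-Hamiltonian} together with Lemma~\ref{lem:solution-eigenvalue}, which are what guarantee that the small coefficient set $\{10,\tfrac{1}{2},\tfrac{1}{4}\}$ already suffices to lift every harmful degeneracy, so that the \emph{energy-conserving Haar} measure (rather than the simpler random-phase ensemble) cannot route amplitude into a $\rej$-labelled subspace when $x\in L$, nor into a $\acc$-labelled one when $x\notin L$. Granting those, the present theorem requires no new idea: the only care needed here is the routine probability bookkeeping above and the verification that the padded Turing machine makes the path length exceed $T_0$ for all sufficiently large inputs.
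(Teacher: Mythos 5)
Your proposal is correct and follows essentially the same route as the paper: cite Lemma~\ref{lemma:pru-solve-pspace} for the $\ge 1/12$ vs.\ $=0$ dichotomy on $P_a,P_r$, note the padding ensures the path length eventually exceeds $T_0$, and bound Algorithm~\ref{alg:pspace-solver}'s failure probability. You in fact tighten the bookkeeping: the paper writes $(1-1/12)^5$ but the algorithm makes $15$ queries, so the correct exponent is $15$ as you use, which does yield $\approx 0.27 < 1/3$.
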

\begin{proof}
    Let $\tm$ be the Turing machine specified in Lemma~\ref{lemma:pru-solve-pspace} which solves $L$ in polynomial space $p(\cdot)$, and choose $H_\text{comp}$ to be the (uniform family of) computational Hamiltonian of $\D[\tm_L]$ specified by Lemma~\ref{computation-Hamiltonian}. With query access to the energy-conserving PRU of $H_\text{comp}$, Algo~\ref{alg:pspace-solver} can efficiently solve $L$ for sufficiently large input size. Due to Lemma~\ref{lemma:pru-solve-pspace}, the failure probability is at most $(1-1/12)^5\approx0.27<1/3$.
\end{proof}

\section{True Quantified Boolean Formula (TQBF)}
\label{ap:tqbf}

In the following three sections, we construct an algorithm to distinguish a $\pspace$-complete solver from any polynomial size quantum circuit. With query access to the solver, one can efficiently invert quantum-secure one-way functions, a property not held by any efficient quantum circuit.

We use \emph{True Quantified Boolean Formula} problem (TQBF) to demonstrate this property.

\subsection{Problem Statement}

TQBF is the canonical PSPACE-complete problem. It consists of all fully quantified Boolean formulas that evaluate to true over the Boolean domain. Formally, the language is defined as:
\begin{equation}
\mathrm{TQBF} := \left\{ \varphi = Q_1 x_1 \cdots Q_n x_n \; \psi(x_1, \ldots, x_n) \;\middle|\;
\begin{array}{l}
Q_i \in \{ \forall, \exists \},\ \psi \text{ is a Boolean formula}\\
\text{and } \varphi \text{ evaluates to true.}
\end{array}
\right\}.
\end{equation}
We assume that formulas are written in \emph{prenex normal form}, with all quantifiers preceding the propositional formula, and that $\psi$ is encoded either in conjunctive normal form or as a Boolean circuit. The input size of a formula is the total number of bits required to encode the quantifiers and $\psi$. As an example, the formula
$\phi = \forall x_1 \exists x_2 \forall x_3 \;\left[(x_1 \lor \neg x_2) \land (x_2 \lor x_3)\right]$
is in $\mathrm{TQBF}$.

TQBF is PSPACE-complete as TQBF $\in$ PSPACE and every language in PSPACE reduces to TQBF under polynomial-time reductions. As such, TQBF serves as the standard complete problem for reasoning about the computational power of PSPACE.

\begin{fact}
    TQBF is a $\pspace$-complete problem, cf.~\cite{stockmeyer1973word,chandra1981alternation}.
\end{fact}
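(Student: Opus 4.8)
The plan is to establish the two standard ingredients of $\pspace$-completeness: (i) $\mathrm{TQBF}\in\pspace$, and (ii) every language in $\pspace$ reduces to $\mathrm{TQBF}$ under polynomial-time many-one reductions. Part (i) is the easy direction. Given a fully quantified formula $\varphi=Q_1x_1\cdots Q_nx_n\,\psi(x_1,\dots,x_n)$, I would evaluate it by a recursive procedure that strips the outermost quantifier: for $\exists x_1$ return the OR of the two recursive evaluations obtained by substituting $x_1\in\{0,1\}$, and for $\forall x_1$ return the AND. The recursion tree has depth $n$, and at each level the only data that must be retained is the current partial assignment plus a pointer into $\psi$; evaluating $\psi$ on a complete assignment takes polynomial time and hence polynomial space. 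Reusing the same work tape across recursive calls gives total space $\poly{n}$, so $\mathrm{TQBF}\in\pspace$.

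For part (ii), fix $L\in\pspace$ decided by a Turing machine $\tm$ running in space $s(n)=\poly{n}$ on inputs of length $n$; without loss of generality $\tm$ halts on all inputs and has a unique accepting configuration $c_{\mathrm{acc}}$. Encode configurations as bit strings of length $O(s(n))$, so the configuration graph has $2^{O(s(n))}$ vertices, and $x\in L$ iff there is a directed path from the start configuration $c_{\mathrm{start}}(x)$ to $c_{\mathrm{acc}}$ of length at most $2^{m}$ with $m=O(s(n))$. I would build, by induction on $t$, quantified formulas $\Phi_t(c,c')$ asserting ``there is a path from $c$ to $c'$ of length at most $2^{t}$'', and output $\Phi_m(c_{\mathrm{start}}(x),c_{\mathrm{acc}})$. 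The base case $\Phi_0(c,c')$ states $c=c'$ or $c\vdash c'$ in one step; since $\tm$'s transition relation is local, this is a Boolean formula of size $\poly{n}$ in the bits of $c,c'$.

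The crux — and the step I expect to be the main obstacle to control — is the inductive step without exponential blowup. The naive identity $\Phi_{t+1}(c,c')=\exists c''\,[\Phi_t(c,c'')\wedge\Phi_t(c'',c')]$ duplicates $\Phi_t$ and yields size $2^{m}\cdot\poly{n}$. Instead I would use the quantifier-compression trick
\[
\Phi_{t+1}(c,c')=\exists c''\,\forall (d,d')\,\Big[\big((d,d')=(c,c'')\vee(d,d')=(c'',c')\big)\Rightarrow \Phi_t(d,d')\Big],
\]
which contains a single occurrence of $\Phi_t$ plus $\poly{n}$ new symbols and quantifiers, so $|\Phi_t|$ grows only linearly in $t$ and $|\Phi_m|=\poly{n}$. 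The formula $\Phi_m$ is not yet in prenex form, but pulling all quantifiers to the front (renaming the $O(m\cdot s(n))=\poly{n}$ bound variables to be distinct, and using that $A\Rightarrow Qx\,B$ is equivalent to $Qx\,(A\Rightarrow B)$ when $x\notin\mathrm{free}(A)$) reorders only polynomially many quantifier blocks and leaves a polynomial-size quantifier-free matrix $\psi$, which can be kept as a Boolean circuit or converted to CNF via a standard Tseitin transformation; the entire construction is computable in time $\poly{n}$. Together with part (i), this yields that $\mathrm{TQBF}$ is $\pspace$-complete, matching the cited classical references.
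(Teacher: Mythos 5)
Your argument is correct and is the standard textbook proof of the Stockmeyer--Meyer theorem: membership in $\pspace$ via a depth-$n$ recursion that reuses its work tape, and $\pspace$-hardness via Savitch-style reachability doubling with the single-occurrence quantifier-compression trick $\exists c''\,\forall(d,d')\,[\,\cdots \Rightarrow \Phi_t(d,d')\,]$ to keep $|\Phi_t|$ polynomial, followed by a prenexing pass. The paper does not prove this Fact itself but simply cites the classical references~\cite{stockmeyer1973word,chandra1981alternation}, which establish it by exactly this route, so your proof matches the intended (cited) argument.
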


\subsection{Power of TQBF}

\begin{lemma}[TQBF oracle can invert one-way function]
\label{lem:tqbf-owf-inversion}
Let $f: \{0,1\}^n \to \{0,1\}^m$ be any polynomial-time computable function. There exists a polynomial-time algorithm that, given oracle access to a correct $\mathsf{TQBF}$ solver, can invert $f$ on any input $y \in \{0,1\}^m$. That is, the algorithm either finds $x \in \{0,1\}^n$ such that $f(x) = y$, or correctly determines that no such $x$ exists.
\end{lemma}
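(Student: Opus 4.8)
\textbf{Proof proposal for Lemma~\ref{lem:tqbf-owf-inversion}.}

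The plan is to encode the inversion problem for $f$ as a (sequence of) TQBF instances, using the TQBF oracle first as a decision procedure to determine whether a preimage exists, and then as a subroutine in a bit-by-bit search to actually extract one. First I would recall that since $f$ is polynomial-time computable, there is a polynomial-size Boolean circuit (or, after standard Tseitin-style transformation, a polynomial-size Boolean formula with auxiliary variables) $\mathrm{Circ}_f$ that computes $f$; one can construct $\mathrm{Circ}_f$ in polynomial time given (a description of) the algorithm for $f$. From this, for a fixed target $y \in \{0,1\}^m$, I build a quantifier-free Boolean predicate $\psi(x_1,\dots,x_n,y)$ that evaluates to true if and only if $\mathrm{Circ}_f(x_1,\dots,x_n) = y$; here $y$ is hardwired as constants, so $\psi$ depends only on the $x_i$ and the auxiliary Tseitin variables, which are existentially quantified as well.

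The first query is the closed formula $\Phi_{f,y} := \exists x_1 \cdots \exists x_n \, \exists (\text{aux}) \; \psi(x_1,\dots,x_n,y)$, which lies in $\mathrm{TQBF}$ exactly when a preimage of $y$ exists. If the oracle returns $\rej$, the algorithm reports that no preimage exists and halts. If the oracle returns $\acc$, I extract a witness by the standard self-reduction: for $i = 1,\dots,n$ in order, having already fixed $a_1,\dots,a_{i-1} \in \{0,1\}$, I form the formula $\Phi_i^0 := \exists x_{i+1}\cdots \exists x_n\,\exists(\text{aux})\; \psi(a_1,\dots,a_{i-1},0,x_{i+1},\dots,x_n,y)$ (substituting the already-determined bits and the trial bit $0$ as constants), query the oracle, set $a_i := 0$ if it returns $\acc$ and $a_i := 1$ otherwise. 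After $n+1$ oracle queries in total, I output $(a_1,\dots,a_n)$. The correctness of the extracted string follows by a short induction: the invariant is that after stage $i$, the formula $\exists x_{i+1}\cdots\exists x_n\,\exists(\text{aux})\;\psi(a_1,\dots,a_i,x_{i+1},\dots,x_n,y)$ is true; the base case is the initial $\acc$ answer, and at each stage either the $0$-branch is true (and we take it) or, since the parent is true, the $1$-branch must be true. At $i = n$ the invariant says $\psi(a_1,\dots,a_n,y)$ (with the auxiliary variables existentially closed, hence consistently fillable) is true, i.e., $f(a_1,\dots,a_n) = y$. Each formula has size $\mathrm{poly}(n,m)$ and is constructed in polynomial time, and the oracle is invoked $n+1$ times, so the whole procedure runs in polynomial time.

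The only genuinely delicate point is the reduction from a Boolean \emph{circuit} (the natural object when simulating $f$) to a Boolean \emph{formula in prenex form}, as required by the stated encoding of TQBF. The standard fix is the Tseitin transformation: introduce one fresh variable $g_v$ per gate $v$ of $\mathrm{Circ}_f$, add a clause group enforcing $g_v \leftrightarrow (\text{gate relation on its inputs})$, conjoin these with the constraint that the output wires equal the bits of $y$, and existentially quantify all the $g_v$. This produces a polynomial-size CNF whose satisfying assignments (projected to the $x_i$) are exactly the preimages of $y$, and prepending $\exists$ over all $x_i$ and all $g_v$ yields a legitimate prenex TQBF instance; all the $\Phi_i^0$ above are obtained by further substituting constants into this template. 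One must check that partial substitution of constants into the $x_i$ and the auxiliary variables keeps the formula in the allowed syntactic class and does not change its truth value relative to the intended semantics, which is routine. I expect nothing else in the argument to present an obstacle; the remaining steps are elementary, and the lemma is essentially the classical self-reducibility of $\mathrm{TQBF}$ applied to a circuit-satisfiability encoding of function inversion.
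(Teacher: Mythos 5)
Your proposal is correct and follows essentially the same route as the paper: encode preimage existence as an existentially quantified formula derived from the circuit for $f$, issue one decision query, then do the standard bit-by-bit self-reduction in $n$ further queries. The Tseitin-style circuit-to-CNF step you flag as delicate is unnecessary here, since the paper's convention for $\mathrm{TQBF}$ explicitly permits $\psi$ to be given as a Boolean circuit.
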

\begin{proof}
Given a circuit $C$ computing $f$ and target $y$, we construct the TQBF formula:
$\Phi_{C,y} := \exists x_1, \ldots, x_n \; \psi(x_1, \ldots, x_n, y)$
where $\psi(x_1, \ldots, x_n, y)$ is a Boolean formula that evaluates to true if and only if $C(x_1, \ldots, x_n) = y$. This formula can be constructed in polynomial time by simulating the circuit $C$. First, we query the TQBF oracle on $\Phi_{C,y}$. If the oracle returns $\mathsf{False}$, then no preimage exists, and we output $\mathsf{Fail}$. If the oracle returns $\mathsf{True}$, we extract a witness using binary search: for each bit position $i = 1, \ldots, n$, we construct the formula: $\Phi_i^0 = \exists x_{i+1}, \ldots, x_n \; \psi(a_1, \ldots, a_{i-1}, 0, x_{i+1}, \ldots, x_n)$ where $a_1, \ldots, a_{i-1}$ are the bits determined in previous iterations. We query the oracle on $\Phi_i^0$. If it returns $\mathsf{True}$, we set $a_i = 0$; otherwise, we set $a_i = 1$. After $n$ such queries, we obtain $(a_1, \ldots, a_n)$, which is guaranteed to be a valid preimage since the original formula was satisfiable and the oracle is correct. The total number of oracle queries is $n + 1$, and each formula has polynomial size, so the algorithm runs in polynomial time.
\end{proof}

The algorithm for establishing Lemma~\ref{lem:tqbf-owf-inversion} is given in the following.

\begin{algorithm}
\caption{Function Inversion with TQBF Oracle}
\label{alg:OWF-inversion}
\KwIn{ Circuit $C$ computing $f$, target $y$}
\KwOut{ $x$ such that $f(x) = y$, or $\mathsf{Fail}$}
Construct TQBF formula $\Phi_{C,y} = \exists x_1, \ldots, x_n \; \psi(x_1, \ldots, x_n, y)$,
where $\psi(x_1, \ldots, x_n, y) \equiv [C(x_1, \ldots, x_n) = y]$\;
Query oracle: $b \leftarrow \mathcal{O}(\Phi_{C,y})$\;
\uIf{$b = \mathsf{False}$}
    {\Return $\mathsf{Fail}$}
\For{$i = 1$ \KwTo $n$}{
     Construct $\Phi_i^0 = \exists x_{i+1}, \ldots, x_n \; \psi(a_1, \ldots, a_{i-1}, 0, x_{i+1}, \ldots, x_n)$\;
    Query oracle: $b_i \leftarrow \mathcal{O}(\Phi_i^0)$\;
    \uIf{$b_i = \mathsf{True}$}{
         $a_i \leftarrow 0$}
    \uElse{
         $a_i \leftarrow 1$}
}
\Return $(a_1, \ldots, a_n)$
\end{algorithm}

\section{Verifying TQBF Oracles via One-Way Function}
\label{ap:verify-tqbf-via-oneway}

We consider the problem of distinguishing between a true $\mathsf{TQBF}$ oracle and a pseudo $\mathsf{TQBF}$ oracle constructed from a uniform family of polynomial-size quantum circuit claiming to solve $\mathsf{TQBF}$.

\begin{theorem}[TQBF Oracle Verification]
\label{thm:tqbf-verification}
Assume quantum-secure one-way functions exist. There exists a polynomial-time classical algorithm $\mathcal{V}$ (the verifier) that, given black-box access to a purported $\mathsf{TQBF}$ oracle $\mathcal{O}$, outputs either $\mathsf{True}$ or $\mathsf{Pseudo}$ such that:
\begin{enumerate}
\item If $\mathcal{O}$ correctly solves all $\mathsf{TQBF}$ instances with probability at least $2/3$, then the verifier $\mathcal{V}^{\mathcal{O}}$ outputs $\mathsf{True}$ with probability at least $2/3$;
\item If $\mathcal{O}$ is implemented by polynomial-size quantum circuits, then for sufficiently large system sizes, the verifier $\mathcal{V}^{\mathcal{O}}$ outputs $\mathsf{Pseudo}$ with probability at least $2/3$.
\end{enumerate}
\end{theorem}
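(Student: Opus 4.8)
The plan is to build the verifier $\mathcal{V}$ around the one-way function inversion subroutine from Lemma~\ref{lem:tqbf-owf-inversion} (Algorithm~\ref{alg:OWF-inversion}), using the existence of quantum-secure one-way functions as the source of asymmetry between a genuine $\mathsf{TQBF}$ solver and any polynomial-size quantum circuit. First I would fix a quantum-secure one-way function $f = \{f_n\}$ with $f_n : \{0,1\}^n \to \{0,1\}^m$. On input $1^N$ (the security parameter), the verifier samples $x \leftarrow \{0,1\}^n$ uniformly for an appropriate $n = n(N)$, computes $y = f_n(x)$ in polynomial time, and then runs Algorithm~\ref{alg:OWF-inversion} using the purported oracle $\mathcal{O}$ in place of the ideal $\mathsf{TQBF}$ oracle. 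This requires constructing the formulas $\Phi_{C,y}$ and $\Phi_i^0$, where $C$ is the (fixed, known) polynomial-size circuit computing $f_n$; this is a standard Cook--Levin-style encoding and costs only polynomial time. The algorithm makes $n+1 = \poly{N}$ queries to $\mathcal{O}$ and returns a candidate $x' \in \{0,1\}^n$ (or $\mathsf{Fail}$). The verifier then checks whether $f_n(x') = y$: if so, output $\mathsf{True}$; otherwise output $\mathsf{Pseudo}$. To boost confidence I would repeat this with $O(1)$ independent samples of $x$ and take a majority vote, so that the error probabilities below can be pushed to any constant.

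The completeness direction (item 1) is the easier step. If $\mathcal{O}$ answers every $\mathsf{TQBF}$ instance correctly with probability at least $2/3$, then by a union bound over the $n+1$ queries, with probability at least $1 - (n+1)/3 \cdot (\text{per-query error})$ all queries are answered correctly; since this is not good enough directly, I would instead assume $\mathcal{O}$ is first amplified by $O(\log n)$ repetitions per query (the verifier can do this itself, querying $\mathcal{O}$ multiple times on the same input and taking a majority), driving the per-query error to $\ll 1/(3n)$. Then all $n+1$ queries succeed with probability $\geq 5/6$, Algorithm~\ref{alg:OWF-inversion} correctly extracts a preimage $x'$ with $f_n(x') = y$ (which exists since $y = f_n(x)$), and the verifier outputs $\mathsf{True}$. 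Over $O(1)$ repetitions and a majority vote, this gives success probability $\geq 2/3$.

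The soundness direction (item 2) is where the cryptographic assumption bites, and is the main obstacle to state carefully. Suppose for contradiction that $\mathcal{O}$ is implemented by a uniform family of polynomial-size quantum circuits, yet the verifier outputs $\mathsf{True}$ with probability $\geq 1/3$ on infinitely many input lengths. The key observation is that the verifier's entire interaction --- sampling $x$, computing $y$, building the formulas, running Algorithm~\ref{alg:OWF-inversion} with oracle calls to the circuit $\mathcal{O}$, and checking $f_n(x') = y$ --- is itself a single polynomial-size quantum algorithm $\mathcal{A}$ that takes $y = f_n(x)$ as input and, whenever the verifier would output $\mathsf{True}$, produces an $x'$ with $f_n(x') = y = f_n(x)$. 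Thus $\Pr_{x}[f_n(\mathcal{A}(f_n(x))) = f_n(x)] \geq 1/3$ (or any non-negligible quantity after accounting for the majority vote) for infinitely many $n$, contradicting Definition~\ref{def:quantum-owf} of a quantum-secure one-way function. Hence for sufficiently large system sizes the verifier must output $\mathsf{Pseudo}$ with probability $\geq 2/3$. The one subtlety to handle is the size bookkeeping: the circuit for $\mathcal{A}$ has size $\poly{n} \cdot \text{size}(\mathcal{O}) + \poly{n} = \poly{n}$, so it genuinely falls within the class against which the one-way function is secure; and the amplification repetitions in the completeness part must also be polynomially bounded so they do not blow up $\mathcal{A}$. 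I would also note that under a subexponential-hardness assumption for $f$, the same argument rules out subexponential-size $\mathcal{O}$, matching the strengthening mentioned in the discussion. Combining the two directions, possibly after relabeling $\mathsf{True}/\mathsf{Pseudo}$ as $\acc/\rej$, yields the theorem and the informal Lemma~\ref{main:thm:tqbf-verification}.
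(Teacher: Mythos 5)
Your proposal is correct and follows essentially the same approach as the paper: use Algorithm~\ref{alg:OWF-inversion} as the core subroutine, amplify the oracle's per-query error so all $n+1$ queries succeed simultaneously (paper uses $k=2n$ amplification; you use $O(\log n)$ repetitions with a larger constant), check whether the returned $x'$ satisfies $f(x')=y$, repeat a small number of times, and in the soundness direction argue that a verifier that accepted with non-negligible probability would yield a polynomial-size quantum inverter for $f$, contradicting Definition~\ref{def:quantum-owf}. The only differences from the paper are cosmetic parameter choices ($T=100\log n$ test rounds with a $2T/3$ threshold in the paper versus $O(1)$ rounds with majority vote for you), and both achieve the stated $2/3$ bounds.
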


\subsection{Verifier Algorithm}

The verifier uses Algorithm~\ref{alg:OWF-inversion} for function inversion with TQBF oracle as a subroutine to test whether the oracle can be used to consistently invert quantum-secure one-way functions.

\begin{algorithm}
\caption{TQBF Oracle Verifier}
\label{alg:verifier}
\KwIn{ System size parameter $n$, oracle $\mathcal{O}$}
\KwOut{ $\mathsf{True}$ or $\mathsf{Pseudo}$}
Let $f$ be a quantum-secure one-way function\;
Set $T = 100 \log n$ \quad \textit{// Number of test instances}\;
Initialize success counter $S = 0$\;
\For{$i = 1$ to $T$}{
    Choose input size $n' \leftarrow \{1, 2, \ldots, n\}$ uniformly at random\;
    Sample $x \leftarrow \{0,1\}^{n'}$ uniformly at random\;
    Compute $y \leftarrow f(x)$\;
    Run $x' \leftarrow \text{Algorithm}~\ref{alg:OWF-inversion}(C_f, y)$ where $C_f$ computes $f$\;
    \uIf{$x' \neq \mathsf{Fail}$ and $f(x') = y$}
        {$S \leftarrow S + 1$}
}
\uIf{$S \geq 2T/3$}
    {\Return $\mathsf{True}$}
\uElse
    {\Return$\mathsf{Pseudo}$}
\end{algorithm}

\subsection{Supporting Lemmas}

\begin{lemma}[Amplified $\mathsf{TQBF}$ oracle]
\label{lem:amplified-oracle}
Given a probabilistic oracle $\mathcal{O}$ that correctly solves $\mathsf{TQBF}$ instances with probability at least $2/3$, there exists a polynomial-time procedure that creates an amplified oracle $\mathcal{O}'$ which correctly solves any $\mathsf{TQBF}$ instance with probability at least $1 - 2^{-k}$ for any desired parameter $k$.
\end{lemma}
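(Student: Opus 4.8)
The plan is to prove Lemma~\ref{lem:amplified-oracle} by the standard majority-vote amplification, while being careful about two subtleties: the oracle $\mathcal{O}$ is promised only to be correct ``with probability at least $2/3$'' on \emph{each} instance, and we want the amplified oracle $\mathcal{O}'$ to succeed on \emph{every} instance with probability $1 - 2^{-k}$. First I would fix an arbitrary $\mathsf{TQBF}$ instance $\varphi$ and let $p = p(\varphi) \geq 2/3$ denote the probability that a single (fresh, independent) call $\mathcal{O}(\varphi)$ returns the correct truth value. The construction of $\mathcal{O}'$ is: on input $\varphi$, invoke $\mathcal{O}(\varphi)$ independently $N$ times and output the majority answer, where $N = N(k)$ is chosen below. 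Since $\mathsf{TQBF}$ is a decision problem with binary output, the majority is well-defined (break ties arbitrarily).

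The key step is the Chernoff/Hoeffding bound. Let $Z_1, \ldots, Z_N$ be i.i.d.\ indicators of the events ``the $j$-th call is correct,'' so $\mathbb{E}[Z_j] = p \geq 2/3$. The majority vote is wrong only if $\sum_j Z_j \leq N/2$, i.e.\ if the empirical mean deviates below $p$ by at least $p - 1/2 \geq 1/6$. By Hoeffding's inequality,
\begin{equation}
\Pr\!\Big[\textstyle\sum_{j=1}^N Z_j \leq N/2\Big] \;\leq\; \exp\!\big(-2N(p-1/2)^2\big) \;\leq\; \exp\!\big(-N/18\big).
\end{equation}
Choosing $N = \lceil 18\, k \ln 2\rceil = \Theta(k)$ makes this at most $2^{-k}$. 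Hence for every instance $\varphi$, the amplified oracle $\mathcal{O}'$ returns the correct answer with probability at least $1 - 2^{-k}$. Since each call to $\mathcal{O}$ is treated as unit cost (black-box) and $N = \poly{k}$, and since in our application $k = \poly{n}$, the procedure defining $\mathcal{O}'$ runs in time polynomial in the instance size and in $k$; constructing each of the $N$ queries is trivial since they are all just copies of $\varphi$. This establishes the lemma.

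I expect the main obstacle to be purely a matter of careful bookkeeping rather than a genuine mathematical difficulty: one must make sure the amplification is phrased so that the $2/3$ guarantee is \emph{per-instance} (otherwise majority vote over independent repetitions would not boost the success probability at all), and one must track that when $\mathcal{O}'$ is subsequently used as a subroutine inside Algorithm~\ref{alg:OWF-inversion} — which makes $n+1$ queries — a union bound over those queries with, say, $k = \Theta(\log n) + O(1)$ keeps the total failure probability below any desired constant. That union-bound step properly belongs to the proof of Theorem~\ref{thm:tqbf-verification} rather than to the lemma itself, so here I would only remark that the lemma is stated with a free parameter $k$ precisely so that the caller can choose it large enough to absorb such union bounds.
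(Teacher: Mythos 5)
Your proof is correct and takes essentially the same approach as the paper's (majority vote over $\Theta(k)$ independent repetitions, then a Chernoff/Hoeffding bound). In fact your choice $N = \lceil 18k\ln 2\rceil$ with Hoeffding gives the stated $2^{-k}$ bound cleanly, whereas the paper's version uses only $6k$ repetitions, obtains $\exp(-k/3) \approx 2^{-0.48k}$, and then incorrectly asserts $2^{-k/5}\le 2^{-k}$ for $k\ge 5$ (the inequality runs the wrong way), so your constant-tracking is actually the more careful of the two.
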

\begin{proof}
For any $\mathsf{TQBF}$ query $\phi$, the amplified oracle $\mathcal{O}'(\phi)$ queries $\mathcal{O}$ on $\phi$ exactly $6k$ times, obtaining answers $b_1, b_2, \ldots, b_{6k}$, then outputs the majority value among these answers. Since each individual query to $\mathcal{O}$ succeeds with probability at least $2/3$, the expected number of correct answers is at least $\mathbb{E}[\text{correct}] = 6k \cdot (2/3) = 4k$. Let $X$ be the number of correct answers. By Chernoff bound, $\Pr[X < 3k] = \Pr[X < \mathbb{E}[X] - k] \leq \exp(-2k^2/(6k)) = \exp(-k/3) \leq 2^{-k/5}$. For $k \geq 5$, we have $2^{-k/5} \leq 2^{-k}$. Since the majority of $6k$ answers is correct when at least $3k$ answers are correct, we have $\Pr[\mathcal{O}'(\phi) \text{ is correct}] \geq 1 - 2^{-k}$.
\end{proof}

\begin{lemma}[Function inversion with true oracle]
\label{lem:inversion-true-oracle}
Let $f$ be any polynomial-time computable function and $\mathcal{O}$ be a probabilistic oracle that correctly solves $\mathsf{TQBF}$ instances with probability at least $2/3$. Then there exists a polynomial-time algorithm that uses $\mathcal{O}$ to invert $f$ with probability at least $1 - \mathrm{negl}(n)$.
\end{lemma}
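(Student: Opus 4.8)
The plan is to combine Algorithm~\ref{alg:OWF-inversion} with the oracle-amplification procedure of Lemma~\ref{lem:amplified-oracle}, choosing the amplification parameter large enough that the total error accumulated over all oracle calls is negligible. Concretely, I would fix $k = n$ (any super-logarithmic choice suffices) and apply Lemma~\ref{lem:amplified-oracle} to build from $\mathcal{O}$ an amplified oracle $\mathcal{O}'$ that answers each individual $\mathsf{TQBF}$ query correctly except with probability at most $2^{-k}$, at the cost of $6k = O(n)$ calls to $\mathcal{O}$ per query. Then run Algorithm~\ref{alg:OWF-inversion} with oracle $\mathcal{O}'$ on the circuit $C_f$ computing $f$ and the target $y$.

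The key step is a union bound over the oracle calls of this run. Algorithm~\ref{alg:OWF-inversion} issues at most $n+1$ queries, and these are \emph{adaptive}: the formula $\Phi_i^0$ depends on the bits $a_1,\dots,a_{i-1}$ fixed by the earlier answers. Let $G$ be the event that $\mathcal{O}'$ answers correctly on every query actually issued during the execution. Since amplification guarantees per-query correctness at least $1 - 2^{-k}$ for \emph{any} fixed formula — in particular for the formula determined by the history of previous (correct) answers, using fresh randomness on that call — conditioning on the history and then union-bounding over the at most $n+1$ calls gives $\Pr[\neg G] \le (n+1)\,2^{-k} = \negl{n}$. On the event $G$, the execution of Algorithm~\ref{alg:OWF-inversion} with $\mathcal{O}'$ is identical to an execution with a perfect $\mathsf{TQBF}$ oracle, so by the analysis of Lemma~\ref{lem:tqbf-owf-inversion} it outputs a genuine preimage $x'$ of $y$ whenever one exists (which, in the intended setting where $y = f(x)$ for some $x$, always holds) and correctly outputs $\mathsf{Fail}$ otherwise. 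Since $f$ is polynomial-time computable, I would also verify $f(x') = y$ on the returned candidate and replace it by $\mathsf{Fail}$ if the check fails; this guarantees no spurious preimage is ever reported and costs only one evaluation of $f$.

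For the running time, $\mathcal{O}'$ makes $O(n)$ calls to $\mathcal{O}$ per query, Algorithm~\ref{alg:OWF-inversion} makes $n+1$ queries, each on a polynomial-size formula constructed in polynomial time by simulating $C_f$, and the final verification is one evaluation of $f$; altogether this is a polynomial-time algorithm, as claimed. The only subtlety worth flagging is the adaptivity in the union bound above — one must condition on the answer history so that the next queried formula is fixed before invoking the per-query error bound — together with the implicit modeling assumption, used throughout, that $\mathcal{O}$ is a probabilistic oracle whose internal randomness is refreshed on each call, so that majority voting in Lemma~\ref{lem:amplified-oracle} really does boost the success probability. Neither is a genuine obstacle: the substantive content already resides in Lemmas~\ref{lem:tqbf-owf-inversion} and~\ref{lem:amplified-oracle}, and this lemma is essentially their composition with a careful accounting of error probabilities.
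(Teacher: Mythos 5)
Your proposal is correct and takes essentially the same approach as the paper: amplify $\mathcal{O}$ to $\mathcal{O}'$ via Lemma~\ref{lem:amplified-oracle}, run Algorithm~\ref{alg:OWF-inversion} with $\mathcal{O}'$, and union-bound the failure probability over the at most $n+1$ queries. The paper fixes $k=2n$ where you take $k=n$ (either works), and your two added points of care — conditioning on the answer history to justify the union bound over adaptive queries, and the final $f(x')=y$ sanity check — are sound refinements that the paper leaves implicit.
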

\begin{proof}
We modify Algorithm~\ref{alg:OWF-inversion} to use the amplified oracle $\mathcal{O}'$ from Lemma~\ref{lem:amplified-oracle} with parameter $k = 2n$. This ensures each individual $\mathsf{TQBF}$ query succeeds with probability at least $1 - 2^{-2n}$. Algorithm~\ref{alg:OWF-inversion} makes at most $n+1$ queries to the oracle: one initial query to check if a preimage exists, and at most $n$ queries for witness extraction. By union bound, the probability that any of these queries fails is at most $(n+1) \cdot 2^{-2n} \leq 2^{n} \cdot 2^{-2n} = 2^{-n}$, which is negligible in $n$. Therefore, the algorithm successfully inverts $f$ for any given image $y = f(x)$ with probability at least $1 - 2^{-n} \geq 1 - \mathrm{negl}(n)$.
\end{proof}

\begin{lemma}[Function inversion with quantum circuit oracle]
\label{lem:quantum-circuit-limitation}
If $\mathcal{O}$ is implemented by polynomial-size quantum circuits and $f$ is a quantum-secure one-way function, then for sufficiently large $n$, Algorithm~\ref{alg:OWF-inversion} using oracle $\mathcal{O}$ succeeds in inverting $f$ on a random input with probability at most $\mathrm{negl}(n)$.
\end{lemma}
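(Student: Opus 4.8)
The plan is to prove Lemma~\ref{lem:quantum-circuit-limitation} by a standard cryptographic reduction, carried out in the contrapositive direction: I will show that if Algorithm~\ref{alg:OWF-inversion}, run with $\mathcal{O}$ as its oracle, inverted $f$ on a uniformly random input with non-negligible probability, then one could assemble an efficient quantum algorithm that breaks the one-wayness of $f$, contradicting Definition~\ref{def:quantum-owf}. The key point is that the \emph{correctness} of $\mathcal{O}$ plays no role whatsoever in the argument; the only property used is that $\mathcal{O}$ is implemented by a uniform family of $\poly{\cdot}$-size quantum circuits, so that substituting it into the (classical, polynomial-time) control flow of Algorithm~\ref{alg:OWF-inversion} yields, overall, a single efficient quantum algorithm.

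Concretely, first I would assume for contradiction that there is a polynomial $p$ such that, for infinitely many $n$, running Algorithm~\ref{alg:OWF-inversion} (with $\mathcal{O}$ as its oracle) on input $(C_f, y)$ with $y = f(x)$ and $x \leftarrow \{0,1\}^n$ uniform outputs some $x' \neq \mathsf{Fail}$ with $f(x') = y$ with probability at least $1/p(n)$. From this I would build the adversary $\CA = \{\CA_n\}$: on input $y$, the algorithm $\CA_n$ hard-codes the fixed polynomial-size circuit $C_f$ computing $f$; it simulates the classical bookkeeping of Algorithm~\ref{alg:OWF-inversion} (constructing the prenex formulas $\Phi_{C_f,y}$ and $\Phi_i^0$, performing the binary search, and assembling the final candidate $(a_1,\dots,a_n)$), which costs classical $\poly{n}$ time since each formula has size $\poly{n}$; and it implements each of the at most $n+1$ oracle calls by executing, on the relevant $\poly{n}$-size query formula, the corresponding circuit of the uniform polynomial-size family, measuring its output register, and feeding the resulting classical bit back into the control flow. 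Intermediate measurements between calls are harmless in the quantum-circuit model (and may be deferred if a unitary description is desired), and the total gate count is $\poly{n}$, so $\CA_n$ is a $\poly{n}$-time quantum algorithm.

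Then I would close the loop: by construction $\CA_n(f(x))$ returns exactly the output of Algorithm~\ref{alg:OWF-inversion}, so whenever the latter succeeds we have $f(\CA_n(f(x))) = f(x)$; hence $\Pr_{x \leftarrow \{0,1\}^n}[f(\CA_n(f(x))) = f(x)] \ge 1/p(n)$ for infinitely many $n$, contradicting Definition~\ref{def:quantum-owf}. Therefore, for all sufficiently large $n$ the success probability is at most $\negl{n}$, which is the claim. The only thing requiring genuine care — essentially the sole content beyond routine bookkeeping — is verifying that the composed procedure really stays within $\poly{n}$ quantum time and that the formulas Algorithm~\ref{alg:OWF-inversion} actually submits to the oracle are the honest, polynomially-sized encodings of $\Phi_{C_f,y}$ and the $\Phi_i^0$, so that the fixed uniform family of polynomial-size circuits is invoked only on inputs within its intended domain; both follow by inspection of Algorithm~\ref{alg:OWF-inversion}, since $C_f$ has polynomial size and the witness-extraction formulas $\Phi_i^0$ are no larger. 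Note that no amplification of $\mathcal{O}$ is needed here (in contrast to Lemma~\ref{lem:inversion-true-oracle}), because we are upper-bounding the success probability rather than certifying it.
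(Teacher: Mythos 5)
Your proposal is correct and matches the paper's own argument: both observe that composing Algorithm~\ref{alg:OWF-inversion} with $\mathcal{O}$ yields a uniform polynomial-size quantum algorithm attempting to invert $f$, then invoke Definition~\ref{def:quantum-owf} directly. You phrase it as a contrapositive reduction and spell out a few implementation points (intermediate measurements, formula sizes) that the paper leaves implicit, but the substance is identical.
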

\begin{proof}
When $\mathcal{O}$ is implemented by polynomial-size quantum circuits, the composition of Algorithm~\ref{alg:OWF-inversion} with $\mathcal{O}$ yields a uniform family of polynomial-size quantum circuits attempting to invert~$f$. The verifier samples $x \leftarrow \{0,1\}^n$ uniformly at random, computes $y = f(x)$, then runs Algorithm~\ref{alg:OWF-inversion} to find a preimage of $y$. By Definition~\ref{def:quantum-owf}, for any uniform family of polynomial-size quantum circuits $\{Q_n\}$, there exists $n_0$ such that for all $n \geq n_0$, $\Pr_{x \leftarrow \{0,1\}^n}[f(Q_n(f(x))) = f(x)] \leq \mathrm{negl}(n)$. Since our algorithm only succeeds if it finds a correct preimage of $f(x)$, and any correct preimage $x'$ satisfies $f(x') = f(x)$, the success probability is at most $\mathrm{negl}(n)$ for sufficiently large $n$.
\end{proof}

\subsection{Main Proof}
\begin{proof}[Proof of Theorem~\ref{thm:tqbf-verification}]
We analyze both cases of the theorem and the runtime.

\paragraph{Case 1: $\mathcal{O}$ correctly solves $\mathsf{TQBF}$ with probability $\geq 2/3$.} By Lemma~\ref{lem:inversion-true-oracle}, Algorithm~\ref{alg:OWF-inversion} succeeds in inverting $f$ on any given input with probability at least $1 - 2^{-n}$. In Algorithm~\ref{alg:verifier}, we perform $T = 100 \log n$ independent tests, where each test succeeds with probability at least $1 - 2^{-n}$. Let $X$ be the number of successful inversions. Then $\mathbb{E}[X] \geq T(1 - 2^{-n}) \geq T - T \cdot 2^{-n} \geq T - 1$ for sufficiently large $n$. By Chernoff bound, $\Pr[X < 2T/3] = \Pr[X < \mathbb{E}[X] - (T-1-2T/3)] = \Pr[X < \mathbb{E}[X] - T/3 + 1] \leq \Pr[X < \mathbb{E}[X] - T/4] \leq \exp(-2(T/4)^2/T) = \exp(-T/8) = \exp(-12.5 \log n) = n^{-12.5}$ for sufficiently large $n$. Therefore, $\mathcal{V}^{\mathcal{O}}$ outputs $\mathsf{True}$ with probability at least $1 - n^{-12.5} \geq 2/3$ for sufficiently large $n$.

\paragraph{Case 2: $\mathcal{O}$ is implemented by polynomial-size quantum circuits.} By Lemma~\ref{lem:quantum-circuit-limitation}, for sufficiently large $n$, Algorithm~\ref{alg:OWF-inversion} succeeds in inverting $f$ on a random input with probability at most $\mathrm{negl}(n)$. Let $X$ be the number of successful inversions in Algorithm~\ref{alg:verifier}. Then $\mathbb{E}[X] \leq T \cdot \mathrm{negl}(n)$. Since $\mathrm{negl}(n) = o(n^{-c})$ for any constant $c > 0$, and $T = 100 \log n$, for sufficiently large $n$ we have $\mathbb{E}[X] \leq T \cdot n^{-2} = 100 n^{-2} \log n \leq T/6$. By Chernoff bound, $\Pr[X \geq 2T/3] = \Pr[X \geq \mathbb{E}[X] + (2T/3 - \mathbb{E}[X])] \geq \Pr[X \geq \mathbb{E}[X] + T/2] \leq \exp(-2(T/2)^2/T) = \exp(-T/2) = \exp(-50 \log n) = n^{-50}$. Therefore, $\mathcal{V}^{\mathcal{O}}$ outputs $\mathsf{Pseudo}$ with probability at least $1 - n^{-50} \geq 2/3$ for sufficiently large $n$.

\paragraph{Runtime:} Algorithm~\ref{alg:verifier} makes $T = 100 \log n$ calls to Algorithm~\ref{alg:OWF-inversion}. Each call to the amplified oracle takes $O(n)$ repetitions, and each $\mathsf{TQBF}$ query takes polynomial time. Therefore, the total runtime is polynomial in $n$.
\end{proof}

\section{Verifier for random unitaries that conserve energy}
\label{ap:verifier-random-unitary}

The framework developed for verifying $\mathsf{TQBF}$ oracles can be extended to efficiently distinguish between truly random unitaries (Haar-random) that conserve energy and pseudorandom unitaries generated by polynomial-size quantum circuits. This extension relies on the computational power that access to Haar-random unitaries can provide.

\begin{theorem}[Universal distinguisher]
\label{thm:universal-distinguisher}
Assume quantum-secure one-way functions exist. There exists a family of one-dimensional, local, and translational-invariant Hamiltonian whose matrices elements belong to $\{0,1,10,1/2,1/4\}$, and a polynomial-time quantum algorithm $\mathcal{V}_{\text{unitary}}$ (the universal distinguisher) that, given black-box access to a unitary oracle $U$, outputs either $\mathsf{HaarRandom}$ or $\mathsf{Pseudorandom}$ such that:
\begin{enumerate}
\item If $U$ is a Haar-random unitary conserving the Hamiltonian, then the distinguisher $\mathcal{V}_{\text{unitary}}^{U}$ outputs $\mathsf{HaarRandom}$ with probability at least $2/3$;
\item If $U$ is generated by a uniform family of polynomial-size quantum circuits, then for sufficiently large system sizes, the distinguisher $\mathcal{V}_{\text{unitary}}^{U}$ outputs $\mathsf{Pseudorandom}$ with probability at least $2/3$.
\end{enumerate}
\end{theorem}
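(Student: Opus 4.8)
The plan is to build $\mathcal{V}_{\text{unitary}}$ by \emph{composing} the two black-box reductions already established in this appendix: on the one hand, Theorem~\ref{thm:ru-solve-pspace} shows that the energy-conserving Haar-random unitaries of a fixed family of one-dimensional, local, translationally invariant Hamiltonians with matrix elements in $\{0,1,10,1/2,1/4\}$ can be used, via the polynomial-time quantum algorithm of Algorithm~\ref{alg:pspace-solver}, to decide any $\pspace$-complete language; on the other hand, Theorem~\ref{thm:tqbf-verification} shows that there is an efficient verifier $\mathcal{V}$ (Algorithm~\ref{alg:verifier}) that, using a quantum-secure one-way function, distinguishes a correct $\mathsf{TQBF}$ solver from any solver implemented by polynomial-size quantum circuits. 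I would instantiate Theorem~\ref{thm:ru-solve-pspace} with the canonical $\pspace$-complete language $L=\mathsf{TQBF}$ and take $H$ to be the resulting Hamiltonian family, with chain length chosen polynomially in the security parameter $n$ so as to host all $\mathsf{TQBF}$ instances that $\mathcal{V}$ will probe; this is the Hamiltonian family asserted in the statement.

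The distinguisher then proceeds as follows. Given black-box access to $U$, define a purported $\mathsf{TQBF}$ oracle $\mathcal{O}_U$ by executing Algorithm~\ref{alg:pspace-solver} with all of its unitary queries answered by $U$; a single call to $\mathcal{O}_U$ costs $O(1)$ applications of $U$ together with $\poly{n}$ classical pre- and post-processing (preparing the product configuration state and reading out whether $q^a$ or $q^r$ appears). I would then run the verifier $\mathcal{V}$ of Theorem~\ref{thm:tqbf-verification} with oracle $\mathcal{O}_U$, and output $\mathsf{HaarRandom}$ exactly when $\mathcal{V}$ returns $\mathsf{True}$ and $\mathsf{Pseudorandom}$ when it returns $\mathsf{Pseudo}$. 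Because $\mathcal{V}$ issues only $\poly{n}$ oracle calls and each is simulated using $\poly{n}$ work on $U$, the composite $\mathcal{V}_{\text{unitary}}^{U}$ is a polynomial-time quantum algorithm with black-box access to $U$, as required.

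Correctness splits into the two promised cases. If $U$ is drawn from the energy-conserving Haar-random ensemble of $H$ — equivalently, by Fact~\ref{non-degenerated-pru}, a Haar-random unitary on each degenerate subspace — then Theorem~\ref{thm:ru-solve-pspace} guarantees that $\mathcal{O}_U$ answers every sufficiently large $\mathsf{TQBF}$ query correctly with probability at least $2/3$; invoking Case~1 of Theorem~\ref{thm:tqbf-verification} (which already internally amplifies the oracle) yields that $\mathcal{V}$, hence $\mathcal{V}_{\text{unitary}}^{U}$, outputs $\mathsf{HaarRandom}$ with probability at least $2/3$ for large $n$. If instead $U$ is implemented by a uniform family of polynomial-size quantum circuits, then $\mathcal{O}_U$ — being Algorithm~\ref{alg:pspace-solver} with $U$ plugged in for its oracle — is itself a uniform family of polynomial-size quantum circuits purporting to solve $\mathsf{TQBF}$, so Case~2 of Theorem~\ref{thm:tqbf-verification} (via Lemma~\ref{lem:quantum-circuit-limitation} and the one-way-function security of Definition~\ref{def:quantum-owf}) forces $\mathcal{V}$ to output $\mathsf{Pseudo}$, and thus $\mathcal{V}_{\text{unitary}}^{U}$ to output $\mathsf{Pseudorandom}$, with probability at least $2/3$ for sufficiently large $n$.

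The genuinely delicate point — rather than any of the above plumbing — is reconciling the two ``sufficiently large instance'' qualifiers. Theorem~\ref{thm:ru-solve-pspace} only certifies $\mathcal{O}_U$ on $\mathsf{TQBF}$ instances whose induced computation path exceeds the constant threshold $T_0$ of Lemma~\ref{lemma:pru-solve-pspace}, while $\mathcal{V}$ tests instances of all sizes $n'\le n$. I would close this gap by observing that $\mathcal{V}$ draws only $T=100\log n$ test instances, each of size below the threshold with probability $O(1/n)$, so by a union bound all $T$ tests exceed the threshold except with probability $O(\log n / n)=o(1)$; conditioned on this event the estimates in the proof of Theorem~\ref{thm:tqbf-verification} apply verbatim. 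Alternatively one can pad the underlying $\mathsf{TQBF}$ Turing machine so that its running time is always at least $T_0$ (absorbing the finitely many exceptional inputs), or simply hard-code their answers. Finally, I would note — exactly paralleling the cited results — that replacing ``polynomial-size'' by ``subexponential-size'' everywhere, and strengthening Definition~\ref{def:quantum-owf} to subexponential hardness, upgrades the second case to rule out subexponential-size circuit implementations of $U$.
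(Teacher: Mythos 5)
Your proposal matches the paper's proof essentially verbatim: you construct the purported $\mathsf{TQBF}$ oracle $\mathcal{O}_U$ by running Algorithm~\ref{alg:pspace-solver} with queries routed to $U$, feed it to the verifier of Theorem~\ref{thm:tqbf-verification}, and translate $\mathsf{True}/\mathsf{Pseudo}$ into $\mathsf{HaarRandom}/\mathsf{Pseudorandom}$, invoking Theorem~\ref{thm:ru-solve-pspace} for Case~1 and the fact that $\mathcal{O}_U$ is itself a polynomial-size quantum circuit for Case~2. Your extra paragraph reconciling the two ``sufficiently large'' thresholds is a legitimate and worthwhile clarification that the paper's own proof leaves implicit, but it does not change the structure of the argument.
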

\begin{proof}
The proof follows by combining Theorem~\ref{thm:ru-solve-pspace} with Theorem~\ref{thm:tqbf-verification}. We construct the universal distinguisher $\mathcal{V}_{\text{unitary}}$ as follows:
\begin{enumerate}
\item Given access to unitary oracle $U$, use the construction from Theorem~\ref{thm:ru-solve-pspace} to build a purported $\mathsf{TQBF}$ oracle $\mathcal{O}_U$.
\item Run the $\mathsf{TQBF}$ oracle verifier $\mathcal{V}$ from Algorithm~\ref{alg:verifier} on $\mathcal{O}_U$.
\item If $\mathcal{V}^{\mathcal{O}_U}$ outputs $\mathsf{True}$, return $\mathsf{HaarRandom}$.
\item If $\mathcal{V}^{\mathcal{O}_U}$ outputs $\mathsf{Pseudo}$, return $\mathsf{Pseudorandom}$.
\end{enumerate}

\paragraph{Case 1: $U$ is Haar-random.} By Theorem~\ref{thm:ru-solve-pspace}, the constructed oracle $\mathcal{O}_U$ correctly solves $\mathsf{TQBF}$ instances with probability at least $2/3$. By Theorem~\ref{thm:tqbf-verification}, the verifier $\mathcal{V}^{\mathcal{O}_U}$ outputs $\mathsf{True}$ with probability at least $2/3$. Therefore, $\mathcal{V}_{\text{unitary}}^{U}$ outputs $\mathsf{HaarRandom}$ with probability at least $2/3$.

\paragraph{Case 2: $U$ is generated by polynomial-size quantum circuits.} When $U$ is pseudorandom (i.e., generated by a polynomial-size quantum circuits), the construction in Theorem~\ref{thm:ru-solve-pspace} yields an oracle $\mathcal{O}_U$ that is also implementable by polynomial-size quantum circuits. By Theorem~\ref{thm:tqbf-verification}, for sufficiently large system sizes, the verifier $\mathcal{V}^{\mathcal{O}_U}$ outputs $\mathsf{Pseudo}$ with probability at least $2/3$. Therefore, $\mathcal{V}_{\text{unitary}}^{U}$ outputs $\mathsf{Pseudorandom}$ with probability at least $2/3$.

\paragraph{Runtime:} The construction of $\mathcal{O}_U$ from $U$ takes polynomial time by Lemma~\ref{thm:ru-solve-pspace}, and running the $\mathsf{TQBF}$ verifier takes polynomial time by Theorem~\ref{thm:tqbf-verification}. Therefore, the total runtime of $\mathcal{V}_{\text{unitary}}$ is polynomial in the system size.
\end{proof}

\section{Undecidability of the existence of energy-conserving PRUs}
\label{ap:undecidable}

In this section, we prove that determining if a given local Hamiltonian has energy-conserving PRU is an undecidable problem. The reason is to define uniform family of Hamiltonians, one need to specify a TM that generates the description of that Hamiltonian family. That is, the problem in fact takes TMs as inputs. As a result, we can construct a Hamiltonian family, who has or does not have energy-conserving PRU depending on the solution of halting problem for a given input.

\begin{lemma}
[Simulating two machines simutaneously]
    \label{lem:one-simulate-two}
    Let $\tm_1=\langle Q_1,\Gamma_1,\Delta_1\rangle$ and $\tm_2=\langle Q_2,\Gamma_2,\Delta_2\rangle$ be two Turing machines. When taking $x_1,x_2\in\{0,1\}^*$ as inputs, respectively, the maximum memory cost in the first $T$ steps is $S(T,x_1,x_2)$. Then there is a simulating Turing machine $\tm$ that can simulate the first $T$ steps of operations of both $\tm_1$ and $\tm_2$ simultaneously, with simulation time $O(T\cdot S(T,x_1,x_2))$ for any $x_1,x_2$ and $T$.
\end{lemma}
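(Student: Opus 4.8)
The goal is to build a single Turing machine $\tm$ that interleaves the executions of $\tm_1$ and $\tm_2$ while keeping the total running time quasi-linear in $T$ times the peak memory usage. The natural approach is a standard tape-multiplexing construction. I would have $\tm$ maintain, on a single worktape, a block-encoded copy of both of $\tm_1$'s and $\tm_2$'s tapes: one can allocate even-indexed cells to $\tm_1$ and odd-indexed cells to $\tm_2$ (or use two consecutive cells per logical cell, one per machine), together with a pair of marker symbols that record the current head positions of the two simulated machines and a component in the finite control that stores $(q^{(1)}, q^{(2)}) \in Q_1 \times Q_2$. The alphabet of $\tm$ is then (a bounded function of) $\Gamma_1 \times \Gamma_2$ augmented with head markers, and its state set is a bounded function of $Q_1 \times Q_2$ times a constant number of bookkeeping states, so $\tm$ is a legitimate Turing machine with finite $Q, \Gamma, \Delta$ independent of the inputs.

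First I would describe one ``macro-step'' of $\tm$: to advance $\tm_1$ by one step, $\tm$ scans from its current position to locate the $\tm_1$-head marker, reads the symbol underneath, consults its finite control to determine $\tm_1$'s transition, rewrites the symbol, moves the $\tm_1$-head marker one cell in the appropriate direction (in the interleaved encoding this is a move by one or two physical cells), updates the stored state $q^{(1)}$, and returns to a canonical resting position. Then it does the symmetric operation for $\tm_2$. Each such traversal costs $O(S)$ tape cells of travel, where $S = S(T, x_1, x_2)$ bounds the used portion of either simulated tape within the first $T$ steps; since we perform $T$ macro-steps and each involves a constant number of left-to-right and right-to-left sweeps over a region of length $O(S)$, the total running time is $O(T \cdot S)$, which is exactly the claimed bound. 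I would also note the periodic-boundary / fixed-memory subtleties only if the later applications require them — for the plain statement, the two-way infinite tape of Definition~\ref{turing-machine} suffices and the bound does not even need the $S$ factor to be tight, only an upper bound.

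The main thing to get right is the accounting: one must argue that the ``used region'' of $\tm$'s tape after $t$ macro-steps has length $O(S(t, x_1, x_2))$, so that the per-step sweep cost is genuinely $O(S)$ rather than $O(t)$ in the worst case. This follows because head markers in each simulated tape move by at most one logical cell per simulated step, so after $t$ steps each simulated head has travelled at most $t$ cells from its start, and hence the union of cells ever visited has size at most $2S(t,x_1,x_2)+O(1)$ in the interleaved encoding; the markers and control symbols add only a constant overhead. A second, minor, point is to confirm that the simulation is faithful step-by-step — i.e., that after the $k$-th macro-step the stored configuration of $\tm_i$ equals the genuine configuration of $\tm_i$ after $k$ of its steps — which is a routine induction on $k$ using correctness of the single macro-step. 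I do not anticipate a real obstacle here; the only care needed is to keep the constants and the definition of $S$ consistent, and to make sure $\tm$'s finite description does not secretly depend on $T$ or on the inputs.
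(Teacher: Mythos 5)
Your proof is correct and takes essentially the same route as the paper: both multiplex the two simulated tapes into one (the paper uses the product alphabet $Q_1\cup Q_2\cup(\Gamma_1\times\Gamma_2)$ with the simulated internal states stored in tape cells, you use interleaved cells with head markers and the states in finite control — a cosmetic difference), and both bound the cost as $T$ full sweeps over an active region of length $O(S(T,x_1,x_2))$. Your extra accounting step, showing the used region never exceeds $O(S)$, is a useful explicitness that the paper leaves implicit, but the argument is the same.
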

\begin{proof}
    $\tm$ can be constructed in the following sense. Define the new alphabet as $\Gamma=Q_1\cup Q_2\cup (\Gamma_1\times \Gamma_2)$. That is, we introduce two tapes to write the symbols of $\tm_1$ and $\tm_2$ simultaneously, and put the head internal state into the tape. A symbol $q\in Q_1$ at cell $i$ in the tape represents the head of $\tm_1$ is located at cell $i+1$ and has internal state $q$, and so as any $q\in Q_2$. Then to simulate one step of $\tm_1$ and $\tm_2$, the head of $\tm$ sweeps along the length $S(T,x_1,x_2)$ region of tape and update the symbol according to the transition rule of $\tm_1$ ($\tm_2$) when encountering any $q\in Q_1$ ($Q_2$). The time overhead of each step is at most $O(S(T,x_1,x_2))$. So the total running time to simulate until the $T$-th step is at most $O(T\cdot S(T,x_1,x_2))$
\end{proof}
\begin{lemma}
[PRU and halting problem]
    \label{lem:halting-tm}
    Let $\utm$ denotes the universal Turing machine. $\{H_n\}$ is the uniform family of Hamiltonians whose energy-conserving PRU does not exist.
    Then for any $x\in\{0,1\}^*$, there exists a Turing machine $\tm_x$ and a constant $\alpha>0$, such that for any $n\in\mathbb{N}^+$ as input,  if $\utm$ does not halt in time $\leq \alpha n$ upon input $x$, $\tm_x$ generates $H_n$, otherwise generates $0$. The overall running time of $\tm_x$ is $\poly{n}$. The whole process takes time at most $\poly{n}$.
\end{lemma}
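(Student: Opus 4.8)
The plan is to build the Turing machine $\tm_x$ by running two sub-machines in a dovetailed fashion and using the reversibility of the hard-Hamiltonian generator to clean up when the universal machine halts. First I would invoke Lemma~\ref{lem:one-simulate-two} to obtain a single machine $\tm$ that simultaneously simulates (i) the machine $\tm_1$ which, on input $n$, writes down the description of the hard Hamiltonian $H_n$ from Theorem~\ref{main:thm:hard-ham} (equivalently $H_{p(n)}$ from Appendix~\ref{ap:hardness-ham-construction}), and (ii) the universal machine $\utm$ on the fixed input $x$. The key design choice is to make $\tm_1$ a \emph{reversible} machine: the Feynman--Kitaev-style construction of the local terms of $H_n$ is a fixed, mechanical computation, so it can be carried out reversibly while keeping a computation-history tape, exactly as in the RTM constructions of Appendix~\ref{subsec:simulation-rtm}. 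I would also insert a clock/padding step at the start of $\tm_1$ so that generating $H_n$ takes at least $\alpha n$ steps for a suitable constant $\alpha$; this is what lets the condition ``$\utm$ halts in time $\le \alpha n$'' be checked concurrently with, and before, the completion of $\tm_1$.

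The control logic of $\tm_x$ is then: advance $\tm_1$ and $\utm$ one step at a time in lockstep (via the simulator $\tm$ of Lemma~\ref{lem:one-simulate-two}), and at each round check whether either sub-machine has halted. If $\tm_1$ halts first (which happens iff $\utm$ has not halted within $\alpha n$ steps, by the clock padding), then $\tm_x$ halts with output the description of $H_n$ written on its output tape. If instead $\utm$ halts first, then $\tm_x$ stops feeding steps to $\utm$ and instead runs $\tm_1$ \emph{in reverse}, using the reverse transition rules of Definition~\ref{def:inverse-transition-rule} and the computation history, until $\tm_1$ is back in its initial configuration; the output tape is thereby erased, and $\tm_x$ halts with output $0$ (encoding the trivial Hamiltonian $0\cdot I$, whose energy-conserving PRU is just the ordinary PRU of Definition~\ref{def:pru}, which exists by~\cite{ma2024construct}). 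I would then verify the three claimed properties: (a) \textbf{output correctness} --- if $\utm$ does not halt on $x$ in $\le\alpha n$ steps then $\tm_1$ finishes uninterrupted and the output is $H_n$, while if it does halt then the reverse pass erases everything and the output is $0$; (b) the Hamiltonian family produced is a legitimate uniform family (the description $\tm_x$ depends on $x$ but not on $n$, and takes $n$ in unary as input); (c) \textbf{runtime} --- generating $H_n$ takes $\poly{n}$ steps by the efficiency of the Feynman--Kitaev encoding, $\utm$ is run for at most $\alpha n = O(n)$ steps, the possible reverse pass costs no more than the forward pass, and the per-step simulation overhead from Lemma~\ref{lem:one-simulate-two} is $O(S(T,x_1,x_2)) = \poly{n}$ since the total memory used is polynomial; hence the overall running time is $\poly{n}$.

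I expect the main obstacle to be the \emph{reversibility and uncomputation bookkeeping} for $\tm_1$: one must ensure that the halting of $\utm$ can be detected and acted upon \emph{partway through} $\tm_1$'s forward run, and that at that point $\tm_1$'s configuration together with its history tape is a valid reversible-machine configuration from which the reverse rules return cleanly to the initial configuration with a blank output tape. This requires (i) interleaving the two simulations at the granularity of single reversible steps so no partial, irreversible write is ever left dangling, (ii) making sure the ``switch to reverse mode'' transition is itself reversible and uniquely triggered, and (iii) confirming that the history tape grows only polynomially (true since $\tm_1$ runs for $\poly{n}$ steps with $\poly{n}$ memory). A secondary subtlety is fixing $\alpha$ and the unary/binary encoding conventions so that ``$\utm$ does not halt in time $\le \alpha n$'' is exactly the event that $\tm_1$ completes first; this is a routine padding argument but needs to be stated carefully so that the asymptotics in Theorem~\ref{main:thm:undecidable-deciding-pru} (which feeds this lemma into a reduction from the halting problem) go through.
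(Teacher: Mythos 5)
Your proposal is correct and follows essentially the same route as the paper: simulate the (reversible) Hamiltonian-generator and $\utm$ in lockstep via Lemma~\ref{lem:one-simulate-two}, and if $\utm$ halts first, run the generator's reverse transition rules (Definition~\ref{def:inverse-transition-rule}) to uncompute to a blank tape and output $0$. You are somewhat more explicit than the paper about the clock/padding needed to pin the running time of the generator to exactly $\alpha n$, and about the history-tape bookkeeping for clean reversal, but these are fleshings-out of the same argument rather than a different one.
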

\begin{proof}
    Let $\tm_d$ be the Turing machine that generates $\{H_n\}$ upon any input size $n$. Since $H_n$ is a local Hamiltonian, the running time of $\tm_d$ is $\alpha n$ for some $\alpha>0$. We design $\tm_d$ to be reversible. Let $\tm_d^{\text{(rev)}}$ be the reverse Turing machine of $\tm_d$, and $\utm$ be the universal Turing machine. Then $\tm_x$ can be constructed as follows. For each $n$, simulate the operations of $\tm_d$ upon input $n$ and $\utm$ upon input $x$ simultaneously using Lemma.\ref{lem:one-simulate-two}. If $\utm$ halts before $\tm_d$ halts (which takes $\alpha n$ steps for $\tm_d$), switch to $\tm_d^{\text{(rev)}}$ to reverse all the operations of $\tm_d$.
\end{proof}

Lemma.\ref{lem:halting-tm} asserts that for any $x$, there is a uniform family of Hamiltonians (generated by $\tm_x$), such that whether it has energy-conserving PRU depends on if the universal Turing machine halts upon input $x$. As a direct consequence, we have 
\begin{theorem}[Undecidability of the existence of energy-conserving PRU]
    \label{thm:undecidable-deciding-pru}
    If we have an algorithm that takes any uniform family of local Hamiltonians as input (more precisely, we should take the Turing machine that generates this Hamiltonian family as input), we can solve the halting problem.
\end{theorem}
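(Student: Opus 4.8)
\textbf{Proof proposal for Theorem~\ref{thm:undecidable-deciding-pru}.}

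The plan is to reduce the halting problem to the problem of deciding whether a uniform family of local Hamiltonians admits energy-conserving PRUs. An algorithm deciding this problem must take as input a finite description of the family, which by our conventions is a Turing machine that, on input $n$, outputs the description of $H_n$. So I will build, for an arbitrary instance $x$ of the halting problem, such a generating Turing machine $\tm_x$, and argue that the family it produces has energy-conserving PRUs if and only if $\utm$ halts on $x$.

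First I would invoke Lemma~\ref{lem:halting-tm}: for every $x\in\{0,1\}^*$ there is a Turing machine $\tm_x$ running in time $\poly{n}$ on input $n$ whose output is $H_n$ (the hard Hamiltonian of Theorem~\ref{main:thm:hard-ham}, equivalently Theorem~\ref{thm:universal-distinguisher}) whenever $\utm$ does not halt within $\alpha n$ steps on $x$, and whose output is $0$ (the zero Hamiltonian on $n$ qubits) otherwise. The construction behind this lemma is the dovetailing of $\tm_d$ (which generates the hard Hamiltonian in linear time and is made reversible) against $\utm$ on $x$, using Lemma~\ref{lem:one-simulate-two} to simulate both simultaneously; if $\utm$ halts first, $\tm_x$ runs $\tm_d^{(\mathrm{rev)}}$ to erase the partial output and emit $0$ instead. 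Note that whether the cutoff threshold is exceeded is monotone in $n$: since halting time is finite if $\utm$ halts at all, either $\utm$ never halts on $x$ (so $\tm_x$ outputs the hard Hamiltonian for \emph{all} $n$) or $\utm$ halts on $x$ in some finite time $t_0$ (so $\tm_x$ outputs $0$ for all $n \ge t_0/\alpha$ and the hard Hamiltonian for finitely many small $n$).

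Second I would translate the two cases into statements about energy-conserving PRUs. If $\utm$ does not halt on $x$, the family generated by $\tm_x$ is exactly the hard family of Theorem~\ref{main:thm:hard-ham}, which provably has no energy-conserving PRU (there is a universal efficient distinguisher against every polynomial-size circuit). If $\utm$ halts on $x$, then for all sufficiently large $n$ the Hamiltonian is $H_n = 0$; the commutant of the zero Hamiltonian is the full unitary group, so energy-conserving Haar-random unitaries coincide with ordinary Haar-random unitaries, and energy-conserving PRUs for this family are exactly ordinary PRUs, which exist under the assumed quantum-secure one-way functions~\cite{ma2024construct}. Since the existence of a PRU is an asymptotic statement (only large $n$ matters), the finitely many small-$n$ anomalies where $\tm_x$ still emits the hard Hamiltonian do not affect this — one can pad or truncate the key register for those sizes, or simply note that pseudorandomness quantifies over sufficiently large $n$. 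Hence the family has energy-conserving PRUs iff $\utm$ halts on $x$, and a decider for the former yields a decider for the halting problem, contradicting undecidability~\cite{turing1936computable}; therefore no such decider exists.

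The main obstacle, and the place requiring the most care, is the handling of the \emph{finitely many} small-$n$ instances in the halting case: strictly, the family there is ``hard Hamiltonian for $n < t_0/\alpha$, then $0$,'' not literally the all-zero family, so I must confirm that the definition of energy-conserving PRU (Definition~\ref{def:ec_pru}) is genuinely insensitive to any finite prefix of the sequence $\{H_n\}$ — which it is, since both efficiency and the negligible-advantage condition are ``for sufficiently large $n$'' statements. A secondary subtlety is ensuring the reversibility requirement on $\tm_d$ is actually met so that $\tm_d^{(\mathrm{rev})}$ can cleanly uncompute; this is routine since any TM can be made reversible with polynomial overhead (Fact following Definition~\ref{reversible-turing-machine} and standard Bennett-type arguments), and the Hamiltonian-generating machine of Theorem~\ref{main:thm:hard-ham} has a simple, easily-reversibilized structure.
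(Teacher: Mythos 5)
Your proposal is correct and follows essentially the same route as the paper: invoke Lemma~\ref{lem:halting-tm} to build $\tm_x$ by dovetailing the hard-Hamiltonian generator against $\utm$ on $x$, and then observe that the resulting family either is eventually zero (so its energy-conserving PRU is an ordinary PRU, which exists under one-way functions) or is the hard family of Theorem~\ref{main:thm:hard-ham} (which has none), reducing the halting problem to the decision task. Your additional care about the finitely many small-$n$ instances in the halting case and the reversibility of $\tm_d$ makes explicit two points the paper's terse proof treats as implicit (``for sufficiently large $n$'' and ``design $\tm_d$ to be reversible''), but the underlying argument is identical.
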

\begin{proof}
    Lemma~\ref{lem:halting-tm} says that for any $x$, we can design a uniform family of Hamiltonians $\{H_n\}$ efficiently generated by $\tm_x$, such that if $\utm$ halts upon input $x$, $H_n$ is $0$ for sufficiently large $n$, whose energy-conserving PRU is the conventional PRU. If $\utm$ does not halt upon $x$, $\{H_n\}$ does not have energy-conserving PRU. So deciding the existence of PRU requires solving the halting problem for all instances.
\end{proof}

\section{One-dimensional hopping problem}\label{hopping-problem}

In this section we exactly solve the one-dimensional hopping problem with boundary potentials.
Let the Hamiltonian be
\begin{align}
    H = \sum_{i=1}^{N-1}\ket{i}\bra{i+1}+\ket{i+1}\bra{i}+V_1\ket{1}\bra{1}+V_2\ket{N}\bra{N}.
\end{align}
Choose ansatz wavefunction
\begin{align}
    \ket{E_k} = \sum_{n=1}^N\psi_n\ket{n},\quad \psi_n=A_ke^{ikn}+B_ke^{-ikn}.
\end{align}
The corresponding energy in bulk is $E_k=2\cos k$. Now consider the boundary terms. The boundary conditions are
\begin{align}
    \psi_2+V_1\psi_1=E\psi_1,\quad \psi_{N-1}+V_2\psi_N=E\psi_N.
\end{align}
This gives us
\begin{align}
    A_k(V_1e^{ik}-1)+B_k(V_1e^{-ik}-1)=0,\quad
    A_k(V_2e^{iNk}-e^{i(N+1)k})+B_k(V_2e^{-iNk}-e^{-i(N+1)k})=0.
\end{align}
To bring solutions to exist, 
\begin{align}
    \frac{V_1e^{ik}-1}{V_1e^{-ik}-1}=\frac{V_2e^{iNk}-e^{i(N+1)k}}{V_2e^{-iNk}-e^{-i(N+1)k}}
\end{align}
Equivalently,
\begin{align}\label{ap:eq:solution}
    \tan (Nk)=-\frac{(1-V_1V_2)\sin k}{(V_1V_2+1)\cos k-(V_1+V_2)}.
\end{align}
One obvious solution for the above equation is $k=0$. But this solution is not consistent with boundary conditions, except for $V_1=V_2=1$. Similarly, $k=\pi$ is a solution only when $V_1=V_2=-1$. In the main text we deal with the case of $V_1=-1$. In this case Eq.~\eqref{ap:eq:solution} becomes
\begin{align}
    \frac{1+V_2}{V_2-1}\tan\Big(\frac{k}{2}\Big)=\tan (Nk).
\end{align}
For $V_2\in(-1,1)$, the above equation has $N$ real solutions of $k\in(0,\pi)$. Thus, the system has no bound state. For wavefunction, using
\begin{align}
    \frac{A_k}{B_k} = -\frac{e^{-ik}+1}{e^{ik}+1}=-e^{-i k}.
\end{align}
This leads to
\begin{align}
    \ket{\psi_k} = \frac{1}{\sqrt{C_k}}\sum_{n=1}^N (e^{-ikn}-e^{ikn-ik})\ket{n},
\end{align}
where the normalization factor is determined by
\begin{align}
    C_k = \sum_{n=1}^N|e^{-ikn}-e^{ikn-ik}|^2 =
     \sum_{n=1}^N 2-e^{i(2n-1)k}-e^{-i(2n-1)k} = 2N-\frac{\sin 2Nk}{\sin k}
\end{align}

\section{Proof of Lemma~\ref{lem:solution-eigenvalue}}\label{ap:proof}
\newcounter{savedlemma} 
\setcounter{savedlemma}{\value{lemma}} 
\setcounter{lemma}{\numexpr\getrefnumber{lem:solution-eigenvalue}-1\relax}
\begin{lemma}
    For the following equation
    \begin{align}\label{ap:eq:solution2}
        -\frac{1}{2^{m+1}-1}\tan\Big(\frac{k}{2}\Big)=\tan(Nk),\quad k\in(0,\pi),\quad
        m,N\in\mathbb{N}^+,
    \end{align}
    denote $\CS(m,N)$ as the solution set for $k$ with fixed $m$ and $N$. Define $\CS(m)=\cup_{N\in\mathbb{N}^+}\CS(m,N)$, then it satisfies the following:
    \begin{itemize}
        \item For $m_1\neq m_2$, $\CS(m_1)\bigcap\CS(m_2)=\CS(m_1)\bigcap\pi\mathbb{Q}=\CS(m_2)\bigcap\pi\mathbb{Q}=\emptyset$.
        \item For any $m$ and $N_1\neq N_2$, $\CS(m,N_1)\bigcap \CS(m,N_2)=\emptyset$ .
    \end{itemize}
\end{lemma}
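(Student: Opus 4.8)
The plan is to reduce every claim to the algebraic independence of $\pi$ and the transcendence of values of $\tan$ at rational multiples of $\pi$. Rewrite \eqref{ap:eq:solution2} in the polynomial form
\begin{equation}\label{eq:plan-poly}
  (2^{m+1}-1)\,\sin(Nk)\cos\!\Big(\tfrac{k}{2}\Big) + \sin\!\Big(\tfrac{k}{2}\Big)\cos(Nk) = 0 ,
\end{equation}
valid on $(0,\pi)$ where neither $\cos(k/2)$ nor $\cos(Nk)$ vanishes simultaneously with the other factor; one checks the excluded points $k$ with $\cos(k/2)=0$ (i.e. $k=\pi$) or $\cos(Nk)=0$ are not solutions for finite $m$. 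The first step is to show $\CS(m)\cap\pi\mathbb{Q}=\emptyset$: if $k=\pi p/q$ were a solution, then $\tan(k/2)$ and $\tan(Nk)$ are both algebraic numbers (Chebyshev-polynomial argument, cf. the trigonometric-algebraicity lemma), so \eqref{ap:eq:solution2} forces $2^{m+1}-1 = -\tan(k/2)/\tan(Nk)$ to be a ratio of two specific algebraic numbers; I would argue this cannot equal an odd integer $\ge 3$ except possibly in degenerate cases, but the cleaner route is: $k\in\pi\mathbb{Q}$, $k\in(0,\pi)$ makes $\tan(k/2)\neq 0$ finite and $\tan(Nk)$ either $0$, $\infty$, or a nonzero algebraic number; the cases $\tan(Nk)=0$ and $\infty$ are immediately impossible from \eqref{ap:eq:solution2}, and for the remaining case one uses that $\tan(k/2)$ and $\tan(Nk)$ lie in the same cyclotomic field and a short norm/degree argument rules out their ratio being the rational number $-(2^{m+1}-1)$. (If that last sub-argument is annoying, an alternative is to invoke that $\CS(m)$ consists of algebraic numbers while showing any rational solution would need to satisfy two incompatible algebraic relations.)

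The second step — disjointness for different $m$ — I would handle by a pure counting/interlacing argument rather than transcendence. Fix $N$ and set $g(k) = \tan(Nk)$, $h_m(k) = -\tfrac{1}{2^{m+1}-1}\tan(k/2)$. On each open interval $I_j = \big((j-\tfrac12)\pi/N,\ (j+\tfrac12)\pi/N\big)$, $j=0,\dots$, the function $g$ is a continuous increasing bijection onto $\mathbb{R}$ while $h_m$ is continuous, bounded on $[0,\pi]$ away from $k=\pi$, and strictly decreasing; hence $\CS(m,N)$ contains exactly one point $k_j^{(m)}$ per such interval (with the usual boundary bookkeeping near $0$ and $\pi$), and on $I_j$ the solution moves monotonically in $m$: since $|h_m|$ strictly decreases in $m$ at every fixed $k$, the crossing point of $g$ with $h_m$ strictly shifts as $m$ changes. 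This gives $k_j^{(m_1)} \neq k_j^{(m_2)}$ for $m_1\neq m_2$ within each interval, and because the $j$-th solution always lies in $I_j$ regardless of $m$, solutions from different intervals never coincide either. Therefore $\CS(m_1)\cap\CS(m_2) = \emptyset$. The same monotonicity-in-$N$ bookkeeping is trickier because changing $N$ changes the interval structure, so I would instead prove $\CS(m,N_1)\cap\CS(m,N_2)=\emptyset$ directly: a common solution $k$ would satisfy $\tan(N_1 k) = \tan(N_2 k) = -\tfrac{1}{2^{m+1}-1}\tan(k/2)$, hence $N_1 k \equiv N_2 k \pmod \pi$, i.e. $(N_1-N_2)k \in \pi\mathbb{Z}$, forcing $k \in \pi\mathbb{Q}$ — contradicting the first bullet.

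So the logical skeleton is: (i) no rational-multiple-of-$\pi$ solutions, proved via algebraicity of $\tan$ at such points plus a field-theoretic incompatibility; (ii) different $N$ disjointness, which reduces to (i) via the congruence $(N_1-N_2)k\in\pi\mathbb Z$; (iii) different $m$ disjointness, an elementary monotonicity-and-interlacing argument on the intervals $I_j$. The main obstacle I anticipate is step (i): showing that the ratio $-\tan(k/2)/\tan(Nk)$ cannot be the odd integer $2^{m+1}-1$ when $k\in\pi\mathbb Q$ requires either a clean cyclotomic-field degree argument or invoking a Niven-type theorem (the only rationals $k/\pi$ with $\tan(k)$ rational are the trivial ones), and marshalling exactly the right classical statement to make this airtight — without overclaiming — is the delicate part. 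Everything else is routine real analysis (monotonicity, intermediate value theorem) and elementary trigonometric identities.
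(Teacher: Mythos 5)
Your overall skeleton is sensible, and step (ii) — deriving $(N_1-N_2)k\in\pi\mathbb{Z}$ from a common solution and then invoking step (i) — is a genuinely clean reduction that the paper does not use. But there are two real gaps.

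First, step (i) is acknowledged by you to be the delicate part, and the sketch you give does not close it. Knowing that $\tan(k/2)$ and $\tan(Nk)$ are algebraic when $k\in\pi\mathbb{Q}$ does not by itself preclude their ratio from equaling the integer $-(2^{m+1}-1)$, and a Niven-type theorem (about when $\tan$ at a rational multiple of $\pi$ is itself rational) is not the right tool: both tangents may be irrational with a rational ratio. The paper's argument here is quite specific and you would want to know it: after substituting $x=e^{ik/2}$, the equation becomes a polynomial $g_{m,N}(x)=2^m x^{2N}+(2^{m+1}-1)(x^{2N-1}+\cdots+x)+2^m$ with integer coefficients; if a primitive $q$-th root of unity $e^{2\pi i p/q}$ were a root, then since the cyclotomic polynomial $\Phi_q$ is irreducible over $\mathbb{Q}$ it must divide $g_{m,N}$, so \emph{every} primitive $q$-th root is a root, in particular $e^{2\pi i/q}$. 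For that particular root one has $\left|\tan(4\pi N/q)\right|\geq\left|\tan(2\pi/q)\right|>\frac{1}{2^{m+1}-1}\left|\tan(2\pi/q)\right|$, a direct magnitude contradiction. This ``pass to the canonical primitive root and estimate'' step is exactly the missing idea in your proposal.

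Second, and more structurally, your step (iii) only proves $\CS(m_1,N)\cap\CS(m_2,N)=\emptyset$ for a \emph{fixed} $N$ (and incidentally the interlacing machinery is unnecessary for that case: a common solution would give $\frac{1}{2^{m_1+1}-1}\tan(k/2)=\frac{1}{2^{m_2+1}-1}\tan(k/2)$, forcing $\tan(k/2)=0$, i.e., $k=0\notin(0,\pi)$). But $\CS(m)=\bigcup_N\CS(m,N)$, so disjointness of $\CS(m_1)$ and $\CS(m_2)$ requires ruling out $k\in\CS(m_1,N_1)\cap\CS(m_2,N_2)$ with $m_1\neq m_2$ \emph{and} $N_1\neq N_2$ simultaneously. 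Your interval decomposition is indexed by $N$ and does not compare solutions across different $N$, and your congruence trick from step (ii) also fails here because $\tan(N_1 k)$ and $\tan(N_2 k)$ are now \emph{not} equal — they differ by the factor $(2^{m_2+1}-1)/(2^{m_1+1}-1)$, so no relation $(N_1-N_2)k\in\pi\mathbb{Z}$ drops out. The paper sidesteps this by proving each $g_{m,N}$ is irreducible over $\mathbb{Z}$ (Kronecker's lemma shows any monic factor would have to be cyclotomic, contradicting step (i), and the leading coefficient $2^m$ together with the odd middle coefficients $2^{m+1}-1$ rule out factorizations with non-unit leading coefficients); distinct pairs $(m,N)$ give non-proportional irreducible polynomials and hence coprime ones, which kills \emph{all} cross-cases at once. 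Your approach would need a comparably uniform argument to patch this hole.
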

\setcounter{lemma}{\value{savedlemma}}

The proof idea is as follows.
Defining $x=\tan(k/2)$, $\tan(Nk)$ can be written as the ration of two polynomials with integer coefficients, by iteratively using the formulas for double angles. Therefore,
$x$ is the root of some polynomial with degree depending on $N$ and integer coefficients depending on $n$. 
It is sufficient to prove all of these polynomials are irreducible for different $n$ and $N$. In this case, they have no common roots.
\begin{definition}
    [Irreducible polynomials]
    \label{ap:def:irreducible}
    A polynomial $f(x)$ over $\mathbb{Z}$ (which means all the coefficients of $f$ belong to $\mathbb{Z}$) is an irreducible polynomial if it is non-constant and cannot be written as $f(x)=g(x)h(x)$ for $g$ and $h$ are both non-constant polynomails over $\mathbb{Z}$.
\end{definition}
\begin{fact}
    [Irreducible polynomials have no common roots]
    Let $f(x)$ and $g(x)$ be two non-constant polynomials over $\mathbb{Z}$, and $f(x)\neq \alpha g(x)$ for all $\alpha \in \mathbb{Z}$. Then $f$ and $g$ have no common roots.
\end{fact}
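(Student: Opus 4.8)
As literally worded the statement is false --- for instance $f(x)=x^2-1$ and $g(x)=x-1$ satisfy $f\neq\alpha g$ for every $\alpha\in\mathbb{Z}$ (degrees differ) yet share the root $x=1$ --- so the plan is to prove it under the hypothesis that $f$ and $g$ are moreover \emph{irreducible} over $\mathbb{Z}$. This is the intended reading: it is the title of this Fact, it matches the preceding Definition~\ref{ap:def:irreducible}, and it is exactly the form in which the Fact is used in the proof of Lemma~\ref{lem:solution-eigenvalue} (``it is sufficient to prove all of these polynomials are irreducible $\ldots$ in this case, they have no common roots''). So I would assume $f,g\in\mathbb{Z}[x]$ are non-constant, irreducible, with $f\neq\alpha g$ for all $\alpha\in\mathbb{Z}$, and argue by contradiction: suppose they share a root $\zeta$ in an algebraic closure $\overline{\mathbb{Q}}$ of $\mathbb{Q}$.

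The core of the argument passes through $\mathbb{Q}[x]$ and Gauss's lemma. Since $f$ and $g$ are non-constant irreducible elements of $\mathbb{Z}[x]$, they are primitive and remain irreducible in $\mathbb{Q}[x]$. Letting $m_\zeta\in\mathbb{Q}[x]$ be the minimal polynomial of $\zeta$, we have $m_\zeta\mid f$ in $\mathbb{Q}[x]$ because $f(\zeta)=0$; irreducibility of $f$ over $\mathbb{Q}$ then forces $f=c_1 m_\zeta$ for some nonzero $c_1\in\mathbb{Q}$, and likewise $g=c_2 m_\zeta$. Hence $f=(c_1/c_2)\,g$; writing $c_1/c_2=p/q$ in lowest terms with $p,q\in\mathbb{Z}$, $q\ge 1$, this gives $qf=pg$ in $\mathbb{Z}[x]$. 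Taking contents of both sides, $q\cdot\mathrm{cont}(f)=|p|\cdot\mathrm{cont}(g)$ up to sign; but $\mathrm{cont}(f)=\mathrm{cont}(g)=1$ by primitivity, so $q=|p|$, whence $c_1/c_2=\pm1$ and $f=\pm g$. This exhibits $f=\alpha g$ with $\alpha\in\{1,-1\}\subset\mathbb{Z}$, contradicting the hypothesis; therefore $f$ and $g$ have no common root.

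There is essentially no deep obstacle here: the content is the standard Gauss-lemma package. The only points requiring care in the write-up are (i) inserting the irreducibility hypothesis so the Fact is actually true and matches its usage, and (ii) the bookkeeping that upgrades ``$f$ and $g$ are rational scalar multiples of each other'' to ``$f=\pm g$'', which is where primitivity of non-constant irreducible integer polynomials enters. With the Fact established, its downstream role in Lemma~\ref{lem:solution-eigenvalue} is immediate: one clears denominators in Eq.~\eqref{ap:eq:solution2} to express the condition on $x=\tan(k/2)$ as vanishing of an integer polynomial depending on $(m,N)$, shows these polynomials are irreducible and pairwise non-proportional over $\mathbb{Z}$ for distinct $(m,N)$, and concludes via the Fact that they share no roots --- yielding the asserted disjointness of the solution sets $\CS(m,N)$ and of $\CS(m)$ from $\pi\mathbb{Q}$.
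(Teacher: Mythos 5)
You correctly identified that the statement as written is false without an irreducibility hypothesis ($f=x^2-1$, $g=x-1$ is a counterexample), and your proof via the minimal polynomial, Gauss's lemma, and a content calculation is the standard argument that the paper itself only hints at. The paper's one-line sketch (``both $f$ and $g$ must divide the minimal polynomial of the common root'') has the divisibility reversed: the minimal polynomial of the common root divides both $f$ and $g$ in $\mathbb{Q}[x]$, and it is irreducibility that then forces $f$ and $g$ to each be a scalar multiple of that minimal polynomial, hence proportional to one another. Your write-up repairs this and fills in the details.

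One caveat worth flagging. Your step ``non-constant irreducible in $\mathbb{Z}[x]$ implies primitive'' uses the ring-theoretic notion of irreducibility, where the units of $\mathbb{Z}[x]$ are $\pm1$. The paper's own Definition~\ref{ap:def:irreducible} --- $f$ cannot be written as a product of two \emph{non-constant} integer polynomials --- is effectively irreducibility over $\mathbb{Q}$ and does not exclude a constant factor. Under that reading, $f=2(x^2+1)$ and $g=3(x^2+1)$ are both ``irreducible,'' satisfy $f\neq\alpha g$ for every $\alpha\in\mathbb{Z}$, and still share the roots $\pm i$, so the Fact would remain false. The cleanest fixes are to add a primitivity hypothesis (which your Gauss-lemma argument implicitly assumes) or to strengthen the non-proportionality condition to $\alpha\in\mathbb{Q}$. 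The downstream application in Lemma~\ref{lem:solution-eigenvalue} is unaffected by this subtlety: the polynomials $g_{m,N}$ have coefficient $\gcd$ equal to $\gcd(2^m,\,2^{m+1}-1)=1$, so they are primitive and your argument applies to them verbatim.
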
\noindent
The proof idea for this fact is that both $f$ and $g$ must divide the minimal polynomial of the common root, contradicting with irreducibility. For a comprehensive overview for polynomials, one can refer to standard textbooks for number theory or Galois theory, e.g.,~\cite{dummit2004abstract,lang2012algebra,stewart2022galois}.

To prove the irreducibility of these polynomials, we use (1) the fundamental theorem for Galois theory; (2)
the properties of Cyclotomic polynomials. We state these standard results in a superficial and non-rigorous manner for accessibility.

\begin{fact}
    [Fundamental theorem of Galois theory]\label{ap:fact:galois}
    Let $f(x)$ be a polynomial over $\mathbb{Q}$.
    Let $K$ be the splitting field, i.e., smallest field that contains $\mathbb{Q}$ and all the roots of $f$. Define the Galois group $\operatorname{Gal(K/\mathbb{Q})}$ as the group of all symmetries over $K$ that leaves $\mathbb{Q}$ invariant (e.g., permutations of irrational numbers in $K$). Then
    \begin{itemize}
        \item Every subgroup $H$ of $\operatorname{Gal(K/\mathbb{Q})}$ corresponds to an intermediate field $E:\mathbb{Q}\subseteq E\subseteq K$ left invariant by the action of $H$, and vice versa.
        \item The correspondence is inclusion-reversing, i.e., the larger the subgroup $H$, the smaller the field $E$. In particular, $\operatorname{Gal(K/\mathbb{Q})}$ corresponds to $\mathbb{Q}$.
    \end{itemize}
\end{fact}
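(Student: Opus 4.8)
The plan is to deduce the stated correspondence from the classical Galois theory of finite extensions, all of whose moving parts are elementary algebra. First I would record the basic finiteness facts: $K$ is generated over $\mathbb{Q}$ by the finitely many roots of $f$, so $[K:\mathbb{Q}]<\infty$, and every $\sigma\in G:=\operatorname{Gal}(K/\mathbb{Q})$ is determined by the permutation it induces on those roots, so $G$ embeds into a finite symmetric group and $|G|\le [K:\mathbb{Q}]$. Since we work in characteristic $0$, $f$ is separable, and $K$ being its splitting field makes $K/\mathbb{Q}$ \emph{normal and separable}, i.e.\ Galois; this upgrades the inequality to $|G|=[K:\mathbb{Q}]$ and, more importantly, to $K^{G}=\mathbb{Q}$ (the fixed field of all of $G$ is the base field). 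Next I would write down the two maps whose bijectivity is the content of the statement: to a subgroup $H\le G$ assign its fixed field $K^{H}=\{x\in K:\sigma x=x \text{ for all } \sigma\in H\}$, and to an intermediate field $\mathbb{Q}\subseteq E\subseteq K$ assign the subgroup $\operatorname{Gal}(K/E)=\{\sigma\in G:\sigma|_{E}=\mathrm{id}\}$. Both are visibly inclusion-reversing — if $H_{1}\subseteq H_{2}$ then $K^{H_{1}}\supseteq K^{H_{2}}$, and symmetrically — so the only thing left to prove is that the two composites are the identity.

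The technical heart is Artin's theorem: for any finite group $H$ of automorphisms of $K$, one has $[K:K^{H}]=|H|$. I would prove the lower bound $[K:K^{H}]\ge |H|$ via Dedekind's lemma on linear independence of distinct characters (here the elements of $H$ viewed as maps $K^{\times}\to K^{\times}$): if $[K:K^{H}]$ were strictly smaller than $|H|$, a $K$-linear homogeneous system with more unknowns than equations would yield a nontrivial $K$-linear dependence among the elements of $H$, contradicting that lemma. For the upper bound $[K:K^{H}]\le |H|$, observe that every $\alpha\in K$ is a root of $\prod_{\sigma\in H}(X-\sigma\alpha)$, a polynomial of degree $|H|$ with $H$-invariant (hence $K^{H}$-rational) coefficients, so $[K^{H}(\alpha):K^{H}]\le |H|$; the primitive element theorem — available since the extension is separable — then produces a single $\alpha$ with $K=K^{H}(\alpha)$. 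Applying Artin's theorem to an arbitrary subgroup $H\le G$, combined with the always-valid inclusion $H\subseteq\operatorname{Gal}(K/K^{H})$ and the general bound $|\operatorname{Gal}(K/L)|\le [K:L]$, squeezes out the equality $\operatorname{Gal}(K/K^{H})=H$.

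For the reverse composite I would exploit that $K$ remains the splitting field of $f$ over any intermediate field $E$, so $K/E$ is itself Galois and $|\operatorname{Gal}(K/E)|=[K:E]$. Then $E\subseteq K^{\operatorname{Gal}(K/E)}$ is trivial, while Artin's theorem applied to $H=\operatorname{Gal}(K/E)$ gives $[K:K^{\operatorname{Gal}(K/E)}]=|\operatorname{Gal}(K/E)|=[K:E]$; the tower law then collapses $[K^{\operatorname{Gal}(K/E)}:E]$ to $1$, i.e.\ $K^{\operatorname{Gal}(K/E)}=E$. This establishes the bijection; the inclusion-reversing property was already noted, and the ``in particular'' clause is just the case $H=G$, $E=\mathbb{Q}$, which is the Galois-ness of $K/\mathbb{Q}$ recorded at the outset. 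The main obstacle — such as it is — is the bookkeeping around normality and separability: one must verify once and for all that splitting fields in characteristic $0$ are Galois over \emph{every} intermediate field, so that the equalities $|\operatorname{Gal}(K/E)|=[K:E]$ may be used freely. The only genuinely nontrivial ingredient is Dedekind's independence-of-characters lemma; from it, Artin's bound and hence the whole correspondence follow by linear algebra, the primitive element theorem, and the tower law. All of this is entirely standard; see, e.g., \cite{dummit2004abstract,lang2012algebra,stewart2022galois}.
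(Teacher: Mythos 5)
The paper does not actually prove this statement: it is labeled a \emph{Fact}, prefaced by the remark ``We state these standard results in a superficial and non-rigorous manner for accessibility,'' and the references given are standard textbooks (Dummit--Foote, Lang, Stewart). There is therefore no in-paper argument to compare against, and supplying a proof is a legitimate way to treat the claim.

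Your proof is the standard proof of the Fundamental Theorem of Galois theory in characteristic zero, and it is correct. You correctly reduce to the finite Galois setting (splitting field of a separable polynomial), observe that $K$ remains a splitting field over every intermediate field $E$ so that $|\operatorname{Gal}(K/E)|=[K:E]$, prove Artin's theorem $[K:K^H]=|H|$ via Dedekind's independence-of-characters lemma for the lower bound and the degree-$|H|$ resolvent $\prod_{\sigma\in H}(X-\sigma\alpha)$ plus the primitive element theorem for the upper bound, and then collapse both composites of the two inclusion-reversing maps using Artin's theorem together with the tower law. This is exactly the argument in the references the paper cites. One small point your proof implicitly repairs in the paper's informal phrasing: ``symmetries over $K$ that leave $\mathbb{Q}$ invariant'' should be read as ``field automorphisms of $K$ fixing $\mathbb{Q}$ pointwise,'' and ``corresponds to \dots and vice versa'' should be read as a genuine order-reversing bijection between subgroups of $\operatorname{Gal}(K/\mathbb{Q})$ and intermediate fields; your argument establishes precisely that bijection. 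No gaps.
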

    
\begin{definition}
    [Cyclotomic polynomials]\label{ap:def:cyclotomic}
    The $n$-th order cyclotomic polynomial refers to a unique irreducible monic polynomial (the coefficient for highest order monomial is $1$) $\Phi_n(x)$ over $\mathbb{Z}$, such that its roots are all $n$-th primitive roots of unit $e^{i2\pi\frac{k}{n}}$. In other words,
    \begin{align}
        \Phi_n(x)=\prod_{1\leq k\leq n,\gcd(n,k)=1}\big(x-e^{i2\pi\frac{k}{n}}\big).
    \end{align}
\end{definition}
Examples of cyclotomic polynomials for $n=1,2,3,4,5$ are
\begin{align}
    \Phi_1(x)&=x-1,\notag\\
    \Phi_2(x)&=x+1,\notag\\
    \Phi_3(x)&=x^2+x+1,\notag\\
    \Phi_4(x)&=x^2+1,\notag\\
    \Phi_5(x)&=x^4+x^3+x^2+x+1.
\end{align}

We these preparations, we prove the following two technical lemmas.
\begin{lemma}
    [Kronecker's lemma]\label{ap:lem:kronecker}
    Let $f(x)$ be a monic polynomial over $\mathbb{Z}$. If all roots of $f$ have absolute values at most 1, then $f$ is the product of cyclotomic polynomials and powers of $x$.
\end{lemma}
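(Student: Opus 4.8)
The plan is to prove this via the classical pigeonhole argument of Kronecker, after a standard reduction to the irreducible case. First I would factor $f$ in $\mathbb{Z}[x]$. Since $f$ is monic, by Gauss's lemma it is a product of monic irreducible polynomials over $\mathbb{Z}$ (cf.~\cite{dummit2004abstract}), and every complex root of each such factor is a root of $f$, hence has absolute value at most $1$. So it suffices to prove the claim for a monic irreducible $g\in\mathbb{Z}[x]$ all of whose roots lie in the closed unit disk: I will show such a $g$ is either $x$ or a cyclotomic polynomial $\Phi_n$ (Definition~\ref{ap:def:cyclotomic}).

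For the irreducible case I would argue as follows. If $g(0)=0$ then $x\mid g$, and irreducibility forces $g=x$, which is a power of $x$; so assume $g(0)\neq 0$ and fix a root $\alpha\neq 0$. Since $g$ is irreducible and monic, it is the minimal polynomial of $\alpha$, so $\alpha$ is an algebraic integer of degree $d=\deg g$ whose full set of conjugates $\alpha_1,\dots,\alpha_d$ all have absolute value at most $1$. For every $k\ge 1$, the conjugates of $\alpha^k$ over $\mathbb{Q}$ are a sub-multiset of $\{\alpha_1^k,\dots,\alpha_d^k\}$, hence still lie in the closed unit disk; the minimal polynomial of $\alpha^k$ is therefore a monic polynomial in $\mathbb{Z}[x]$ of degree at most $d$ each of whose coefficients — being, up to sign, an elementary symmetric function of at most $d$ numbers of modulus $\le 1$ — is an integer bounded in absolute value by $2^{d}$. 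There are only finitely many such polynomials, and hence only finitely many possible values of $\alpha^k$ as $k$ ranges over $\mathbb{N}$. Consequently $\alpha^{k_1}=\alpha^{k_2}$ for some $k_1<k_2$, and since $\alpha\neq 0$ this gives $\alpha^{\,k_2-k_1}=1$, so $\alpha$ is a root of unity.

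It then follows that every root of $g$ is a root of unity, so $g$ divides $x^{N}-1$ for $N$ a common multiple of the orders of the roots. Writing $x^{N}-1=\prod_{d\mid N}\Phi_d(x)$ and using that $g$ is irreducible while each $\Phi_d$ is irreducible (Definition~\ref{ap:def:cyclotomic}) together with unique factorization in $\mathbb{Z}[x]$, we get $g=\Phi_n$ for some $n\mid N$. Assembling the irreducible factors of $f$, the copies of $x$ collect into a single power $x^{a}$ and the remaining factors are cyclotomic, which is exactly the stated conclusion. The only step with real content is the pigeonhole/symmetric-function bound in the second paragraph; the one subtlety to watch is that the degree of $\alpha^k$ may drop below $d$, so one should phrase the coefficient bound in terms of the actual minimal polynomial of $\alpha^k$ (or, equivalently, bound the integer polynomial $\prod_{j=1}^d(x-\alpha_j^k)$, which is Galois-stable), rather than assume degree exactly $d$. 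Everything else is routine factorization bookkeeping.
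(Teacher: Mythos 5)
Your proposal is correct, and the engine is the same as the paper's: powers of roots in the closed unit disk generate integer polynomials of bounded degree with bounded coefficients, so pigeonhole forces a coincidence that makes the roots roots of unity. The structural difference is in how you organize the pigeonhole. You first factor $f$ into monic irreducibles (Gauss's lemma) and then, for a single root $\alpha$ of one irreducible factor, bound the minimal polynomial of $\alpha^k$; since finitely many such polynomials have finitely many roots altogether, you conclude directly that $\alpha^{k_1}=\alpha^{k_2}$, hence $\alpha$ is a root of unity. The paper instead keeps all nonzero roots $\lambda_1,\dots,\lambda_n$ of $f$ together, pigeonholes on the whole product $g_k(x)=\prod_i(x-\lambda_i^k)$, obtains a multiset coincidence $g_{k_1}=g_{k_2}$, and must then run a permutation/cycle argument ($\lambda_i^{k_1^l}=\lambda_i^{k_2^l}$ for $l$ the cycle length) to isolate a single-root identity. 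Your per-root reduction trades the Gauss-lemma factorization step for eliminating that cycle argument, which is arguably cleaner; both are standard realizations of Kronecker's argument. One small caveat you already flagged and handled correctly: since $\deg\alpha^k$ can drop, one should bound either the minimal polynomial itself (degree $\le d$, coefficients $\le 2^d$) or the Galois-stable product $\prod_j(x-\alpha_j^k)$; either choice gives the finiteness needed. Your closing step, that a monic $g\in\mathbb{Z}[x]$ all of whose roots are roots of unity divides $x^N-1=\prod_{d\mid N}\Phi_d(x)$ and hence equals some $\Phi_n$ when irreducible, also fills a small gap the paper leaves implicit at the end of its proof.
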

\begin{proof}
    Let $r$ be the multiplicity of root $0$. Denote $f(x)=x^rg(x)$. $g(x)$ is a monic polynomial that has only nonzero roots. Denote $\Lambda=\{\lambda_1,\cdots \lambda_n\}$ as the multiset of roots of $g(x)$. All $\lambda_i$ are algebraic integers (roots of some monic polynomial over $\mathbb{Z}$).

    Then define 
    \begin{align}
        g_k(x)=\prod_{i=1}^n\big(x-\lambda_i^k\big),\quad k\in\mathbb{N}^+.
    \end{align}
    Every coefficients of $g_k(x)$ are algebraic integers. i.e., sum and multiplicity of numbers that are the roots of monic polynomials over $\mathbb{Z}$. Furthermore, every coefficients are symmetric sum of $\lambda$s, which means that the are left invariant under any permutations of $\lambda$s.
    Thus, they are rational numbers due to Fact~\ref{ap:fact:galois}. These together assert that coefficients of $g_k(x)$ must be integers.
    
    Since $|\lambda_i|\leq 1$, the norm of coefficient of $x^m$ is at most $\binom{n}{m}$. As a result, for all $k\in\mathbb{N}^+$, possible forms of $g_k(x)$ are finite. Hence, there must exist integers $k_1\neq k_2$, such that $g_{k_1}(x)=g_{k_2}(x)$. Therefore, multisets $\{\lambda_1^{k_1},\cdots \lambda_n^{k_1}\}$ and $\{\lambda_1^{k_2},\cdots \lambda_n^{k_2}\}$ coincide. That is, there exist a permutation $\sigma$ of $\{1,2,\cdots,n\}$, such that
    \begin{align}
        \lambda^{k_1}_i=\lambda^{k_2}_{\sigma(i)},\quad 1\leq i\leq n.
    \end{align}
    Iterating along a cycle of length $l$ gives 
    \begin{align}
        \lambda_i^{k_1^l}=\lambda_i^{k_2^l}.
    \end{align}
    Since $\lambda_i\neq 0$, this gives $\lambda_i=e^{i2\pi \frac{p}{k_1^l-k_2^l}}$ for $p\in\mathbb{N}^+$. Thus, every $\lambda_i$ is the root of a cyclotomic polynomial. This completes the proof
\end{proof}

\begin{lemma}
    \label{ap:lem:no-rational-solution}
    Let $n\in\mathbb{N}^+$. For any $N\in\mathbb{N}^+$, the solutions of following equation
    \begin{align}\label{ap:eq:solution-2}
        -\frac{1}{2^{n+1}-1}\tan\Big(\frac{k}{2}\Big)=\tan(Nk),\quad
        k\in(0,\pi),
    \end{align}
    have no intersection with $\pi\mathbb{Q}$.
\end{lemma}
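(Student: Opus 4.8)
\textbf{Proof plan for Lemma~\ref{ap:lem:no-rational-solution}.}

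The plan is to reduce the trigonometric equation to a polynomial identity in the single variable $x = \tan(k/2)$ and then show the relevant polynomial has no roots among $\{\tan(k/2) : k \in \pi\mathbb{Q}\cap(0,\pi)\}$. First I would use the standard fact that $\tan(k/2) \in \pi\mathbb{Q}$ forces $x = \tan(k/2)$ to be an algebraic number of a very constrained type: writing $k = \pi p/q$ with $\gcd(p,q)=1$, the quantity $e^{ik}=\frac{1+ix}{1-ix}$ is a root of unity, so $x = -i\,\frac{e^{ik}-1}{e^{ik}+1}$ lies in the cyclotomic field $\mathbb{Q}(\zeta_{2q})$. Conversely, if $x$ satisfies Eq.~\eqref{ap:eq:solution-2} then, using the multiple-angle formula $\tan(Nk) = P_N(x)/Q_N(x)$ where $P_N, Q_N \in \mathbb{Z}[x]$ are the (coprime) polynomials obtained by expanding $\bigl((1+ix)^N - (1-ix)^N\bigr)/i$ and $(1+ix)^N+(1-ix)^N$ respectively, the equation becomes the polynomial relation
\begin{equation}
(2^{n+1}-1)\,x\,Q_N(x) + P_N(x) = 0 .
\end{equation}
Call this polynomial $F_{n,N}(x) \in \mathbb{Z}[x]$. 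So the task is: show $F_{n,N}$ has no root of the form $\tan(\pi p/(2q)\cdot \tfrac{1}{1})$ — i.e. no root lying in any cyclotomic field of the special ``tangent of a rational angle'' shape.

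The key idea, and the reason the lemma lives right after Kronecker's lemma (Lemma~\ref{ap:lem:kronecker}), is a size/reciprocity argument. Suppose $x_0 = \tan(\pi p/(2q))$ were a root of $F_{n,N}$. Then $\alpha := e^{ik} = \frac{1+ix_0}{1-ix_0}$ is a primitive $2q$-th (or $q$-th) root of unity, hence an algebraic integer all of whose Galois conjugates lie on the unit circle. I would substitute $x = -i\frac{z-1}{z+1}$ into $F_{n,N}(x)=0$, clear denominators, and obtain a polynomial equation $G_{n,N}(z) = 0$ with $G_{n,N}\in\mathbb{Z}[z]$, satisfied by $\alpha$; concretely, since $P_N/Q_N = \tan(Nk)$ corresponds under $x\mapsto z$ to $-i\frac{z^N-1}{z^N+1}$ and $x \mapsto -i\frac{z-1}{z+1}$, the relation becomes, after multiplying through by $(z+1)(z^N+1)$,
\begin{equation}
(2^{n+1}-1)(z-1)(z^N+1) + (z+1)(z^N-1) = 0 .
\end{equation}
Expanding: $(2^{n+1}-1)(z^{N+1}+z-z^N-1) + (z^{N+1}-z+z^N-1) = 0$, i.e.
\begin{equation}
2^{n+1} z^{N+1} - (2^{n+1}-2) z^{N} + (2^{n+1}-2) z - 2^{n+1} = 0 .
\end{equation}
Now I would divide by the leading coefficient $2^{n+1}$ to get a monic-in-spirit statement, or better, observe the polynomial is \emph{anti-palindromic} up to the factor $2^{n+1}$ on the outer coefficients versus $2^{n+1}-2$ on the inner ones — this asymmetry is exactly what rules out roots on the unit circle. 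If $\alpha$ is on the unit circle, then $\bar\alpha = 1/\alpha$ is also a conjugate (same minimal polynomial, since cyclotomic polynomials are self-reciprocal up to sign), and plugging $z\mapsto 1/z$ and multiplying by $z^{N+1}$ gives $2^{n+1} - (2^{n+1}-2)z + (2^{n+1}-2)z^N - 2^{n+1} z^{N+1} = 0$, i.e. the negative of the same polynomial with $2^{n+1}$ and $2^{n+1}-2$ swapped in roles. Comparing the two forces $2^{n+1} = \pm(2^{n+1}-2)$, which is impossible for $n\in\mathbb{N}^+$. Hence no such $\alpha$ exists, proving the lemma. (An alternative packaging: the product of the absolute values of the roots of the degree-$(N+1)$ polynomial equals $|{-2^{n+1}}/2^{n+1}| = 1$, but a genuine unit-circle root together with its distinct conjugates would force, via Kronecker, that the whole polynomial be a product of cyclotomics times a power of $z$ after scaling — yet $2^{n+1} z^{N+1} - \cdots$ is not an integer multiple of any such product because its content is $\gcd(2^{n+1}, 2^{n+1}-2) = 2$, not $2^{n+1}$, contradicting that a scaled-cyclotomic product has all-equal extreme coefficients.)

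The main obstacle I anticipate is making the ``self-reciprocal conjugate'' step fully rigorous: I need that if $\alpha = e^{i\pi p/q}$ is a root of $G_{n,N}(z) = 2^{n+1}z^{N+1} - (2^{n+1}-2)z^N + (2^{n+1}-2)z - 2^{n+1}$, then so is $1/\alpha = \bar\alpha$, and that $\alpha \neq \pm 1$ (the endpoints $k=0,\pi$ are excluded, and one checks directly $G_{n,N}(1) = 2\cdot 2^{n+1} - 2(2^{n+1}-2)\ne 0$ wait — actually $G_{n,N}(1) = 2^{n+1} - (2^{n+1}-2) + (2^{n+1}-2) - 2^{n+1} = 0$, so $z=1$ is always a root; but $z=1$ corresponds to $x = 0$, i.e. $k=0\notin(0,\pi)$, so it is harmless — I should divide $G_{n,N}(z)$ by $(z-1)$ first and work with the quotient). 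After removing the spurious factor $(z-1)$, the reduced polynomial $\tilde G_{n,N}(z) = G_{n,N}(z)/(z-1)$ has degree $N$, and I would verify it is \emph{not} self-reciprocal (its extreme coefficients are $2^{n+1}$ and, by a short computation, $2^{n+1} - $ something involving $2$, never equal), so it cannot have all its roots on the unit circle closed under $z\mapsto 1/z$ unless... — the cleanest finish is Kronecker: $\tilde G_{n,N}$ monic-normalized is $\frac{1}{2^{n+1}}\tilde G_{n,N}$, which is not in $\mathbb{Z}[z]$ since $2 \nmid$ all coefficients but $2^{n+1}$ does not divide the leading one after... I would instead argue: if every root of $\tilde G_{n,N}$ lay on the unit circle, then $\tilde G_{n,N}(z)/2^{n+1}$ would be a product of cyclotomic polynomials (monic, integer coefficients) by Lemma~\ref{ap:lem:kronecker} applied to the monic polynomial $z^N \tilde G_{n,N}(1/z)/(\text{const})$ or directly to the reversal — but a product of cyclotomics has content $1$, whereas $\tilde G_{n,N}$ has content exactly $2$ (since $\gcd$ of its coefficients $2^{n+1}, 2^{n+1}-2, \ldots$ is $2$) and degree-$N$ leading coefficient $2^{n+1} \neq 2$, a contradiction. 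This content-versus-leading-coefficient mismatch is the heart of the argument, and I expect the only real work is the bookkeeping of the coefficients of $\tilde G_{n,N}(z)$ after factoring out $(z-1)$ to confirm $\mathrm{content}(\tilde G_{n,N}) = 2$ while the leading coefficient is $2^{n+1} > 2$.
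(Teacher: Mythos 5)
Your opening reduction is the same as the paper's: pass to $z=e^{ik}$, clear the denominators in the tangent identities, and turn the trigonometric equation into an integer polynomial identity, so that a solution $k\in\pi\mathbb{Q}$ would force a root of unity to be a root. From there, however, the proposal does not go through. First, some bookkeeping: since $x=\tan(k/2)$, one has $\tan(Nk)=\tan\bigl(2N\cdot\tfrac{k}{2}\bigr)$, so under $x=-i(z-1)/(z+1)$ one gets $\tan(Nk)=-i(z^{2N}-1)/(z^{2N}+1)$, not $z^N$; also $-\tfrac{1}{2^{n+1}-1}x=P/Q$ clears to $(2^{n+1}-1)P+xQ=0$, not $(2^{n+1}-1)xQ+P=0$. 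More seriously, the reciprocity step fails: your polynomial $G(z)=2^{n+1}z^{N+1}-(2^{n+1}-2)z^N+(2^{n+1}-2)z-2^{n+1}$ satisfies $z^{N+1}G(1/z)=-G(z)$ \emph{exactly} — it is anti-palindromic, the reversal is just $-G$, no ``swap of $2^{n+1}$ and $2^{n+1}-2$'' occurs, and the constraint $2^{n+1}=\pm(2^{n+1}-2)$ never materializes. After dividing out the spurious root $z=1$, the quotient $\tilde G$ obeys $z^N\tilde G(1/z)=\tilde G(z)$, i.e.\ it \emph{is} self-reciprocal, contrary to what your plan needs. Finally, the Kronecker-plus-content packaging conflates ``one root of $\tilde G$ is a root of unity'' with ``all roots of $\tilde G$ lie on the unit circle''; Lemma~\ref{ap:lem:kronecker} applies only in the latter case. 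A single root of unity yields only $\Phi_q\mid\tilde G$, and the cofactor $\tilde G/\Phi_q$ being an integer polynomial with leading coefficient $2^{n+1}$ and content $2$ is not a contradiction.

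The ingredient you are missing — and the heart of the paper's proof — is a size estimate on the tangent. From $\Phi_q\mid g_N$ and irreducibility of $\Phi_q$, the smallest-angle primitive root $e^{2\pi i/q}$ is also a root of $g_N$, i.e.\ $k=2\pi/q$ also satisfies the trigonometric equation. Then $|\tan(k/2)|=|\tan(\pi/q)|$, while $|\tan(Nk)|=|\tan(\pi\delta)|$ with $\delta=\min_{s\in\mathbb{Z}}|2N/q-s|$. Since $\delta\in(0,\tfrac12)$ is a nonzero multiple of $1/q$ we get $\delta\geq 1/q$, and monotonicity of $\tan$ on $(0,\pi/2)$ gives $|\tan(Nk)|\geq|\tan(\pi/q)|=|\tan(k/2)|$. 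But the equation forces $|\tan(Nk)|=\tfrac{1}{2^{n+1}-1}|\tan(k/2)|<|\tan(k/2)|$ for $n\geq 1$, a contradiction. This is precisely where $2^{n+1}-1>1$ enters; your argument never uses this factor essentially, which is a red flag, since with $1$ in place of $2^{n+1}-1$ the equation $-\tan(k/2)=\tan(Nk)$ \emph{does} have solutions $k=2m\pi/(2N+1)\in\pi\mathbb{Q}$ while the anti-palindromic structure of the polynomial is unchanged.
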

\begin{proof}
    Denote $x = e^{ik/2}$. We have
    \begin{align}
        \tan\Big(\frac{k}{2}\Big)=\frac{x-1}{i(x+1)},\quad
        \tan (Nk)=\frac{x^{2N}-1}{i(x^{2N}+1)}.
    \end{align}
    Using this expression,
    solutions of Eq.~\eqref{ap:eq:solution-2} correspond to roots of
    \begin{align}
        g_N(x)=2^n x^{2N}+(2^{n+1}-1)(x^{2N-1}+x^{2N-2}+\cdots+x^2+x)+2^n.
    \end{align}
    Note that roots of $g_N(x)$ are in $x,x^*$ pairs. So we can only consider the roots where $\operatorname{Im}x>0$, which correspond to $k\in(0,\pi)$.
     Assume for $N=N_0$ there is solution $y=e^{i2\pi p/q}$ for contradiction, where $p,q\in\mathbb{N}^+$, $p<q/2$ and $\gcd(p,q)=1$. Then $y$ is the root of cyclotomic polynomial $\Phi_q(x)$. As a result, $\Phi_q(x)\mid g_{N_0}(x) $. Since $e^{i2\pi/q}$ must also be a root of $\Phi_q(x)$, is is also a root of $g_{N_0}(x)$, i.e.,
    \begin{align}\label{ap:eq:solution-rational}
        -\frac{1}{2^{n+1}-1}\tan\Big(\frac{2\pi}{q}\Big)=\tan
        \Big(\frac{4\pi N}{q}\Big).
    \end{align}
    Let 
    \begin{align}
        \delta:=\min_{s\in\mathbb{Z}}\Big|\frac{2N}{q}-s\Big|\in[0,\frac{1}{2}].
    \end{align}
    Since $\tan$ is periodic with $\pi$, $|\tan(4\pi N/q)|=|\tan(2\pi\delta)|$. Clearly, $\delta\neq 0,1/2$. In this case, $\delta\geq 1/q$. As a result,
    \begin{align}
        \Big|\tan
        \Big(\frac{4\pi N}{q}\Big)\Big|=|\tan(2\pi\delta)|\geq 
        \Big|\tan\Big(\frac{2\pi}{q}\Big)\Big|>\frac{1}{2^{n+1}-1}
        \Big|\tan\Big(\frac{2\pi}{q}\Big)\Big|,
    \end{align}
    contradicting Eq.~\eqref{ap:eq:solution-rational}.
\end{proof}


Now we prove the desired lemma.
\begin{proof}[Proof of Lemma~\ref{lem:solution-eigenvalue}]

    $\CS(m_1)\cap\pi\mathbb{Q}=\CS(m_2)\cap\pi\mathbb{Q}=\emptyset$ is a direct consequence of Lemma~\ref{ap:lem:no-rational-solution}. Denote $x=e^{ik/2}$, then solutions of Eq.~\eqref{ap:eq:solution2} are roots of 
    \begin{align}
        g_{m,N}(x)=2^m x^{2N}+(2^{m+1}-1)(x^{2N-1}+x^{2N-2}+\cdots+x^2+x)+2^m.
    \end{align}

    Assume for $m_0,N_0\in\mathbb{N}^+$, $g_{m_0,N_0}(x)$ is reducible, i.e., there exists a polynomial $\phi(x)=a_0x^k+a_1x^{k-1}+\cdots$ over $\mathbb{Z}$ such that  $\phi(x)\mid g_{m_0,N_0}(x)$. If $a_0=2^{s}$ for $1\leq s\leq m-1$, the coefficient for the highest order monomial in $g_{m_0,N_0}(x)/\phi(x)$ is $2^{m-s}$. As a result,
     the coefficient of $x^{2N-1}$ in $g_{m_0,N_0}$ is even, which is a contradiction. If $a_0=1$,  since the roots of $\phi(x)$ are also roots of $g_{m_0,N_0}(x)$, $\phi(x)$ is a cyclotomic polynomial by Lemma~\ref{ap:lem:kronecker}, contraditing with Lemma~\ref{ap:lem:no-rational-solution}. If $a_0=2^s$, same argument applies to $g_{m_0,N_0}(x)/\phi(x)$.
     As a result, for all $m,N\in\mathbb{N}^+$, $g_{m,N}(x)$ is a irreducible polynomial. Moreover, every two such polynomials are not proportional to each other, thus
     they  have no common roots. This gives $\CS(m_1)\cap\CS(m_2)=\emptyset$ for $m_1\neq m_2$, and $\CS(m,N_1)\cap \CS(m,N_2)=\emptyset$ for  $N_1\neq N_2$.
\end{proof}
\setcounter{lemma}{\value{savedlemma}}

\clearpage
\bibliography{refs}

\end{document}